\pdfoutput=1
\RequirePackage{ifpdf}
\ifpdf 
\documentclass[pdftex]{sigma}
\else
\documentclass{sigma}
\fi

\usepackage{stackengine}
\newcommand\oast{\stackMath\mathbin{\stackinset{c}{0ex}{c}{0ex}{\ast}{\bigcirc}}}

\numberwithin{equation}{section}

\newtheorem{Theorem}{Theorem}[section]
\newtheorem*{Theorem*}{Theorem}
\newtheorem{Corollary}[Theorem]{Corollary}
\newtheorem{Lemma}[Theorem]{Lemma}
\newtheorem{Proposition}[Theorem]{Proposition}
 { \theoremstyle{definition}
\newtheorem{Definition}[Theorem]{Definition}

\newtheorem{Remark}[Theorem]{Remark} }

\begin{document}
\allowdisplaybreaks

\newcommand{\arXivNumber}{2203.08249}

\renewcommand{\PaperNumber}{009}

\FirstPageHeading

\ShortArticleName{Quantum Curves, Resurgence and Exact WKB}

\ArticleName{Quantum Curves, Resurgence and Exact WKB}

\Author{Murad ALIM~$^{\rm a}$, Lotte HOLLANDS~$^{\rm b}$ and Iv\'an TULLI~$^{\rm a}$}

\AuthorNameForHeading{M.~Alim, L.~Hollands and I.~Tulli}

\Address{$^{\rm a)}$~Fachbereich Mathematik, Universit\"at Hamburg, Bundesstr.~55, 20146 Hamburg, Germany}
\EmailD{\href{mailto:murad.alim@uni-hamburg.de}{murad.alim@uni-hamburg.de}, \href{mailto:ivan.tulli@uni-hamburg.de}{ivan.tulli@uni-hamburg.de}}

\Address{$^{\rm b)}$~Department of Mathematics at Heriot-Watt University,\\
\hphantom{$^{\rm b)}$}~Maxwell Institute for Mathematical Sciences, Edinburgh EH14 4AS, UK}
\EmailD{\href{mailto:l.hollands@hw.ac.uk}{l.hollands@hw.ac.uk}}

\ArticleDates{Received June 09, 2022, in final form February 08, 2023; Published online March 06, 2023}

\Abstract{We study the non-perturbative quantum geometry of the open and closed topological string on the resolved conifold and its mirror. Our tools are finite difference equations in the open and closed string moduli and the resurgence analysis of their formal power series solutions. In the closed setting, we derive new finite difference equations for the refined partition function as well as its Nekrasov--Shatashvili (NS) limit. We write down a distinguished analytic solution for the refined difference equation that reproduces the expected non-perturbative content of the refined topological string. We compare this solution to the Borel analysis of the free energy in the NS limit. We find that the singularities of the Borel transform lie on infinitely many rays in the Borel plane and that the Stokes jumps across these rays encode the associated Donaldson--Thomas invariants of the underlying Calabi--Yau geometry. In the open setting, the finite difference equation corresponds to a canonical quantization of the mirror curve. We analyze this difference equation using Borel analysis and exact WKB techniques and identify the 5d BPS states in the corresponding exponential spectral networks. We furthermore relate the resurgence analysis in the open and closed setting. This guides us to a five-dimensional extension of the Nekrasov--Rosly--Shatashvili proposal, in which the NS free energy is computed as a generating function of $q$-difference opers in terms of a special set of spectral coordinates. Finally, we examine two spectral problems describing the corresponding quantum integrable system. }

\Keywords{resolved conifold; topological string theory; Borel summation; difference equations; exponential spectral networks}

\Classification{40G10; 39A70; 81T30}

\setcounter{tocdepth}{2}
\tableofcontents

\section{Introduction}

A recurring fascinating insight of the interaction of mathematics and physics is that mathematical invariants and structures are most naturally encoded in terms of the data of physical theories. This is in particular the case for supersymmetric quantum field theories and string theo\-ries whose partition functions often have interpretations as generating functions of invariants of manifolds and whose parameter spaces become mathematical moduli spaces.

The physical partition functions which encode the mathematical invariants often correspond however to asymptotic formal power series with zero radius of convergence which is a common feature of series obtained in quantum field and string theories using a perturbative approach. A~theme which has generated a considerable amount of excitement recently is the realization that the systematic mathematical treatment of the asymptotic series using the ideas of resurgence leads to the uncovering of further deep mathematical structures as well as physical insights, see, e.g.,~\cite{Aniceto:2011nu,Marinolecture} and references therein.

Mathematically, when the Borel resummation of the asymptotic series is considered it turns out that other sets of invariants are encoded in the Stokes jumps of different Borel resummations, see for example \cite{Couso-Santamaria:2016vcc,Garoufalidis:2020nut,Garoufalidis:2020xec,Gu:2021ize}, see also \cite{garoufalidis2020resurgence} for a study of resurgence for the quantum dilogarithm function, which is a building block of many interesting objects in mathematical physics and in particular in Chern--Simons theories. In~\cite{ASTT21}, the techniques of \cite{garoufalidis2020resurgence} were used to study the Borel resummation of the Gromov--Witten potential $F^{\mathrm{top}}(\lambda,t)$ for the resolved conifold. Earlier results on the Borel resummation for the resolved conifold with different techniques and scope were obtained in \cite{Hatsuda:2015owa,Pasquetti:2009jg}. The Borel transform has infinitely many singularities organized along rays coinciding with the rays $\pm \mathbb{R}_{<0}Z_{\gamma}$, where $Z_{\gamma}$ denotes the central charge of a BPS state of charge~$\gamma$. Different Borel resummations were defined along rays which avoid the singularities, and it was found that they experience Stokes jumps across the rays $\pm \mathbb{R}_{<0}Z_{\gamma}$, with the BPS charge $\gamma \in \Gamma$ contributing to the jump by \cite{ASTT21}:
\begin{equation}\label{eq:topstringjumps}
 \Delta_{\gamma} F^{\rm top}_{\textrm{Borel}} (\lambda,t) = \frac{\Omega(\gamma)}{2\pi \mathrm{i}} \partial_{\check{\lambda}} \big( \check{\lambda} {\rm Li}_2\big({\rm e}^{Z_{\gamma}/\check{\lambda}}\big)\big), \qquad \check{\lambda}=\frac{\lambda}{2\pi},
\end{equation}
where $\Omega(\gamma)$ correspond to the Donaldson--Thomas (DT) invariant. The identification of the DT invariants was established by providing the link to a Riemann--Hilbert problem put forward by Bridgeland in~\cite{Bridgeland1} and applied to the resolved conifold in~\cite{BridgelandCon}.

The Borel analysis furthermore allowed to connect to previous proposals for definitions of non-perturbative topological string theory and elucidate their overlaps of validity. The Borel summation along a distinguished ray for instance gave an expression previously proposed in \cite{Hatsuda:2013oxa,Hatsuda:2015owa}, while a limiting expression obtained from the latter through infinitely many jumps gave the Gopakumar--Vafa expression for the resummation of the free energies. In the work of Bridgeland it was suggested that a tau-function, obtained as a solution of a~Riemann--Hilbert problem defined from the wall-crossing structure of Donaldson--Thomas invariants, provides a~non-perturbative completion of the Gromov--Witten potential. In further developments, a~difference equation was obtained from the asymptotic expansion of the Gromov--Witten free energy of the resolved conifold in \cite{alim2020difference}, while it was found in~\cite{Alim:2021lld} that this difference equation admits an analytic solution in~$\lambda$ related to Bridgeland's tau-function, whose non-perturbative content was shown in~\cite{Alim:2021ukq} to match earlier expectations of \cite{Hatsuda:2013oxa,Hatsuda:2015owa}.

Moreover, following the ideas of \cite{Coman:2018uwk,CLT20}, it was found in \cite{ASTT21} that the exponentials of the Borel summations $\exp\big(F^{\mathrm{top}}_{\rho}\big)$ along the ray $\rho$ form a collection of local sections of the conformal limit of a certain hyperholomorphic line bundle previously considered in~\cite{APP, Neitzke_hyperhol}, whose transition functions are the exponentials of the above Stokes factors, leading to a new geometrical, non-perturbative picture of the topological string.

In parallel developments in a different context, it was realized in \cite{Hollands:Heun,Hollands:recipe} that Borel resummation plays a central role in the geometric formulation of the effective twisted superpotential~$\mathcal{W}^{\mathrm{eff}}$ of a four-dimensional $\mathcal{N}=2$ theory $T_{\mathrm{4d}}$ of class S in the $\frac{1}{2} \Omega$-background $\mathbb{R}^{2}_\epsilon \times \mathbb{R}^{2}$.
This was motivated by a conjecture of Nekrasov, Rosly and Shatashvili \cite{Nekrasov:NRS}, which says that the superpotential~$\mathcal{W}^{\mathrm{eff}}$ may be obtained as a generating function of opers in terms of a special kind of Darboux coordinates on the associated moduli space of complexified flat connections.
It was found that the NRS Darboux coordinates~$\{x_i,y^i\}$ can be expressed in terms of the Borel summation $\mathcal{B}_\theta$ of the quantum periods $\oint_{\gamma}\lambda^{\mathrm{qu}}(\epsilon)$ in a certain critical direction with phase~$\vartheta$.\footnote{To be more precise, this critical direction with phase $\vartheta$ corresponds to a ray in the Borel plane with infinitely many singularities. The Borel summation $\mathcal{B}_{\vartheta}$ is not defined precisely in this critical direction. Instead we refer to the median of the so-called lateral Borel summation, which averages the result of the two Borel summations $\mathcal{B}_{\vartheta^+}$ and $\mathcal{B}_{\vartheta^-}$.} That~is,~\looseness=1
\begin{equation}\label{eqn:NRSBorel}
x_i = \frac{\epsilon}{\pi \mathrm{i}} \log \bigg( \mathcal{B}_\vartheta \oint_{A^i} \lambda^{\mathrm{qu}}(\epsilon) \bigg) \qquad \mathrm{and} \qquad y^i = \frac{1}{2 \epsilon} \log \bigg( \mathcal{B}_\vartheta \oint_{B_i} \lambda^{\mathrm{qu}}(\epsilon) \bigg).
\end{equation}
In terms of these Borel sums, the superpotential $\mathcal{W}^{\mathrm{eff}}$ may be obtained as the generating function%
\begin{equation}\label{eqn:Weff_generatingfunction}
y^i = \frac{1}{\epsilon} \frac{\partial \mathcal{W}^{\mathrm{eff}}(x,\epsilon)}{\partial x_i}.
\end{equation}

For a four-dimensional gauge theory $T_{\mathrm{4d}}$ it was observed that the critical phase $\vartheta$ agrees with the phase of the central charge of the W-boson. In this case the corresponding NRS Darboux coordinates are a complexified version of the well-known Fenchel--Nielsen coordinates (see also~\cite{Hollands:2013qza}). For non-Lagrangian theories $T_{\mathrm{4d}}$ of class S, such as the $E_{6}$ Minahan--Nemeschansky theory, there is a discrete set of critical phases $\vartheta_{\gamma}$ corresponding to 4d BPS particles whose electro-magnetic charge is a multiple of~$\gamma$~\cite{Hollands:t3abel}.

Similar to \cite{ASTT21}, it was noted that there exists a natural generalization $\mathcal{W}^{\mathrm{eff}}_{\vartheta'}$ of the effective twisted superpotential $\mathcal{W}^{\mathrm{eff}}$. The superpotential $\mathcal{W}^{\mathrm{eff}}_{\vartheta'}(x,\epsilon)$ is defined in the same way as in equations (\ref{eqn:NRSBorel}) and (\ref{eqn:Weff_generatingfunction}), but now in terms of quantum periods that are Borel summed in an arbitrary direction~$\vartheta'$. As for $F^{\mathrm{top}}_{\rho}$, the generalized superpotential $\mathcal{W}^{\mathrm{eff}}_{\vartheta'}$ is piece-wise constant in~$\vartheta'$ and jumps along a discrete set of Stokes rays. In the $\mathcal{N}=2$ context these Stokes rays are naturally labeled by 4d BPS states in the theory $T$. A BPS hypermultiplet with mass~$x$ would for instance induce a Stokes factor of the form
\begin{equation*}
 \Delta \mathcal{W}^{\mathrm{eff}}_{\vartheta'}(\epsilon) =
 \frac{\epsilon}{2\pi\mathrm{i}} \mathrm{Li}_2\big({\rm e}^{\frac{\pi \mathrm{i} x}{\epsilon}}\big).
\end{equation*}
Again, the exponentials of the Borel summations $\mathcal{W}^{\mathrm{eff}}_{\vartheta'}$ can be interpreted as a collection of local sections of a distinguished line bundle.

It was proposed in \cite{Hollands:t3abel, Hollands:recipe} that the generalized superpotential $\mathcal{W}^{\mathrm{eff}}_{\vartheta'}$ has a natural physical interpretation in terms of the 4d $\mathcal{N}=2$ theory $T$. It was argued that each Stokes sector corresponds to a certain IR boundary condition of the 4d theory $T$ in the $\frac{1}{2}\Omega$-background $\mathbb{R}_\epsilon^{2} \times \mathbb{R}^{2}$ labeled by $\vartheta'$, which can be explicitly described by coupling a 3d $\mathcal{N}=2$ theory of class~R to the boundary $S^{1} \times \mathbb{R}^{2}$ of the $\frac{1}{2}\Omega$-background.

Our goal in this work is to establish the relation between the two occurrences of Borel resummation and the almost identical Stokes jumps and transition functions for $F^{\rm top}$ and $\mathcal{W}^{\mathrm{eff}}$. To do so, we note that both the topological string partition function as well as the effective twisted superpotential can be obtained as two distinguished limits of a refinement of topological string theory, which has two deformation parameters $\epsilon_1$ and $\epsilon_2$. The refinement of the topological string partition function is motivated by Nekrasov's computation of the instanton partition function in the $\Omega$-background for four-dimensional $\mathcal{N}=2$ gauge theories as well as their five-dimensional lifts \cite{Nekrasov:2002qd}, and the fact that the K-theoretic instanton partition function agrees with the topological string partition function on a class of non-compact Calabi--Yau threefolds that geometrically engineer the afore-mentioned gauge theories \cite{Katz:1997eq,KKV}, when the limit $\epsilon_1=-\epsilon_2=\lambda$ is taken.

Topological string theory on any Calabi--Yau manifold $\mathbf{X}$ has a connection to quantization, since the topological string partition function may be interpreted as a wave function obtained by quantizing the space of real-valued 3-forms on $\mathbf{X}$ in a complex polarization \cite{Witten:1993ed}. In this interpretation the holomorphic anomaly equations \cite{Bershadsky:1993cx} correspond to a projectively flat connection on the associated bundle of Hilbert spaces that describes the independence of infinitesimal changes in the polarization. In the context of topological string theory on non-compact CY manifolds, the authors of \cite{ADKMV} provided a further link to quantization and moreover to integrability. It was proposed that the mirror curves of non-compact Calabi--Yau manifolds can be intepreted as the analogs of Hamiltonians of a quantum mechanical system and the closed string moduli as the analogs of energy levels in the quantum mechanical phase space. This interpretation has in particular led to the notion of a quantum curve, which depending on whether the curve variables are in $\mathbb{C}$ or $\mathbb{C}^*$, are described by a differential or finite difference equation.

A relation between the two occurrences of a quantum mechanical system was expected, but could not be made precise. One reason for this is that the quantum mechanical setup of the flatness equation of \cite{Witten:1993ed} contains derivatives with respect to the closed string moduli, while the quantum curves contains derivatives with respect to the open string moduli. In~\cite{ACDKV}, it was furthermore realized that the quantum curves are naturally associated to the refined topological string partition function in the Nekrasov--Shatashvili (NS) limit~\cite{NS}. More insights on the relation between the two occurrences of quantization have been obtained through the study of spectral properties of the difference operators describing the mirror curves of non-compact toric Calabi--Yau manifolds \cite{GHM}. See also \cite{MarinoSpectral} and references therein.

To achieve the goal of this paper, namely relating the resurgence of $F^{\rm top}$ and $\mathcal{W}^{\mathrm{eff}}$, we are naturally led to re-consider the precise relations between appearances of quantum mechanical setups occurring in the open and closed string moduli. This will in particular lead to an understanding of the non-perturbative completion of the quantum geometry considered in~\cite{ACDKV}.\looseness=1

The organization of this work is as follows:
In Section~\ref{sec:freeenergies}, we introduce the main players of our work, such as the topological string free energy (in Gromov--Witten and Gopakumar--Vafa form), its refinement and the Nekrasov--Shatashvili limit. We also recollect the definition of the resolved conifold geometry, its mirror and the associated mirror curve, as well as the relation of topological string theory to five-dimensional gauge theory.

In Section~\ref{sec:qmirrorcurve}, we briefly summarize how to obtain a Schr\"odinger operator $\mathbf{D}$ from quantizing the mirror curve, in particular for the resolved conifold geometry. Using the WKB approximation, we find the all-order-in-$\epsilon$ formal power series solution for the Schr\"odinger operator $\mathbf{D}$. We furthermore identify two particular exact solutions. The first exact solution is a ratio of quantum dilogarithm functions that was previously considered in for example \cite{ACDKV}. The other exact solution is given in terms of the Faddeev quantum dilogarithm. In Section \ref{sec:Borelsums}, it will become clear that these two exact solutions correspond to two distinguished Borel summations of the formal WKB solution.\footnote{See also the related discussion in \cite[Section 4]{GHN}, where the all-order-in-$\epsilon$ solution and its Borel analysis is discussed as well.}

Generalizing the approaches of \cite{alim2020difference, Iwaki2}, in Section~\ref{sec:diffeq}, we use the perturbative expansion of the refined free energies for the resolved conifold to derive a finite difference equation in the closed string modulus. We will show in Section \ref{sec:nonpertsol} that this difference equation~\eqref{eq:refdiffeq} is solved by a~special function with pleasant analytical properties. This solution provides a natural candidate for the non-perturbative completion of the perturbative expansion that we started with. We show that it is furthermore nicely related to previous proposals for non-perturbative refined partition functions for the resolved conifold, such as the one obtained by Lockhart and Vafa from the relation to superconformal field theory \cite{Lockhart:2012vp}, as well as the one obtained in \cite{Hatsuda:2015oaa} from ABJM theory and spectral analysis and the proposal of \cite{Krefl:2015vna} obtained from universal Chern--Simons theory.
We furthermore find that this non-perturbative solution satisfies an additional difference equation that involves integer shifts of the closed string modulus. This new difference equation (the third equation in \eqref{eps1shift}) is invisible to the perturbative expansion of the refined free energy, which is periodic in the closed modulus. We show in \eqref{eq:intshifttopstring} that this new difference equation encodes the Stokes jumps of the unrefined topological string, obtained previously in~\cite{ASTT21}. In Section~\ref{NSlimitsec}, we write down the analogous statements in the Nekrasov--Shatashvili (NS) limit. We will see in Section~\ref{sec:Borelsums} that the NS difference equation~\eqref{tdiffeqNS} corresponding to integer shifts in the closed string modulus encodes the Stokes phenomena of the Borel summations. In Section~\ref{sec:relfreeenergies}, we establish a difference equation relating the refined topological free energy to the NS free energy, and in Section~\ref{sec:qcclosedm}, we interpret this difference equation as the quantization of an algebraic curve in the closed string modulus.

In Section \ref{sec:Borelsums}, we complete a systematic study of the Borel summation of the NS free energy of the resolved conifold, as well as the formal WKB solution to the Schr\"odinger operator $\mathbf{D}$ from Section \ref{sec:qmirrorcurve}, and establish a relation between the two. Similar to \cite{ASTT21}, we prove in Theorem~\ref{maintheorem} that the Borel transform of the NS free energy has infinitely many singularities organized along rays that may be labeled by the charges of 5d BPS states, and that accumulate along the imaginary axis in the $\epsilon$-plane. We find that the non-perturbative solution of the finite difference equation found in Section~\ref{sec:diffeq} agrees with the Borel summation along the ray $\mathbb{R}_{>0}$ in the $\epsilon$-plane. We compute the Stokes jumps along these rays in equation~\eqref{eq:StokesjumpsW} and the limit of the Borel summed free energy along the imaginary axis in equation~\eqref{eq:limitImW}. We find that this last limit agrees with the Gopakumar--Vafa formulation of the NS free energy.
We complete a similar analysis in Theorem~\ref{th:BorelS} for the Borel sums of the formal solution of the Schr\"odinger operator~$\mathbf{D}$. In particular, we find that the two non-perturbative solutions found in Section~\ref{sec:qmirrorcurve} correspond to the Borel sums along $\mathrm{i}\mathbb{R}_{>0}$ and $\mathbb{R}_{>0}$. We finish the section with Theorem~\ref{theorem3}, relating the Borel sums of the free energy with the Borel sums of the formal solution of the Schr\"odinger operator.\looseness=1

In Section~\ref{sec:DT}, we extract the BPS structure from the previous analysis. We furthermore use this structure to formulate the Borel summation results geometrically as defining a~section of a~certain holomorphic line bundle $\mathcal{L}\to \mathbb{C}^{\times}\times M$, where $M$ denotes the K\"{a}hler parameter space, and $\mathbb{C}^{\times}$ the parameter space for $\epsilon$. We furthermore show that $\mathcal{L}$ is actually the same line bundle as defined by $\exp\big(F^{{\rm top}}_{\rho}\big)$ and studied in~\cite{ASTT21}.

In Section~\ref{5dWKBsec}, we interpret the Borel analysis of the previous sections in terms of 5d gauge theory and quantum integrable systems. In Section~\ref{sec:spectralnetwork}, we study the spectral networks (or Stokes graphs) corresponding to the Schr\"odinger operator $\mathbf{D}$ for the resolved conifold geometry in the spirit of the exact WKB analysis, and identify the above 5d BPS states as degenerate trajectories. We identify two distinguished spectral networks and define the associated spectral coordinates (on the moduli space of the corresponding multiplicative Hitchin system) in the language of~\cite{Hollands:2013qza}. In Section~\ref{sec:NRS}, we interpret the relation between the Borel sum of the formal solution to the Schr\"odinger operator $\mathbf{D}$ and the Borel sum of the NS free energy as a non-perturbative completion of the relation between the open and closed topological string found in~\cite{ACDKV}. We furthermore interpret this relation along the ray $\rho = \mathrm{i} \mathbb{R}_{>0}$ as a lift of the Nekrasov--Rosly--Shatashvili conjecture~\cite{Nekrasov:NRS} to five dimensions. That is, we argue that the NS free energy is equal to the generating function of the space of $q$-difference opers in the relevant spectral coordinates. We conclude this section with a description of two associated spectral problems.\looseness=1

Finally, we speculate about a possible generalization of our results to mirror curves of higher genus in the discussion Section~\ref{sec:discussion}.

\section{Topological string free energies}\label{sec:freeenergies}

Let $\mathbf{X}$ and $\check{\mathbf{X}}$ be a mirror pair of Calabi--Yau threefolds with $h^{1,1}(\mathbf{X})=h^{2,1}(\check{\mathbf{X}})$ and $h^{2,1}(\mathbf{X})=h^{1,1}(\check{\mathbf{X}})$. Consider a distinguished set $\mathbf{t}=\big(t^1,\dots,t^n\big)$ of local coordinates on the moduli space $\mathcal{M}_\mathrm{symp}(\mathbf{X})$ of symplectic structures on~$\mathbf{X}$. These coordinates are related by the mirror map~$\mathbf{t}(z)$ to a set of complex structure coordinates~$z$ on the isomorphic moduli space $\mathcal{M}_\mathrm{comp}(\check{\mathbf{X}})$ of complex structures on $\check{\mathbf{X}}$.

Then topological string theory associates a topological string partition function to the mirror family $\mathbf{X}_\mathbf{t}$ and $\check{\mathbf{X}}_{\mathbf{t}(z)}$, which is defined as an asymptotic series in the topological string coupling~$\lambda$ and sums over the free energies associated to the world-sheets of genus~$g$:
\begin{equation*}
Z^{\mathrm{top}} (\lambda,\mathbf{t})= \exp \bigg(\sum_{g=0}^{\infty} \lambda^{2g-2} \mathcal{F}_{g}(\mathbf{t})\bigg).
\end{equation*}
This topological string partition function may either be computed in terms of the topological A-model on $\mathbf{X}_\mathbf{t}$, or equivalently in terms of the topological B-model on $\check{\mathbf{X}}_{\mathbf{t}(z)}$, which are related by mirror symmetry.

\subsection{GW and GV expansions}
 In an expansion around a distinguished ``large volume" point in the moduli space $\mathcal{M}_\mathrm{symp}(\mathbf{X})$, the topological string free energies become the generating functions of higher genus Gromov--Witten (GW) invariants on the A-model side of mirror symmetry. The GW potential of $\mathbf{X}$ is the formal power series
\begin{equation*}
F^\mathrm{top}(\lambda,\mathbf{t}) = \sum_{g\ge 0} \lambda^{2g-2} F_g(\mathbf{t})= \sum_{g\ge 0} \lambda^{2g-2} \sum_{\mathbf{\beta}\in H_2(X,\mathbb{Z})} N^g_{\mathbf{\beta}} \mathbf{Q}^{\mathbf{\beta}},
\end{equation*}
where $\mathbf{Q}^{\mathbf{\beta}} := \exp (2\pi \mathrm{i} \langle \mathbf{t},\mathbf{\beta} \rangle)$ is a formal variable living in a suitable completion of the effective cone in the group ring of $H_2(\mathbf{X},\mathbb{Z})$.

The GW potential can be furthermore written as
\begin{equation*}
F^\mathrm{top}=F^\mathrm{top}_{\mathbf{\beta}=0} + \tilde{F}^\mathrm{top},
\end{equation*}
where $F_{\beta=0}$ denotes the contribution from constant maps and $ \tilde{F}$ the contribution from non-constant maps. The constant map contribution at genus 0 and 1 are $\mathbf{t}$-dependent and the higher genus constant map contributions take the universal form \cite{Faber}
\begin{equation*}
F^\mathrm{top}_{g,\beta=0} = \frac{\chi(\mathbf{X})(-1)^{g-1} B_{2g} B_{2g-2}}{4g (2g-2) (2g-2)!},
\end{equation*}
for $g\ge2$, where $\chi(\mathbf{X})$ is the Euler characteristic of $\mathbf{X}$ and the Bernoulli numbers $B_n$ are generated by
\begin{equation*}
\frac{w}{{\rm e}^w-1} = \sum_{n=0}^{\infty} B_n \frac{w^n}{n!}.
\end{equation*}

The non-constant part of the GW potential may be resummed in terms of the Gopakumar--Vafa invariants $n_g^{\beta} \in \mathbb{Z}$ that ``count" electrically charged $M2$ branes wrapping two-cycles of degree $\beta$ in an M-theory compactification on the Calabi--Yau $\mathbf{X}$ \cite{Gopakumar:1998ii,Gopakumar:1998jq}. The GW potential in GV form reads
\begin{equation*}
\tilde{F}^\mathrm{top}(\lambda,\mathbf{t})= \sum_{\beta>0}\sum_{g\ge 0} n_g^{\beta} \sum_{k\ge 1} \frac{1}{k} \left( 2 \sin \left( \frac{k\lambda}{2}\right)\right)^{2g-2} \mathbf{Q}^{k\mathbf{\beta}},
\end{equation*}
with in particular:
\begin{equation*}
\tilde{F}_0(\mathbf{t})=\sum_{\mathbf{\beta}>0} n^{\beta}_0 {\rm Li}_3\big(\mathbf{Q}^{\mathbf{\beta}}\big).
\end{equation*}

\subsection{Refinement of topological string theory}

The refinement of topological string theory is motivated by Nekrasov's instanton calculation in the $\Omega$-background \cite{Nekrasov:2002qd}. A topological string interpretation of the refinement can be given in terms of the refined Gopakumar--Vafa invariants $N_{j_L,j_R}^{\beta}$. These GV invariants compute an index of five-dimensional BPS states with charge $\beta$ and spin $(j_L,j_R)$ with respect to the rotation group ${\rm SU}(2)_L\times {\rm SU}(2)_R$ of $\mathbb{R}^4$ \cite{IKV}. This index is an invariant of non-compact CY manifolds and can be used to define refined topological string theory.

The refined topological string free energy is an expression in the moduli $\mathbf{t}$ as well as the two parameters~$\epsilon_1$,~$\epsilon_2$. It can be written as (see, e.g.,~\cite{Hatsuda:2013oxa, Huang:2010kf})
\begin{equation*}
F^{{\rm ref}}(\epsilon_1,\epsilon_2,\mathbf{t})= \sum_{\beta} \sum_{j_L,j_R\ge 0} N_{j_L,j_R}^{\beta} \sum_{k\ge 1} \frac{1}{k} \frac{\chi_{j_L}\big(q^k_L\big) \chi_{j_R}\big(q^k_R\big) }{\big(q_1^{k/2}-q_1^{-k/2}\big)\big(q_2^{k/2}-q_2^{-k/2}\big)} \mathbf{Q}^{k\beta},
\end{equation*}
in terms of the ${\rm SU}(2)$ characters
\begin{equation*}
\chi_j(q) =\frac{q^{2j-1}-q^{-2j-1}}{q-q^{-1}},
\end{equation*}
where $q_j = {\rm e}^{\mathrm{i}\epsilon_j}$ for $j=1,2,L,R$ and
\begin{equation*}
 \epsilon_{L} = \frac{\epsilon_1-\epsilon_2}{2}, \qquad \epsilon_R= \frac{\epsilon_1+\epsilon_2}{2}.
 \end{equation*}

The unrefined topological string may be obtained from the refined one by setting
\begin{equation*}
 \epsilon_1=-\epsilon_2= \lambda.
 \end{equation*}
Another limit of refined topological string theory, which was put forward in~\cite{NS}, is given by sending one of the parameters $\epsilon_1$, $\epsilon_2$ to zero while the other one is kept finite. E.g.,
\begin{equation*}
 \epsilon_1=\epsilon,\qquad \epsilon_2\rightarrow 0.
 \end{equation*}
Since the refined topological string free energy has a simple pole in this limit, the free energy in this Nekrasov--Shatashvili (NS) limit is defined as
\begin{equation*}
F^{\mathrm{NS}} (\epsilon,{\bf{Q}}) := \lim_{\epsilon_2\rightarrow 0} \epsilon_2 F^{\rm ref}(\epsilon_1,\epsilon_2,{\bf{Q}}) |_{\epsilon_1=\epsilon}.
\end{equation*}


\subsection{Topological string free energies of the resolved conifold}\label{sec:freeenergyrescon}

The CY threefold given by the total space of the rank two bundle
\begin{equation*}
\mathbf{X} := \mathcal{O}(-1) \oplus \mathcal{O}(-1) \rightarrow \mathbb{P}^1,
\end{equation*}
over the projective line
corresponds to the resolution of the conifold singularity in $\mathbb{C}^4$ and is known as the resolved conifold. This geometry is defined on the A-model side of mirror symmetry and $t$ corresponds to
\begin{equation*}
t= \int_{C} B+ \mathrm{i} \omega,
\end{equation*}
where $B\in H^{2}(\mathbf{X},\mathbb{R})/H^{2}(\mathbf{X},\mathbb{Z})$ is the B-field, while $\omega$ is the K\"ahler form and $C$ corresponds to the $\mathbb{P}^1$ class in this example. The GW potential for this geometry was determined, in physics \cite{Gopakumar:1998ii,GV} as well as in mathematics \cite{Faber}, with the outcome\footnote{See also \cite{MM} for the determination of $F^g$ from a string theory duality and the explicit appearance of the polylogarithm expressions.}
\begin{equation}\label{resconfree}
\tilde{F}^\mathrm{top}(\lambda,t)= \sum_{g=0}^{\infty} \lambda^{2g-2} \tilde{F}_g(t)= \frac{1}{\lambda^2} {\rm Li}_{3}(Q)+ \sum_{g=1}^{\infty} \lambda^{2g-2} \frac{(-1)^{g-1}B_{2g}}{2g (2g-2)!} {\rm Li}_{3-2g} (Q)
\end{equation}
for the non-constant maps, where $Q=\exp(2\pi \mathrm{i} t)$ and the polylogarithm is defined by
\begin{equation*}
{\rm Li}_s(z) = \sum_{n=0}^{\infty} \frac{z^n}{n^s}
\end{equation*}
for $s\in \mathbb{C}$ and $|z|<1$.

The refined topological string free energy for the resolved conifold geometry is given by~\cite{IKV}
\begin{equation*}
\begin{split}
F^{{\rm ref}}(\epsilon_1,\epsilon_2,t) &=
 -\sum_{k=1}^{\infty} \frac{Q^k}{k(2 \sin (k\epsilon_1/2)) (2\sin (k\epsilon_2/2))}.
\end{split}
\end{equation*}
From this expression it is easy to find formal expansions in $\epsilon_1$ and $\epsilon_2$ provided we consider shifts of the form $F^{\rm ref}\big(\epsilon_1,\epsilon_2,t\pm \frac{\check\epsilon_1}{2} \pm \frac{\check\epsilon_2}{2}\big)$, where $\check\epsilon_i=\epsilon_i/2\pi$. Indeed, using the expression for the generating function of the Bernoulli numbers
\begin{equation}\label{eq:Bernoulligen}
\frac{w}{{\rm e}^w-1} = \sum_{n=0}^{\infty} B_n \frac{w^n}{n!},
\end{equation}
we can write the expansion of, for example, $F^{\rm ref}\big(\epsilon_1,\epsilon_2,t- \frac{\check\epsilon_1+\check\epsilon_2}{2}\big)$ as
\begin{align}
 F^{\rm ref}\left(\epsilon_1,\epsilon_2,t- \frac{\check\epsilon_1+\check\epsilon_2}{2}\right)&= -\sum_{k=1}^{\infty} \frac{{\rm e}^{2\pi \mathrm{i} k (t-\frac{\check\epsilon_1+\check\epsilon_2}{2} )}}{k(2 \sin (k\epsilon_1/2)) (2\sin (k\epsilon_2/2))} \nonumber\\
 &=\sum_{k=1}^{\infty}\frac{Q^k}{k({\rm e}^{\mathrm{i} k\epsilon_1} -1)({\rm e}^{\mathrm{i} k\epsilon_2} -1)}\nonumber\\
 &= \sum_{k=1}^{\infty} \sum_{m,n=0}^{\infty}\frac{Q^k}{k^{3-m-n}} \frac{B_n B_m}{n! m!} \mathrm{i}^{m+n-2} \epsilon_1^{m-1} \epsilon_2^{n-1} \nonumber\\
 &=-\sum_{m,n=0}^{\infty} {\rm Li}_{3-m-n}(Q) \frac{B_m B_n}{m! n!} \mathrm{i}^{m+n} \epsilon_1^{m-1} \epsilon_2^{n-1}.\label{refshifexp1}
\end{align}
A similar computation shows that
\begin{align}
 F^{\rm ref}\left(\epsilon_1,\epsilon_2,t+ \frac{\check\epsilon_1+\check\epsilon_2}{2}\right)&=\sum_{m,n=0}^{\infty} {\rm Li}_{3-m-n}(Q) \frac{B_m B_n}{m! n!} (-1)^{m+n-1}\mathrm{i}^{m+n} \epsilon_1^{m-1} \epsilon_2^{n-1},\nonumber\\
 F^{\rm ref}\left(\epsilon_1,\epsilon_2,t+ \frac{\check\epsilon_1-\check\epsilon_2}{2} \right)&=\sum_{m,n=0}^{\infty} {\rm Li}_{3-m-n}(Q) \frac{B_m B_n}{m! n!} (-1)^{m-1}\mathrm{i}^{m+n} \epsilon_1^{m-1} \epsilon_2^{n-1},\nonumber\\
 F^{\rm ref}\left(\epsilon_1,\epsilon_2,t- \frac{\check\epsilon_1-\check\epsilon_2}{2}\right)&=\sum_{m,n=0}^{\infty} {\rm Li}_{3-m-n}(Q) \frac{B_m B_n}{m! n!} (-1)^{n-1}\mathrm{i}^{m+n} \epsilon_1^{m-1} \epsilon_2^{n-1}.\label{refshifexp2}
 \end{align}

On the other hand, the NS limit of $F^{\rm ref}(\epsilon_1,\epsilon_2,t)$ in the case of the resolved conifold is given~by
\begin{align}
 F^{\rm NS}(\epsilon,t)&:= \lim_{\epsilon_2 \rightarrow 0} \epsilon_2\cdot F^{\rm ref}(\epsilon_1,\epsilon_2,t) \big{|}_{\epsilon_1=\epsilon}\nonumber\\
 &=\lim_{\epsilon_2 \rightarrow 0} -\epsilon_2 \sum_{k=1}^{\infty} \frac{Q^k}{k(2 \sin (k\epsilon_1/2)) (2\sin (k\epsilon_2/2))} \bigg{|}_{\epsilon_1=\epsilon}\nonumber\\
 &=- \frac{1}{2} \sum_{k=1}^{\infty} \frac{Q^k}{k^2 \sin(k\epsilon/2)}.\label{def:FNS}
\end{align}

As before with $F^{\rm ref}(\epsilon_1,\epsilon_2,t)$, one can easily find a formal expansion in $\epsilon$ for the shifts $F^{\rm NS}(\epsilon,t\pm \check\epsilon/2)$ using the generating function \eqref{eq:Bernoulligen}. That is,
\begin{align}
 F^{\rm NS}(\epsilon,t-\check{\epsilon}/2)&= -\mathrm{i} \sum_{n=0}^{\infty} {\rm Li}_{3-n}(Q) \frac{B_n}{n!} (\mathrm{i}\epsilon)^{n-1}=-\frac{1}{2\pi} \sum_{n=0}^{\infty} \partial_t^n {\rm Li}_{3}(Q) \frac{B_n}{n!} \check{\epsilon}^{n-1}\nonumber\\
 F^{\rm NS}(\epsilon,t+\check{\epsilon}/2)&=\mathrm{i}\sum_{n=0}^{\infty}\mathrm{Li}_{3-n}(Q)\frac{B_n}{n!}(-\mathrm{i}\epsilon)^{n-1}=\frac{1}{2\pi} \sum_{n=0}^{\infty} \partial_t^n {\rm Li}_{3}(Q) \frac{B_n}{n!} (-1)^{n-1}\check{\epsilon}^{n-1}.\label{eq:FNSasymp}
\end{align}

\subsection{Mirror geometry}\label{appendix:mirrorsymmetry}
Mirrors of non-compact CY threefolds are described in \cite{Chianglocal,Hori:2000kt, Katz:1997eq}, see also \cite{Hosonolocal}. We focus here on toric examples. The non-compact CY threefolds in these cases are given~by
\begin{equation*}
 \mathbf{X}= \frac{\mathbb{C}^{3+k}\setminus S}{(\mathbb{C}^*)^k},
\end{equation*}
where the $k$ algebraic tori $\mathbb{C}^*$ act on the space by
 \begin{equation*}
 (\mathbb{C}^*)^a \colon \ (z_1,\dots,z_j,\dots, z_{3+k}) \mapsto \big(\lambda^{l^{(a)}_1} z_1,\dots,\lambda^{l^{(a)}_j} z_j,\dots,\lambda^{l^{(a)}_{3+k}} z_{3+k}\big), \qquad a=1,\dots,k.
\end{equation*}
 Here, $\lambda \in \mathbb{C}^*$, whereas $l^{(a)}_{i} \in \mathbb{Z}$ are the toric charges and $S$ is a~subset which is fixed by a~subgroup of $(\mathbb{C}^*)^k$. The resolved conifold geometry corresponds to the toric variety associated to the toric charge vector
\begin{equation*}
l=(1,1,-1,-1).
\end{equation*}

To define the mirror of $\mathbf{X}$ we consider the variables $w_i \in \mathbb{C}^*$, for $i=0,\dots,3$, subject to the constraint
\begin{equation*}
 \prod_{i=0}^{3} w_i^{l_i}= \frac{w_0 w_1}{w_2 w_3}=1,
\end{equation*}
and the polynomial
\begin{equation*}
 P(a,w)=\sum_{i=0}^3 a_i w_i,
\end{equation*}
with $a_i \in \mathbb{C}$ for $i=0,\ldots,3$. This polynomial enters the definition of the Landau--Ginzburg potential of the mirror, which is given by
\begin{equation}\label{eq:mirrorLG}
 W= -v^+ v^- + P(a,w).
\end{equation}
The additional variables $v^+,v^- \in \mathbb{C}$ are an artifact of local mirror symmetry, see, e.g.,~\cite{Hori:2000kt}. There is a freedom to rescale $W$ by a nonzero complex number, which we can use to set one of the~$w_i$ variables to~1. Without loss of generality we set $w_0=1$.

The mirror $\check{\mathbf{X}}$ of the resolved conifold geometry is then defined as the variety
\begin{equation*}
 \check{\mathbf{X}}= \big\{ (v^+,v^-,w_1,w_2) \in \mathbb{C}^2\times (\mathbb{C}^*)^2 \mid v^+ v^- = a_0 + a_1 w_1 + a_2 w_2 + a_3 w_1 w_2^{-1} \big\}.
\end{equation*}
The $a_i$ are complex parameters which determine the complex structure of $\check{\mathbf{X}}$. The rescaling of~$W$ and $w_1$, $w_2$ can be further used to show that the complex structure of $\check{\mathbf{X}}$ only depends on the combination
\begin{equation}\label{Qa}
Q=\frac{a_2 a_3}{a_0 a_1}.
\end{equation}

The equation $W=0$ can be written in the form\footnote{The term $v^+ v^-$ has been rescaled here compared to \eqref{eq:mirrorLG}}
\begin{equation*}
 v^+ v^- = {\rm e}^{2 \pi \mathrm{i} x} + {\rm e}^{2 \pi \mathrm{i} y} + 1 + Q {\rm e}^{2 \pi \mathrm{i} (y-x)},
\end{equation*}
with $w_1=\exp(2 \pi \mathrm{i} x)$, $w_2=\exp(2 \pi \mathrm{i} y)$ and $Q = \exp( 2 \pi \mathrm{i} t)$.
The equation
\begin{equation}\label{eqn:mirrorcurve1}
\Sigma\colon \ {\rm e}^{2 \pi \mathrm{i} x} + {\rm e}^{2 \pi \mathrm{i} y} + 1 + Q {\rm e}^{2 \pi \mathrm{i} (y-x)} = 0,
\end{equation}
defines the so-called mirror curve $\Sigma$.

It turns out to be convenient to redefine the variables $x$ and $y$ by mapping $x \mapsto x+ t + 1/2$ and $y \mapsto y + x $, so that $\Sigma$ is now parametrized by the equation
\begin{equation}\label{eqn:mirrorcurve1p}
 \Sigma\colon \ (1-X ) Y - (1 - Q X ) = 0 \subset \mathbb{C}^*_X \times \mathbb{C}^*_Y,
\end{equation}
in terms of the $\mathbb{C}^*$-variables $X={\rm e}^{2 \pi \mathrm{i} x}$ and $Y={\rm e}^{2 \pi \mathrm{i} y}$. Topologically, $\Sigma$ is a four-punctured sphere, with punctures at
\begin{align*}
(X,Y) \in \big\{ (0, 1), (1, \infty), \big(Q^{-1}, 0\big), (\infty, Q) \big\}.
\end{align*}
That is, its punctures are located at the points where the curve intersects the lines $X=0,\infty$ or $Y=0,\infty$ (see Figure~\ref{fig:plotY}).

\begin{figure}[t]\centering
\includegraphics[scale=0.8]{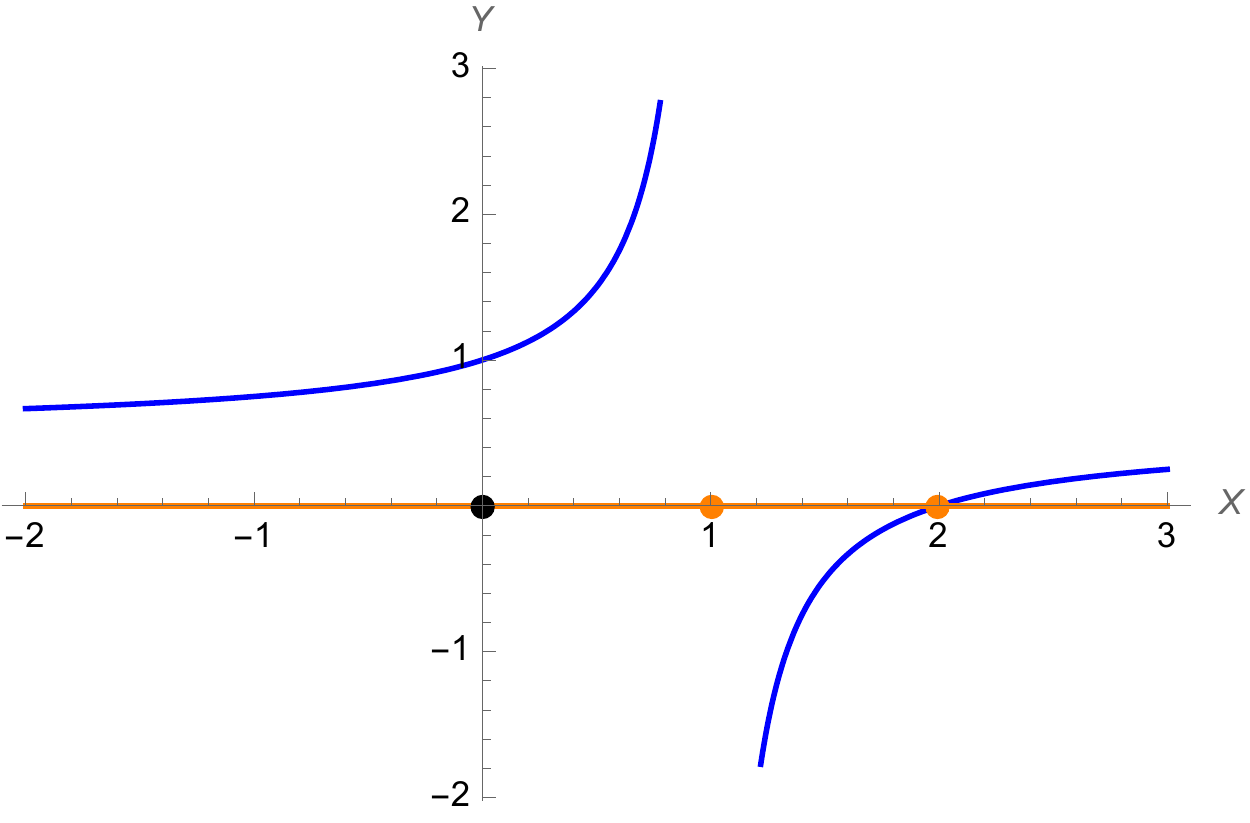}
\caption{Plot of a cross-section of the mirror curve $\Sigma$ (in blue), with $Q=1/2$, as well as its projection to $C = \mathbb{C}^*_X$ (in orange), and its logarithmic punctures at $X=1$ and $X=Q^{-1}$ (in orange) and regular punctures at $X=0$ and $X=\infty$ (in black).}\label{fig:plotY}
\end{figure}

The mirror curve $\Sigma$ comes equipped with the tautological 1-form
\begin{equation}\label{eq:SWform}
 \lambda^\mathrm{cl} = y \, {\rm d}x = - \frac{1}{4 \pi^2} \log Y \,{\rm d} \log X,
\end{equation}
which is a reduction from the holomorphic three-form on $\check{\mathbf{X}}$ to $\Sigma$. The genus zero free energy~$F^\mathrm{top}_0(\mathbf{t})$ at large radius is then simply determined by the relations
\begin{equation}\label{eqn:classical_periods}
t_i = \oint_{\gamma_{A^i}} \lambda^\mathrm{cl}, \qquad \frac{1}{2 \pi \mathrm{i}} \frac{\partial F^\mathrm{top}_0}{\partial t_i} = \frac{4 \pi^2}{\epsilon} \oint_{\gamma_{B_i}} \lambda^\mathrm{cl},
\end{equation}
where $\gamma_{A^i}$ and $\gamma_{B_i}$ are a suitable basis of 1-cycles on $\Sigma$, illustrated in Figure~\ref{fig:cyclesSigma} for the resolved conifold geometry.\footnote{Since the B-cycle $\gamma_B$ is non-compact in this example, the corresponding period requires a regularization. This will be discussed in Section \ref{sec:classicalPF}.}
Note that the periods of $\lambda^\mathrm{cl}$ are invariant under ${\rm SL}(2,\mathbb{Z})$-transformations acting on the variables $x$ and $y$. This is known as the framing ambiguity.

\begin{figure}[t]\centering
\includegraphics[scale=0.8]{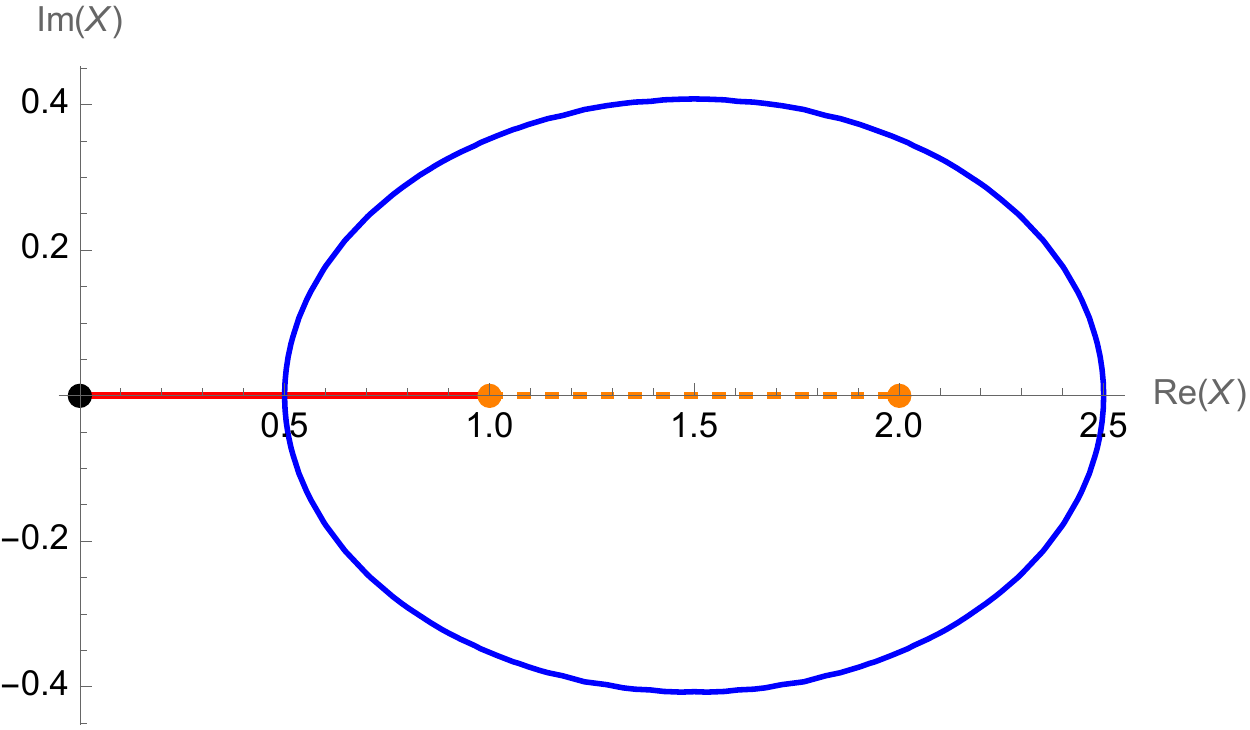}
\caption{Plot of 1-cycles $\gamma_A$ (in blue) and $\gamma_B$ (in red) on the mirror curve $\Sigma$ with $Q=1/2$. The 1-cycle~$\gamma_A$ is compact and oriented in the clockwise direction. The 1-cycle $\gamma_B$ runs from the puncture at $X=1$ to the puncture at $X=0$ and is non-compact.}\label{fig:cyclesSigma}
\end{figure}

Because of the logarithms in the definition~\eqref{eq:SWform} of $\lambda^\mathrm{cl}$, it is sometimes needed to consider a~$\mathbb{Z}$-covering $\widetilde{\Sigma}$ of $\Sigma$ that resolves the logarithmic ambiguities of $\lambda^\mathrm{cl}$. This covering has additional branching at the punctures at $X=1$ and $X=Q^{-1}$ where $\lambda^\mathrm{cl}$ is multi-valued. We choose a trivialization of the covering by connecting these ``logarithmic" punctures by a logarithmic branch-cut and labeling the sheets away from the cut by an integer $N$.

The $\mathbb{Z}$-covering $\widetilde{\Sigma}$ has an additional 1-cycle~$\gamma_0$ compared to~$\Sigma$. In contrast to the 1-cycles~$\gamma_A$ and~$\gamma_B$, the 1-cycle $\gamma_0$ crosses the logarithmic cut twice (with opposite orientations). This 1-cycle is for instance illustrated in~\cite[Figure~2]{BLR}.

\subsection{Picard--Fuchs equation and classical terms}\label{sec:classicalPF}
The periods of the classical differential $\lambda^{\text{cl}}$ can be computed explicitly as outlined in the previous subsection. Alternatively they can be determined as solutions of a differential equation, called the Picard--Fuchs equation. The latter stems from the fact that the differential $\lambda^{\text{cl}}$ descends from the holomorphic three-form on the mirror CY threefolds. The holomorphic three-form can be used to capture the variation of Hodge structure in the middle dimensional cohomology of the threefold. The flatness of the associated Gauss-Manin connection leads to differential equations which annihilate the periods of the three-form and correspondingly the periods of the meromorphic differential $\lambda^{\text{cl}}$ on the curve. To derive the Picard--Fuchs equation, rescalings of the defining equation of the mirror curve can be used which we review for the case of the mirror of the resolved conifold.

As mentioned in the previous section, the $a_i$ appearing in \eqref{eq:mirrorLG} are complex parameters which determine the complex structure of $\check{\mathbf{X}}$, and the rescaling of $W$ and $w_1$, $w_2$ can be used to show that the complex structure of $\check{\mathbf{X}}$ only depends on the combination~\eqref{Qa}.
Keeping the explicit dependence on the $a_i$ is however more convenient for the derivation of the Picard--Fuchs equations from a GKZ~\cite{GKZ}
system of differential equations annihilating periods of the unique holomorphic $(3,0)$ form on $\check{\mathbf{X}}$. The latter is given by
\begin{equation*}
 \Omega = \textrm{Res}_{W=0} \frac{1}{W} \,{\rm d}v^+ \,{\rm d}v^- \frac{{\rm d}w_1}{w_1} \frac{{\rm d}w_2}{w_2}.
\end{equation*}

The periods of $\Omega$ are annihilated by the GKZ operator
\begin{equation*}
 \frac{\partial}{\partial a_0} \frac{\partial}{\partial a_1}- \frac{\partial}{\partial a_2} \frac{\partial}{\partial a_3},
\end{equation*}
which translates into the Picard--Fuchs operator expressed in $Q$, namely
\begin{equation*}
 L= (1-Q)\theta^2, \qquad \theta=Q\frac{\mathrm{d}}{\mathrm{d}Q}.
\end{equation*}
This operator has the following solutions:
\begin{equation*}
 \varpi^0=1, \qquad \varpi^1=\frac{1}{2\pi \mathrm{i}}\log (Q).
\end{equation*}
These correspond to periods of $\Omega$ over appropriately defined compact three-cycles in $H_3(\check{\mathbf{X}},\mathbb{Z})$. The mirror map is identified as
\begin{equation*}
 t= \frac{1}{2\pi \mathrm{i}} \log (Q),
\end{equation*}
we note that the mirror of the resolved conifold is a special case where the parameter appearing in the definition of the curve equation $Q$ is given by $\exp(2\pi \mathrm{i} t )$ without further corrections, so that $Q$ is the exponentiated mirror map.

A familiar phenomenon of mirror symmetry for local CY is that the Picard--Fuchs system of the mirror does not have enough solutions to recover the expected ingredients of a special K\"ahler geometry. Generically it is missing expressions for periods of non-compact three-cycles. One way to recover these is to carefully define non-compact three-cycles on the geometry as was done in \cite{Hosonolocal}. Alternatively one may extend the PF operators, guided by the expectation of its general form in compact CY, when it is formulated in terms of the distinguished coordinates corresponding to the mirror map. This was done in \cite{Forbes}, which we will outline here.

The guiding principle is the expected form of the PF operator in terms of the special (flat) coordinate $t$ in the case when the moduli space is complex one-dimensional. It is given by
\begin{equation*}
 L = \partial_t^2 C_{ttt}^{-1} \partial_t^2,
\end{equation*}
where $C_{ttt}:=\frac{\partial^3}{\partial t^3} F_0$, see, e.g.,~\cite{Ceresole:1992su,coxkatz}. This leads to the extended PF operator
\begin{equation}\label{eq:classicalextPF}
 L= \theta^2 (1-Q) \theta^2
\end{equation}
in the $Q$ coordinate \cite{Forbes}. This operator has the solutions
\begin{gather*}
\varpi^0 = 1,\\
\varpi^1 =\frac{1}{2\pi \mathrm{i}} \log (Q),\\
\varpi^2 =\frac{1}{(2\pi \mathrm{i})^2} \left( \frac{1}{2} (\log (Q))^2 + {\rm Li}_2(Q) \right),\\
\varpi^3 =\frac{1}{(2\pi \mathrm{i})^3}\left(- \frac{1}{6} (\log (Q))^3 -\log (Q) {\rm Li}_2(Q) +2 {\rm Li}_3(Q)\right).
\end{gather*}

We can identify the additional solutions with
\begin{equation*}
 \varpi^2=: F_t,\qquad \varpi^3= 2F_0 - t F_t,
\end{equation*}
where $F_t:=\partial_t F_0$ and where the prepotential $F_0$ reads
\begin{equation*}
F_0 =\frac{1}{(2\pi \mathrm{i})^3}\left( \frac{1}{3!} (\log Q)^3+ {\rm Li}_3(Q)\right).
\end{equation*}
This matches with the expected generating function of the GW invariants of the resolved conifold.

Note that the first term of the prepotential was not included in the discussion of the topological string free energies in the previous subsections and in particular in equation~\eqref{resconfree}. This term is known as the classical contribution to prepotential and should reflect the triple intersection number of three divisors in the CY threefold $\mathbf{X}$. These intersection numbers are however ill-defined geometrically for non-compact CY manifolds. The term at hand, and more generally the intersection numbers for any non-compact CY threefold, can be thought of as being obtained through a careful regularization of an appropriate decompactification limit, in which the non-compact CY is considered as a local geometry that is embedded in a compact geometry, see, e.g.,~\cite{Chianglocal}.

We will later in particular need the expression for the period for $F_t$, including the classical term, which is given by
\begin{equation}\label{eq:classicatFt}
 F_t= \frac{t^2}{2}+\frac{1}{(2\pi \mathrm{i})^2} {\rm Li}_2(Q).
\end{equation}

\subsection{Geometric engineering}\label{sec:geomeng}

Non-compact toric Calabi--Yau threefolds $\mathbf{X}$ (or more generally dot diagrams) are related to five-dimensional $\mathcal{N}=1$ field theories on $\mathbb{R}^4 \times S_R^1$ by geometric engineering \cite{Benini:2009gi, KKV}. In this correspondence the mirror curve $\Sigma$ gets interpreted as the Seiberg--Witten curve of the five-dimensional field theory. In most of this paper we have set $R=1$. If we want to re-introduce $R$ we roughly need to scale all $\mathbb{C}$-valued variables and parameters by a factor of~$1/R$.

The field theory may be interpreted as a theory of class~S once we choose a projection
$\Sigma \subset \mathbb{C}_X^* \times \mathbb{C}_Y^*$ to $C = \mathbb{C}_X^*$. For instance, the resolved conifold geometry with mirror curve~$\Sigma$ as in equation~\eqref{eqn:mirrorcurve1p}, projected to $\mathbb{C}^*_X$, engineers the five-dimensional pure ${\rm U}(1)$ theory on $\mathbb{R}^4 \times S_R^1$ with $Q = R^2 \Lambda^2$. On the other hand, the projection of $\Sigma$, as parametrized in \eqref{eqn:mirrorcurve1}, to $\mathbb{C}^*_Y$ geometrically engineers a massive five-dimensional ${\rm SU}(2)$ hypermultiplet with $Q = \exp(2 \pi \mathrm{i} m)$. In the limit $R \to 0$ the five-dimensional $\mathcal{N}=1$ field theory reduces to a four-dimensional $\mathcal{N}=2$ field theory whose Seiberg--Witten curve is given by the $R\to 0$ limit of $\Sigma$.

The refined partition function $Z^{\rm ref} = \exp \big(F^{\rm ref}\big)$ of the non-compact Calabi--Yau threefold $\mathbf{X}$ corresponds to the K-theoretic version of the Nekrasov instanton partition function of the five-dimensional field theory. For instance, the refined partition function for the resolved conifold equals\footnote{Remark that by using~\eqref{eq:productinvpow}, we can equivalently write $Z^{\rm ref}( \epsilon_1,\epsilon_2,t) = \prod_{k,l=1}^\infty \big(1- Q q_1^{k-1/2} q_2^{-l+1/2} \big)$. Going to the unrefined limit where $\epsilon_1=-\epsilon_2=\lambda$ sets $q=q_1=q_2^{-1}$, so that
 $Z^{\rm ref}(\lambda,-\lambda,t)=\prod_{k,l=1}\big(1-Q q^{k+l-1}\big)=\prod_{s=1}\big(1-Q q^s\big)^s $,
recovering the usual expression for the partition function of the resolved conifold found in \cite{IKV}.}
\begin{align}\label{eqn:one-loop}
 Z^{\rm ref}( \epsilon_1,\epsilon_2,t) = \exp F_\mathrm{GV}^{\rm ref}( \epsilon_1,\epsilon_2,t) = \prod_{k,l=1}^\infty \big(1- Q q_1^{k-1/2} q_2^{l-1/2} \big)^{-1},
\end{align}
(where we recall that $q_j={\rm e}^{\mathrm{i}\epsilon_j}$ for $j=1,2$)
and either computes the instanton partition function for the pure ${\rm U}(1)$ gauge theory or the massive hypermultiplet, which both only receive a 1-loop contribution.

Finally, there are various kinds of BPS states corresponding to the topological string set-up. In particular, in the context of Type IIA string theory one may wrap D0-branes around the 0-cycle mirror to the 1-cycle $\gamma_0 \subset \widetilde{\Sigma}$, D2-branes around 2-cycles mirror to the 1-cycles $\gamma_{A^i} \subset \Sigma$, D4-branes around 4-cycles mirror to the 1-cycles $\gamma_{B_i} \subset \Sigma$ and D6-branes on $X$. These may be interpreted as BPS states in the corresponding five-dimensional field theory through the technique of exponential spectral networks \cite{Banerjee:2018syt, Eager:2016yxd}. Particularly relevant for this paper is the study of exponential spectral networks for the resolved conifold geometry in~\cite{BLR}.

\section{WKB analysis of quantum mirror curves}\label{sec:qmirrorcurve}

In this section we study the WKB asymptotics of the quantum mirror curve of the resolved conifold, in preparation for a more advanced exact WKB analysis in the forthcoming sections. For general background in exact WKB analysis for differential operators we refer to \cite{NI, Takeireview}. These techniques were first applied to difference operators in \cite{DingleMorganI,DingleMorganII}. The WKB analysis in relation to quantum mirror curves was studied in many places including \cite{ACDKV,Kashani-Poor:2016edc}.
Our goal in this section is to derive an all-order expression for the asymptotic expansion of local solutions to $q$-difference equations corresponding to quantum mirror curves. This will be useful later in the study of the Borel resummation of these solutions, as well as in relation of these solutions to closed partition functions.

\subsection{The quantum mirror curve}\label{sec:quantummirrorcurve}

The period integral relations~(\ref{eqn:classical_periods})
\begin{equation*}
t_i = \oint_{\gamma_{A^i}} \lambda^\mathrm{cl}, \qquad \frac{1}{2 \pi \mathrm{i}} \frac{\partial F^\mathrm{top}_0}{\partial t_i} = \frac{4 \pi^2}{\epsilon} \oint_{\gamma_{B_i}} \lambda^\mathrm{cl},
\end{equation*}
with $\lambda^{\mathrm{cl}}=y(x) \mathrm{d}x $
reveal that the ``classical'' topological string has the structure of an classical integrable system, with spectral curve $\Sigma$ and Liouville form $\lambda^{\mathrm{cl}}$. This is a structure that is argued to persists to all orders~\cite{ADKMV}, perhaps most transparent in the NS limit~\cite{NS}.

The mirror curve is given by an algebraic equation in two $\mathbb{C}^*$ variables $X=\exp(2\pi \mathrm{i} x)$ and $Y=\exp(2\pi \mathrm{i} y)$, and has the form
\begin{equation*}
 P\big({\rm e}^{2\pi \mathrm{i} x},{\rm e}^{2\pi \mathrm{i} y},\mathbf{t}\big)=0,
 \end{equation*}
where $\mathbf{t}$ are the closed string moduli.

The quantum mirror curve is a quantization of the mirror curve obtained by promoting its variables to operators which act on states in a corresponding Hilbert space. In our case we will replace $x$ and $y$ with the operators
\begin{equation*}
\hat{x} \Psi(\epsilon,x,\mathbf{t}) = x \Psi (\epsilon,x,\mathbf{t}), \qquad \hat{y} \Psi(\epsilon,x,\mathbf{t}) = \frac{\epsilon}{4\pi^2\mathrm{i}} \partial_x \Psi (\epsilon,x,\mathbf{t}),
\end{equation*}
that obey the canonical quantization relation \begin{equation*}\left[\hat{x},\hat{y} \right]= \frac{\mathrm{i} \epsilon}{4\pi^2}.\end{equation*}
For instance, quantizing the curve $\Sigma$ defined in \eqref{eqn:mirrorcurve1p} by
\begin{equation*}
 (1-\exp(2\pi \mathrm{i} x) ) \exp(2\pi \mathrm{i} y) - (1 - Q \exp(2\pi \mathrm{i} x) ) = 0,
\end{equation*}
we obtain the Schr\"odinger operator
\begin{equation}\label{eqn:diffDX1}
 \mathbf{D}_\Sigma (\epsilon,x,t) = \big(1-{\rm e}^{2 \pi \mathrm{i} x}\big) {\rm e}^{\check{\epsilon} \partial_x} - \big(1 - Q {\rm e}^{2 \pi \mathrm{i} x}\big).
\end{equation}
acting on wave-functions $\Psi(\epsilon,x,t)$ in the relevant Hilbert space. Since
\begin{equation*}
 {\rm e}^{\check{\epsilon} \partial_x} \Psi(X) = \Psi(q X)
\end{equation*}
with $q = \exp (\mathrm{i} \epsilon)$, the Schr\"odinger operator $\mathbf{D}_\Sigma$ is also known as a $q$-difference operator.

Note that there is no unique difference operator $\mathbf{D}_{\Sigma}$ associated to $\Sigma$, because of the ${\rm SL}(2,\mathbb{Z})$ symmetry mentioned above. Up to some ambiguities in quantization, that may be absorbed in the redefinition of some of the parameters, there is a unique difference operator~$\mathbf{D}_{\Sigma}$ for~$\Sigma$ together with a choice of projection to~$\mathbb{C}^*$. As we explained in Section~\ref{sec:geomeng}, this choice of projection is relevant in determining which five-dimensional gauge theory the Calabi--Yau threefold $\mathbf{X}$ engineers. For instance, the projection $\Sigma \to \mathbb{C}^*_{X}$, leading to~(\ref{eqn:diffDX1}), geometrically engineers the pure ${\rm U}(1)$ theory in five dimensions.

Consider a B-brane extended in the $v^+$-direction of the mirror $\check{\mathbf{X}}$, while at $v^-=0$, and concentrated at a point $x \in \Sigma$. If $y(x)$ is the local function determined by the equation $(\ref{eqn:mirrorcurve1p})$ for $\Sigma$, then it is known that the classical vev of this brane is given by~\cite{Aganagic:2000gs}
\begin{equation*}
 \Psi^{\text{cl}} (\epsilon,x,t) = \exp \left( \frac{4 \pi^2 \mathrm{i} }{\epsilon}\int^z \lambda^{\text{cl}} \right),
\end{equation*}
whereas its quantum vev satisfies the time-independent Schr\"odinger equation \cite{ADKMV}
\begin{equation}\label{eqn:NS_Diff_Eqn}
 \mathbf{D}_\Sigma(\epsilon,x,t) \Psi(\epsilon,x,t) = 0.
\end{equation}
Hence, while the curve $\Sigma$ has the physical interpretation as the moduli space of a particular class of branes in the open topological string theory, the solutions $\Psi(\epsilon,x,t)$ to the difference equation~(\ref{eqn:NS_Diff_Eqn})
have the physical interpretation as vevs for these branes \cite{ACDKV, NS}. In the corresponding five-dimensional gauge theory these have the interpretation as surface defect vevs~\cite{Dimofte:2010tz}. For example, the brane corresponding to~(\ref{eqn:diffDX1}) is ending on the compact toric leg of the toric diagram of the resolved conifold. In the corresponding ${\rm U}(1)$ theory this engineers a simple abelian surface defect~\cite{Dimofte:2010tz}.

It is also known that the NS partition function $Z^{\mathrm{NS}}(\epsilon,t)$, in a series expansion in $\epsilon$, may be recovered from an all-order WKB analysis of $\mathbf{D}_\Sigma(\epsilon,x,t)$ \cite{ACDKV, NS}. If we write the general solution $\Psi(\epsilon,x,t)$ to the difference equation~(\ref{eqn:NS_Diff_Eqn}) in the form
\begin{equation*}
\Psi(\epsilon,x,t) = \exp \left( \frac{4 \pi^2 \mathrm{i} }{ \epsilon} \int^x \mathrm{d}z \, \lambda^{\mathrm{qu}}(\epsilon,z,t) \right),
\end{equation*}
then it follows that the quantum Liouville form $\lambda^{\mathrm{qu}}(\epsilon,x,t)$ has an $\epsilon$-expansion
\begin{equation*}
\lambda^{\mathrm{qu}}(\epsilon) = \lambda^\mathrm{cl} + \sum_{n=1}^{\infty} \lambda^{\mathrm{qu}}_n \epsilon^n,
\end{equation*}
starting with the classical Liouville form $\lambda^\mathrm{cl} = y \,\mathrm{d} x$. The all-order-in-$\epsilon$-expansion of the NS free energy $F^{\mathrm{NS}}(\epsilon,t)$ may then be obtained from the relations
\begin{equation}\label{eqn:quantum_periods}
t^{\mathrm{qu}}_i := \oint_{\gamma_{A^i}} \lambda^{\mathrm{qu}}, \qquad \frac{1}{2 \pi \mathrm{i}} \frac{\partial F^{\rm NS}\big(t^{\mathrm{qu}}_i;\epsilon\big)}{\partial t^{\mathrm{qu}}_i} = \frac{4 \pi^2}{\epsilon} \oint_{\gamma_{B_i}} \lambda^{\mathrm{qu}}.
\end{equation}

Note that, somewhat confusingly, the relations on the left-hand side of (\ref{eqn:quantum_periods}) define new parameters $t^{\mathrm{qu}}_i$ such that the function $F^{\rm NS}\big(t^{\mathrm{qu}}_i,\epsilon\big)$, as defined by the relations on the right-hand side of (\ref{eqn:quantum_periods}), equals the NS free energy $F^{\rm NS}(\epsilon,t)$ when simply replacing the $t^{\mathrm{qu}}_i$ by the symbols~$t_i$. This was first verified in the four-dimensional pure~${\rm SU}(2)$ gauge theory in~\cite{Mironov:2009uv}.

\subsection{All-order WKB solution}
In the following we will apply the WKB analysis to the solution of the quantum curve.
The outcome of the analysis will provide the asymptotic series whose non-perturbative completion will be studied in Section \ref{sec:Borelsums} using Borel analysis.\footnote{We remark that the all-order WKB solution has also appeared in the recent work \cite[Section 4]{GHN}.}

We will derive a formal power series solution, to all orders in $\epsilon$, of the Schr\"odinger equation~\eqref{eqn:NS_Diff_Eqn}. We could do this slightly more generally, namely for the cases in which the Schr\"odinger operator is a $q$-difference operator of the form
\begin{equation}\label{eqn:generalD}
\mathbf{D}_\Sigma (\epsilon,x,t) = \prod_{j=1}^{K} (\exp(\check{\epsilon}\partial_x)-\exp(2\pi \mathrm{i} y_j(x,t)) ).
\end{equation}
Note that, in general, the $q$-difference operator $\mathbf{D}_\Sigma$ requires a specification of an operator ordering.\footnote{Which choice of operator ordering one chooses is not very relevant, since the operator ordering ambiguities may be absorbed in a redefinition of parameters.} Below we specialize to the case $K=1$ where no such ordering ambiguity occurs.

The operator $\mathbf{D}_\Sigma$ in equation \eqref{eqn:generalD} may be obtained from a general mirror curve $\Sigma$ given by an equation
\begin{equation*} P\big({\rm e}^{2\pi \mathrm{i} x},{\rm e}^{2\pi \mathrm{i} y},t\big)=0\end{equation*}
of degree $K$ in $Y={\rm e}^{2\pi \mathrm{i} y}$, and thus defining a (ramified) covering of degree $K$ over $\mathbb{C}^*_X$. Indeed, we obtain the difference operator \eqref{eqn:generalD} after expressing $y$ in terms of $x$ and $t$ and then writing the equation $P$ in the form
\begin{equation*}
 P\big({\rm e}^{2\pi \mathrm{i} x},{\rm e}^{2\pi \mathrm{i} y},t\big)= \prod_{k=1}^K (\exp(y)-\exp(2\pi \mathrm{i} y_j(x,t))),
\end{equation*}
where $K$ is the degree of the covering and $y_j(x,t)$ the solutions to the curve equation.

In the following proposition will restrict ourselves to $K=1$. This will however also be the building block of the solution in the more general case.
\begin{Proposition}\label{prop:Sasymp}
A formal power series solution of the equation
\begin{equation}\label{eq:openwaveeq}
 (\exp(\check{\epsilon}\partial_x)-\exp(2\pi \mathrm{i} y(x,t))) \Psi(\epsilon,x,t)=0
\end{equation}
is given by the formal power series
\begin{equation*}
 \Psi(\epsilon,x,t)= \exp S(\epsilon,x,t) \qquad \mathrm{with} \quad S(\epsilon,x,t)=\sum_{n=0}^{\infty} \epsilon^{n-1} S_n(x,t),
\end{equation*}
where
\begin{equation}\label{eqn:Sformal}
 S(\epsilon,x,t)= \frac{4\pi^2 \mathrm{i}}{\epsilon} \int^x y(z,t) \,{\rm d}z + 2\pi \mathrm{i} \sum_{n=1}^{\infty} \frac{B_n}{n!} \check{\epsilon}^{n-1} \partial_x^{n-1} y(x,t),
\end{equation}
and as before $\check{\epsilon}=\epsilon/2\pi$.
\end{Proposition}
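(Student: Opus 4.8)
The plan is to turn the $q$-difference equation \eqref{eq:openwaveeq} into an additive finite-difference equation for $S=\log\Psi$ and then invert the difference operator using the Bernoulli generating function \eqref{eq:Bernoulligen}. First I would use that $\exp(\check\epsilon\partial_x)$ acts as the shift $x\mapsto x+\check\epsilon$, so that writing $\Psi=\exp S$ and cancelling the common factor $\exp S(\epsilon,x,t)$ recasts \eqref{eq:openwaveeq} as
\begin{equation*}
 S(\epsilon,x+\check\epsilon,t)-S(\epsilon,x,t)=2\pi\mathrm{i}\, y(x,t),\qquad\text{i.e.}\qquad \big(\exp(\check\epsilon\partial_x)-1\big)S = 2\pi\mathrm{i}\, y .
\end{equation*}
Thus it suffices to produce a formal solution $S$ of this linear difference equation.

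Next I would invert the operator $\exp(\check\epsilon\partial_x)-1$ as a formal series. Setting $D=\check\epsilon\partial_x$, we have $\exp(\check\epsilon\partial_x)-1=e^D-1=D\cdot\frac{e^D-1}{D}$, and \eqref{eq:Bernoulligen} gives $\frac{D}{e^D-1}=\sum_{n\ge0}\frac{B_n}{n!}D^n$. Hence, formally,
\begin{equation*}
 \big(e^D-1\big)^{-1}=D^{-1}\frac{D}{e^D-1}=\sum_{n=0}^{\infty}\frac{B_n}{n!}D^{\,n-1}=\sum_{n=0}^{\infty}\frac{B_n}{n!}\check\epsilon^{\,n-1}\partial_x^{\,n-1}.
\end{equation*}
Applying $2\pi\mathrm{i}\,(e^D-1)^{-1}$ to $y(x,t)$ and separating the $n=0$ term (where $B_0=1$ and $\check\epsilon^{-1}\partial_x^{-1}=\frac{2\pi}{\epsilon}\int^x$, so that $2\pi\mathrm{i}\cdot\frac{2\pi}{\epsilon}=\frac{4\pi^2\mathrm{i}}{\epsilon}$) from the tail $n\ge1$ reproduces exactly the closed form \eqref{eqn:Sformal}. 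Expanding $\check\epsilon=\epsilon/2\pi$ and collecting powers of $\epsilon$ then exhibits $S$ as $\sum_{n\ge0}\epsilon^{n-1}S_n(x,t)$, with the leading $\epsilon^{-1}$ coefficient arising solely from the $n=0$ term, as claimed.

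Finally I would verify that this $S$ genuinely solves the difference equation, i.e.\ that $(e^D-1)(e^D-1)^{-1}=\mathrm{id}$ as an identity of operators acting on formal series. The clean way is to factor $e^D-1=D\,g(D)$ with $g(D)=\frac{e^D-1}{D}=\sum_{k\ge0}\frac{D^k}{(k+1)!}$ a unit (constant term $1$) in the ring of formal power series in $D$, whose inverse is $g(D)^{-1}=\sum_{n\ge0}\frac{B_n}{n!}D^n$. Since every factor is a power series in the single operator $D$ times at most one power $D^{\pm1}$, they all commute and the product telescopes to $D\,g(D)\,D^{-1}g(D)^{-1}=D\,D^{-1}=\partial_x\partial_x^{-1}=\mathrm{id}$.

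The main obstacle, and the only delicate point, is the meaning of the $n=0$ term $D^{-1}=\check\epsilon^{-1}\partial_x^{-1}$: the antiderivative $\int^x$ is only a one-sided inverse of $\partial_x$. One must therefore apply the operators in the correct order, with $e^D-1$ acting \emph{after} $(e^D-1)^{-1}$, so that $\partial_x\int^x y=y$ and the identity holds; the reversed composition $D^{-1}D=\partial_x^{-1}\partial_x$ equals $\mathrm{id}$ only modulo the kernel of $\partial_x$. This one-sided behaviour is exactly the expected freedom to add an $x$-independent (but $\epsilon$- and $t$-dependent) constant, reflecting the ambiguity in the lower limit of $\int^x y\,\mathrm{d}z$ in \eqref{eqn:Sformal}, so the formal solution is determined only up to such a term.
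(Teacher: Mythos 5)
Your proof is correct, and it hinges on the same key lemma as the paper's: inverting ${\rm e}^{\check\epsilon\partial_x}-1$ via the Bernoulli generating function \eqref{eq:Bernoulligen}. The organization differs in one real way. The paper first acts on \eqref{eq:openwaveeq} with $\partial_x$ and reuses the original equation to get a difference equation for the \emph{derivative}, $\big({\rm e}^{\check\epsilon\partial_x}-1\big)\partial_x S=2\pi\mathrm{i}\,\partial_x y$, so that the Bernoulli inversion involves only non-negative powers of $\partial_x$, with a single integration at the very end; you take logarithms immediately, obtaining $\big({\rm e}^{\check\epsilon\partial_x}-1\big)S=2\pi\mathrm{i}\,y$, and therefore must admit $\partial_x^{-1}=\int^x$ into the operator algebra. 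That is the one place where your wording is imprecise: in $D\,g(D)\,D^{-1}g(D)^{-1}$ the factors do \emph{not} all commute, since your $g(D)=({\rm e}^D-1)/D$ commutes with $D^{-1}$ only modulo the kernel of $\partial_x$. But, as your closing paragraph in effect observes, this does not matter: $D$ and $g(D)$ do commute exactly, so one may write ${\rm e}^D-1=g(D)\,D$ and verify $({\rm e}^D-1)\circ({\rm e}^D-1)^{-1}=g(D)\,D\,D^{-1}\,g(D)^{-1}=\mathrm{id}$ using only $D\,D^{-1}=\partial_x\int^x=\mathrm{id}$, which holds on the nose; equivalently, expanding $({\rm e}^D-1)\sum_{n\ge0}\tfrac{B_n}{n!}D^{n-1}y$ produces only the products $D^kD^{-1}=D^{k-1}$ with $k\ge1$, and the identity $\sum_{n=0}^{m}\binom{m+1}{n}B_n=\delta_{m,0}$ closes the computation. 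In sum, your route is more direct (no auxiliary differentiation and product-rule manipulation of the shift operator), at the cost of the one-sided-inverse bookkeeping that the paper's differentiation trick is designed to avoid; and your final remark correctly identifies the residual ambiguity with the unspecified lower limit of $\int^x y\,{\rm d}z$ in \eqref{eqn:Sformal}.
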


\begin{proof}
We act on \eqref{eq:openwaveeq} on the left with $\partial_x$ to obtain
\begin{gather*}
 (\exp(\check{\epsilon}\partial_x)-\exp(2\pi \mathrm{i} y(x,t)))\cdot \partial_x S\cdot \Psi - 2\pi \mathrm{i} \partial_x y(x,t)\cdot \exp(2\pi \mathrm{i} y(x,t)) \cdot \Psi =0.
\end{gather*}
Then we use that
\begin{equation*}
\exp(\check{\epsilon}\partial_x) (\partial_x S\cdot \Psi)= (\exp(\check{\epsilon}\partial_x) \partial_x S) \cdot (\exp(\check{\epsilon}\partial_x)\Psi),
\end{equation*}
and the original equation \eqref{eq:openwaveeq} to obtain
\begin{equation}\label{eq:dSdiff}
 \big({\rm e}^{\check{\epsilon}\partial_x}-1\big) \partial_x S = 2\pi \mathrm{i} \partial_x y(x,t),
\end{equation}
assuming that $\exp(2\pi \mathrm{i} y(x,t)) \cdot \Psi \ne 0$.

From the generating function
\begin{equation*}
\frac{w}{{\rm e}^w-1}=\sum_{n=0}^{\infty}\frac{B_n}{n!} w^n
\end{equation*}
of Bernoulli numbers, after rearranging and replacing $w\rightarrow \check{\epsilon} \partial_x$, we obtain the identity
\begin{equation*}
 \big({\rm e}^{\check{\epsilon}\partial_x}-1\big) \sum_{n=0}^{\infty}\frac{B_n}{n!} (\check{\epsilon}\partial_x)^n y(x,t)= \check{\epsilon} \partial_x y(x,t),
\end{equation*}
when acting on $y(x,t)$ with both sides. Comparing this equation with \eqref{eq:dSdiff} yields
\begin{equation*}
 \partial_x S= \frac{2\pi \mathrm{i}}{\check{\epsilon}} \sum_{n=0}^{\infty} \frac{B_n}{n!}\check{\epsilon}^n \partial_x^n y(x,t),
\end{equation*}
which can be integrated to
\begin{equation*}
S(\epsilon,x,t)= \frac{4\pi^2 \mathrm{i}}{\epsilon} \int^x y(z,t)\,{\rm d}z + 2\pi \mathrm{i} \sum_{n=1}^{\infty} \frac{B_n}{n!} \check{\epsilon}^{n-1} \partial_x^{n-1} y(x,t). \tag*{\qed}
\end{equation*}
\renewcommand{\qed}{}
\end{proof}

\begin{Corollary}\label{col:formalS}
The formal power series solution for the Schr\"odinger equation
\begin{equation}\label{scheq}
 \mathbf{D}_\Sigma (\epsilon,x,t) \Psi(\epsilon,x,t)= \big(\big(1-{\rm e}^{2 \pi \mathrm{i} x}\big) {\rm e}^{\check{\epsilon} \partial_x} - \big(1 - Q {\rm e}^{2 \pi \mathrm{i} x}\big)\big)\Psi(\epsilon,x,t)=0
\end{equation}
is given by
$\Psi(\epsilon,x,t)=\exp(S(\epsilon,x,t)) $
with
\begin{equation}\label{eq:Sasymp}
 S(\epsilon,x,t)= - \sum_{n=0}^{\infty} \frac{B_n}{n!} (\mathrm{i}\epsilon)^{n-1} \left( {\rm Li}_{2-n}(Q X)-{\rm Li}_{2-n}(X)\right)
\end{equation}
and $X=\exp(2\pi \mathrm{i} x)$. In other words,
\begin{equation*}
\lambda^{\text{qu}}(\epsilon,x,t)=\left(- \frac{1}{2\pi \mathrm{i}}\sum_{n=0}^{\infty} \frac{B_n}{n!} (\mathrm{i}\epsilon)^{n} \left( {\rm Li}_{1-n}(Q X)-{\rm Li}_{1-n}(X)\right) \right) \mathrm{d}x.
\end{equation*}
\end{Corollary}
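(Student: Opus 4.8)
The plan is to reduce the Schrödinger equation \eqref{scheq} to the normalized form \eqref{eq:openwaveeq} treated in Proposition~\ref{prop:Sasymp} and then to feed in the explicit curve. Dividing $\mathbf{D}_\Sigma$ by the factor $\big(1-{\rm e}^{2\pi\mathrm{i}x}\big)$ (which is harmless away from its zeros and only rescales $\Psi$, hence preserves the solution set of the homogeneous equation), the equation $\mathbf{D}_\Sigma\Psi=0$ becomes $\big({\rm e}^{\check\epsilon\partial_x}-\tfrac{1-QX}{1-X}\big)\Psi=0$ with $X={\rm e}^{2\pi\mathrm{i}x}$. Comparing with \eqref{eq:openwaveeq} identifies ${\rm e}^{2\pi\mathrm{i}y}=\tfrac{1-QX}{1-X}$, i.e.\ $2\pi\mathrm{i}\,y=\log(1-QX)-\log(1-X)$. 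Using ${\rm Li}_1(w)=-\log(1-w)$, this is exactly
\begin{equation*}
 y(x,t)=-\frac{1}{2\pi\mathrm{i}}\big({\rm Li}_1(QX)-{\rm Li}_1(X)\big),
\end{equation*}
so the whole problem reduces to applying the formula \eqref{eqn:Sformal} to this specific $y$.

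The one tool needed is the polylogarithm ``ladder'' under $\partial_x$. Since $w\frac{{\rm d}}{{\rm d}w}{\rm Li}_s(w)={\rm Li}_{s-1}(w)$ and $\partial_xX=2\pi\mathrm{i}X$ (and likewise for $QX$), one gets $\partial_x{\rm Li}_s(X)=2\pi\mathrm{i}\,{\rm Li}_{s-1}(X)$ and $\partial_x{\rm Li}_s(QX)=2\pi\mathrm{i}\,{\rm Li}_{s-1}(QX)$; this holds for all $s$, including the non-positive integer indices that will appear, by the usual recursive definition of ${\rm Li}_s$ for $s\le 0$. Iterating on $y$ gives the clean formula $\partial_x^n y=-\frac{1}{2\pi\mathrm{i}}(2\pi\mathrm{i})^n\big({\rm Li}_{1-n}(QX)-{\rm Li}_{1-n}(X)\big)$ for every $n\ge 0$.

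I would then substitute this into \eqref{eqn:Sformal}. It is cleanest to differentiate first: Proposition~\ref{prop:Sasymp} gives $\partial_xS=\frac{2\pi\mathrm{i}}{\check\epsilon}\sum_{n\ge0}\frac{B_n}{n!}\check\epsilon^n\partial_x^n y$, and plugging in $\partial_x^n y$ and using $\check\epsilon\cdot 2\pi\mathrm{i}=\mathrm{i}\epsilon$ collapses all the constants to
\begin{equation*}
 \partial_xS=-\frac{1}{\check\epsilon}\sum_{n=0}^{\infty}\frac{B_n}{n!}(\mathrm{i}\epsilon)^n\big({\rm Li}_{1-n}(QX)-{\rm Li}_{1-n}(X)\big).
\end{equation*}
Multiplying by $\frac{\epsilon}{4\pi^2\mathrm{i}}$ (since $S=\frac{4\pi^2\mathrm{i}}{\epsilon}\int^x\lambda^{\rm qu}$) immediately yields the stated coefficient of $\lambda^{\rm qu}$, while integrating $\partial_xS$ back in $x$ --- using the same ladder in the form $\int^x\big({\rm Li}_{1-n}(QX)-{\rm Li}_{1-n}(X)\big){\rm d}z=\frac{1}{2\pi\mathrm{i}}\big({\rm Li}_{2-n}(QX)-{\rm Li}_{2-n}(X)\big)$ together with $-\frac{1}{\check\epsilon}\cdot\frac{1}{2\pi\mathrm{i}}(\mathrm{i}\epsilon)^n=-(\mathrm{i}\epsilon)^{n-1}$ --- reproduces \eqref{eq:Sasymp} term by term. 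I do not expect a genuine obstacle: the content is entirely the identification of $y$ and the polylogarithm ladder, and the only points requiring care are purely bookkeeping --- tracking the factors of $2\pi$, $\mathrm{i}$ and $\check\epsilon$, fixing the constant of integration so that the $n=0$ term appears as ${\rm Li}_2$ (equivalently, normalizing $S$ to vanish appropriately as $X\to0$), and remembering that the equalities are of formal power series in $\epsilon$ rather than of convergent functions.
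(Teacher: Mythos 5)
Your proposal is correct and follows essentially the same route as the paper's own proof: identify $y(x,t)=\frac{1}{2\pi\mathrm{i}}\log\big(\frac{1-QX}{1-X}\big)$ from the operator, invoke Proposition~\ref{prop:Sasymp}, and use ${\rm Li}_1(z)=-\log(1-z)$ together with the ladder $\partial_x{\rm Li}_s\big({\rm e}^{2\pi\mathrm{i}x}\big)=2\pi\mathrm{i}\,{\rm Li}_{s-1}\big({\rm e}^{2\pi\mathrm{i}x}\big)$ to collapse the series into polylogarithms. The only (immaterial) difference is bookkeeping: you work at the level of $\partial_x S$ and integrate at the end, whereas the paper substitutes into the integrated formula \eqref{eqn:Sformal} and then integrates only the leading $\log$ term.
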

\begin{proof}
The operator \eqref{eqn:diffDX1} corresponds to
\begin{equation*} y(x,t)= \frac{1}{2\pi \mathrm{i}} \log \left( \frac{1-Q{\rm e}^{2\pi \mathrm{i} x}}{1-{\rm e}^{2\pi \mathrm{i} x}}\right).\end{equation*}
Using the statement of Proposition \ref{prop:Sasymp} as well as the formulae
\begin{equation*}{\rm Li}_1(z)=-\log(1-z),\qquad {\rm Li}_{s-1}\big({\rm e}^{2 \pi \mathrm{i} x}\big) = \frac{1}{2 \pi \mathrm{i}} \partial_x {\rm Li}_{s}({\rm e}^{2 \pi \mathrm{i} x}),
\end{equation*}
we find
\begin{equation*}
 S= \frac{1}{\check{\epsilon}} \int^x \log \left( \frac{1-Q{\rm e}^{2\pi \mathrm{i} z}}{1-{\rm e}^{2\pi \mathrm{i} z}}\right) {\rm d}z - \sum_{n=1}^{\infty} \frac{B_n}{n!} (\mathrm{i}\epsilon)^{n-1} \left( {\rm Li}_{2-n}(Q X)-{\rm Li}_{2-n}(X)\right).
\end{equation*}
We obtain the statement of the corollary by further integrating the first term, and obtain the expression for $\lambda^{\text{qu}}$ by noting that
\begin{equation*}
 \lambda^{\text{qu}}(\epsilon,x,t)= \frac{\epsilon}{4\pi^2 \mathrm{i}} \partial_x S(\epsilon,x,t) \, \mathrm{d}x.\tag*{\qed}
\end{equation*}
\renewcommand{\qed}{}
\end{proof}

\subsection{Exact solutions}\label{sec:psinp}

Exact solutions to difference operators corresponding to quantum mirror curves have been found in many places, using multiple methods. For instance, from the perspective of 5d instanton calculus \cite{Bullimore:2014awa, Gaiotto:2014ina} and spectral theory \cite{Marino:2016rsq}. Also Borel summation has played an important role in the literature. What is new in this paper is that we consider the Borel summation along any ray with phase $\vartheta$. In this subsection we write down a few known exact solutions for the Schr\"odinger equation~\eqref{eqn:diffDX1}, corresponding to our resolved conifold example, and make the connection with the Borel analysis performed in Section~\ref{sec:Borelsums}.

Reference \cite{ACDKV} wrote down the exact solution
\begin{equation} \label{eq:psiopenGV}
 \Psi_{{\rm GV}}(\epsilon,x,t)= \frac{L(x,\epsilon)}{L(x+t,\epsilon)},
\end{equation}
for $\operatorname{Im}(\epsilon)>0$, which expresses the quantum vev as a quotient of quantum dilogarithms
\begin{equation}\label{eqn:quandilog}
 L(x,\epsilon)=\prod_{j=0}^{\infty} (1-\exp(2\pi \mathrm{i} (x + j \check\epsilon))).
\end{equation}
 We have given this expression for $\Psi$ the subscript GV since it is the open topological string expectation value for the brane corresponding to $\Psi$ in Gopakumar--Vafa form. It is also the quantum vev that appears in the gauge theory context, computing the expectation value of an abelian surface defect in the ${\rm U}(1)$ gauge theory (see for instance \cite{Dimofte:2010tz}).

In Section~\ref{sec:Borelsums}, we find that \eqref{eq:psiopenGV} may be obtained by Borel summing \eqref{eq:Sasymp} along the positive imaginary axis.
For this it is convenient to note that $L(x,\epsilon)$, with $\operatorname{Im}(x)>0$ and $\operatorname{Im}(\epsilon)>0$, may be rewritten as
\begin{align}
 \log(L(x,\epsilon))&=\sum_{j=0}^{\infty}\log(1-\exp(2\pi \mathrm{i}(x+j\check\epsilon)))\nonumber\\
 &=-\sum_{l=1}^{\infty}\sum_{j=0}^{\infty}\frac{{\rm e}^{2\pi \mathrm{i} l(x+j\check\epsilon)}}{l} =\sum_{l=1}^{\infty}\frac{{\rm e}^{2\pi \mathrm{i} lx}}{l({\rm e}^{2\pi \mathrm{i} l\check\epsilon}-1)},\label{LGV}
 \end{align}
where the last expression is well-defined in the larger domain $\operatorname{Im}(x)>0$ and $\operatorname{Im}(\epsilon)\neq 0$.

Another solution with better analytical properties is given by the ratio
\begin{equation} \label{eq:psiopennp}
 \Psi_{{\rm np}}(\epsilon,x,t)= \frac{\mathcal{S}_2(x|\check{\epsilon},1)}{\mathcal{S}_2(x+t|\check{\epsilon},1)}
\end{equation}
of Faddeev's quantum dilogarithms $\mathcal{S}_2$, in the conventions reviewed in Appendix~\ref{appendix:qdilog}.
Using the difference equations (\ref{qddiffeq}) satisfied by $\mathcal{S}_2$, it can be verified that $\Psi_{{\rm np}}$ is a solution of the Schr\"odinger equation \eqref{eqn:diffDX1}, and from equation (\ref{S2asymp}) it can be checked that $\Psi_{{\rm np}}$ has an asymptotic expansion given by (\ref{eq:Sasymp}). We will see in Section~\ref{sec:Borelsums} that this solution corresponds to Borel summing \eqref{eq:Sasymp} along the positive real axis.

Using (\ref{S2prod}), the expression~\eqref{eq:psiopennp} can also be written as the product
\begin{gather*}
 \Psi_{{\rm np}}(\epsilon,x,t) = \prod_{k=0}^{\infty} \frac{(1-\exp(2\pi \mathrm{i} (x + k \check{\epsilon}))}{(1-\exp(2\pi \mathrm{i} (x+t+k \check{\epsilon}))} \frac{(1-\exp (2\pi \mathrm{i} ((x+t)/\check{\epsilon}-(k+1)/\check{\epsilon}))}{(1-\exp (2\pi \mathrm{i} (x/\check{\epsilon}-(k+1)/\check{\epsilon}))},
\end{gather*}
which is valid for $\operatorname{Im}(\epsilon)>0$.

\section{Quantum curves of closed string moduli}\label{sec:diffeq}

In this section we study difference equations obeyed by the refined topological string free energy on the resolved conifold, its limits and the connections between the resulting objects. The approach is similar to~\cite{alim2020difference, Iwaki2} where difference equations were derived starting from the asymptotic expansion of the free energy. We will then show that these difference equations obtained from the asymptotic expansion also admit natural analytic functions in the perturbative expansion parameter as in \cite{Alim:2021ukq, Alim:2021lld}, thus providing candidates for the non-perturbative completion. From the analytic solutions we will be able to moreover derive new difference equations which are invisible to the asymptotic expansion, these correspond to shifts of the closed string moduli by integers.

\subsection{The refined difference equation}
We first derive a refined version of the difference equations for the free energies of the resolved conifold starting from its asymptotic expansion.

\begin{Proposition} The refined free energy for the resolved conifold geometry satisfies
\begin{align}
F^{{\rm ref}}&\left(\epsilon_1,\epsilon_2,t+\frac{\check{\epsilon}_1}{2}-\frac{\check{\epsilon}_2}{2}\right) + F^{{\rm ref}}\left(\epsilon_1,\epsilon_2,t-\frac{\check{\epsilon}_1}{2}+\frac{\check{\epsilon}_2}{2} \right) \nonumber\\
&-F^{{\rm ref}}\left(\epsilon_1,\epsilon_2,t+\frac{\check{\epsilon}_1}{2}+\frac{\check{\epsilon}_2}{2}\right)-F^{{\rm ref}}\left(\epsilon_1,\epsilon_2,t-\frac{\check{\epsilon}_1}{2}-\frac{\check{\epsilon}_2}{2} \right)=-{\rm Li}_1(Q),\label{eq:refdiffeq}
\end{align}
with
\begin{equation*}
\check{\epsilon}^a=\frac{\epsilon^a}{2\pi}, \qquad a=1,2.
\end{equation*}
\end{Proposition}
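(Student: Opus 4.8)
The plan is to prove \eqref{eq:refdiffeq} directly from the exact series representation of the refined free energy, rather than from its asymptotic expansion. Recall that $F^{\rm ref}(\epsilon_1,\epsilon_2,t) = -\sum_{k=1}^{\infty}\frac{Q^k}{k\,(2\sin(k\epsilon_1/2))(2\sin(k\epsilon_2/2))}$ with $Q=\mathrm{e}^{2\pi\mathrm{i} t}$. Since a shift $t\mapsto t+c$ multiplies the numerator of the $k$-th summand by $\mathrm{e}^{2\pi\mathrm{i} k c}$, and since $\check\epsilon_a=\epsilon_a/2\pi$, each of the four shifts in \eqref{eq:refdiffeq} replaces $Q^k$ by $Q^k$ times a pure phase $\mathrm{e}^{\pm\mathrm{i} k(\epsilon_1\pm\epsilon_2)/2}$, while leaving the denominators untouched. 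The whole left-hand side therefore collapses into a single sum over $k$ in which the four phases combine additively in the numerator.

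First I would collect the four phases with their signs (two added, two subtracted) and simplify:
\[
\mathrm{e}^{\mathrm{i} k(\epsilon_1-\epsilon_2)/2}+\mathrm{e}^{-\mathrm{i} k(\epsilon_1-\epsilon_2)/2}-\mathrm{e}^{\mathrm{i} k(\epsilon_1+\epsilon_2)/2}-\mathrm{e}^{-\mathrm{i} k(\epsilon_1+\epsilon_2)/2}=2\cos\!\big(k(\epsilon_1-\epsilon_2)/2\big)-2\cos\!\big(k(\epsilon_1+\epsilon_2)/2\big).
\]
Applying the product-to-sum identity $\cos(A-B)-\cos(A+B)=2\sin A\sin B$ with $A=k\epsilon_1/2$ and $B=k\epsilon_2/2$ turns this into $4\sin(k\epsilon_1/2)\sin(k\epsilon_2/2)$, which is exactly the denominator $(2\sin(k\epsilon_1/2))(2\sin(k\epsilon_2/2))$. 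Substituting back, the left-hand side telescopes:
\[
-\sum_{k=1}^{\infty}\frac{Q^k}{k\,(2\sin(k\epsilon_1/2))(2\sin(k\epsilon_2/2))}\cdot 4\sin(k\epsilon_1/2)\sin(k\epsilon_2/2)=-\sum_{k=1}^{\infty}\frac{Q^k}{k}=-{\rm Li}_1(Q),
\]
which is the claimed right-hand side.

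I expect no genuine obstacle: the argument is a one-line trigonometric cancellation once the shifts are converted into phases, with the bonus that it yields the exact identity between the honest functions rather than merely an equality of asymptotic series. The only points requiring care are the correct assignment of signs to the four shifted terms and the bookkeeping of the factor $2\pi$ relating $\check\epsilon_a$ to $\epsilon_a$. Alternatively, and in keeping with the expansion-based philosophy of this section, one may substitute the four formal $\epsilon$-series \eqref{refshifexp1}--\eqref{refshifexp2} and compare coefficients of the common summand ${\rm Li}_{3-m-n}(Q)\frac{B_mB_n}{m!\,n!}\mathrm{i}^{m+n}\epsilon_1^{m-1}\epsilon_2^{n-1}$: the resulting sign prefactor $(-1)^{m-1}+(-1)^{n-1}+(-1)^{m+n}+1$ vanishes unless $m$ and $n$ are both odd, and since $B_n=0$ for odd $n\ge 3$ while $B_1=-\frac12$, only the term $m=n=1$ survives, giving $4\cdot{\rm Li}_1(Q)\cdot\frac14\cdot\mathrm{i}^2=-{\rm Li}_1(Q)$. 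This confirms the identity order by order and matches the cancellation seen in the closed-form computation.
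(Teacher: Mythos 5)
Your proposal is correct, and your primary argument takes a genuinely different route from the paper's own proof. The paper establishes \eqref{eq:refdiffeq} at the level of the formal $\epsilon_1,\epsilon_2$-expansions: it substitutes the four series \eqref{refshifexp1}--\eqref{refshifexp2} and observes that the combined sign factor $1+(-1)^{m+n}+(-1)^{m-1}+(-1)^{n-1}$ kills every term except $m=n=1$, which survives thanks to $B_1=-\tfrac12$ and yields $4\,i^2B_1^2\,\mathrm{Li}_1(Q)=-\mathrm{Li}_1(Q)$ --- this is exactly the computation you sketch in your closing paragraph --- and it then supplements this with a second, operator-style proof that promotes the Bernoulli identity \eqref{eq:diffeqrep} to a difference operator acting on $\mathrm{Li}_1(Q)$. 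Your main argument instead works directly with the exact Gopakumar--Vafa sum over $k$: converting the four shifts into exponential factors and using $\cos(A-B)-\cos(A+B)=2\sin A\sin B$ makes the bracket cancel the denominator termwise, leaving $-\sum_k Q^k/k=-\mathrm{Li}_1(Q)$. What this buys you is an identity between the honest analytic functions, valid wherever the four series converge absolutely (e.g.\ $|Q|<1$ with $\operatorname{Im}(\epsilon_1)>0>\operatorname{Im}(\epsilon_2)$), and it avoids the formal interchange of the $k$- and $(m,n)$-summations that underlies the paper's expansion-based argument; the paper's route, conversely, is the natural one if $F^{\rm ref}$ is regarded only through its asymptotic expansion, which is the philosophy of that section. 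Two small points of care in your write-up: the factors ${\rm e}^{\pm \mathrm{i} k(\epsilon_1\pm\epsilon_2)/2}$ are pure phases only for real $\epsilon_a$ --- for complex $\epsilon_a$ they grow or decay exponentially in $k$ --- although the algebraic cancellation is insensitive to this and only the domain of convergence is affected; and you should state explicitly that the termwise combination of the four series is justified by absolute convergence in a common domain of the parameters.
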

\begin{proof}

The above proposition can be verified by an explicit computation, since the above shifts of $F^{\rm ref}$ can all be expanded in $\epsilon_1$ and $\epsilon_2$ with $Q$-dependent coefficients as in (\ref{refshifexp1}) and (\ref{refshifexp2}). We then find
\begin{gather*}
 F^{{\rm ref}}\left(\epsilon_1,\epsilon_2,t+\frac{\check{\epsilon}_1}{2}-\frac{\check{\epsilon}_2}{2}\right)+ F^{{\rm ref}}\left(\epsilon_1,\epsilon_2,t-\frac{\check{\epsilon}_1}{2}+\frac{\check{\epsilon}_2}{2}\right) \\
 \qquad\quad{} -F^{{\rm ref}}\left(\epsilon_1,\epsilon_2,t+\frac{\check{\epsilon}_1}{2}+\frac{\check{\epsilon}_2}{2}\right)-F^{{\rm ref}}\left(\epsilon_1,\epsilon_2,t-\frac{\check{\epsilon}_1}{2}-\frac{\check{\epsilon}_2}{2}\right) \\
\qquad{}=\sum_{m,n=0}^{\infty}{\rm Li}_{3-m-n}(Q) \frac{B_m B_n}{m! n!} \mathrm{i}^{m+n} \epsilon_1^{m-1} \epsilon_2^{n-1}\left(1+(-1)^{m+n}+(-1)^{m-1}+(-1)^{n-1}\right).
 \end{gather*}
When $m$ and $n$ are either even or of different parity, the last factor in the summands gives $0$, while for $m$ and $n$ odd, only $m=n=1$ survives (due to the Bernoulli numbers vanishing), giving $4 \mathrm{i}^2 (B_1)^2 \mathrm{Li}_1(Q)=-\mathrm{Li}_1(Q)$. The desired result then follows.

In the following we also supplement a proof based on the techniques of \cite{Iwaki2}, which were used in~\cite{alim2020difference}. Consider
\begin{equation*}
 \frac{w_1 w_2}{({\rm e}^{w_1}-1)({\rm e}^{w_2}-1)}= \sum_{m,n=0}^{\infty} \frac{B_n B_m}{n! m!} w_1^m w_2^n,
\end{equation*}
obtained from \eqref{eq:Bernoulligen}. This gives
\begin{equation}\label{eq:diffeqrep}
 \big({\rm e}^{w_1}-1\big)\big({\rm e}^{w_2}-1\big) \sum_{m,n=0}^{\infty} \frac{B_n \cdot B_m}{n! \cdot m!} w_1^{m-1} w_2^{n-1}=1.
\end{equation}

Next, note that the asymptotic expansion of the refined topological string free energy can be written as
\begin{align*}
F^{{\rm ref}}\left(\epsilon_1,\epsilon_2,t-\frac{\epsilon_1+\epsilon_2}{4\pi}\right) &= -\sum_{m,n=0}^{\infty} {\rm Li}_{3-m-n}(Q) \frac{B_m B_n}{m! n!} i^{m+n} \epsilon_1^{m-1} \epsilon_2^{n-1} \\
&= \sum_{m,n=0}^{\infty} \frac{B_m B_n}{m! n!} \left(\partial_t^{m+n-2} {\rm Li}_{1}(Q)\right) \check{\epsilon}_1^{m-1} \check{\epsilon}_2^{n-1},
\end{align*}
where we used
\begin{equation*}
 2 \pi \mathrm{i} {\rm Li}_{s-1} (Q) = \partial_t {\rm Li}_s(Q) \qquad \mathrm{and} \qquad \check{\epsilon}_i =\frac{\epsilon_i}{2\pi}.
\end{equation*}
The negative powers of $\partial_t$ correspond to the indefinite integration while setting the integration constant to zero, i.e.,
\begin{equation*} 2 \pi \mathrm{i} \partial_t^{-1} {\rm Li}_{s-1}(Q) ={\rm Li}_{s}(Q).\end{equation*}

The difference equation then follows after turning \eqref{eq:diffeqrep} into an operator identity, where we replace $w_i$ on the left-hand side with the derivative $\rightarrow \check{\epsilon}_i \partial_t$ for $i=1,2$, while the right-hand side becomes the identity operator. Subsequently, we act with both sides on ${\rm Li}_1(Q)$.
\end{proof}

\subsection{A non-perturbative solution} \label{sec:nonpertsol}

In this section we write down a particularly interesting solution of the difference equation~(\ref{eq:refdiffeq}) and describe its non-perturbative content. The unrefined limit of this solution equals the Borel sum along the real axis of the GW potential, which was obtained in \cite{ASTT21}, whereas the NS limit of this solution matches the Borel sum of the NS limit of the GW potential along the real axis, as we will find in Theorem \ref{maintheorem}.

Define the function\footnote{We have given this function $F^{{\rm ref}}_{{\rm np}}$ a subscript ``np" because it represents a non-perturbative completion of the free energy $F^{{\rm ref}}$. Later in this paper we will find that there are many other non-perturbative completions of $F^{{\rm ref}}$, which we label using the subscript $\rho$. The function $F^{{\rm ref}}_{{\rm np}}$ with subscript ``np" will turn out to play a special role among them.}
\begin{gather}
F^{{\rm ref}}_{{\rm np}}(\epsilon_1,\epsilon_2,t) := \left(\frac{\pi \mathrm{i}}{6} B_{3,3}\left(t+\frac{\check\epsilon_1-\check\epsilon_2}{2} \mid \check{\epsilon}_1,-\check{\epsilon}_2,1 \right)\right) \nonumber\\
\hphantom{F^{{\rm ref}}_{{\rm np}}(\epsilon_1,\epsilon_2,t) :=}{} + \log\left(\sin_3\left(t+\frac{\check\epsilon_1-\check\epsilon_2}{2} \mid \check{\epsilon}_1,-\check{\epsilon}_2,1\right)\right),\label{refdifeqsol}
\end{gather}
where $B_{3,3}(z\mid \omega_1,\omega_2)$ is a multiple Bernoulli polynomial and $\sin_3(z\mid\omega_1,\omega_2)$ the triple sine~func\-tion, both defined in Appendix~\ref{specialfunctapp}.\footnote{Multiple sine functions in the context of the quantum Riemann--Hilbert problem determined by refined
Donaldson--Thomas theory on the resolved conifold have recently appeared in~\cite{Chuang:2022uey}.}

\begin{Proposition}
$F^{{\rm ref}}_{{\rm np}}(\epsilon_1,\epsilon_2,t)$ has the following properties:
\begin{enumerate}\itemsep=0pt
 \item[$1.$] $F^{{\rm ref}}_{{\rm np}}(\epsilon_1,\epsilon_2,t)$
satisfies the difference equations
\begin{gather}
 F_{{\rm np}}^{{\rm ref}}(\epsilon_1,\epsilon_2,t+\check{\epsilon}_1)-F_{{\rm np}}^{{\rm ref}}(\epsilon_1,\epsilon_2,t)= - \log\left( \mathcal{S}_2\left(t+\frac{\check\epsilon_1-\check\epsilon_2}{2} \mid {-}\check{\epsilon}_2,1\right) \right), \nonumber \\
 F_{{\rm np}}^{{\rm ref}}(\epsilon_1,\epsilon_2,t-\check{\epsilon}_2)-F_{{\rm np}}^{{\rm ref}}(\epsilon_1,\epsilon_2,t)= - \log\left( \mathcal{S}_2\left(t+\frac{\check\epsilon_1-\check\epsilon_2}{2} \mid \check{\epsilon}_1,1 \right) \right),\nonumber\\
F_{{\rm np}}^{{\rm ref}}(\epsilon_1,\epsilon_2,t+1)-F_{{\rm np}}^{{\rm ref}}(\epsilon_1,\epsilon_2,t)= - \log\left( \mathcal{S}_2\left(t+\frac{\check\epsilon_1-\check\epsilon_2}{2} \mid \check{\epsilon}_1, -\check{\epsilon_2} \right) \right),\label{eps1shift}
\end{gather}
where
\begin{equation*}
 \mathcal{S}_2(z \mid \omega_1,\omega_2) := \exp\left(-\frac{\pi \mathrm{i}}{2} B_{2,2}(z\mid \omega_1,\omega_2)\right) \sin_2(z \mid \omega_1,\omega_2)
\end{equation*}
is Faddeev's quantum dilogarithm in slightly different conventions, following~{\rm \cite{Bridgeland1,Narukawa}}, which are reviewed in Appendix~{\rm \ref{specialfunctapp}}.

\item[$2.$] $F^{{\rm ref}}_{{\rm np}}(\epsilon_1,\epsilon_2,t)$ has the integral representation
\begin{equation}\label{eq:intreprref}
F^{{\rm ref}}_{{\rm np}}(\epsilon_1,\epsilon_2,t) = - \int_{\mathbb{R}+\mathrm{i} 0^{+}} \frac{{\rm e}^{\left(t+\frac{\check{\epsilon}_1 - \check{\epsilon}_2}{2} \right)s}}{({\rm e}^s-1)({\rm e}^{\check{\epsilon}_1s}-1) ({\rm e}^{-\check{\epsilon}_2s}-1) } \frac{{\rm d}s}{s},
\end{equation}
which is valid for $\operatorname{Re} (\check{\epsilon}_1) >0$, $\operatorname{Re}(-\check{\epsilon}_2) >0$ and $-\operatorname{Re}\big(\frac{\check{\epsilon}_1 -\check{\epsilon}_2}{2} \big) < \operatorname{Re} t < \operatorname{Re} \big(\big(\frac{\check{\epsilon}_1 -\check{\epsilon}_2}{2}\big)+1\big)$. The contour $\mathbb{R}+\mathrm{i} 0^{+}$ is following the real axis from $-\infty$ to $\infty$ avoiding $0$ by a small detour in the upper half plane.
\item[$3.$] $F^{{\rm ref}}_{{\rm np}}(\epsilon_1,\epsilon_2,t) $ satisfies the difference equation~\eqref{eq:refdiffeq}.

\item[$4.$] The exponential $Z^{{\rm ref}}_{{\rm np}}(\epsilon_1,\epsilon_2,t):=\exp \big( F^{{\rm ref}}_{{\rm np}}(\epsilon_1,\epsilon_2,t)\big)$ has the product expansion
\begin{gather}
 Z^{{\rm ref}}_{{\rm np}}(\epsilon_1,\epsilon_2,t)= Z^{{\rm ref}}(\epsilon_1,\epsilon_2,t)
 Z^{{\rm ref}}\big(4\pi^2/\epsilon_1,2\pi (\epsilon_2/\epsilon_1+1),2\pi (t-1/2)/\epsilon_1\big)\nonumber\\
\hphantom{Z^{{\rm ref}}_{{\rm np}}(\epsilon_1,\epsilon_2,t)=}{}
\times Z^{{\rm ref}}\big(4\pi^2/\epsilon_2,-2\pi (\epsilon_1/\epsilon_2+1),-2\pi (t-1/2)/\epsilon_2\big)\label{prodrefnp}
\end{gather}
for $\operatorname{Im}(-\epsilon_1/\epsilon_2)>0$, $\operatorname{Im}(\epsilon_1)>0$ and $\operatorname{Im}(-\epsilon_2)>0$, where
\begin{gather*}
 Z^{\rm ref}( \epsilon_1,\epsilon_2,t) = \exp F_\mathrm{GV}^{\rm ref}( \epsilon_1,\epsilon_2,t) = \prod_{k,l=1}^\infty \big(1- Q q_1^{k-1/2} q_2^{l-1/2} \big)^{-1}.
\end{gather*}
\end{enumerate}
\end{Proposition}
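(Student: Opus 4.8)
The four statements are most naturally proved in the order $1 \to 2 \to 3 \to 4$, with the integral representation of part~2 doing most of the work for parts~3 and~4. Throughout I write $z = t + \frac{\check\epsilon_1 - \check\epsilon_2}{2}$ and take the three periods to be $\check\epsilon_1$, $-\check\epsilon_2$, $1$, so that $F^{\rm ref}_{\rm np} = \frac{\pi \mathrm{i}}{6} B_{3,3}\big(z\mid \check\epsilon_1,-\check\epsilon_2,1\big) + \log \sin_3\big(z\mid \check\epsilon_1,-\check\epsilon_2,1\big)$. For part~1 the plan is to combine two standard first-difference identities recalled in Appendix~\ref{specialfunctapp}. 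From the generating function $\frac{s^3 \mathrm{e}^{zs}}{\prod_j(\mathrm{e}^{\omega_j s}-1)} = \sum_k \frac{B_{3,k}}{k!}s^k$ one reads off, upon multiplying by $(\mathrm{e}^{\omega_m s}-1)$ (which deletes the $m$-th factor and shifts the power of $s$ by one), that $B_{3,3}(z + \omega_m) - B_{3,3}(z) = 3\,B_{2,2}\big(z \mid \text{remaining pair}\big)$; in parallel the triple sine obeys $\sin_3(z + \omega_m)/\sin_3(z) = \sin_2\big(z \mid \text{remaining pair}\big)^{-1}$. Adding $\frac{\pi\mathrm i}{6}$ times the first to the logarithm of the second gives, for the shift by $\check\epsilon_1$,
\[ F^{\rm ref}_{\rm np}(\epsilon_1,\epsilon_2,t+\check\epsilon_1) - F^{\rm ref}_{\rm np}(\epsilon_1,\epsilon_2,t) = \tfrac{\pi\mathrm i}{2}B_{2,2}\big(z\mid -\check\epsilon_2,1\big) - \log\sin_2\big(z\mid -\check\epsilon_2,1\big) = -\log\mathcal S_2\big(z\mid -\check\epsilon_2,1\big), \]
by the very definition of $\mathcal S_2$, and identically for the shifts by $-\check\epsilon_2$ and by $1$. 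This is exactly \eqref{eps1shift}.

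For part~2 I would invoke the integral representation of the triple sine recalled in the appendix, which in these conventions reads $\frac{\pi\mathrm i}{6}B_{3,3}(z\mid\omega) + \log\sin_3(z\mid\omega) = -\int_{\mathbb R + \mathrm i 0^+}\frac{\mathrm e^{zs}}{\prod_j(\mathrm e^{\omega_j s}-1)}\frac{\mathrm{d}s}{s}$, valid in the fundamental strip $0 < \operatorname{Re}(z) < \operatorname{Re}(\omega_1+\omega_2+\omega_3)$ when $\operatorname{Re}(\omega_j) > 0$. Substituting the periods $\check\epsilon_1,-\check\epsilon_2,1$ and $z = t + \frac{\check\epsilon_1-\check\epsilon_2}{2}$ gives \eqref{eq:intreprref} directly. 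The hypotheses $\operatorname{Re}(\check\epsilon_1)>0$ and $\operatorname{Re}(-\check\epsilon_2)>0$ are precisely what is needed for $\operatorname{Re}(\omega_j)>0$ and convergence, while the strip condition $0<\operatorname{Re}(z)<\operatorname{Re}(\check\epsilon_1-\check\epsilon_2+1)$ translates term by term into the stated bounds $-\operatorname{Re}\big(\frac{\check\epsilon_1-\check\epsilon_2}{2}\big) < \operatorname{Re} t < \operatorname{Re}\big(\frac{\check\epsilon_1-\check\epsilon_2}{2}+1\big)$.

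For part~3 the plan is to apply the four-term combination on the left of \eqref{eq:refdiffeq} directly to the integrand of \eqref{eq:intreprref}. Each of the four shifts of $t$ multiplies $\mathrm e^{zs}$ by a power of $\mathrm e^s$, and the four signed numerators collapse to $\mathrm e^{ts}\big(\mathrm e^{\check\epsilon_1 s}-1\big)\big(\mathrm e^{-\check\epsilon_2 s}-1\big)$; two of the three denominator factors then cancel, leaving $-\int_{\mathbb R+\mathrm i0^+}\frac{\mathrm e^{ts}}{\mathrm e^s-1}\frac{\mathrm{d}s}{s}$. Closing the contour and collecting the residues at $s=2\pi\mathrm i k$, $k\ge 1$, gives $-\sum_{k\ge1}Q^k/k = \log(1-Q) = -\mathrm{Li}_1(Q)$, which is the right-hand side of \eqref{eq:refdiffeq}. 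The identity, established on the strip, then extends to all $t$ by analytic continuation. (One could instead assemble part~1, but the integral route is cleaner.)

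For part~4 the strategy is again to evaluate \eqref{eq:intreprref} by residues, but now organising the poles of the integrand into the three families produced by the factors $(\mathrm e^s-1)$, $(\mathrm e^{\check\epsilon_1 s}-1)$ and $(\mathrm e^{-\check\epsilon_2 s}-1)$, located at $s\in 2\pi\mathrm i\mathbb Z$, $2\pi\mathrm i\mathbb Z/\check\epsilon_1$ and $-2\pi\mathrm i\mathbb Z/\check\epsilon_2$. Summing each family and exponentiating reproduces the three factors of \eqref{prodrefnp}: for the first family the residue at $s=2\pi\mathrm i k$ is proportional to $Q^k q_1^{k/2}q_2^{-k/2}\big/\big(k(q_1^k-1)(q_2^{-k}-1)\big)$, whose summed, exponentiated contribution is exactly $Z^{\rm ref}(\epsilon_1,\epsilon_2,t)=\exp F^{\rm ref}_{\rm GV}$, and the two tilted families yield the remaining $Z^{\rm ref}$ with the rescaled arguments. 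The conditions $\operatorname{Im}(-\epsilon_1/\epsilon_2)>0$, $\operatorname{Im}(\epsilon_1)>0$, $\operatorname{Im}(-\epsilon_2)>0$ are what guarantee the three products converge simultaneously. I expect this last step to be the main obstacle: it is really the modular factorization of the triple sine of Narukawa, and the care lies in justifying the interchange of contour-closing with the triple residue sum, in bookkeeping the half-integer shifts $q_j^{k-1/2}$ that reorganise the residues into the $(1-Qq_1^{k-1/2}q_2^{l-1/2})^{-1}$ products, and in verifying that the three convergence domains overlap exactly where claimed.
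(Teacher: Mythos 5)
Your proposal is correct, but it splits from the paper's proof halfway through. For items 1 and 2 you do exactly what the paper does: the first-difference identities \eqref{eq:Bernoullipolrec} and \eqref{eq:sinefunc} combine into $-\log\mathcal{S}_2$ by the definition of $\mathcal{S}_2$, and the integral representation \eqref{eq:intreprref} is quoted from Narukawa (your only slip is attributing it to the appendix, which records the integral representation of $\mathcal{S}_2$ but not the triple-sine one; the paper cites Narukawa's Proposition~2 directly). For items 3 and 4, however, you take a genuinely different route. The paper proves item 3 purely algebraically: it applies the first two equations of \eqref{eps1shift} twice and then uses the functional equation \eqref{qddiffeq} of $\mathcal{S}_2$, so no contour manipulations or convergence hypotheses enter and the identity holds wherever the functions are defined. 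Your route --- collapsing the four signed numerators to ${\rm e}^{ts}\big({\rm e}^{\check\epsilon_1 s}-1\big)\big({\rm e}^{-\check\epsilon_2 s}-1\big)$, cancelling, and summing residues of $-\int \frac{{\rm e}^{ts}}{({\rm e}^s-1)s}{\rm d}s$ --- is algebraically sound (the factorization ${\rm e}^{a-b}+{\rm e}^{b-a}-{\rm e}^{a+b}-{\rm e}^{-a-b}=-({\rm e}^a-{\rm e}^{-a})({\rm e}^b-{\rm e}^{-b})$ does the work), but it needs a decay hypothesis such as $\operatorname{Im}(t)>0$ to justify closing the contour, followed by analytic continuation, which you correctly flag. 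For item 4 the paper instead invokes Narukawa's modular product formula \eqref{eq:multiplesineproduct} together with two applications of \eqref{eq:productinvpow}, which lands directly on \eqref{prodrefnp} in the stated domain; your residue computation over the three pole families is exactly the alternative derivation the paper itself sketches in the remark following the proposition (leading to \eqref{eq:refnpGVform}), and it carries the caveats the paper notes there: the poles are simple only under irrationality conditions ($\check\epsilon_i\notin\mathbb{Q}$, $\epsilon_1\neq r\epsilon_2$), and one may need to further restrict parameters before continuing analytically to the full domain $\operatorname{Im}(-\epsilon_1/\epsilon_2)>0$, $\operatorname{Im}(\epsilon_1)>0$, $\operatorname{Im}(-\epsilon_2)>0$. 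In short, your integral-representation-centric strategy buys uniformity (one tool proves 3 and 4 at once) at the cost of analytic bookkeeping, while the paper's functional-equation strategy for item 3 and product-formula strategy for item 4 avoid contour arguments entirely.
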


\begin{proof}
To prove the first difference equation in the proposition we use the definition \eqref{refdifeqsol} and obtain
\begin{gather*}
 F_{{\rm np}}^{{\rm ref}}(\epsilon_1,\epsilon_2,t+\check{\epsilon}_1)-F_{{\rm np}}^{{\rm ref}}(\epsilon_1,\epsilon_2,t)= \left(\frac{\pi \mathrm{i}}{2} B_{2,2}\left(t+\frac{\epsilon_1-\epsilon_2}{4\pi}\mid {-}\check{\epsilon}_2,1 \right)\right) \\
\hphantom{F_{{\rm np}}^{{\rm ref}}(\epsilon_1,\epsilon_2,t+\check{\epsilon}_1)-F_{{\rm np}}^{{\rm ref}}(\epsilon_1,\epsilon_2,t)=}{}
 - \log\left(\sin_2\left(t+\frac{\epsilon_1-\epsilon_2}{4\pi} \mid {-}\check{\epsilon}_2,1\right)\right),
 \end{gather*}
where we have used equation \eqref{eq:Bernoullipolrec} for the multiple Bernouilli polynomials, as well as relation~\eqref{eq:sinefunc} for the multiple sine functions. We recognize the right-hand side as
\begin{equation*}
 - \log\left( \mathcal{S}_2\left(t+\frac{\epsilon_1-\epsilon_2}{4\pi}\mid {-}\check{\epsilon}_2,1\right) \right).
\end{equation*}
The second and third difference equation follow analogously.

The integral representation in item (ii) is the one given in \cite[Proposition~2]{Narukawa}.

The proof that $F^{{\rm ref}}_{{\rm np}}$ satisfies the difference equation \eqref{eq:refdiffeq} follows by successively using the first two difference equations proved in item (i) and then using the difference equation for the quantum dilogarithm function reviewed in equation~(\ref{qddiffeq}). Indeed, after applying equation \eqref{eps1shift} twice and then substituting equation~\eqref{qddiffeq}, we find
\begin{gather*}
 F^{{\rm ref}} \left(\epsilon_1,\epsilon_2,t+\frac{\check{\epsilon}_1}{2}-\frac{\check{\epsilon}_2}{2}\right) + F^{{\rm ref}}\left(\epsilon_1,\epsilon_2,t-\frac{\check{\epsilon}_1}{2}+\frac{\check{\epsilon}_2}{2} \right) \\
\qquad\quad {}-F^{{\rm ref}}\left(\epsilon_1,\epsilon_2,t+\frac{\check{\epsilon}_1}{2}+\frac{\check{\epsilon}_2}{2}\right)-F^{{\rm ref}}\left(\epsilon_1,\epsilon_2,t-\frac{\check{\epsilon}_1}{2}-\frac{\check{\epsilon}_2}{2} \right)\\
 \qquad{} =-\log\left(\mathcal{S}_2(t-\check\epsilon_2\mid {-}\check\epsilon_2,1)\right) +\log\left(\mathcal{S}_2(t\mid {-}\check\epsilon_2,1)\right)
 =\log(1-Q)=-\mathrm{Li}_1(Q).
 \end{gather*}

 Finally, the proof of the product formula in item (iv) is found using the product formula of \cite[Corollary~6]{Narukawa}, reviewed in equation~\eqref{eq:multiplesineproduct} of Appendix~\ref{specialfunctapp}, and after applying the identity~\eqref{eq:productinvpow} twice. Spelled out in detail, the product formula of~\cite{Narukawa} gives
 \begin{gather*}
 Z_{{\rm np}}^{\rm ref}= \prod_{j=k=0}^{\infty}(1-\exp(2\pi \mathrm{i}(t/\check{\epsilon}_1+1/2-\check{\epsilon}_2/(2\check{\epsilon}_1) +(j+1)\epsilon_2/\epsilon_1 -(k+1)/\check\epsilon_1)))\\
 \hphantom{Z_{{\rm np}}^{\rm ref}=}{}\times
 \prod_{j=k=0}^{\infty} (1-\exp\left(2\pi \mathrm{i}\left(t+(\check{\epsilon}_1+\check{\epsilon}_2)/2+j\check{\epsilon}_1 -k\check{\epsilon}_2\right)\right))\\
\hphantom{Z_{{\rm np}}^{\rm ref}=}{}\times
 \prod_{j=k=0}^{\infty} (1-\exp\left(2\pi \mathrm{i}\left(-t/\check{\epsilon}_2 -\epsilon_1/(2\epsilon_2) +1/2 -j\epsilon_1/\epsilon_2 +(k+1)/\check{\epsilon}_2\right)\right))^{-1},
 \end{gather*}
 under the assumptions of point (iv).
 By applying \eqref{eq:productinvpow} to the first product with $q_j={\rm e}^{-\mathrm{i}/\epsilon_1}$ and the second product with $q_j={\rm e}^{-\mathrm{i}\epsilon_2}$, we obtain
 \begin{gather*}
 Z_{{\rm np}}^{\rm ref}= \prod_{j=k=0}^{\infty}(1-\exp\left(2\pi \mathrm{i}\left(t-1/2\right)/\check{\epsilon}_1+2\pi \mathrm{i}(k+1/2)/\check{\epsilon}_1 +2\pi \mathrm{i}(j+1/2)(\epsilon_2/\epsilon_1+1) \right))^{-1}\\
\hphantom{Z_{{\rm np}}^{\rm ref}=}{}\times
\prod_{j=k=0}^{\infty} (1-\exp\left(2\pi \mathrm{i} t+\mathrm{i}\epsilon_1(j+1/2)+\mathrm{i}\epsilon_2(k+1/2)\right))^{-1}\\
\hphantom{Z_{{\rm np}}^{\rm ref}=}{}\times
\prod_{j=k=0}^{\infty}\! (1-\exp(2\pi \mathrm{i}(-t+1/2)/\check{\epsilon}_2 -2\pi \mathrm{i}(\epsilon_1/\epsilon_2 +1)(j+1/2) +2\pi \mathrm{i}(k+1/2)/\check{\epsilon}_2))^{-1},
 \end{gather*}
 after slightly rewriting the exponents.
 Finally, by shifting $j$ and $k$ by $1$, so that the above products start from $j=k=1$, we obtain equation~\eqref{prodrefnp}.
\end{proof}

Let us remark that we can also recover the product representation~\eqref{prodrefnp} from the integral representation~\eqref{eq:intreprref}. Indeed, under the assumptions of item~(ii), the integrand has three infinite sets of poles in the upper half plane without zero, given by
\begin{equation*}
s= 2\pi \mathrm{i} k, \qquad s=\frac{2\pi \mathrm{i} k }{\check{\epsilon}_1} \qquad \textrm{and} \qquad s=-\frac{2\pi \mathrm{i} k }{\check{\epsilon}_2},
\end{equation*}
for $k \in \mathbb{N}\setminus\{0\}$. These poles are simple if we assume that $\check{\epsilon}_1,\check{\epsilon}_2\notin \mathbb{Q}$, $ \epsilon_1 \ne r \epsilon_2$ with $r \in \mathbb{Q}$. By closing the contour of \eqref{eq:intreprref} in the upper half plane, and further restricting the range of parameters if necessary, we can compute \eqref{eq:intreprref} by summing over the residues. We then find
\begin{gather}
F^{\rm ref}_{{\rm np}}(\epsilon_1,\epsilon_2,t) = F^{{\rm ref}}_{{\rm GV}}(\epsilon_1,\epsilon_2,t) + F_{{\rm GV}}^{{\rm ref}}(4\pi^2/\epsilon_1,2\pi (\epsilon_2/\epsilon_1+1),2\pi (t-1/2)/\epsilon_1)
\nonumber\\
\hphantom{F^{\rm ref}_{{\rm np}}(\epsilon_1,\epsilon_2,t) =}{}
 + F_{{\rm GV}}^{{\rm ref}}(4\pi^2/\epsilon_2,-2\pi (\epsilon_1/\epsilon_2+1),-2\pi (t-1/2)/\epsilon_2),\label{eq:refnpGVform}
\end{gather}
where
\begin{equation*}
F^{{\rm ref}}_{{\rm GV}}(\epsilon_1,\epsilon_2,t)= -\sum_{k=1}^{\infty} \frac{Q^k}{k(2\sin k\epsilon_1/2)(2\sin k\epsilon_2/2)}
\end{equation*}
is the Gopakumar--Vafa resummation of the free energy.

\begin{Remark}\quad
\begin{itemize}\itemsep=0pt
\item
 The expression \eqref{eq:refnpGVform} for the non-perturbative refined free energy as a sum over three perturbative pieces evaluated at different values of the arguments, matches the proposal of Lockhart and Vafa \cite{Lockhart:2012vp} for the resolved conifold, up to small discrepancies due to different conventions for the parameters $\epsilon_1$ and $\epsilon_2$.

\item In the limit $\epsilon_1=-\epsilon_2 = \lambda$, the free energy $F^{{\rm ref}}_{{\rm np}}(\epsilon_1,\epsilon_2,t)$ reduces to $F^{\rm top}_{{\rm np}}(\lambda,t)$. The latter expression was studied in \cite{Alim:2021ukq} and was shown to correspond to the Borel summed free energy in a distinguished region of the Borel plane in \cite{ASTT21}. It was furthermore shown in \cite{Alim:2021ukq} that it can be written in the form\footnote{The discrepancies of the relative factor between the two summands to \cite{Alim:2021ukq,ASTT21, Hatsuda:2015owa} is due to different normalization conventions for $F_{{\rm NS}}$ used in these works.}
 \begin{equation}\label{GVNS}
F^{\rm top}_{{\rm np}}(\lambda,t)= F_{{\rm GV}}(\lambda,t) + \frac{1}{2\pi} \frac{\partial}{\partial \lambda} \big(\lambda F_{{\rm NS}} \big(1/\check{\lambda},(t-1/2)/\check{\lambda}\big)\big).
\end{equation}
This result \eqref{GVNS} matches equation~(5.6) of \cite{Hatsuda:2015owa}, which was derived using a generalized Borel resummation.

\item A similar integral representation for the refined non-perturbative topological string partition function was obtained from Chern--Simons theory in \cite{Krefl:2015vna}.
\end{itemize}
\end{Remark}

\subsection{Difference equations in the NS limit}\label{NSlimitsec}

In Section \ref{sec:freeenergyrescon}, we introduced the NS limit $F^{\rm NS}(\epsilon,t)$ of the perturbative refined free energy $F^{\rm ref}(\epsilon_1,\epsilon_2,t)$. In this section we define the NS limit $F^{\rm NS}_{{\rm np}}(\epsilon,t)$ of the non-perturbative refined free energy $F^{\rm ref}_\mathrm{np}(\epsilon_1,\epsilon_2,t)$. We find an integral representation for $F^{\rm NS}_{{\rm np}}(\epsilon,t)$ and write down the difference equations this free energy satisfies.

Recall the definition \eqref{def:FNS} of the NS limit
\begin{gather*}
 F^{\rm NS}(\epsilon,t) := - \frac{1}{2} \sum_{k=1}^{\infty} \frac{Q^k}{k^2 \sin(k\epsilon/2)}
 \end{gather*}
of $F^{\rm ref}(\epsilon_1,\epsilon_2,t)$.
We similarly define the NS-limit of $F_{{\rm np}}^{\rm ref}(\epsilon_1,\epsilon_2,t)$ as
\begin{equation}\label{eqn:FNSnpfirst}
 F^{\rm NS}_{{\rm np}}(\epsilon,t):= \lim_{\epsilon_2\rightarrow 0} \epsilon_2\cdot F^{{\rm ref}}_{{\rm np}}(\epsilon_1,\epsilon_2,t)|_{\epsilon_1=\epsilon}.
\end{equation}

From (\ref{eq:intreprref}), we find that $F^{\rm NS}_{{\rm np}}(\epsilon,t)$ has the integral representation
\begin{equation}\label{eq:FNSintegral}
F^{\rm NS}_{{\rm np}}(\epsilon,t) = 2\pi \int_{\mathbb{R}+\mathrm{i} 0^{+}} \frac{{\rm e}^{(t+\frac{\check\epsilon}{2} )s}}{({\rm e}^s-1)({\rm e}^{\check{\epsilon}s}-1) } \frac{{\rm d}s}{s^2},
\end{equation}
which is valid for $\operatorname{Re} (\check{\epsilon}) >0,$ and $-\operatorname{Re}\left(\frac{\check{\epsilon}}{2} \right) < \operatorname{Re} t < \operatorname{Re} \left(\left(\frac{\check{\epsilon}}{2}\right)+1\right)$. The contour $\mathbb{R}+\mathrm{i} 0^{+}$ is following the real axis from $-\infty$ to $\infty$, avoiding the origin by a small detour in the upper half plane.

\begin{Proposition} Assume that $\operatorname{Re} (\check{\epsilon}) >0$ and $-\operatorname{Re}\big(\frac{\check{\epsilon}}{2} \big) < \operatorname{Re} t < \operatorname{Re} \big(\big(\frac{\check{\epsilon}}{2}\big)+1\big)$. Then for $\operatorname{Im}(t)>0$ and $\check\epsilon \notin \mathbb{Q}$, the non-perturbative NS free energy
$F^{\rm NS}_{{\rm np}}(\epsilon,t)$ can be expressed as
\begin{equation}\label{FnpFNS}
 F^{\rm NS}_{{\rm np}}(\epsilon,t)=F^{\rm NS}(\epsilon,t) + \check{\epsilon} F^{\rm NS}(2\pi (1/\check{\epsilon}+1),(t-1/2)/\check{\epsilon}).
\end{equation}
\end{Proposition}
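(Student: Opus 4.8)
The plan is to prove \eqref{FnpFNS} directly from the integral representation \eqref{eq:FNSintegral} by closing the contour in the upper half plane and summing residues. Write the integrand as $g(s)=e^{(t+\check\epsilon/2)s}\big[(e^s-1)(e^{\check\epsilon s}-1)s^2\big]^{-1}$. This is meromorphic, with poles at $s=2\pi\mathrm{i} k$ (from $e^s-1$), at $s=2\pi\mathrm{i} k/\check\epsilon$ (from $e^{\check\epsilon s}-1$) for $k\in\mathbb{Z}$, and a pole of order four at $s=0$. Since $\operatorname{Re}(\check\epsilon)>0$, the number $\mathrm{i}/\check\epsilon$ has strictly positive imaginary part, so the poles with $k\ge1$ of both families lie in the open upper half plane; the assumption $\check\epsilon\notin\mathbb{Q}$ ensures the two families meet only at the origin and that every nonzero pole is simple. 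Because the contour $\mathbb{R}+\mathrm{i} 0^{+}$ passes \emph{above} $s=0$, the origin is not enclosed when we close upward.

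First I would close the contour by a large semicircle in the upper half plane and argue that the arc contributes nothing. Here the hypotheses do the work: the factor $e^{(t+\check\epsilon/2)s}$ decays against the growth of $(e^s-1)(e^{\check\epsilon s}-1)$ precisely because of $\operatorname{Im}(t)>0$, $\operatorname{Re}(\check\epsilon)>0$ and the strip condition $-\operatorname{Re}(\check\epsilon/2)<\operatorname{Re} t<\operatorname{Re}(\check\epsilon/2)+1$, i.e.\ the same conditions that validate \eqref{eq:FNSintegral}. Granting this, the residue theorem gives $F^{\rm NS}_{\rm np}(\epsilon,t)=2\pi\cdot 2\pi\mathrm{i}\sum\operatorname{Res} g$, summed over the two families $s=2\pi\mathrm{i} k$ and $s=2\pi\mathrm{i} k/\check\epsilon$ with $k\ge1$.

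It then remains to evaluate the two residue series. For the family $s=2\pi\mathrm{i} k$ one uses $e^{2\pi\mathrm{i} k\check\epsilon}-1=2\mathrm{i}\,e^{\pi\mathrm{i} k\check\epsilon}\sin(k\epsilon/2)$ together with $e^{2\pi\mathrm{i} k(t+\check\epsilon/2)}=Q^k e^{\pi\mathrm{i} k\check\epsilon}$; the factor $e^{\pi\mathrm{i} k\check\epsilon}$ cancels and, inserting $(2\pi\mathrm{i} k)^2=-4\pi^2k^2$, the sum collapses to $-\tfrac12\sum_k Q^k\big(k^2\sin(k\epsilon/2)\big)^{-1}=F^{\rm NS}(\epsilon,t)$. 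For the family $s=2\pi\mathrm{i} k/\check\epsilon$ the analogous computation, now using $e^{\check\epsilon s}-1\approx\check\epsilon(s-2\pi\mathrm{i} k/\check\epsilon)$ at the pole, produces an overall factor $\check\epsilon$ and, after simplifying $e^{2\pi\mathrm{i} kt/\check\epsilon}e^{-\pi\mathrm{i} k/\check\epsilon}=e^{2\pi\mathrm{i} k(t-1/2)/\check\epsilon}$, the sum $-\tfrac{\check\epsilon}{2}\sum_k e^{2\pi\mathrm{i} k(t-1/2)/\check\epsilon}\big(k^2(-1)^k\sin(k\pi/\check\epsilon)\big)^{-1}$. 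Recognizing $(-1)^k\sin(k\pi/\check\epsilon)=\sin\!\big(k\pi(1/\check\epsilon+1)\big)=\sin(k\epsilon'/2)$ with $\epsilon'=2\pi(1/\check\epsilon+1)$, this is exactly $\check\epsilon\,F^{\rm NS}\big(2\pi(1/\check\epsilon+1),(t-1/2)/\check\epsilon\big)$. Adding the two contributions yields \eqref{FnpFNS}. Both series converge geometrically thanks to $\operatorname{Im}(t)>0$ (so $|Q|<1$ and $\big|e^{2\pi\mathrm{i} k(t-1/2)/\check\epsilon}\big|<1$), which simultaneously confirms that the two NS sums on the right-hand side are well-defined.

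The main obstacle is the contour-closing step: verifying that the arc vanishes requires a uniform bound on $g(s)$ as $|s|\to\infty$, and this must be checked along every direction of the upper half plane, not just the imaginary axis, since $e^s-1$ and $e^{\check\epsilon s}-1$ grow at different rates in different sectors. A clean alternative that sidesteps this would be to take the NS limit $\lim_{\epsilon_2\to0}\epsilon_2(\,\cdot\,)$ of the Gopakumar--Vafa decomposition \eqref{eq:refnpGVform} term by term: the first summand tends to $F^{\rm NS}(\epsilon,t)$, the second to $\check\epsilon\,F^{\rm NS}\big(2\pi(1/\check\epsilon+1),(t-1/2)/\check\epsilon\big)$ after expanding $\sin(k\pi+k\pi\epsilon_2/\epsilon)=(-1)^k\sin(k\pi\epsilon_2/\epsilon)$, and the third summand vanishes because its poles run off to infinity as $\check\epsilon_2\to0$; one would then instead have to justify interchanging the limit with the GV sums.
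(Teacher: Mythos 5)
Your route is the same as the paper's: close the contour of \eqref{eq:FNSintegral} in the upper half plane and sum the residues of the two families of simple poles at $s=2\pi\mathrm{i} k$ and $s=2\pi\mathrm{i} k/\check{\epsilon}$, $k\ge 1$ (the origin, as you correctly note, is not enclosed, and $\check{\epsilon}\notin\mathbb{Q}$ keeps the families disjoint and the poles simple). Your residue computations are correct and reproduce exactly the two terms on the right-hand side of \eqref{FnpFNS}; the paper leaves this part implicit, citing \eqref{def:FNS}.

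The step you yourself flag as the main obstacle, however, is a genuine gap in your write-up, and it is precisely the content of the paper's proof. The paper fills it with three ingredients. First, both pole families escape to infinity along or near the positive imaginary axis, so semicircles of arbitrary radius pass arbitrarily close to poles and no useful bound holds on them; the paper therefore picks a sequence of radii $R_n\to\infty$ whose semicircles $C_{R_n}$ avoid both families. Second, both sides of \eqref{FnpFNS} are analytic in $t$ on the allowed domain, so by the identity theorem it suffices to prove the formula for $\operatorname{Re}(t)=0$; this is what makes the estimate clean, since the integrand then factors as ${\rm e}^{ts}g(\epsilon,s)$ with
\begin{equation*}
g(\epsilon,s)=\frac{1}{2s^2({\rm e}^s-1)\sinh(s\check{\epsilon}/2)},
\end{equation*}
where ${\rm e}^{ts}$, with $t$ purely imaginary and $\operatorname{Im}(t)>0$, is a genuine Jordan oscillatory factor. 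Third, Jordan's lemma then bounds the arc contribution by $\frac{\pi}{-\mathrm{i} t}M_{R_n}(\epsilon)$, with $M_{R_n}(\epsilon)=\operatorname{Max}_{\theta\in[0,\pi]}\big|g\big(\epsilon,R_n{\rm e}^{\mathrm{i}\theta}\big)\big|$, and the choice of $R_n$ combined with the $1/s^2$ decay forces $M_{R_n}\to 0$. Supplying these three points turns your residue calculation into a complete proof. Your alternative suggestion (taking the NS limit of \eqref{eq:refnpGVform} term by term) is not the paper's route and, as you concede, merely trades the arc estimate for an interchange-of-limits problem; note that the third summand there involves $\sin\big(2\pi^2 k/\epsilon_2\big)$, which oscillates as $\epsilon_2\to 0$, so that interchange is at least as delicate as the estimate it replaces.
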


\begin{proof}
In the upper half plane without zero, the integrand of the integral representation \eqref{eq:FNSintegral} has two infinite sets of poles given by
\begin{equation*}
s= 2\pi \mathrm{i} k, \qquad s=\frac{2\pi \mathrm{i} k }{\check{\epsilon}},
\end{equation*}
for $k \in \mathbb{N}\setminus\{0\}$. These poles are simple if we assume that $\check{\epsilon}\notin \mathbb{Q}$.

Consider a sequence $R_n>0$, such that $R_n \to \infty$ and such that the semicircle $C_{R_n}$, centered at $0$ with radius $R_n$, in the upper half-plane does not intersect the above sets of poles.
By analyticity in $t$ and $\epsilon$, it is enough to show equation \eqref{FnpFNS} for $\operatorname{Re}(t)=0$. In this case, by an application of Jordan's lemma (which requires $\operatorname{Im}(t)>0$), we have that
\begin{equation*}
 \left|\int_{C_{R_n}} \frac{{\rm e}^{(t+\frac{\check\epsilon}{2} )s}}{({\rm e}^s-1)({\rm e}^{\check{\epsilon}s}-1) } \frac{{\rm d}s}{s^2}\right| \le \frac{\pi}{-\mathrm{i} t}M_{R_n}(\epsilon),
\end{equation*}
where
\begin{gather*}
 M_{R_n}(\epsilon)=\operatorname{Max}_{\theta\in [0,\pi]} |g(\epsilon, R_n{\rm e}^{i\theta})|,\qquad
 \mathrm{with} \quad g(\epsilon,s)=\frac{1}{2s^2({\rm e}^s-1)\sinh(s\check\epsilon/2)}.
\end{gather*}

With the above choice of the $C_{R_{n}}$, we can ensure that $M_{R_{n}}(\epsilon)\to 0$ as $n\to \infty$. Hence, we can compute \eqref{eq:FNSintegral} by summing the residues and find the desired expression using \eqref{def:FNS}.
\end{proof}

Furthermore, $F^{\rm NS}_{{\rm np}}(\epsilon,t)$ is invariant under an S-duality-like transformation. Indeed, let us define
\begin{equation*}
W^{\rm NS}_\mathrm{np}(\epsilon,t) = F^{\rm NS}_{\rm np}\left(\epsilon,t-\frac{\check{\epsilon}}{2}\right).
\end{equation*}
Then we have
\begin{Proposition}\label{prop:Sduality}
Under the change of variables
\begin{equation*}
 \epsilon'=\frac{4\pi^2}{\epsilon}, \qquad t'=\frac{2\pi t}{\epsilon},
\end{equation*}
we have
\begin{equation}\label{eq:Sduality}
 W^{\rm NS}_{\rm np}(\epsilon',t')= \frac{1}{\check{\epsilon}} W^{\rm NS}_{\rm np}(\epsilon,t).
\end{equation}
\end{Proposition}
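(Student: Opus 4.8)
The plan is to exploit the integral representation~\eqref{eq:FNSintegral} of $F^{\rm NS}_{\rm np}$, since the defining shift $t \mapsto t - \check\epsilon/2$ entering $W^{\rm NS}_{\rm np}$ is tailored precisely to cancel the $\check\epsilon/2$ appearing in the exponent of that representation. Substituting $t \mapsto t - \check\epsilon/2$ into~\eqref{eq:FNSintegral} I first obtain the symmetric integral representation
\[
W^{\rm NS}_{\rm np}(\epsilon,t) = 2\pi \int_{\mathbb{R}+\mathrm{i}0^+} \frac{{\rm e}^{ts}}{({\rm e}^s-1)({\rm e}^{\check\epsilon s}-1)} \frac{{\rm d}s}{s^2},
\]
in which the two factors $({\rm e}^s-1)$ and $({\rm e}^{\check\epsilon s}-1)$ enter on an equal footing. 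This manifest symmetry is what makes the claimed S-duality transparent.

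Next I record the effect of the transformation on the rescaled parameters. Since $\check\epsilon = \epsilon/2\pi$, the substitution $\epsilon' = 4\pi^2/\epsilon$, $t' = 2\pi t/\epsilon$ translates into $\check\epsilon' = 1/\check\epsilon$ and $t' = t/\check\epsilon$. Inserting these into the symmetric representation gives
\[
W^{\rm NS}_{\rm np}(\epsilon',t') = 2\pi \int_{\mathbb{R}+\mathrm{i}0^+} \frac{{\rm e}^{(t/\check\epsilon)s}}{({\rm e}^s-1)({\rm e}^{s/\check\epsilon}-1)} \frac{{\rm d}s}{s^2}.
\]
I then perform the change of variables $s = \check\epsilon u$. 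Under this substitution ${\rm e}^{(t/\check\epsilon)s} = {\rm e}^{tu}$, the factor $({\rm e}^s-1)$ becomes $({\rm e}^{\check\epsilon u}-1)$, the factor $({\rm e}^{s/\check\epsilon}-1)$ becomes $({\rm e}^u-1)$, and ${\rm d}s/s^2 = (1/\check\epsilon)\,{\rm d}u/u^2$. The integrand thus reproduces exactly that of $W^{\rm NS}_{\rm np}(\epsilon,t)$, up to the overall factor $1/\check\epsilon$, which is the desired relation~\eqref{eq:Sduality}, provided the image contour can be returned to $\mathbb{R}+\mathrm{i}0^+$.

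The main point requiring care is therefore this final contour deformation. The substitution $s = \check\epsilon u$ sends $\mathbb{R}+\mathrm{i}0^+$ to the rotated-and-scaled contour $\Gamma = (\mathbb{R}+\mathrm{i}0^+)/\check\epsilon$, which under the standing assumption $\operatorname{Re}(\check\epsilon)>0$ is a straight line through a small detour above the origin, tilted by angle $-\arg\check\epsilon \in (-\pi/2,\pi/2)$. I would argue that $\Gamma$ can be rotated back to $\mathbb{R}+\mathrm{i}0^+$ without crossing any poles: in the $u$-plane the poles sit at $u=2\pi\mathrm{i}k$ and $u=2\pi\mathrm{i}k/\check\epsilon$ for $k\in\mathbb{Z}\setminus\{0\}$, and for $k\geq 1$ both families lie strictly in the open upper half-plane (since $\arg(2\pi\mathrm{i}/\check\epsilon)=\pi/2-\arg\check\epsilon\in(0,\pi)$), while their mirror images lie in the lower half-plane; both $\Gamma$ and $\mathbb{R}+\mathrm{i}0^+$ separate the two families, and the small detour keeps the double pole at $u=0$ below the contour throughout the rotation, with orientation preserved. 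The vanishing of the connecting arcs at infinity follows from the exponential decay of the integrand in the relevant sector, by the same Jordan's-lemma estimate used in the proof of~\eqref{FnpFNS} (and, as there, one may first establish the identity on a subdomain where the estimate is cleanest and then extend it by analyticity in $\epsilon$ and $t$). I expect this pole-avoidance-and-arc-decay bookkeeping to be the only genuinely non-trivial step; the algebra of the substitution itself is immediate.
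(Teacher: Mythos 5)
Your proposal is correct and follows essentially the same route as the paper, whose entire proof is the one-line remark that \eqref{eq:Sduality} ``follows directly from the integral equation~\eqref{eq:FNSintegral}'': you have simply filled in the details that this remark leaves implicit, namely that the shift $t\mapsto t-\check\epsilon/2$ produces the symmetric integrand, that $s=\check\epsilon u$ exchanges the two factors and produces the prefactor $1/\check\epsilon$, and that the rotated contour can be brought back to $\mathbb{R}+\mathrm{i}0^+$ without crossing the poles at $2\pi\mathrm{i}k$ and $2\pi\mathrm{i}k/\check\epsilon$. The contour bookkeeping and the analytic-continuation remark are exactly the right points to be careful about, so there is nothing to correct.
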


\begin{proof}
Equation~\eqref{eq:Sduality} follows directly from the integral equation~\eqref{eq:FNSintegral}.
\end{proof}

Moreover,
\begin{Proposition}
$F^{\rm NS}(\epsilon,t)$ satisfies the difference equation
\begin{equation}\label{eq:NSdiffeq}
 F^{\rm NS}(\epsilon,t+\check{\epsilon}/2)-F^{\rm NS}(\epsilon,t-\check\epsilon/2)= -\mathrm{i} {\rm Li}_2(Q).
\end{equation}
$F^{\rm NS}_{{\rm np}}(\epsilon,t)$ also obeys this equation, and moreover satisfies the additional difference equation
\begin{equation}\label{tdiffeqNS}
 F^{\rm NS}_{{\rm np}}(\epsilon,t+1)-F^{\rm NS}_{{\rm np}}(\epsilon,t)=-\mathrm{i}\check\epsilon \mathrm{Li}_2\big({\rm e}^{2\pi \mathrm{i}(t+\check\epsilon/2)/\check\epsilon}\big).
\end{equation}
\end{Proposition}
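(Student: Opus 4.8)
The plan is to treat the three assertions separately: the purely perturbative identity \eqref{eq:NSdiffeq} for $F^{\rm NS}$ can be read off directly from the Gopakumar--Vafa series \eqref{def:FNS}, while the two statements about $F^{\rm NS}_{{\rm np}}$ are most cleanly obtained from the integral representation \eqref{eq:FNSintegral} by a shift-and-residue argument that parallels the proof of \eqref{FnpFNS}.

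For \eqref{eq:NSdiffeq} I would begin from $F^{\rm NS}(\epsilon,t)=-\frac{1}{2}\sum_{k\ge1}Q^k/(k^2\sin(k\epsilon/2))$ and note that, since $Q=\mathrm{e}^{2\pi\mathrm{i}t}$ and $\check\epsilon=\epsilon/2\pi$, the shift $t\mapsto t\pm\check\epsilon/2$ multiplies $Q^k$ by $\mathrm{e}^{\pm\mathrm{i}k\epsilon/2}$. Subtracting the two shifted series, the $k$-th numerator acquires the factor $\mathrm{e}^{\mathrm{i}k\epsilon/2}-\mathrm{e}^{-\mathrm{i}k\epsilon/2}=2\mathrm{i}\sin(k\epsilon/2)$, which cancels the $\sin(k\epsilon/2)$ in the denominator, leaving $-\mathrm{i}\sum_{k\ge1}Q^k/k^2=-\mathrm{i}\,{\rm Li}_2(Q)$. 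This is a term-by-term identity, so it holds for any admissible deformation parameter.

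To show that $F^{\rm NS}_{{\rm np}}$ also obeys \eqref{eq:NSdiffeq} and in addition \eqref{tdiffeqNS}, I would exploit that in \eqref{eq:FNSintegral} the entire $t$-dependence sits in the numerator $\mathrm{e}^{(t+\check\epsilon/2)s}$. For the half-period shift, $F^{\rm NS}_{{\rm np}}(\epsilon,t+\check\epsilon/2)-F^{\rm NS}_{{\rm np}}(\epsilon,t-\check\epsilon/2)$ replaces that numerator by $\mathrm{e}^{(t+\check\epsilon)s}-\mathrm{e}^{ts}=\mathrm{e}^{ts}(\mathrm{e}^{\check\epsilon s}-1)$; the factor $\mathrm{e}^{\check\epsilon s}-1$ cancels one denominator factor, leaving $2\pi\int_{\mathbb{R}+\mathrm{i}0^+}\mathrm{e}^{ts}/((\mathrm{e}^{s}-1)s^2)\,\mathrm{d}s$. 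Closing in the upper half-plane and summing the residues at the simple poles $s=2\pi\mathrm{i}k$, $k\ge1$ (residue $Q^k/(2\pi\mathrm{i}k)^2$) reproduces $-\mathrm{i}\,{\rm Li}_2(Q)$. For the integer shift, $F^{\rm NS}_{{\rm np}}(\epsilon,t+1)-F^{\rm NS}_{{\rm np}}(\epsilon,t)$ replaces the numerator by $\mathrm{e}^{(t+\check\epsilon/2)s}(\mathrm{e}^{s}-1)$; now the factor $\mathrm{e}^{s}-1$ cancels, leaving $2\pi\int\mathrm{e}^{(t+\check\epsilon/2)s}/((\mathrm{e}^{\check\epsilon s}-1)s^2)\,\mathrm{d}s$. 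Here the relevant tower is $s=2\pi\mathrm{i}k/\check\epsilon$, $k\ge1$; each residue carries $\mathrm{e}^{2\pi\mathrm{i}k(t+\check\epsilon/2)/\check\epsilon}$ together with a factor $\check\epsilon$, and the residue sum yields exactly $-\mathrm{i}\check\epsilon\,{\rm Li}_2(\mathrm{e}^{2\pi\mathrm{i}(t+\check\epsilon/2)/\check\epsilon})$, which is \eqref{tdiffeqNS}.

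The main point—rather than the algebra, which is immediate—will be justifying the contour closure and tracking the domain of validity. As in the proof of \eqref{FnpFNS} I would introduce radii $R_n\to\infty$ whose semicircles avoid the poles, reduce to $\operatorname{Re}t=0$ by analyticity, and apply Jordan's lemma to discard the arcs; the hypotheses $\operatorname{Im}t>0$ and $\check\epsilon\notin\mathbb{Q}$ secure both the decay and the simplicity of the poles. One subtlety is that the strip of \eqref{eq:FNSintegral} has $\operatorname{Re}$-width exactly one, so for the integer shift the collapsed single-period integrand must be used directly (or the identity continued analytically) rather than naively subtracting two copies of \eqref{eq:FNSintegral}. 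Entirely avoiding contour estimates, an alternative is to feed \eqref{eq:NSdiffeq} and the $Q$-periodicity of $F^{\rm NS}$ into the splitting \eqref{FnpFNS}: under $t\mapsto t\pm\check\epsilon/2$ the first summand contributes $-\mathrm{i}\,{\rm Li}_2(Q)$ while the S-dual summand's argument shifts by $\pm1/2$, coinciding modulo its unit period and cancelling; under $t\mapsto t+1$ the first summand is invariant and the S-dual argument shifts by $1/\check\epsilon\equiv\check\epsilon'\pmod 1$ with $\check\epsilon'=1/\check\epsilon+1$, i.e.\ by a full S-dual period, which via \eqref{eq:NSdiffeq} applied to $F^{\rm NS}(2\pi(1/\check\epsilon+1),\cdot)$ produces the right-hand side of \eqref{tdiffeqNS}.
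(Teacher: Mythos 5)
Your proposal is correct, and for the two statements about $F^{\rm NS}_{\rm np}$ it is essentially the paper's own proof: the paper likewise subtracts the integral representations \eqref{eq:FNSintegral}, cancels ${\rm e}^{\check\epsilon s}-1$ (resp.\ ${\rm e}^{s}-1$) against the numerator, and evaluates the collapsed integral by residues at $s=2\pi\mathrm{i}k$ (resp.\ $s=2\pi\mathrm{i}k/\check\epsilon$), exactly as you do; your remarks on Jordan's lemma are if anything more careful than the paper's terse ``sum over the residues'', and your worry about the width-one strip is a non-issue, since the two strips of validity overlap in $-\operatorname{Re}(\check\epsilon/2)<\operatorname{Re}(t)<\operatorname{Re}(\check\epsilon/2)$, which is nonempty because $\operatorname{Re}(\check\epsilon)>0$. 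The only variation is in the perturbative identity \eqref{eq:NSdiffeq}: the paper works with the formal expansions \eqref{eq:FNSasymp} and observes that all Bernoulli terms cancel except the $B_1$ ones, whereas you cancel $2\mathrm{i}\sin(k\epsilon/2)$ directly in the Gopakumar--Vafa sum \eqref{def:FNS}; the computations are equivalent, yours holding at the level of the convergent sum, the paper's at the level of the asymptotic series. Your alternative route via \eqref{FnpFNS} also closes correctly: under $t\mapsto t+1$ the S-dual summand shifts by a full S-dual period and \eqref{eq:NSdiffeq} then yields $-\mathrm{i}\check\epsilon\,{\rm Li}_2\big({-}{\rm e}^{2\pi\mathrm{i}t/\check\epsilon}\big)$, which matches the right-hand side of \eqref{tdiffeqNS} since ${\rm e}^{2\pi\mathrm{i}(t+\check\epsilon/2)/\check\epsilon}=-{\rm e}^{2\pi\mathrm{i}t/\check\epsilon}$ — though this buys little, as \eqref{FnpFNS} itself rests on the same contour argument.
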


Note that the latter difference equation is invisible to its perturbative expansion.

\begin{proof}
The first part follows from an explicit computation using the formal expansion (\ref{eq:FNSasymp}):
\begin{align*}
 F^{\rm NS}(\epsilon,t+\check\epsilon/2)-F^{\rm NS}(\epsilon,t-\check\epsilon/2)&=\frac{1}{2\pi} \sum_{n=0}^{\infty} \partial_t^n {\rm Li}_{3}(Q) \frac{B_n}{n!} \check{\epsilon}^{n-1}\big((-1)^{n-1}+1\big),\\
 &=\frac{2 B_1}{2\pi}\partial_{t}\mathrm{Li}_3(Q) =-\mathrm{i}\mathrm{Li}_2(Q),
 \end{align*}
where we used that $B_{2n+1}=0$ for $n>0$.

For the second part of the proposition, we use the integral representation \eqref{eq:FNSintegral} and note that
\begin{align*}
 F^{\rm NS}_{{\rm np}}(\epsilon,t+\check{\epsilon}/2)-F_{{\rm np}}^{\rm NS}(\epsilon,t-\check{\epsilon}/2) &=2\pi \int_{\mathbb{R}+\mathrm{i} 0^{+}} \frac{{\rm e}^{ts}}{({\rm e}^s-1) } \frac{{\rm d}s}{s^2}
 =(2\pi)^2 \sum_{k=1}^{\infty}\frac{\mathrm{i} {\rm e}^{ts}}{s^2}\Big|_{s=2\pi \mathrm{i} k}\\
 &= -\frac{\mathrm{i} {\rm e}^{2\pi \mathrm{i} kt}}{k^2}
 = -\mathrm{i} {\rm Li}_2 \left(Q\right),
\end{align*}
where we have computed the integral by the sum over the residues in the upper half plane.

Similarly, we find
\begin{align*}
 F^{\rm NS}_{{\rm np}}(\epsilon,t+1)-F_{{\rm np}}^{\rm NS}(\epsilon,t) &=2\pi \int_{\mathbb{R}+\mathrm{i} 0^{+}} \frac{{\rm e}^{\left(t+\frac{\check\epsilon}{2} \right)s}}{({\rm e}^{\check\epsilon s}-1) } \frac{{\rm d}s}{s^2}
 =\frac{(2\pi)^2}{\check\epsilon} \sum_{k=1}^{\infty}\frac{\mathrm{i} {\rm e}^{(t+\check\epsilon/2)s}}{s^2}\Big|_{s=2\pi \mathrm{i} k/\check\epsilon}\\
 &= -\mathrm{i}\check\epsilon \sum_{k=1}^{\infty}\frac{{\rm e}^{2\pi \mathrm{i} k(t+\check\epsilon/2)/\check\epsilon}}{k^2}
 = -\mathrm{i}\check\epsilon \mathrm{Li}_2\big({\rm e}^{2\pi \mathrm{i}(t+\check\epsilon/2)/\check\epsilon}\big).\tag*{\qed}
\end{align*}
\renewcommand{\qed}{}
\end{proof}

\subsection{Unrefined topological string}
The limit in which the parameters of the refined topological string are set to $\epsilon_1=-\epsilon_2=\lambda$ corresponds to the unrefined topological string. The difference equation analogous to \eqref{eq:refdiffeq} was derived from the asymptotic expansion of the Gromov--Witten potential of the resolved conifold in \cite{alim2020difference} following methods of \cite{Iwaki2}. It is given by
\begin{equation*}
F^{\rm top}\big(\lambda,t+\check{\lambda}\big) + F^{\rm top}\big(\lambda,t-\check{\lambda}\big) - 2 F^{\rm top}\left(\lambda,t\right)=-{\rm Li}_1(Q), \qquad \check{\lambda}=\frac{\lambda}{2\pi}.
\end{equation*}

Furthermore a solution in terms of the triple sine function of this difference equation was considered, which was given by \cite{Alim:2021lld}
\begin{gather*}
F^{\rm top}_{{\rm np}}(\lambda,t) := \left(\frac{\pi \mathrm{i}}{6} B_{3,3}\left(t+\check\lambda | \check{\lambda},\check{\lambda},1 \right)\right) + \log\big(\sin_3\big(t+\check{\lambda} | \check{\lambda},\check{\lambda},1\big)\big).
\end{gather*}
The non-perturbative content of this solution was analyzed in \cite{Alim:2021ukq} and in \cite{ASTT21}. It was identified as the Borel summation of the asymptotic series along a distinguished ray on the real axis in the Borel plane.

Here we add a further difference equation, which is the unrefined limit of the third difference equation in~\eqref{eps1shift}. Since the shift is integral, this difference equation is invisible to the periodic asymptotic series. We have that:

\begin{Proposition}
$F^{\rm top}_{{\rm np}}(\lambda,t)$ satisfies the difference equation
\begin{gather}\label{eq:intshifttopstring}
 F_{{\rm np}}^{\rm top}(\lambda,t+1)-F_{{\rm np}}^{\rm top}(\lambda,t)=\frac{1}{2\pi \mathrm{i}} \frac{\partial}{\partial \check{\lambda}} \big( \check{\lambda} {\rm Li}_2\big({\rm e}^{2\pi \mathrm{i} t/\check{\lambda}}\big)\big).
\end{gather}
\end{Proposition}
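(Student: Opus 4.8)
The plan is to bypass the triple-sine representation of $F^{\rm top}_{\rm np}$ entirely and argue from an integral representation, exactly as in the proofs of the NS difference equations \eqref{eq:NSdiffeq} and \eqref{tdiffeqNS}. Since $F^{\rm top}_{\rm np}(\lambda,t)$ is the unrefined limit $\epsilon_1=-\epsilon_2=\lambda$ of $F^{\rm ref}_{\rm np}$, I would specialize \eqref{eq:intreprref}, using $\check\epsilon_1\to\check\lambda$, $-\check\epsilon_2\to\check\lambda$ and $\tfrac{\check\epsilon_1-\check\epsilon_2}{2}\to\check\lambda$, to obtain
\begin{equation*}
F^{\rm top}_{\rm np}(\lambda,t) = -\int_{\mathbb{R}+\mathrm{i}0^{+}} \frac{{\rm e}^{(t+\check\lambda)s}}{({\rm e}^s-1)\big({\rm e}^{\check\lambda s}-1\big)^2}\,\frac{{\rm d}s}{s},
\end{equation*}
valid for $\operatorname{Re}(\check\lambda)>0$ and $t$ in the appropriate strip. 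The crucial structural feature is that the two distinct refined factors have merged into the single squared factor $\big({\rm e}^{\check\lambda s}-1\big)^2$; this confluence is what will ultimately produce the $\check\lambda$-derivative appearing on the right-hand side of \eqref{eq:intshifttopstring}.

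First I would form the shifted difference and use that $t\mapsto t+1$ multiplies the numerator by ${\rm e}^{s}$, so that ${\rm e}^{(t+1+\check\lambda)s}-{\rm e}^{(t+\check\lambda)s}={\rm e}^{(t+\check\lambda)s}({\rm e}^s-1)$ cancels the factor $({\rm e}^s-1)$ in the denominator:
\begin{equation*}
F^{\rm top}_{\rm np}(\lambda,t+1)-F^{\rm top}_{\rm np}(\lambda,t) = -\int_{\mathbb{R}+\mathrm{i}0^{+}} \frac{{\rm e}^{(t+\check\lambda)s}}{\big({\rm e}^{\check\lambda s}-1\big)^2}\,\frac{{\rm d}s}{s}.
\end{equation*}
Next, assuming $\operatorname{Im}(t)>0$, I would close the contour in the upper half-plane by Jordan's lemma, along semicircles $C_{R_n}$ of radii $R_n\to\infty$ chosen to avoid the poles, exactly as in the computation establishing \eqref{FnpFNS}. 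The contour passes above the origin, so the (triple) pole at $s=0$ is not enclosed; the enclosed singularities are the \emph{double} poles at $s_k=2\pi\mathrm{i}k/\check\lambda$, $k\ge 1$.

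The heart of the proof is the residue at each double pole. Writing the integrand as $\psi(s)\phi(s)$ with $\psi(s)={\rm e}^{(t+\check\lambda)s}/s$ and $\phi(s)=(s-s_k)^2/\big({\rm e}^{\check\lambda s}-1\big)^2$, and using ${\rm e}^{\check\lambda s_k}=1$ together with $\phi(s_k)=\check\lambda^{-2}$ and $\phi'(s_k)=-\check\lambda^{-1}$, the residue $\psi'(s_k)\phi(s_k)+\psi(s_k)\phi'(s_k)$ collapses to
\begin{equation*}
{\rm e}^{2\pi\mathrm{i}kt/\check\lambda}\left(\frac{t}{2\pi\mathrm{i}k\check\lambda}+\frac{1}{4\pi^2 k^2}\right).
\end{equation*}
Multiplying by $-2\pi\mathrm{i}$ and summing over $k\ge1$ yields $-\frac{t}{\check\lambda}{\rm Li}_1\big({\rm e}^{2\pi\mathrm{i}t/\check\lambda}\big)-\frac{\mathrm{i}}{2\pi}{\rm Li}_2\big({\rm e}^{2\pi\mathrm{i}t/\check\lambda}\big)$. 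One then recognizes this as the claimed right-hand side: by the product rule and $\partial_{\check\lambda}{\rm Li}_2\big({\rm e}^{2\pi\mathrm{i}t/\check\lambda}\big)=-\tfrac{2\pi\mathrm{i}t}{\check\lambda^{2}}{\rm Li}_1\big({\rm e}^{2\pi\mathrm{i}t/\check\lambda}\big)$ one has $\partial_{\check\lambda}\big(\check\lambda\,{\rm Li}_2\big({\rm e}^{2\pi\mathrm{i}t/\check\lambda}\big)\big)={\rm Li}_2-\tfrac{2\pi\mathrm{i}t}{\check\lambda}{\rm Li}_1$, so that $\frac{1}{2\pi\mathrm{i}}\partial_{\check\lambda}\big(\check\lambda\,{\rm Li}_2\big)=-\frac{\mathrm{i}}{2\pi}{\rm Li}_2-\frac{t}{\check\lambda}{\rm Li}_1$, which matches. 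Finally, both sides being analytic in $t$, the restriction $\operatorname{Im}(t)>0$ is removed by analytic continuation.

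The main obstacle is precisely this double-pole residue. Unlike the NS computation behind \eqref{tdiffeqNS}, where the poles are simple and one obtains a bare ${\rm Li}_2$, here the merging of the two $\epsilon$-factors in the unrefined limit creates a second-order pole, and it is the differentiation in the residue formula that simultaneously generates the ${\rm Li}_2$ term and the $t\,{\rm Li}_1$ term. Checking that these two terms reassemble exactly into $\frac{1}{2\pi\mathrm{i}}\partial_{\check\lambda}\big(\check\lambda\,{\rm Li}_2\big)$, rather than some unrelated combination, is the one genuinely load-bearing computation. An alternative would be to take the unrefined limit of the third equation in \eqref{eps1shift}, whose right-hand side becomes $-\log\mathcal{S}_2\big(t+\check\lambda\mid\check\lambda,\check\lambda\big)$; but evaluating Faddeev's quantum dilogarithm at equal (resonant) periods requires a confluent limit of its product expansion that reproduces the very same derivative, so the integral-representation route is the cleaner of the two.
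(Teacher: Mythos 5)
Your proof is correct, but it takes a different route from the paper's. The paper's own proof is two lines: it specializes the third difference equation in \eqref{eps1shift} to $\epsilon_1=-\epsilon_2=\lambda$, so that the right-hand side becomes $-\log\mathcal{S}_2\big(t+\check\lambda\mid\check\lambda,\check\lambda\big)$, and then invokes the repeated-argument identity \eqref{eq:qdilogrepeat} for the quantum dilogarithm (whose proof is cited to Bridgeland) to rewrite this as $\frac{1}{2\pi\mathrm{i}}\partial_{\check\lambda}\big(\check\lambda\,{\rm Li}_2\big({\rm e}^{2\pi\mathrm{i}t/\check\lambda}\big)\big)$. You instead work directly from the integral representation \eqref{eq:intreprref} in the unrefined limit, use the shift-and-cancel mechanism on $({\rm e}^s-1)$, close the contour, and evaluate the double-pole residues at $s_k=2\pi\mathrm{i}k/\check\lambda$; I have checked the residue computation (with $\phi(s_k)=\check\lambda^{-2}$, $\phi'(s_k)=-\check\lambda^{-1}$) and the reassembly of the $t\,{\rm Li}_1$ and ${\rm Li}_2$ terms into $\frac{1}{2\pi\mathrm{i}}\partial_{\check\lambda}\big(\check\lambda\,{\rm Li}_2\big)$, and both are right. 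In effect you have in-lined the proof of \eqref{eq:qdilogrepeat}: that identity is itself established from the integral representation \eqref{S2int}, which after the shift is exactly the integral $-\int_{\mathbb{R}+\mathrm{i}0^+}\frac{{\rm e}^{(t+\check\lambda)s}}{({\rm e}^{\check\lambda s}-1)^2}\frac{{\rm d}s}{s}$ you compute. This also means your closing remark slightly mischaracterizes the alternative: the \eqref{eps1shift} route does not require a confluent limit of the product expansion of $\mathcal{S}_2$, since \eqref{eq:qdilogrepeat} comes straight from the integral representation. What your version buys is self-containedness (no external citation) and a clean parallel with the paper's own residue computation of \eqref{tdiffeqNS}; what it costs is the double-pole bookkeeping and the Jordan's-lemma/parameter-strip care, which you handle at the same level of rigor as the paper's proof of \eqref{FnpFNS}. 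One cosmetic point: you say you write the integrand as $\psi(s)\phi(s)$, but the integrand is $\psi(s)\phi(s)/(s-s_k)^2$; your residue formula $\psi'(s_k)\phi(s_k)+\psi(s_k)\phi'(s_k)$ is the correct one for that function, so this is only a slip of wording.
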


\begin{proof}
To prove this proposition we take the third difference equation in \eqref{eps1shift} and set $\epsilon_1=-\epsilon_2=\lambda$. The right-hand side is then given in terms of the quantum dilogarithm with repeated argument, for which we can use the expression \eqref{eq:qdilogrepeat} in Appendix \ref{appendix:qdilog}. That is,
\begin{equation*}
 - \log\big( \mathcal{S}_2\big(t+\check{\lambda} | \check{\lambda}, \check{\lambda} \big) \big)= \frac{1}{2\pi \mathrm{i}} \frac{\partial}{\partial \check{\lambda}} \big( \check{\lambda} {\rm Li}_2\big({\rm e}^{2\pi \mathrm{i} t/\check{\lambda}}\big)\big).\tag*{\qed}
\end{equation*}\renewcommand{\qed}{}
\end{proof}

\begin{Remark}
The right-hand side of the difference equation \eqref{eq:intshifttopstring} equals Stokes jump of the Borel resummation of $F^\textrm{top}(\lambda,t)$ obtained in \cite{ASTT21}. The difference equation can be given the interpretation as a relation between the Borel resummations in the different Stokes sectors. Indeed, as was shown in \cite[Corollary~3.12]{ASTT21},
\begin{equation*}
 F^{\rm top}_{{\rho}_{k-n}} (\lambda,t+n)=F_{\rho_k}^{\rm top} (\lambda,t),
\end{equation*}
where $F_{\rho}^{\rm top}$ denotes the Borel sum along the ray $\rho$, and $\rho_k$ is a ray between $l_k$ and $l_{k-1}$, where $l_k=\mathbb{R}_{<0}\cdot 2\pi \mathrm{i} (t+k)$. This relation together with the jumps \eqref{eq:topstringjumps} is equivalent to the difference equation \eqref{eq:intshifttopstring}. We will find a similar interpretations for the analogous difference equation in the NS limit.
\end{Remark}


\subsection{Relations between free energies through difference equations}\label{sec:relfreeenergies}

In this section we derive further relations between the refined topological string free energy and its various limits. These will be useful in the rest of the paper.

\begin{Proposition}
The two difference equations
\begin{gather}
 F^{{\rm ref}}\left(\epsilon_1,\epsilon_2,t- \frac{\check\epsilon_1+\check\epsilon_2}{2} + \check{\epsilon}_1\right)- F^{{\rm ref}}\left(\epsilon_1,\epsilon_2,t- \frac{\check\epsilon_1+\check\epsilon_2}{2} \right) = \frac{1}{2\pi} \partial_t F^{\mathrm{NS}} \left(\epsilon_2,t-\frac{\check\epsilon_2}{2}\right), \nonumber\\
 F^{{\rm ref}}\left(\epsilon_1,\epsilon_2,t- \frac{\check\epsilon_1+\check\epsilon_2}{2} + \check{\epsilon}_2\right)- F^{{\rm ref}}\left(\epsilon_1,\epsilon_2,t- \frac{\check\epsilon_1+\check\epsilon_2}{2} \right) = \frac{1}{2\pi} \partial_t F^{\mathrm{NS}} \left(\epsilon_1,t-\frac{\check\epsilon_1}{2}\right)\label{eq:relfreffns}
\end{gather}
give a relation between $F^{{\rm ref}}$ and $F^{\mathrm{NS}}$.
\end{Proposition}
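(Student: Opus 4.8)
The plan is to reduce both identities to the convergent $Q$-series form of the free energies, in which the shift operators act by simple multiplication and all equalities become elementary algebraic cancellations. First I would introduce the shorthand $G(\epsilon_1,\epsilon_2,t):=F^{{\rm ref}}\big(\epsilon_1,\epsilon_2,t-\tfrac{\check\epsilon_1+\check\epsilon_2}{2}\big)$, so that the left-hand side of the first equation in \eqref{eq:relfreffns} is exactly the difference $G(\epsilon_1,\epsilon_2,t+\check\epsilon_1)-G(\epsilon_1,\epsilon_2,t)$, while the second is $G(\epsilon_1,\epsilon_2,t+\check\epsilon_2)-G(\epsilon_1,\epsilon_2,t)$. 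The crucial input is the intermediate identity already recorded in the middle line of \eqref{refshifexp1}, namely $G(\epsilon_1,\epsilon_2,t)=\sum_{k\ge 1}\frac{Q^k}{k({\rm e}^{\mathrm{i} k\epsilon_1}-1)({\rm e}^{\mathrm{i} k\epsilon_2}-1)}$, which holds as a convergent series in a suitable range of the parameters.

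Next I would evaluate the left-hand side. Since the shift $t\mapsto t+\check\epsilon_1$ sends $Q^k={\rm e}^{2\pi\mathrm{i} kt}$ to $Q^k{\rm e}^{\mathrm{i} k\epsilon_1}$, the difference telescopes cleanly: $G(\epsilon_1,\epsilon_2,t+\check\epsilon_1)-G(\epsilon_1,\epsilon_2,t)=\sum_{k\ge 1}\frac{Q^k({\rm e}^{\mathrm{i} k\epsilon_1}-1)}{k({\rm e}^{\mathrm{i} k\epsilon_1}-1)({\rm e}^{\mathrm{i} k\epsilon_2}-1)}=\sum_{k\ge 1}\frac{Q^k}{k({\rm e}^{\mathrm{i} k\epsilon_2}-1)}$, the factor $({\rm e}^{\mathrm{i} k\epsilon_1}-1)$ dropping out entirely. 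For the right-hand side I would start from \eqref{def:FNS}, $F^{\rm NS}(\epsilon,t)=-\tfrac12\sum_{k\ge 1}\frac{Q^k}{k^2\sin(k\epsilon/2)}$, evaluate at $t-\check\epsilon/2$ (which sends $Q^k\mapsto Q^k{\rm e}^{-\mathrm{i} k\epsilon/2}$), and use the elementary simplification $\frac{{\rm e}^{-\mathrm{i} k\epsilon/2}}{\sin(k\epsilon/2)}=\frac{2\mathrm{i}}{{\rm e}^{\mathrm{i} k\epsilon}-1}$ to obtain $F^{\rm NS}(\epsilon,t-\check\epsilon/2)=-\mathrm{i}\sum_{k\ge 1}\frac{Q^k}{k^2({\rm e}^{\mathrm{i} k\epsilon}-1)}$. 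Applying $\tfrac{1}{2\pi}\partial_t$ and using $\partial_t Q^k=2\pi\mathrm{i} kQ^k$ then yields $\tfrac{1}{2\pi}\partial_t F^{\rm NS}(\epsilon,t-\check\epsilon/2)=\sum_{k\ge 1}\frac{Q^k}{k({\rm e}^{\mathrm{i} k\epsilon}-1)}$. Setting $\epsilon=\epsilon_2$ reproduces precisely the left-hand side computed above, which establishes the first equation; the second follows verbatim after interchanging $\epsilon_1\leftrightarrow\epsilon_2$, a manifest symmetry of $G$.

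The computation presents no genuine obstacle; the only point demanding care is the bookkeeping of the various half-shifts, and I would flag explicitly \emph{why} the NS argument must be $t-\check\epsilon_2/2$ rather than $t$. The combined shift $t-\tfrac{\check\epsilon_1+\check\epsilon_2}{2}$ built into $G$ is exactly what converts the $\sin(k\epsilon_2/2)$ denominator of $F^{\rm NS}$ into the $({\rm e}^{\mathrm{i} k\epsilon_2}-1)$ denominator produced by the telescoping difference, so that both sides land on the same $Q$-series; any other choice of half-shift would leave a residual phase and break the match. If one insists on arguing purely at the level of formal $\epsilon$-expansions, one can instead re-expand both sides using the Bernoulli generating function \eqref{eq:Bernoulligen} together with \eqref{eq:FNSasymp}, but I would present the convergent-series version, since there the shift operators act transparently by multiplication and the identity collapses to the single cancellation displayed above.
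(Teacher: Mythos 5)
Your proof is correct, and it takes a genuinely different route from the paper's. The paper argues entirely at the level of formal $\epsilon$-expansions: it inserts the Bernoulli-polylogarithm expansions \eqref{refshifexp1}--\eqref{refshifexp2} for the two shifted copies of $F^{\rm ref}$, observes that the resulting coefficient $\bigl((-1)^{m-1}+1\bigr)$ kills all even $m$ while the vanishing of odd Bernoulli numbers leaves only $m=1$, and then matches the surviving sum $\sum_{n}{\rm Li}_{2-n}(Q)\frac{B_n}{n!}(\mathrm{i}\epsilon_2)^{n-1}$ against the formal expansion \eqref{eq:FNSasymp} of $\frac{1}{2\pi}\partial_t F^{\rm NS}$. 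You instead work with the exact Gopakumar--Vafa $Q$-series, where the shift $t\mapsto t+\check\epsilon_1$ acts multiplicatively on $Q^k$ and the factor $({\rm e}^{\mathrm{i}k\epsilon_1}-1)$ cancels by telescoping, and you close the argument with the exact identity $F^{\rm NS}(\epsilon,t-\check\epsilon/2)=-\mathrm{i}\sum_{k\ge 1}Q^k/\bigl(k^2({\rm e}^{\mathrm{i}k\epsilon}-1)\bigr)$ --- which is precisely the rewriting the paper itself uses later when it defines $W(\epsilon,t)$ in Section~\ref{sec:Borelsums}, so your key input is fully consistent with the paper. What each approach buys: the paper's computation proves the identity directly in the sense in which the difference equations are used downstream (as relations between formal asymptotic series, with no convergence hypotheses), whereas yours is shorter, makes the mechanism transparent (shifts act by multiplication, so the denominator cancellation is manifest), and establishes the identity as an equality of actual functions on a suitable parameter domain; you correctly note that the formal-series statement is then recovered because the expansions \eqref{refshifexp1}, \eqref{refshifexp2} and \eqref{eq:FNSasymp} are obtained by term-wise Bernoulli expansion of exactly these convergent series. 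The only point a fully rigorous writeup should add is an explicit specification of the domain (e.g.\ $\operatorname{Im}(t)>0$ together with conditions on $\epsilon_1$, $\epsilon_2$ ensuring absolute convergence of all four series involved), but this is bookkeeping, not a gap.
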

\begin{proof}
This proposition can again be verified by an explicit computation. We prove the first equation here, the second one follows analogously. Since the above shifts of $F^{\rm ref}$ can all be expanded in $\epsilon_1$ and $\epsilon_2$ with $Q$-dependent coefficients as in~(\ref{refshifexp1}) and~(\ref{refshifexp2}), we find
\begin{gather*}
 F^{{\rm ref}}\left(\epsilon_1,\epsilon_2,t+\frac{\check{\epsilon}_1}{2}-\frac{\check{\epsilon}_2}{2}\right)+ F^{{\rm ref}}\left(\epsilon_1,\epsilon_2,t-\frac{\check{\epsilon}_1}{2}-\frac{\check{\epsilon}_2}{2}\right) \\
\qquad{} =\sum_{m,n=0}^{\infty}{\rm Li}_{3-m-n}(Q) \frac{B_m B_n}{m! n!} \mathrm{i}^{m+n} \epsilon_1^{m-1} \epsilon_2^{n-1}\left((-1)^{m-1}+1\right).
 \end{gather*}
The last factor in the summand only contributes when $m=1$, since all other odd Bernoulli numbers vanish. The right-hand side then becomes
\begin{equation*}
 \sum_{n=0}^{\infty}{\rm Li}_{2-n}(Q) \frac{B_n}{ n!} (\mathrm{i}\epsilon_2)^{n-1},
 \end{equation*}
which can be matched with the right-hand side of the proposition using the asymptotic expan\-sion~\eqref{eq:FNSasymp}.
\end{proof}

\begin{Corollary}
Similarly, the unrefined topological string free energies obey the difference equation
\begin{align}\label{eq:ftopfnsrel}
 F^{\rm top}\big(\lambda,t +\check{\lambda}\big)- F^{\rm top}(\lambda,t) = \frac{1}{2\pi} \partial_t F^{\mathrm{NS}} \left(\lambda,t+\frac{\check{\lambda}}{2}\right).
\end{align}
\end{Corollary}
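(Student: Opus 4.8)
The plan is to obtain \eqref{eq:ftopfnsrel} directly as the unrefined specialization $\epsilon_1 = -\epsilon_2 = \lambda$ of the Proposition \eqref{eq:relfreffns}, so that essentially no new computation beyond careful bookkeeping is required. First I would record the two structural facts that make the limit clean. On the one hand, $F^{\rm ref}(\epsilon_1,\epsilon_2,t)$ is symmetric under $\epsilon_1\leftrightarrow\epsilon_2$ and, on the locus $\epsilon_1=-\epsilon_2=\lambda$, reduces to $F^{\rm top}(\lambda,t)$ (as already noted in the text, via the Gopakumar--Vafa form, where $2\sin(k\epsilon_1/2)\,2\sin(k\epsilon_2/2)\to-(2\sin(k\lambda/2))^2$). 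On the other hand, in this limit the combination $\frac{\check\epsilon_1+\check\epsilon_2}{2}$ appearing in every shifted argument vanishes, so the arguments $t-\frac{\check\epsilon_1+\check\epsilon_2}{2}$ and $t-\frac{\check\epsilon_1+\check\epsilon_2}{2}+\check\epsilon_i$ collapse to $t$ and $t+\check\epsilon_i$.

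Next I would feed $\epsilon_1=\lambda$, $\epsilon_2=-\lambda$ into the second line of \eqref{eq:relfreffns}, which is the one whose right-hand side already carries $F^{\rm NS}(\epsilon_1,\cdot)=F^{\rm NS}(\lambda,\cdot)$. The left-hand side becomes $F^{\rm ref}(\lambda,-\lambda,t+\check\epsilon_2)-F^{\rm ref}(\lambda,-\lambda,t)=F^{\rm top}(\lambda,t-\check\lambda)-F^{\rm top}(\lambda,t)$, while the right-hand side becomes $\frac{1}{2\pi}\partial_t F^{\rm NS}(\lambda,t-\check\lambda/2)$. A single relabeling $t\mapsto t+\check\lambda$ then converts this into a relation between $F^{\rm top}(\lambda,t+\check\lambda)-F^{\rm top}(\lambda,t)$ and $\frac{1}{2\pi}\partial_t F^{\rm NS}(\lambda,t+\check\lambda/2)$, which is the content of \eqref{eq:ftopfnsrel}.

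The only genuine obstacle is the sign, and I would treat pinning it down as the crux. Two parity inputs must be tracked at once: $F^{\rm top}(\lambda,t)$ is even in $\lambda$ (it is a series in $\lambda^{2g-2}$), whereas $F^{\rm NS}(\epsilon,t)=-\frac12\sum_k Q^k/(k^2\sin(k\epsilon/2))$ is odd in $\epsilon$ because $\sin$ is odd. Whether one starts from the first or the second equation of \eqref{eq:relfreffns}, the substitution forces a choice $\check\epsilon_i=\pm\check\lambda$ that propagates simultaneously through the shift of $t$ and through the sign of $F^{\rm NS}$, and these contributions must be reconciled against the evenness of $F^{\rm top}$. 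To fix the overall sign unambiguously I would perform an independent leading-order check in $\lambda$: using the genus-zero piece $F^{\rm top}\sim\lambda^{-2}{\rm Li}_3(Q)$ together with $\partial_t{\rm Li}_s(Q)=2\pi\mathrm{i}\,{\rm Li}_{s-1}(Q)$ and the expansion \eqref{eq:FNSasymp}, both sides reduce at order $\lambda^{-1}$ to a multiple of ${\rm Li}_2(Q)$, and comparing these multiples settles the sign before one trusts the all-order identity.

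As an independent cross-check, and an alternative self-contained proof, I would expand both sides as formal power series in $\lambda$ exactly in the style of the proof of the Proposition: write $F^{\rm top}(\lambda,t)=F^{\rm ref}(\lambda,-\lambda,t)$ using \eqref{refshifexp1}, Taylor-expand the shift by $\check\lambda$ as $\sum_{j\ge1}\frac{\check\lambda^j}{j!}\partial_t^j$, and match coefficients against $\frac{1}{2\pi}\partial_t$ of the expansion \eqref{eq:FNSasymp} of $F^{\rm NS}(\lambda,t+\check\lambda/2)$, invoking $B_{2n+1}=0$ for $n>0$ to annihilate the unwanted parities. This redundant route makes the sign outcome manifest and confirms that the even-$j$ terms, which would otherwise spoil the identity, cancel.
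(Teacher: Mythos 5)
Your strategy is exactly the paper's own: the Corollary carries no separate proof precisely because it is meant to be the specialization $\epsilon_1=-\epsilon_2=\lambda$ of \eqref{eq:relfreffns}, and the two structural inputs you isolate (the collapse of the $\tfrac{\check\epsilon_1+\check\epsilon_2}{2}$ shifts, and the parities of $F^{\rm top}$ and $F^{\rm NS}$) are the right ones. The gap is that the step you yourself call the crux is never executed, and its outcome is asserted incorrectly. Carrying it out: the first equation of \eqref{eq:relfreffns} at $\epsilon_1=\lambda$, $\epsilon_2=-\lambda$ reads
\begin{equation*}
F^{\rm top}\big(\lambda,t+\check\lambda\big)-F^{\rm top}(\lambda,t)=\frac{1}{2\pi}\,\partial_t F^{\rm NS}\left(-\lambda,t+\frac{\check\lambda}{2}\right)=-\frac{1}{2\pi}\,\partial_t F^{\rm NS}\left(\lambda,t+\frac{\check\lambda}{2}\right),
\end{equation*}
the last equality being the oddness of $F^{\rm NS}$ in $\epsilon$ that you correctly record (it is manifest in \eqref{eq:FNSasymp}); your preferred second equation, after the relabeling $t\mapsto t+\check\lambda$, likewise yields $F^{\rm top}(\lambda,t)-F^{\rm top}\big(\lambda,t+\check\lambda\big)$ on the left. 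Either way the specialization produces the \emph{negative} of \eqref{eq:ftopfnsrel}: the two parities do not cancel but leave a net minus sign, so your claim that the relabeled relation ``is the content of \eqref{eq:ftopfnsrel}'' does not hold as written.

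Your own proposed leading-order test would have exposed this. From $F^{\rm top}\sim\lambda^{-2}{\rm Li}_3(Q)$ one finds $F^{\rm top}(\lambda,t+\check\lambda)-F^{\rm top}(\lambda,t)=\frac{\mathrm{i}}{\lambda}{\rm Li}_2(Q)-\frac{1}{2}{\rm Li}_1(Q)+O(\lambda)$, while \eqref{eq:FNSasymp} gives $\frac{1}{2\pi}\partial_t F^{\rm NS}\big(\lambda,t+\frac{\check\lambda}{2}\big)=-\frac{\mathrm{i}}{\lambda}{\rm Li}_2(Q)+\frac{1}{2}{\rm Li}_1(Q)+O(\lambda)$; the all-order series comparison you sketch shows the two sides are in fact exact negatives of each other. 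Internal consistency says the same: subtracting \eqref{eq:ftopfnsrel} evaluated at $t$ and at $t-\check\lambda$ and using \eqref{eq:NSdiffeq} would force
\begin{equation*}
F^{\rm top}\big(\lambda,t+\check\lambda\big)+F^{\rm top}\big(\lambda,t-\check\lambda\big)-2F^{\rm top}(\lambda,t)=+{\rm Li}_1(Q),
\end{equation*}
contradicting the unrefined difference equation with $-{\rm Li}_1(Q)$ on the right recalled in the preceding subsection on the unrefined topological string. The conclusion is that, in this paper's conventions, the identity your (correct) method actually proves is \eqref{eq:ftopfnsrel} with an extra minus sign on the right-hand side, equivalently with $F^{\rm NS}(-\lambda,\,\cdot\,)$ there; the printed sign presumably traces back to the different normalization of $F^{\rm NS}$ in the references cited in the Remark that follows the Corollary. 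A complete write-up along your lines must either prove the sign-corrected statement or flag this discrepancy, rather than defer the sign to checks which, once performed, refute the statement as printed.
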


\begin{Remark}\quad
\begin{itemize}
 \item Similar relations also hold for the free energies with the subscript ${\rm np}$. These can be easily derived from the corresponding integral representations.
 \item Equation \eqref{eq:ftopfnsrel} is the difference equation relating the topological string free energy to certain Darboux coordinates, as was observed in \cite{CLT20} and \cite{AST21}. This will become clearer in Section~\ref{sec:NRS}.
\end{itemize}
\end{Remark}

\subsection{A quantum curve for closed string moduli}\label{sec:qcclosedm}

In the following, we interpret the finite difference equations found above as the quantization of an algebraic curve that should be associated with the closed string moduli. The motivation for this is the work of \cite{ADKMV}, in which it was proposed that the mirror curves arising in the mirror constructions of non-compact CY threefolds can be interpreted as the analogs of the Hamiltonians of a quantum mechanical problem. The curve which is quantized in the case of~\cite{ADKMV} parametrizes the open string moduli, as was put forward in \cite{Aganagic:2000gs}.

\begin{Proposition}\label{prop:closedwavefunction}
Define
\begin{gather*}
 \Psi^c_1 (\epsilon_1,\epsilon_2,t) := \frac{Z^{{\rm ref}}(\epsilon_1,\epsilon_2,t- \frac{\check\epsilon_1+\check\epsilon_2}{2})}{Z^{{\rm ref}}(\epsilon_1,\epsilon_2,t- \frac{\check\epsilon_1+\check\epsilon_2}{2}+\check{\epsilon}_2)}, \\
 \Psi^c_{2} (\epsilon_1,\epsilon_2,t) := \frac{Z^{{\rm ref}}(\epsilon_1,\epsilon_2,t- \frac{\check\epsilon_1+\check\epsilon_2}{2})}{Z^{{\rm ref}}(\epsilon_1,\epsilon_2,t- \frac{\check\epsilon_1+\check\epsilon_2}{2}+\check{\epsilon}_1)},
\end{gather*}
where
$ Z^{{\rm ref}}(\epsilon_1,\epsilon_2,t) =\exp\big(F^{{\rm ref}}(\epsilon_1,\epsilon_2,t)\big)$.
Using
$ Z_{{\rm np}}^{{\rm ref}}(\epsilon_1,\epsilon_2,t) =\exp\big(F_{{\rm np}}^{{\rm ref}}(\epsilon_1,\epsilon_2,t)\big)$
we obtain analogous definitions for $\Psi^c_{i,{\rm np}} (\epsilon_1,\epsilon_2,t)$ for $i\in \{1,2\}$.
Then the following holds for $i\in \{1,2\}$:
\begin{enumerate}\itemsep=0pt
\item[$1.$] $\Psi^c_{i} (\epsilon_1,\epsilon_2,t)$ only depends on $\epsilon_i$ and
\begin{equation*}
 \Psi^c_{i} (\epsilon_i,t)= \exp\left( -\frac{1}{2\pi} \partial_t F^{\rm NS} \left(\epsilon_i,t-\frac{\epsilon_i}{4\pi}\right) \right). \end{equation*}
The same is true for $\Psi^c_{i,{\rm np}} (\epsilon_1,\epsilon_2,t)$.
\item[$2.$] $\Psi^c_{i} (\epsilon_i,t)$ and $\Psi^c_{i,{\rm np}} (\epsilon_i,t) $ satisfy the equation
\begin{equation*}
 \big(1-Q- {\rm e}^{\check{\epsilon}_i \partial_t} \big) \Psi^c_{i} (\epsilon_i,t)=0.
\end{equation*}
\item[$3.$] We have
\begin{equation*}
 \Psi^{c}_{i,{\rm np}}=\mathcal{S}_2(t\mid \check\epsilon_i,1)^{-1}.
\end{equation*}
\end{enumerate}
\end{Proposition}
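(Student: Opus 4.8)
The plan is to take logarithms everywhere and reduce each of the three items to a difference equation already proved in the excerpt. Writing $u = t - \frac{\check\epsilon_1+\check\epsilon_2}{2}$, the definitions are $\Psi^c_1 = Z^{\rm ref}(u)/Z^{\rm ref}(u+\check\epsilon_2)$ and $\Psi^c_2 = Z^{\rm ref}(u)/Z^{\rm ref}(u+\check\epsilon_1)$, so that $\log\Psi^c_i$ is a single difference of values of $F^{\rm ref}$ (resp.\ $F^{\rm ref}_{\rm np}$) and each claim becomes an identity for such a difference. For item~1 I would compute $\log\Psi^c_1 = F^{\rm ref}(u) - F^{\rm ref}(u+\check\epsilon_2)$ and insert the second relation of \eqref{eq:relfreffns}, which identifies $F^{\rm ref}(u+\check\epsilon_2)-F^{\rm ref}(u)$ with $\frac{1}{2\pi}\partial_t F^{\rm NS}(\epsilon_1,t-\check\epsilon_1/2)$. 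Since $\check\epsilon_1/2 = \epsilon_1/4\pi$ this gives at once $\Psi^c_1 = \exp(-\frac{1}{2\pi}\partial_t F^{\rm NS}(\epsilon_1,t-\epsilon_1/4\pi))$, an expression in $\epsilon_1$ and $t$ only; the case $i=2$ is identical using the first relation of \eqref{eq:relfreffns}. The statement for $\Psi^c_{i,{\rm np}}$ follows verbatim from the ${\rm np}$ analogues of \eqref{eq:relfreffns} recorded in the remark following that equation.

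For item~2, the key observation is that for both $i=1$ and $i=2$ the ratio $\Psi^c_i(t+\check\epsilon_i)/\Psi^c_i(t)$ equals $Z^{\rm ref}(u+\check\epsilon_1)Z^{\rm ref}(u+\check\epsilon_2)/\big(Z^{\rm ref}(u+\check\epsilon_1+\check\epsilon_2)Z^{\rm ref}(u)\big)$. Taking logarithms and rewriting the four arguments as $t\pm\frac{\check\epsilon_1\pm\check\epsilon_2}{2}$ reproduces precisely the left-hand side of the refined difference equation \eqref{eq:refdiffeq}, which equals $-{\rm Li}_1(Q) = \log(1-Q)$. Hence $\Psi^c_i(t+\check\epsilon_i) = (1-Q)\Psi^c_i(t)$, i.e.\ $(1-Q-{\rm e}^{\check\epsilon_i\partial_t})\Psi^c_i = 0$. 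Because $F^{\rm ref}_{\rm np}$ also solves \eqref{eq:refdiffeq}, the same computation gives the claim for $\Psi^c_{i,{\rm np}}$; note that no $\epsilon_1\leftrightarrow\epsilon_2$ symmetry is needed, as it is the same symmetric combination from \eqref{eq:refdiffeq} that appears in both cases.

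For item~3 I would feed the differences into the difference equations \eqref{eps1shift}. For $i=1$, applying the second equation of \eqref{eps1shift} at the shifted dummy argument $u+\check\epsilon_2$ makes the quantum-dilogarithm argument collapse to $t$, yielding directly $\Psi^c_{1,{\rm np}} = \mathcal{S}_2(t\mid\check\epsilon_1,1)^{-1}$. For $i=2$, the first equation of \eqref{eps1shift} (applied at the dummy argument $u$) produces instead $\Psi^c_{2,{\rm np}} = \mathcal{S}_2(t-\check\epsilon_2\mid-\check\epsilon_2,1)$, with a negative first period. The main obstacle is therefore the final identity
\begin{equation*}
\mathcal{S}_2(t-\check\epsilon_2\mid-\check\epsilon_2,1) = \mathcal{S}_2(t\mid\check\epsilon_2,1)^{-1},
\end{equation*}
namely the behaviour of $\mathcal{S}_2$ under a sign flip of one period.

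I would prove the general form $\mathcal{S}_2(w\mid-\omega_1,\omega_2) = \mathcal{S}_2(w+\omega_1\mid\omega_1,\omega_2)^{-1}$ by splitting $\mathcal{S}_2 = \exp(-\frac{\pi\mathrm{i}}{2}B_{2,2})\sin_2$. A direct polynomial check gives the exact cancellation $B_{2,2}(w\mid-\omega_1,\omega_2) + B_{2,2}(w+\omega_1\mid\omega_1,\omega_2) = 0$, so the Bernoulli prefactors drop out, and the residual statement $\sin_2(w\mid-\omega_1,\omega_2)\sin_2(w+\omega_1\mid\omega_1,\omega_2) = 1$ is the reflection/period-inversion property of the double sine reviewed in Appendix~\ref{specialfunctapp}. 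Specializing to $w = t-\check\epsilon_2$, $\omega_1 = \check\epsilon_2$, $\omega_2 = 1$ then closes item~3. As a consistency check, this inversion together with the shift relation \eqref{qddiffeq} reproduces the standard quantum-dilogarithm shift $\mathcal{S}_2(t+\check\epsilon_2\mid\check\epsilon_2,1)/\mathcal{S}_2(t\mid\check\epsilon_2,1) = (1-Q)^{-1}$, but the difference-equation route above is the most economical way to the statement.
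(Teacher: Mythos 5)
Your proofs of items 1 and 2 are correct and coincide with the paper's own argument: item 1 is read off from \eqref{eq:relfreffns} exactly as you do, and item 2 follows by exponentiating \eqref{eq:refdiffeq} after rewriting the four shifted arguments. For item 3 with $i=1$, your route through the second equation of \eqref{eps1shift}, evaluated at the shifted argument $u+\check\epsilon_2$, is also valid; it is a genuinely different (and quicker) path than the paper's, which instead combines the integral representation \eqref{eq:intreprref} with \eqref{S2int}.

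The genuine gap is the period-inversion identity you invoke for $i=2$,
\begin{equation*}
\mathcal{S}_2(w\mid -\omega_1,\omega_2)=\mathcal{S}_2(w+\omega_1\mid \omega_1,\omega_2)^{-1}.
\end{equation*}
First, this is not what Appendix~\ref{specialfunctapp} provides: the second line of \eqref{eq:sinefunc} reflects the argument, $z\mapsto \omega_1+\omega_2-z$, at \emph{fixed} periods, and contains no statement about flipping the sign of one period. Second, the identity is false as an exact statement. Your cancellation $B_{2,2}(w\mid-\omega_1,\omega_2)+B_{2,2}(w+\omega_1\mid\omega_1,\omega_2)=0$ is correct, but it only shows that the $\mathcal{S}_2$- and $\sin_2$-forms of the identity stand or fall together, and both fail: by property 1 of Appendix~\ref{appendix:qdilog} (applied with periods $(-\omega_1,\omega_2)$, which is legitimate whenever $\omega_1/\omega_2\notin\mathbb{R}$), the left-hand side has a zero at $w=0$, while the right-hand side is regular and non-vanishing there, since $w+\omega_1=\omega_1$ corresponds to $(a,b)=(1,0)$, which is ``neither''. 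Quantitatively, applying the product expansion \eqref{S2prod} to both factors (say with $\omega_2=1$ and $\operatorname{Im}\omega_1<0$) gives
\begin{equation*}
\mathcal{S}_2(t-\omega_1\mid-\omega_1,1)\,\mathcal{S}_2(t\mid\omega_1,1)
=\frac{\prod_{k\geq 0}\big(1-v\,p^{k}\big)}{\prod_{k\geq 1}\big(1-v^{-1}p^{k}\big)},
\qquad v={\rm e}^{2\pi\mathrm{i}t/\omega_1},\quad p={\rm e}^{2\pi\mathrm{i}/\omega_1},
\end{equation*}
a non-constant function of $t$ (it vanishes at $t=0$). What makes the proposed identity deceptive is that it \emph{does} hold at the level of the asymptotic expansion \eqref{S2asymp} in $\omega_1$, by virtue of the Bernoulli identity $\sum_{k=0}^{n}\binom{n}{k}B_k=(-1)^nB_n$; the failure is by the purely non-perturbative factor displayed above, built from ${\rm e}^{2\pi\mathrm{i}t/\omega_1}$ --- exactly the kind of correction this paper is about. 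Note also that this factor is invariant under $t\mapsto t+\omega_1$, so your closing consistency check against \eqref{qddiffeq} could never detect it: the difference equation of item 2 determines $\Psi^c_{2,{\rm np}}$ only up to such a periodic prefactor.

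The paper's own route avoids any functional equation of $\mathcal{S}_2$. For $i=2$ it writes, using \eqref{eq:intreprref},
\begin{equation*}
\log\Psi^c_{2,{\rm np}}
=\int_{\mathbb{R}+\mathrm{i}0^{+}}\frac{{\rm e}^{(t-\check\epsilon_2)s}}{({\rm e}^{s}-1)({\rm e}^{-\check\epsilon_2 s}-1)}\frac{{\rm d}s}{s}
=-\int_{\mathbb{R}+\mathrm{i}0^{+}}\frac{{\rm e}^{ts}}{({\rm e}^{s}-1)({\rm e}^{\check\epsilon_2 s}-1)}\frac{{\rm d}s}{s},
\end{equation*}
where the second equality is the elementary identity $({\rm e}^{-\check\epsilon_2 s}-1)^{-1}=-{\rm e}^{\check\epsilon_2 s}({\rm e}^{\check\epsilon_2 s}-1)^{-1}$ of the integrand, and only then identifies the result as $-\log\mathcal{S}_2(t\mid\check\epsilon_2,1)$ through \eqref{S2int}. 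In other words, the inversion is performed inside the Borel-type integral, where it is trivial, and the quantum dilogarithm is produced at the very end. (Strictly speaking, this last identification extends \eqref{S2int} beyond its stated range $\operatorname{Re}\check\epsilon_2>0$; the paper adopts that convention silently, and it is precisely this regime-crossing, invisible in perturbation theory, that cannot be recast as a functional equation of $\mathcal{S}_2$.) To repair your proof, replace the functional-equation step for $i=2$ by this integral-representation computation; your $i=1$ argument can stand as is.
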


\begin{proof}
The first part of the proposition follows from \eqref{eq:relfreffns}, the second part follows by exponentiating the difference equation (\ref{eq:refdiffeq}) and using the definitions. The last point follows, from example, from the integral representation (\ref{eq:intreprref}) of $F_{{\rm np}}^{\rm ref}$, the definition of $\Psi_{i,{\rm np}}^c$, and the integral representation (\ref{S2int}) of $\mathcal{S}_2$.
\end{proof}

\begin{Remark} We note that the equation satisfied by $\Psi^c_{i}$ can be interpreted as the quantization of the curve
 \begin{equation*}
 1-\exp(2\pi \mathrm{i} t) - \exp(\mathrm{i} s/2\pi) =0
 \end{equation*}
 in $(\mathbb{C}^*)^2$,
 where the variables $t,s$ are promoted to operators $\hat{t},\hat{s}$ acting on a Hilbert space and obeying the commutation relations
 \begin{equation*} \left[\hat{t},\hat{s} \right]= \mathrm{i} \epsilon_i, \qquad i=1,2.\end{equation*}
 In particular, one can choose a polarization where $\hat{t}$ acts as multiplication by $t$ and $\hat{s}$ acts as $-\mathrm{i} \epsilon_i \partial_t$.
\end{Remark}

\subsection{Quantum Picard--Fuchs operator}\label{sec:quantumPF}

In \cite{Mironov:2009uv} the NS limit of Nekrasov's partition function was studied in the case of Seiberg--Witten theory and it was argued that the NS limit corresponds to a quantization of the classical periods of the Seiberg--Witten differential. These classical periods were computed as explicit period integrals in the original work \cite{Seiberg:1994rs} and were also obtained as solutions of Picard--Fuchs equations (see \cite{Lerche:1996xu} and references therein). A study of a quantum deformation of the Picard--Fuchs equation governing the classical periods was initiated in \cite{Mironov:2009uv}. Similar ideas came up in \cite{ACDKV}, where the WKB analysis of quantum curves was used to argue for differential Picard--Fuchs type operators that determine the order by order corrections to the classical periods to obtain the quantum periods. Although the arguments leading to these operators are clear it is perhaps less clear what the interpretation of such an all-order operator should be and how these should be derived systematically. We show in the following that the difference equations which we have studied here provide a clear path towards such all-order operator.

We want to derive a quantum Picard--Fuchs operator which annihilates in particular the quantum period that we expect to be related to the derivative of the topological string free energy in the NS limit. The asymptotic expansion of the latter is
\begin{equation*}
F^{\rm NS}(\epsilon,t-\check{\epsilon}/2)= -\mathrm{i} \sum_{n=0}^{\infty} {\rm Li}_{3-n}(Q) \frac{B_n}{n!} (\mathrm{i}\epsilon)^{n-1},
\end{equation*}
as given in \eqref{eq:FNSasymp}. This expansion was shown to satisfy the difference equation \eqref{eq:NSdiffeq}
\begin{equation*}
 F^{\rm NS}(\epsilon,t+\check{\epsilon}/2)-F^{\rm NS}(\epsilon,t-\check\epsilon/2)= -\mathrm{i} {\rm Li}_2(Q).
\end{equation*}
It follows that $\partial_t F^{\rm NS}(\epsilon,t-\check{\epsilon}/2)$ has the asymptotic expansion
\begin{equation*}
\partial_t F^{\rm NS}(\epsilon,t-\check{\epsilon}/2)= 2\pi \sum_{n=0}^{\infty} {\rm Li}_{2-n}(Q) \frac{B_n}{n!} (\mathrm{i}\epsilon)^{n-1},
\end{equation*}
and satisfies the difference equation
\begin{equation*}
 \partial_t F^{\rm NS}(\epsilon,t+\check{\epsilon}/2)-\partial_t F^{\rm NS}(\epsilon,t-\check\epsilon/2)= 2\pi {\rm Li}_1(Q).
\end{equation*}

Before deriving the quantum Picard--Fuchs operator from this difference equation we would like to include the classical terms discussed in Section~\ref{sec:classicalPF} in this quantum period, so that we in particular reproduce \eqref{eq:classicatFt} as the leading contribution of the quantum period in the limit $\epsilon \to 0$. The leading contribution of $ \partial_t F^{\rm NS}(\epsilon,t-\check{\epsilon}/2)$ is given by
\begin{equation*}
 \partial_t F^{\rm NS}(\epsilon,t-\check{\epsilon}/2)= - \frac{2\pi \mathrm{i}}{\epsilon} {\rm Li}_{2}(Q) +\mathcal{O}(1).
\end{equation*}
We would like to modify $F^{\rm NS}$ such that the leading piece includes the classical piece of \eqref{eq:classicatFt} and hence is of the form
\begin{equation*} - \frac{(2\pi \mathrm{i})^3}{\epsilon} \left( \frac{t^2}{2} + \frac{1}{(2\pi \mathrm{i})^2} {\rm Li}_{2}(Q)\right).\end{equation*}
Instead of only adding the term $t^2/2$ with the correct prefactor, we take advantage of the ambiguity of the linear and constant terms in $t$ in the period expansion as well as the ambiguity in the constant term in $\epsilon$ in the quantum period, and propose to add the classical term in the guise of a generalized Bernoulli polynomial which satisfies a difference equation on its own. We thus use
\begin{equation*}
 B_{2,2}(t\mid \check{\epsilon},1)= \frac{1}{\check{\epsilon}} \left(t^2 - t + \frac{1}{6}\right) - t+\frac{1}{2}+ \frac{\check{\epsilon}}{6},
 \end{equation*}
to obtain:

\begin{Proposition}
Define
\begin{equation*}
 \partial_t F^{{\rm NS},\sharp}(\epsilon,t-\check{\epsilon}/2) := 2 \pi^2 \mathrm{i}\cdot B_{2,2}(t\mid \check{\epsilon},1) + \partial_t F^{\rm NS}(\epsilon,t-\check{\epsilon}/2).
\end{equation*}
Then we find that:
\begin{itemize}\itemsep=0pt
 \item $\partial_t F^{{\rm NS},\sharp}(\epsilon,t-\check{\epsilon}/2)$ satisfies the difference equation
\begin{equation}\label{eq:FNSsharpdiffeq}
 \partial_t F^{{\rm NS},\sharp}(\epsilon,t+\check{\epsilon}/2)-\partial_t F^{{\rm NS},\sharp}(\epsilon,t-\check\epsilon/2)= (2\pi) \log \frac{Q}{Q-1}.
\end{equation}
\item $\partial_t F^{{\rm NS},\sharp}(\epsilon,t-\check{\epsilon}/2)$ is a solution of the quantum Picard--Fuchs operator
\begin{equation*}
 L_{\epsilon} := \partial_t \circ (1-Q) \circ \partial_t \circ \sum_{n=1}^{\infty} \frac{\check{\epsilon}^n}{n!} \partial_t^n.
\end{equation*}
\end{itemize}
\end{Proposition}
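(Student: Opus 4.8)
The plan is to reduce both bullets to the already-established difference equation
\[
\partial_t F^{\rm NS}(\epsilon,t+\check\epsilon/2)-\partial_t F^{\rm NS}(\epsilon,t-\check\epsilon/2)=2\pi\,{\rm Li}_1(Q)
\]
together with the explicit quadratic form of $B_{2,2}(t\mid\check\epsilon,1)$ recorded just above the statement. It is convenient to set $\Phi(t):=\partial_t F^{{\rm NS},\sharp}(\epsilon,t-\check\epsilon/2)$ and to regard it as a function of the single variable $t$, so that every shift and every $\partial_t$ appearing in the proposition acts on $t$. By definition $\Phi(t)=2\pi^2\mathrm{i}\,B_{2,2}(t\mid\check\epsilon,1)+\partial_t F^{\rm NS}(\epsilon,t-\check\epsilon/2)$, and since $\Phi(t+\check\epsilon)=\partial_t F^{{\rm NS},\sharp}(\epsilon,t+\check\epsilon/2)$, the left-hand side of \eqref{eq:FNSsharpdiffeq} is exactly the finite difference $\Phi(t+\check\epsilon)-\Phi(t)$.

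For the first bullet I would compute the two contributions to $\Phi(t+\check\epsilon)-\Phi(t)$ separately. The Bernoulli part is handled by substituting $t+\check\epsilon$ into the given quadratic expression for $B_{2,2}$ (equivalently, by the recursion \eqref{eq:Bernoullipolrec}); a direct cancellation of the $\check\epsilon^{-1}$- and $\check\epsilon$-terms leaves $B_{2,2}(t+\check\epsilon\mid\check\epsilon,1)-B_{2,2}(t\mid\check\epsilon,1)=2t-1$. The $F^{\rm NS}$ part is precisely $2\pi\,{\rm Li}_1(Q)$ by the displayed difference equation. Assembling, $\Phi(t+\check\epsilon)-\Phi(t)=2\pi^2\mathrm{i}(2t-1)+2\pi\,{\rm Li}_1(Q)$, and using ${\rm Li}_1(Q)=-\log(1-Q)$ with $Q=\mathrm{e}^{2\pi\mathrm{i}t}$ I would identify this with $2\pi\log\frac{Q}{Q-1}$. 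The cleanest check is that both sides have $t$-derivative $\frac{4\pi^2\mathrm{i}}{1-Q}$, so they agree up to an additive constant, which is absorbed by the linear/constant and $\epsilon$-constant ambiguities already invoked in defining $\partial_t F^{{\rm NS},\sharp}$.

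For the second bullet the key observation is the operator identity $\sum_{n=1}^{\infty}\frac{\check\epsilon^n}{n!}\partial_t^n=\mathrm{e}^{\check\epsilon\partial_t}-1$, so that the rightmost factor of $L_\epsilon$ applied to $\Phi$ is exactly the finite difference $\Phi(t+\check\epsilon)-\Phi(t)$, which by the first bullet equals $2\pi\log\frac{Q}{Q-1}$. I would then push this single function through the remaining factors of $L_\epsilon=\partial_t\circ(1-Q)\circ\partial_t\circ(\mathrm{e}^{\check\epsilon\partial_t}-1)$ from right to left: applying $\partial_t$ produces $\frac{4\pi^2\mathrm{i}}{1-Q}$ (the derivative already computed), multiplying by the function $(1-Q)$ collapses this to the constant $4\pi^2\mathrm{i}$, and the final $\partial_t$ annihilates it. Hence $L_\epsilon\,\partial_t F^{{\rm NS},\sharp}(\epsilon,t-\check\epsilon/2)=0$.

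The computation is essentially mechanical, and the main obstacle is purely bookkeeping: in the first bullet the identification of $2\pi^2\mathrm{i}(2t-1)+2\pi\,{\rm Li}_1(Q)$ with $2\pi\log\frac{Q}{Q-1}$ requires reconciling ${\rm Li}_1(Q)=-\log(1-Q)$, $\log(Q-1)$ and $\log Q=2\pi\mathrm{i}t$ on a common branch. I would deliberately sidestep explicit branch tracking by performing the identification at the level of $t$-derivatives and invoking the stated ambiguity in the additive constants; once the first bullet is in place, the second is just the telescoping of the three outer factors of $L_\epsilon$ through the resulting function.
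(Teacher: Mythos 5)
Your proof is correct and follows essentially the same route as the paper's: the Bernoulli recursion $B_{2,2}(t+\check{\epsilon}\mid\check{\epsilon},1)-B_{2,2}(t\mid\check{\epsilon},1)=2t-1$ plus the established difference equation $\partial_t F^{\rm NS}(\epsilon,t+\check{\epsilon}/2)-\partial_t F^{\rm NS}(\epsilon,t-\check{\epsilon}/2)=2\pi\,{\rm Li}_1(Q)$ for the first bullet, and then rewriting \eqref{eq:FNSsharpdiffeq} as $\big({\rm e}^{\check{\epsilon}\partial_t}-1\big)\partial_t F^{{\rm NS},\sharp}=2\pi\log\frac{Q}{Q-1}$ and pushing through $\partial_t\circ(1-Q)\circ\partial_t$ for the second, exactly as the paper does.

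One caution on the step you chose to sidestep: once $\partial_t F^{{\rm NS},\sharp}$ is defined as in the statement, there is no residual ``ambiguity'' left to absorb an additive constant, so the identification must actually be exact, not just exact up to a constant. Fortunately it is: writing $2\pi^2\mathrm{i}(2t-1)=2\pi(\log Q-\mathrm{i}\pi)$ with $\log Q=2\pi\mathrm{i}t$, and ${\rm Li}_1(Q)=-\log(1-Q)$, one gets
\begin{equation*}
2\pi^2\mathrm{i}(2t-1)+2\pi\,{\rm Li}_1(Q)=2\pi\big(\log Q-\log(1-Q)-\mathrm{i}\pi\big)=2\pi\big(\log Q-\log(Q-1)\big),
\end{equation*}
using the branch convention $\log(Q-1)=\log(1-Q)+\mathrm{i}\pi$, which is precisely $2\pi\log\frac{Q}{Q-1}$. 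So the constant is zero and no appeal to ambiguities is needed. As you correctly note, your second bullet is insensitive to this issue in any case, since the outer $\partial_t\circ(1-Q)\circ\partial_t$ annihilates any additive constant, so the Picard--Fuchs conclusion stands as written.
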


\begin{proof}
The difference equation follows from \eqref{tdiffeqNS} as well as the difference equation for the generalized Bernoulli polynomial of Proposition~\ref{prop:Bernoullidiff}:
\begin{equation*} B_{2,2}(t+\check{\epsilon}\mid \check{\epsilon},1)-B_{2,2}(t\mid \check{\epsilon},1)=2 B_{1,1}(t\mid 1)=2t-1.\end{equation*}
To show that $\partial_t F^{{\rm NS},\sharp}(\epsilon,t-\check{\epsilon}/2)$ is a solution of the quantum Picard--Fuchs operator, we write the difference equation (\ref{eq:FNSsharpdiffeq}) as:
\begin{equation*}
 \big({\rm e}^{\check{\epsilon}\partial_t}-1\big) \partial_t F^{{\rm NS},\sharp}(\epsilon,t-\check{\epsilon}/2)= (2\pi) \log \frac{Q}{Q-1},
\end{equation*}
expand in $\check{\epsilon}$ and further act with $\partial_t \circ (1-Q) \circ \partial_t $ on the result.
\end{proof}

\begin{Remark}\quad
\begin{itemize}\itemsep=0pt
 \item We remark that the leading contribution in $\epsilon$ to the quantum Picard--Fuchs operator is the so-called extended classical Picard--Fuchs operator $L_{\text{cl}}$, defined as
\begin{gather*}
 \partial_t \circ (1-Q) \circ \partial_t \circ \sum_{n=1}^{\infty} \frac{\check{\epsilon}^n}{n!} \partial_t^n = \check{\epsilon} (\partial_t \circ (1-Q)\circ \partial_t \circ \partial_t ) + \mathcal{O}\big(\epsilon^2\big)
 =: \check{\epsilon} L_{\text{cl}} + \mathcal{O}\big(\epsilon^2\big).
\end{gather*}
This operator is one order less than the Picard--Fuchs operator given in \eqref{eq:classicalextPF} and only has the three independent solutions $\varpi^0$, $\varpi^1$ and $\varpi^2$ of Section~\ref{sec:classicalPF}.

\item The quantum periods $\varpi^0_{\epsilon}$ and $\varpi^1_{\epsilon}$ are equal to
\begin{equation*}
 \varpi^0_{\epsilon}=\frac{1}{\check{\epsilon}}, \qquad \varpi^1_{\epsilon} = \frac{t}{\check{\epsilon}},
 \end{equation*}
up to rescaling. This will be further discussed later.

\item We have from Proposition~\ref{prop:closedwavefunction}\,(ii) and (iv) that
\begin{equation*}
 \exp\left( \frac{1}{2\pi} \partial_t F^{\rm NS}_{{\rm np}} \left(\epsilon_i,t-\frac{\epsilon_i}{4\pi}\right) \right) = \mathcal{S}_2(t\mid \check{\epsilon},1),
\end{equation*}
which suggests the non-perturbative extension
\begin{equation}\label{eq:nonpertquantumperiodwithclassical}
 \partial_t F^{{\rm NS},\sharp}_{{\rm np}}(\epsilon,t-\check{\epsilon}/2) = 2\pi (\pi \mathrm{i} B_{2,2}(t\mid \check{\epsilon},1) + \log(\mathcal{S}_2(t\mid \check{\epsilon},1)) ).
 \end{equation}
One can easily check that this is also a solution of the quantum Picard--Fuchs operator $L_\epsilon$.
\end{itemize}
\end{Remark}

\section{Borel sums and Stokes phenomena}\label{sec:Borelsums}

In this section we study the Borel summation and associated Stokes phenomena of the following two objects: on one hand we consider the following shift of $F^{\rm NS}(\epsilon,t)$:
\begin{equation*}
 W(\epsilon,t):=F^{\rm NS}(\epsilon,t-\check{\epsilon}/2)=-\mathrm{i} \sum_{k=1}^{\infty} \frac{1}{k^2} \frac{{\rm e}^{2\pi \mathrm{i} t k}}{({\rm e}^{\mathrm{i} k \epsilon}-1)} \qquad \mathrm{with} \quad \check{\epsilon}=\frac{\epsilon}{2\pi}.
\end{equation*}

One of the main reasons to study this shifted free energy is that we will be able to relate its Borel sums to the Borel sums and Stokes phenomena of the topological free energy $F^{{\rm top}}(\lambda,t)$ studied in \cite{ASTT21}, where $\lambda$ denotes the topological string coupling. More specifically, we will see that the Borel sums of the $\epsilon$-expansion of $W(\epsilon,t)$ give an $\epsilon$-potential for the Borel sums of the $\lambda$-expansion of $F^{{\rm top}}(\lambda,t)$, provided we set $\lambda=\epsilon$.

Furthermore, we also study the Borel sums and Stokes phenomena of the formal solution $S(\epsilon,x,t)$ of the Schr\"odinger equation (\ref{scheq}), given in (\ref{eq:Sasymp}); and relate the Borel sums of $S(\epsilon,x,t)$ with those of $W(\epsilon,t)$.

\begin{Remark}
We remark that a similar study of the Borel summability of $S(\epsilon,x,t)$ is performed in the recent work \cite[Section 4]{GHN}. An analogous discussion of the Borel transform of $F^{\rm NS}$ is written down in \cite[Section 4.5.1]{GHN}. We became aware of this work while this paper was being finalized.
\end{Remark}

\subsection{Main results}

We will first state the main results, and then prove them in the following subsections. The main results we wish to prove in this section regarding $W(\epsilon,t)$ is contained in the following theorem:

\begin{Theorem}\label{maintheorem}
Let $W(\epsilon,t)$ be as before. Then:
\begin{itemize}\itemsep=0pt
 \item {\rm (Borel sum)} Let $t\in \mathbb{C}^{\times}-\mathbb{Z}$ and consider the rays in the $\epsilon$-plane defined by $l_k:=\mathbb{R}_{<0}\cdot 2\pi \mathrm{i} (t+k)$ for $k\in \mathbb{Z}$ and $l_{\infty}:=\mathrm{i}\mathbb{R}_{<0}$ $($see Figure~{\rm \ref{fig:rays}} on the left$)$. Then given any ray $\rho \in \mathbb{C}^{\times}$ from $0$ to $\infty$ different from $\{\pm l_k\}_{k\in \mathbb{Z}}\cup \{\pm l_{\infty}\}$; and $\epsilon \in \mathbb{H}_{\rho}$, where $\mathbb{H}_{\rho}$ denotes the half-plane centered at $\rho$, the formal $\epsilon$-expansion of $W(\epsilon,t)$ is Borel summable with Borel sum:
 \begin{equation*}
 W_{\rho}(\epsilon,t) := -\frac{1}{\epsilon} \mathrm{Li}_{3}(Q)+ \frac{\mathrm{i}}{2}\mathrm{Li}_{2}(Q)+\frac{\epsilon}{12}\mathrm{Li_1}(Q) + \check{\epsilon}\int_{\rho}\mathrm{d}\xi\, {\rm e}^{-\xi/\check{\epsilon}}\widetilde{G}(\xi,t),
 \end{equation*}
 where
\begin{equation*}
 \widetilde{G}(\xi,t)=\frac{1}{2\pi}\sum_{m\geq 1}\frac{1}{m^3}\left(\frac{1}{1-{\rm e}^{-2\pi \mathrm{i} t+\xi/m}}-\frac{1}{1-{\rm e}^{-2\pi \mathrm{i} t-\xi/m}}\right).
\end{equation*}
In particular, one finds that if $\operatorname{Im}(t)>0$ and $0<\operatorname{Re}(t)<1$, then for $\operatorname{Re}(t)<\operatorname{Re}(\check\epsilon+1)$:
\begin{equation}\label{WBSR}
 W_{\mathbb{R}_{>0}}(\epsilon,t)=F_{{\rm np}}^{\rm NS}(\epsilon,t-\check\epsilon/2),
\end{equation}
where $F_{{\rm np}}^{\rm NS}(\epsilon,t)$ was defined in \eqref{eq:FNSintegral}.
\item {\rm (Stokes jumps)} Assume that $\operatorname{Im}(t)> 0$. Furthermore let $\rho$ be a ray in the sector determined by the Stokes rays $l_{k+1}$ and $l_{k}$, and $\rho'$ a ray in the sector determined by $l_{k}$ and $l_{k-1}$. Then for $\epsilon \in \mathbb{H}_{\rho}\cap \mathbb{H}_{\rho'}$ $($resp.\ $\epsilon \in \mathbb{H}_{-\rho}\cap \mathbb{H}_{-\rho'})$ we have
\begin{equation}\label{eq:StokesjumpsW}
 W_{\pm \rho}(\epsilon,t)
 -W_{\pm \rho'}(\epsilon,t) =- \mathrm{i}\check{\epsilon} \mathrm{Li}_2\big({\rm e}^{\pm 2\pi \mathrm{i}(t+k)/\check \epsilon}\big).
\end{equation}
If $\operatorname{Im}(t)<0$, then the previous jumps also hold provided $\rho$ is interchanged with $\rho'$ in the above formulas.

\item {\rm (Limits to $\pm l_{\infty})$} Let $\rho_k$ denote any ray between the Stokes rays $l_{k}$ and $l_{k-1}$. Furthermore, assume that $0<\operatorname{Re}(t)<1$, $\operatorname{Im}(t)>0$, $\operatorname{Re}(\epsilon)>0$, $\operatorname{Im}(\epsilon)<0$, and $\operatorname{Re}(t) < \operatorname{Re} (\check{\epsilon}+1)$. Then%
\begin{equation}\label{eq:limitImW}
 \lim_{k\to \infty}W_{\rho_k}(\epsilon,t)=W(\epsilon,t)
\end{equation}
On the other hand, assume that $0<\operatorname{Re}(t)<1$, $\operatorname{Im}(t)>0$, $\operatorname{Re}(\epsilon)<0$, $\operatorname{Im}(\epsilon)<0$, $\operatorname{Re} t < \operatorname{Re} (-\check{\epsilon}+1)$ and that $\big|{\rm e}^{-2\pi \mathrm{i}t/\check{\epsilon}}\big|<1$. Then
\begin{equation*}
 \begin{split}
\lim_{k\to -\infty}W_{-\rho_k}(\epsilon,t)=W(\epsilon,t).
 \end{split}
\end{equation*}
In other words, the limits to $l_{\infty}$ from the right and left give $W(\epsilon,t)$.

The other limits corresponding to $-l_{\infty}$ follow from the previous limits and the relations
\begin{equation}\label{refid}
 W_{\rho}(\epsilon,t)+W_{-\rho}(-\epsilon,t)=\mathrm{i}\mathrm{Li}_2(Q).
\end{equation}
and
\begin{equation*}
 W(\epsilon,t)+W(-\epsilon,t)=\mathrm{i}\mathrm{Li}_2(Q).
\end{equation*}
In particular, the limits to $-l_{\infty}$ from the right and left give $W(\epsilon,t)$.

\item {\rm (Potential for the Borel sum of the topological free energy)} Given $W_{\rho}(\epsilon,t)$, we have the identity
\begin{equation*}
 \partial_{\epsilon}W_{\rho}(\epsilon,t)=F_{\rho}(\epsilon,t),
\end{equation*}
where $F_{\rho}$ denotes the Borel sum of the non-constant map contribution to the topological free energy studied in~{\rm \cite{ASTT21}}.
\end{itemize}
\end{Theorem}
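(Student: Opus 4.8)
The plan is to follow the resurgent analysis of \cite{ASTT21, garoufalidis2020resurgence}, now applied to $W(\epsilon,t)=F^{\rm NS}(\epsilon,t-\check\epsilon/2)$. First I would take the formal $\epsilon$-expansion of $W$, which by \eqref{eq:FNSasymp} is $-\mathrm{i}\sum_{n\ge0}{\rm Li}_{3-n}(Q)\frac{B_n}{n!}(\mathrm{i}\epsilon)^{n-1}$, and split off the three lowest terms $n=0,1,2$; these produce exactly the polynomial piece $-\frac1\epsilon{\rm Li}_3(Q)+\frac{\mathrm{i}}{2}{\rm Li}_2(Q)+\frac{\epsilon}{12}{\rm Li}_1(Q)$. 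For the tail (even $n\ge4$) I would compute the formal Borel transform in the variable $\xi$ conjugate to $\check\epsilon$ and resum it in closed form: using $\frac{1}{{\rm e}^{w}-1}=\sum_n\frac{B_n}{n!}w^{n-1}$ together with ${\rm Li}_{3-n}(Q)=\sum_{k\ge1}Q^k k^{n-3}$ and interchanging the $k$- and $n$-sums, the transform collapses to the stated $\widetilde G(\xi,t)$, whose two summands are the $\pm\xi$ halves of a geometric resummation over $k$. Borel summability along any $\rho\notin\{\pm l_k\}\cup\{\pm l_\infty\}$ then follows once I show $\widetilde G(\cdot,t)$ is holomorphic along $\rho$ with at most exponential growth: the poles of $\widetilde G$ sit precisely where $1-{\rm e}^{-2\pi \mathrm{i}t\pm\xi/m}=0$, i.e.\ at $\xi=\mp2\pi \mathrm{i}m(t+n)$, which lie on the rays $\pm l_n$ and accumulate onto $\pm l_\infty$, so a ray $\rho$ avoiding these carries a convergent Laplace integral and Watson's lemma recovers the asymptotics. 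Adding back the polynomial part yields $W_\rho$, and the identity \eqref{WBSR} for $\rho=\mathbb{R}_{>0}$ I would get by deforming the Laplace contour and matching it termwise with the integral representation \eqref{eq:FNSintegral} of $F^{\rm NS}_{\rm np}$ under the substitution relating $s$ and $\xi$, in the stated parameter range.

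For the Stokes jumps \eqref{eq:StokesjumpsW} I would apply Cauchy's theorem to the difference $W_{\pm\rho}-W_{\pm\rho'}$ of two Laplace integrals whose contours lie on opposite sides of the Stokes ray $l_k$ (resp.\ $-l_k$). The difference equals $2\pi \mathrm{i}\check\epsilon$ times the sum of residues of ${\rm e}^{-\xi/\check\epsilon}\widetilde G(\xi,t)$ at the poles $\xi_{m,k}=\mp2\pi \mathrm{i}m(t+k)$, $m\ge1$, on that ray. Each pole is simple with $\operatorname{Res}_{\xi_{m,k}}\widetilde G=\pm\frac{1}{2\pi m^{2}}$, so the residue sum telescopes into $\mp\mathrm{i}\check\epsilon\sum_{m\ge1}\frac{1}{m^2}{\rm e}^{\pm2\pi \mathrm{i}m(t+k)/\check\epsilon}=\mp\mathrm{i}\check\epsilon\,{\rm Li}_2\big({\rm e}^{\pm2\pi \mathrm{i}(t+k)/\check\epsilon}\big)$, the claimed jump; the sign bookkeeping distinguishing the $\operatorname{Im}(t)>0$ and $\operatorname{Im}(t)<0$ cases is fixed by the orientation in which $\rho$ and $\rho'$ straddle $l_k$. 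Since this is the difference equation \eqref{tdiffeqNS} made analytic, I would cross-check the output against it.

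For the limits \eqref{eq:limitImW} and the reflection identities I would first record that $\widetilde G(-\xi,t)=-\widetilde G(\xi,t)$, because $\xi\mapsto-\xi$ exchanges the two summands of $\widetilde G$ up to sign. Combined with the parity of the polynomial part under $\epsilon\mapsto-\epsilon$ (the $\epsilon^{\pm1}$-terms odd, the ${\rm Li}_2$-term even, summing to $\mathrm{i}{\rm Li}_2(Q)$), this oddness forces the two Laplace integrals in $W_\rho(\epsilon,t)+W_{-\rho}(-\epsilon,t)$ to cancel, giving \eqref{refid}; the analogous identity for $W$ itself follows from the same symmetry of its defining sum. The limit $\lim_{k\to\infty}W_{\rho_k}=W$ I would obtain either by rotating the contour toward $l_\infty$ and summing the geometric residues back into the convergent series defining $W(\epsilon,t)$, or equivalently by summing the infinite tower of Stokes jumps up to the accumulation ray; the stated hypotheses on $\epsilon$ and $t$ are exactly what make the interchange of limit and summation legitimate, and the $-l_\infty$ limits then drop out of \eqref{refid}. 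Finally, for $\partial_\epsilon W_\rho=F_\rho$ I would verify the formal identity $\partial_\epsilon W(\epsilon,t)=\tilde F^{\rm top}(\epsilon,t)$ by differentiating \eqref{eq:FNSasymp} term by term and matching \eqref{resconfree} (the $n=0$ term gives $\epsilon^{-2}{\rm Li}_3(Q)$ and $n=2g$ reproduces $\frac{(-1)^{g-1}B_{2g}}{2g(2g-2)!}{\rm Li}_{3-2g}(Q)$), and then note that differentiation passes under the Laplace integral, so the Borel sum of the derivative equals $\partial_\epsilon$ of the Borel sum; identifying the resulting integral with the Borel transform of $\tilde F^{\rm top}$ from \cite{ASTT21} yields the claim.

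The main obstacle I expect is the Borel-summability core: resumming the formal Borel transform into the explicit $\widetilde G$ and, above all, controlling it near the accumulation ray $l_\infty$. Because the singularities $\xi=\mp2\pi \mathrm{i}m(t+n)$ become dense in direction as $n\to\infty$, establishing the uniform analyticity and exponential bounds needed for Borel summability on sectors adjacent to $l_\infty$, and justifying both the termwise residue sums and the limit $k\to\infty$, is the delicate analytic part; the algebra of the jumps and the potential identity are then comparatively routine.
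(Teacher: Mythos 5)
Your overall architecture matches the paper's (split off the $n=0,1,2$ polynomial, closed-form Borel transform, Hankel-contour residues for the jumps, tower-of-jumps plus residue resummation for the $l_\infty$ limits, oddness of $\widetilde G$ for \eqref{refid}; even your differentiate-under-the-Laplace-integral route to $\partial_\epsilon W_\rho=F_\rho$ is endorsed in the paper's own remark as a valid alternative to its proof, which instead establishes the case $\rho=\mathbb{R}_{>0}$ from the integral representation and propagates via the jumps). But the mechanism you propose for the crucial first step fails. You expand $\mathrm{Li}_{3-n}(Q)=\sum_k Q^k k^{n-3}$, interchange the $k$- and $n$-sums, and claim the Borel transform "collapses" to $\widetilde G$ by a geometric resummation over $k$. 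At fixed $k$ the inner sum is $\sum_n \frac{B_{2n}}{(2n)!\,(2n-2)!}(2\pi\mathrm{i}k)^{2n}\xi^{2n-2}$ (up to prefactors): the extra $1/(2n-2)!$ introduced by the Borel map destroys the Bernoulli generating function, and this series is an entire, non-elementary function of $\xi$ — necessarily so, since each fixed-$k$ term of $W$ is a convergent series in $\epsilon$ of radius $2\pi/k$, so all singularities of the Borel transform arise only from the divergence of the $k$-sum. In particular, the index $m$ in $\widetilde G$ is \emph{not} your polylog index $k$. The paper instead writes $G=f_1\oast f_2$ as a Hadamard product of the Bernoulli series with $\sum_n\frac{\xi^{2n-2}}{(2n-2)!}\partial_t^{2n}\mathrm{Li}_3(Q)$, uses the Cauchy integral representation \eqref{Hadprodint}, and deforms the $s$-contour to collect the poles of $1/\big(s^2({\rm e}^s-1)\big)$ at $s=2\pi\mathrm{i}m$; that residue index is the $m$ of $\widetilde G$. (An elementary repair of your computation does exist: substitute $B_{2n}/(2n)!=(-1)^{n+1}2\zeta(2n)/(2\pi)^{2n}$, expand $\zeta(2n)=\sum_m m^{-2n}$, interchange with the zeta index $m$, and Taylor-resum the inner $n$-sum into $\mathrm{Li}_1\big({\rm e}^{2\pi\mathrm{i}t\pm\xi/m}\big)-\mathrm{Li}_1(Q)$ — the geometric structure lives in $m$, not $k$.)

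A second gap is your treatment of \eqref{WBSR}: "deforming the Laplace contour and matching it termwise with the integral representation \eqref{eq:FNSintegral} under the substitution relating $s$ and $\xi$" would not go through, because the Borel variable $\xi$ and the variable $s$ in \eqref{eq:FNSintegral} are not related by a change of variables — they are Fourier dual. The paper needs two nontrivial steps here (Appendix~\ref{BorelsumRapp}): first a regularized integration-by-parts argument (Proposition~\ref{WorformBorelsum}) recasting $W_{\mathbb{R}_{>0}}$ as $-\frac{1}{2\pi}\int_{\mathbb{R}+\mathrm{i}0^+}\mathrm{d}v\,\frac{1}{1-{\rm e}^v}\mathrm{Li}_2\big({\rm e}^{\check\epsilon v+2\pi\mathrm{i}t}\big)$, and then the Garoufalidis--Kashaev unitarity trick $\langle f,g\rangle=\langle Ff,Fg\rangle$ with both Fourier transforms computed by residues (Proposition~\ref{woronowiczlemma}) to land on the integral representation of $F^{\rm NS}_{\rm np}$. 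Since the paper's proof of the limit $\lim_{k\to\infty}W_{\rho_k}=W$ combines the tower of jumps with the residue expansion of $W_{\rm np}=W_{\mathbb{R}_{>0}}$, and Corollary~\ref{cortheorem1} also rests on \eqref{WBSR}, this hole propagates into your third bullet. The parts you do have in full — the residue computation across $l_k$, the oddness $\widetilde G(-\xi,t)=-\widetilde G(\xi,t)$ yielding \eqref{refid}, and the need to resum $\sum_{k\ge0}\mathrm{Li}_2\big(w\widetilde q^{\,k}\big)=\sum_{l\ge1}\frac{w^l}{l^2(1-\widetilde q^{\,l})}$ in the limit — coincide with Proposition~\ref{Stokesjumps} and Appendix~\ref{linftylimit} of the paper.
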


\begin{figure}[!ht]\centering
\includegraphics[width=7cm]{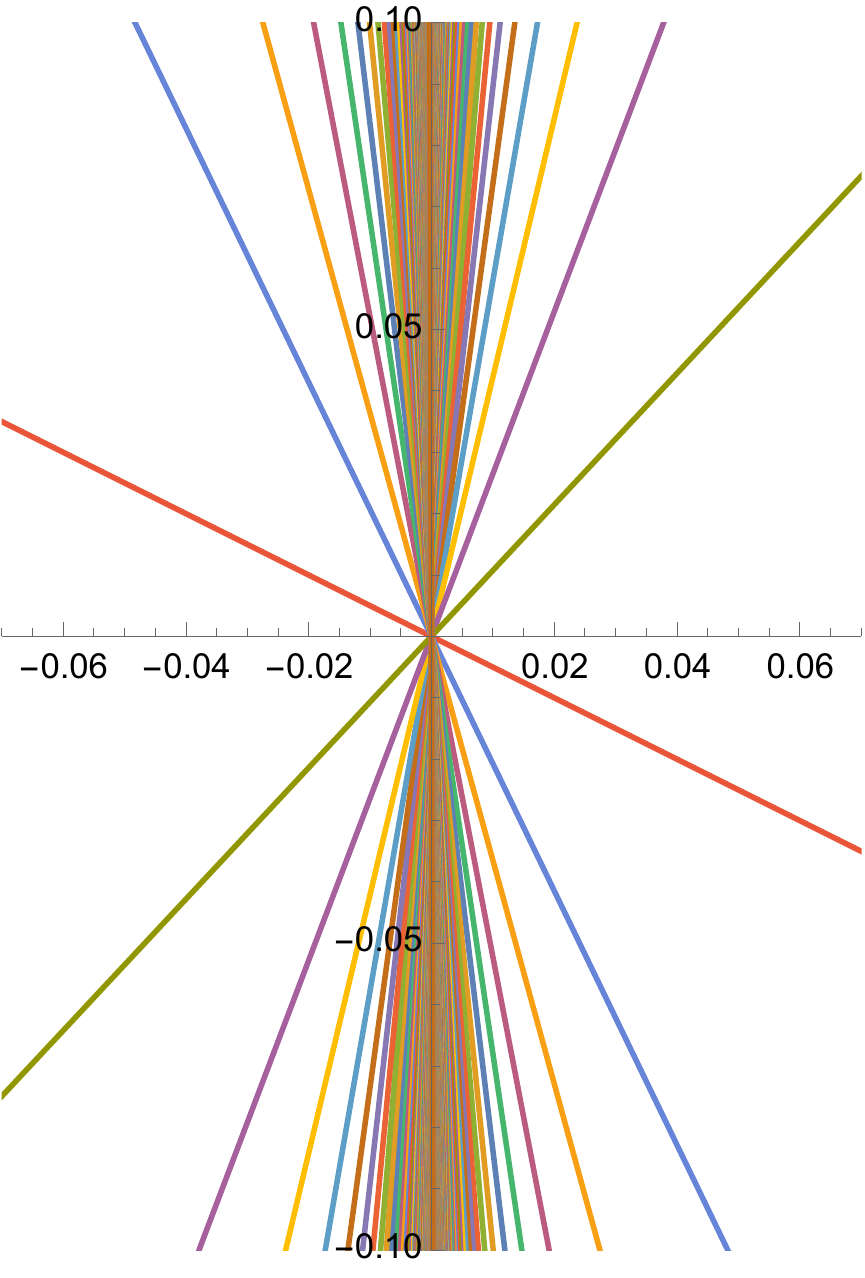}\qquad \includegraphics[width=7cm]{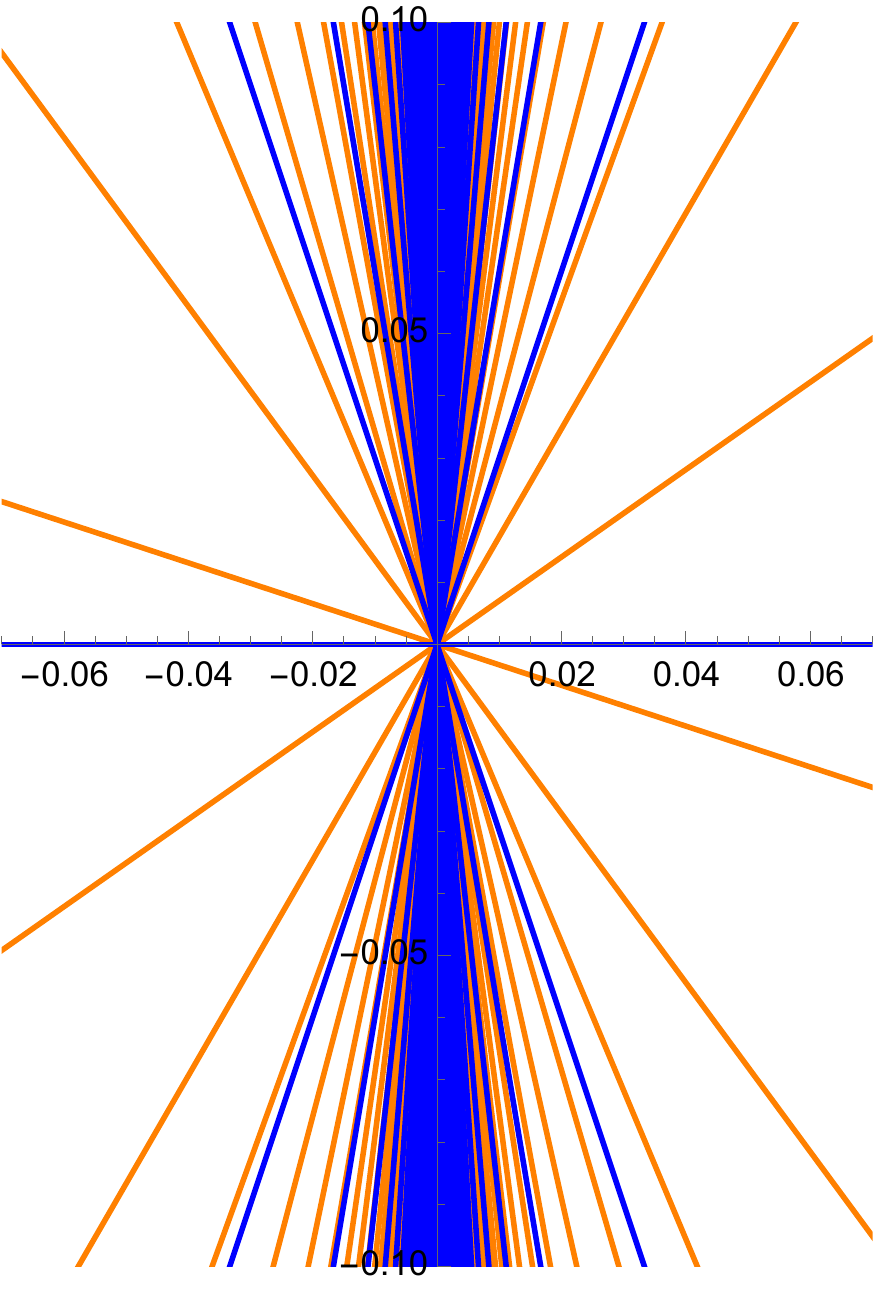}

\caption{On the left: plot of the rays $l_k = \mathrm{i} \mathbb{R}_{<0}(t+k)$ in the Borel-plane with $t=1/\pi (1 + 2 \mathrm{i})$ for $k \in (-250,250)$. On the right: plot of the rays $\widetilde{l}_k = \mathrm{i} \mathbb{R}_{<0}(x+k)$ (in blue) as well as $\widetilde{l}_{k,t} = \mathrm{i} \mathbb{R}_{<0}(t+x+k)$ (in orange) with $t=1/\pi (1 + 2 \mathrm{i})$ and $z= 2 + \mathrm{i}/3$ for $k \in (-250,250)$.}\label{fig:rays}
\end{figure}

\begin{Remark}\quad
\begin{itemize}\itemsep=0pt
 \item Notice that the Stokes jumps (\ref{eq:StokesjumpsW}) have the same form as the right-hand side of the difference equation (\ref{tdiffeqNS}) satisfied by $F_{{\rm np}}^{\rm NS}(\epsilon,t-\check\epsilon/2)$. We will use this below to establish a relation between $W_{\rho}$ and certain shifts of $F_{{\rm np}}^{\rm NS}(\epsilon,t-\check\epsilon/2)$
 \item In the previous results, we have restricted to the case $\operatorname{Im}(t)>0$ when discussing $W_{\mathbb{R}_{>0}}$ and the limits to $\pm l_{\infty}$. To do an analogous treatment for the limits to $\pm l_{\infty}$ in the case $\operatorname{Im}(t)<0$, one would need to find an analogous relation of the form (\ref{WBSR}) that holds for $\operatorname{Im}(t)<0$, since the study of the limits to $\pm l_{\infty}$ makes use of (\ref{WBSR}).
\end{itemize}

\end{Remark}

From the Stokes jumps of $W_{\rho}(\epsilon,t)$, together with (\ref{WBSR}) and the difference equation (\ref{tdiffeqNS}) satisfied by $F^{\rm NS}_{{\rm np}}(\epsilon,t)$, we obtain the following:

\begin{Corollary}\label{cortheorem1}
Assume that $\operatorname{Im}(t)>0$ and $0<\operatorname{Re}(t)<1$, and let $\rho_k$ be a ray between~$l_k$ and~$l_{k-1}$. Then
\begin{equation*}
 W_{\rho_k}(\epsilon,t)=F_{{\rm np}}^{\rm NS}(\epsilon,t-\check\epsilon/2 +k),
\end{equation*}
and
\begin{equation*}
 W_{-\rho_k}(\epsilon,t)=\mathrm{i}\mathrm{Li}_2(Q)-F_{{\rm np}}^{\rm NS}(-\epsilon,t-\check\epsilon/2+k)
\end{equation*}
\end{Corollary}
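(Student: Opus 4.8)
The plan is to prove the first identity by induction on $k\in\mathbb{Z}$, anchored at $k=0$ by the Borel-sum formula (\ref{WBSR}) and propagated by the Stokes jumps (\ref{eq:StokesjumpsW}), and then to deduce the second identity from the reflection relation (\ref{refid}). All the analytic input is already contained in Theorem~\ref{maintheorem}; the work is bookkeeping.

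For the base case I would first locate $\mathbb{R}_{>0}$ among the Stokes rays. Writing $t=a+\mathrm{i} b$ with $0<a<1$ and $b>0$, a point of $l_k=\mathbb{R}_{<0}\cdot 2\pi\mathrm{i}(t+k)$ points in the direction $2\pi b-2\pi(a+k)\mathrm{i}$, whose imaginary part $-2\pi(a+k)$ is negative for $k=0$ and positive for $k=-1$, with no intervening integer. Hence $\mathbb{R}_{>0}$ lies in the Stokes sector bounded by $l_0$ and $l_{-1}$, i.e.\ it is an admissible choice of $\rho_0$. Since the Borel sum $W_\rho$ depends only on the sector of $\rho$, we get $W_{\rho_0}=W_{\mathbb{R}_{>0}}$, and (\ref{WBSR}) yields $W_{\rho_0}(\epsilon,t)=F^{\rm NS}_{\rm np}(\epsilon,t-\check\epsilon/2)$ on the overlap of the relevant half-planes; as both sides are analytic in $\epsilon$, the identity extends to all of $\mathbb{H}_{\rho_0}$.

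For the inductive step I would match two recursions in $k$. Taking $\rho=\rho_{k+1}$, $\rho'=\rho_k$ in (\ref{eq:StokesjumpsW}) (legitimate since $\operatorname{Im}(t)>0$) gives
\begin{equation*}
 W_{\rho_{k+1}}(\epsilon,t)-W_{\rho_k}(\epsilon,t)=-\mathrm{i}\check\epsilon\,\mathrm{Li}_2\big({\rm e}^{2\pi\mathrm{i}(t+k)/\check\epsilon}\big),
\end{equation*}
while setting $\tau=t-\check\epsilon/2+k$ in the difference equation (\ref{tdiffeqNS}) gives exactly
\begin{equation*}
 F^{\rm NS}_{\rm np}(\epsilon,\tau+1)-F^{\rm NS}_{\rm np}(\epsilon,\tau)=-\mathrm{i}\check\epsilon\,\mathrm{Li}_2\big({\rm e}^{2\pi\mathrm{i}(t+k)/\check\epsilon}\big),
\end{equation*}
the internal shift $+\check\epsilon/2$ cancelling the $-\check\epsilon/2$. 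Thus $k\mapsto W_{\rho_k}(\epsilon,t)$ and $k\mapsto F^{\rm NS}_{\rm np}(\epsilon,t-\check\epsilon/2+k)$ obey the same first-order recursion with the same right-hand side; agreeing at $k=0$, they agree for all $k\in\mathbb{Z}$ (one propagates the identity across adjacent half-planes on their overlaps and extends by analyticity). This is the first identity. For the second, I would apply (\ref{refid}) with $\rho=\rho_k$ and $\epsilon\to-\epsilon$, giving $W_{\rho_k}(-\epsilon,t)+W_{-\rho_k}(\epsilon,t)=\mathrm{i}\,\mathrm{Li}_2(Q)$, hence $W_{-\rho_k}(\epsilon,t)=\mathrm{i}\,\mathrm{Li}_2(Q)-W_{\rho_k}(-\epsilon,t)$, and then substitute the first identity read as the functional relation $W_{\rho_k}(\eta,t)=F^{\rm NS}_{\rm np}\big(\eta,t-\tfrac{\eta}{4\pi}+k\big)$ at $\eta=-\epsilon$.

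The genuinely delicate points are purely bookkeeping. First, one must correctly align the index $k$ of the Stokes ray $l_k$ across which the jump occurs with the integer shift in (\ref{tdiffeqNS}); a shift by one in the convention $\rho_k$ versus $\rho_{k+1}$ would spoil the base of the induction. Second, one must track the half-shift $\check\epsilon/2$ carefully under $\epsilon\to-\epsilon$: evaluating the functional form at $\eta=-\epsilon$ produces second argument $t+\tfrac{\check\epsilon}{2}+k$, which is precisely the asserted $t-\check\epsilon/2+k$ once $\check{}$ is read as attached to the first slot $-\epsilon$ (so that $\widecheck{(-\epsilon)}=-\check\epsilon$). Getting this sign convention straight is the only place where an oversight could introduce a spurious $\check\epsilon$-shift in the final formula.
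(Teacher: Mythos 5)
Your proof is correct and follows essentially the same route as the paper's own: anchor the induction at $k=0$ via (\ref{WBSR}), propagate by matching the Stokes jumps (\ref{eq:StokesjumpsW}) against the difference equation (\ref{tdiffeqNS}), and deduce the $-\rho_k$ identity from (\ref{refid}). Your additional bookkeeping (locating $\mathbb{R}_{>0}$ in the sector between $l_0$ and $l_{-1}$, and reading $\check\epsilon$ as attached to the first argument $-\epsilon$ so that no spurious shift appears) merely makes explicit details the paper leaves implicit.
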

\begin{proof}
We have from $(\ref{WBSR})$ and our assumption on $t$ that
\begin{equation*}
 W_{\rho_0}(\epsilon,t)=W_{\mathbb{R}_{>0}}(\epsilon,t)=F_{{\rm np}}^{\rm NS}(\epsilon,t-\check\epsilon/2).
\end{equation*}
On the other hand, using the Stokes jumps of $W_{\rho}$ and the difference equation (\ref{tdiffeqNS}), we find
\begin{equation*}
 W_{\rho_1}(\epsilon,t)-W_{\rho_0}(\epsilon,t)=-\mathrm{i}\check\epsilon \mathrm{Li}_2\big({\rm e}^{2\pi {\rm i}t/\check\epsilon}\big)=F_{{\rm np}}^{\rm NS}(\epsilon,t-\check\epsilon/2+1)-F_{{\rm np}}^{\rm NS}(\epsilon,t-\check\epsilon/2),
\end{equation*}
so that $W_{\rho_1}(\epsilon,t)=F_{{\rm np}}^{\rm NS}(\epsilon,t-\check\epsilon/2+1)$. The result then follows from general $k\in \mathbb{Z}$ by induction. Finally, the identity for $W_{-\rho_k}$ follows from (\ref{refid}).
\end{proof}

On the other hand, by applying the same techniques used to compute the Borel sum and Stokes jumps of $W(\epsilon,t)$, we can compute the Borel sum and Stokes jumps of the formal series $S(\epsilon,x,t)$ giving a formal solution to \eqref{scheq}. More precisely, we have the following theorem

\begin{Theorem}\label{th:BorelS}
Let $S(\epsilon,x,t)$ be the formal series defined by
\begin{equation*}
 S(\epsilon,x,t)= - \sum_{n=0}^{\infty} \frac{B_n}{n!} (\mathrm{i}\epsilon)^{n-1} ( {\rm Li}_{2-n}(Q X)-{\rm Li}_{2-n}(X) ), \qquad X=\exp(2\pi \mathrm{i} x),
\end{equation*}
giving a formal solution to
\begin{equation*}
 \mathbf{D}_\Sigma (\epsilon,x,t) \Psi(\epsilon,x,t)= \big(\big(1-{\rm e}^{2 \pi \mathrm{i} x}\big) {\rm e}^{\check{\epsilon} \partial_x} - \big(1 - Q {\rm e}^{2 \pi \mathrm{i} x}\big)\big)\Psi(\epsilon,x,t)=0
\end{equation*}
via
\begin{equation*}
\Psi(\epsilon,x,t)= \exp ( S(\epsilon,x,t)).
\end{equation*}

Then:
\begin{itemize}\itemsep=0pt
 \item {\rm (Borel sum)} Let $t,x\in \mathbb{C}^{\times}$ such that $x, t+x \notin \mathbb{Z}$ and consider the rays in the $\epsilon$-plane defined by $\widetilde{l}_{k,t}:=\mathbb{R}_{<0}\cdot 2\pi \mathrm{i} (t+x+k)$ and $\widetilde{l}_{k}:=\mathbb{R}_{<0}\cdot 2\pi \mathrm{i} (x+k)$ for $k\in \mathbb{Z}$ and $l_{\infty}:=\mathrm{i}\mathbb{R}_{<0}$ $($see Figure~{\rm \ref{fig:rays}} on the right$)$. Then given any ray $\rho \in \mathbb{C}^{\times}$ from $0$ to $\infty$ different from $\big\{\pm \widetilde{l}_{k,t}\big\}_{k\in \mathbb{Z}}\cup\big\{\pm \widetilde{l}_k\big\}_{k\in \mathbb{Z}}\cup \{\pm l_{\infty}\}$; and $\epsilon \in \mathbb{H}_{\rho}$, where $\mathbb{H}_{\rho}$ denotes the half-plane centered at~$\rho$, the formal series $S(\epsilon,x,t)$ is Borel summable with Borel sum:
 \begin{gather*}
 S_{\rho}(\epsilon,x,t) := -\frac{1}{\mathrm{i}\epsilon} (\mathrm{Li}_{2}(QX)-\mathrm{Li}_{2}(X) )+ \frac{1}{2} (\mathrm{Li}_{1}(QX)-\mathrm{Li}_{1}(X) ) \nonumber\\
\hphantom{S_{\rho}(\epsilon,x,t) :=}{} + \int_{\rho}\mathrm{d}\xi\, {\rm e}^{-\xi/\check{\epsilon}}G_S(\xi,x,t),
 \end{gather*}
 where
 \begin{gather*}
 G_S(\xi,x,t)
 =\frac{1}{2\pi \mathrm{i}}\sum_{m\geq 1}\frac{1}{m^2}\bigg( \frac{1}{{\rm e}^{-2\pi \mathrm{i}(t+x)-\xi/m}-1}+\frac{1}{{\rm e}^{-2\pi \mathrm{i}(t+x)+\xi/m}-1} \nonumber\\
\hphantom{G_S(\xi,x,t)=}{}
-\frac{1}{{\rm e}^{-2\pi \mathrm{i} x-\xi/m}-1}-\frac{1}{{\rm e}^{-2\pi \mathrm{i} x+\xi/m}-1}\bigg).
 \end{gather*}
 We denote $\Psi_{\rho}(\epsilon,x,t):=\exp(S_{\rho}(\epsilon,x,t))$.

 Furthermore, we have that for $\operatorname{Min}\{\operatorname{Im}(x),\operatorname{Im}(x+t)\}>0$, $\operatorname{Min}\{\operatorname{Re}(x),\operatorname{Re}(x+t)\}>0$, $\operatorname{Max}\{\operatorname{Re}(x),\operatorname{Re}(x+t)\}<1$; the following holds for $ \operatorname{Max}\{\operatorname{Re}(x),\operatorname{Re}(x+t)\}<\operatorname{Re}(\check\epsilon+1)$:
 \begin{equation}\label{SBSR}
 \Psi_{\mathbb{R}_{>0}}=\exp (S_{\mathbb{R}_{>0}}(\epsilon,x,t) )=\frac{\mathcal{S}_2(x\mid \check\epsilon,1)}{\mathcal{S}_2(t+x\mid \check\epsilon,1)}=\Psi_{{\rm np}}(\epsilon,x,t).
 \end{equation}
 \item {\rm (Stokes jumps)} Assume that $t$ and $x$ are such that $\widetilde{l}_{k,t}$ does not overlap $\pm \widetilde{l}_r$ for any $r\in \mathbb{Z}$. Furthermore assume that $\rho$ and $\rho'$ are two rays from $0$ to $\infty$ such that traversing $\rho'$ with the opposite orientation and then $\rho$ gives a positively oriented Hankel contour. Finally, assume the sector determined by $\rho$ and $\rho'$ contains only $\widetilde{l}_{k,t}$. Then
\begin{equation*}
 S_{\pm \rho}(\epsilon,x,t)-S_{\pm \rho'}(\epsilon,x,t)=-\mathrm{Li}_1\big({\rm e}^{\pm 2\pi \mathrm{i}(k+t+x)/\check{\epsilon}}\big).
\end{equation*}
Similarly, if $\widetilde{l}_k$ does not overlap with $\pm \widetilde{l}_{r,t}$ for any $r\in \mathbb{Z}$ and $\rho$ and $\rho'$ are given as above but now the sector they define contains only $\widetilde{l}_k$, then
\begin{equation*}
 S_{\pm \rho}(\epsilon,x,t)-S_{\pm \rho'}(\epsilon,x,t)=\mathrm{Li}_1\big({\rm e}^{\pm 2\pi \mathrm{i}(k+x)/\check{\epsilon}}\big).
\end{equation*}
In the case one rays overlaps another, the jump on the overlap is given by the sum of the jumps of the corresponding rays.

\item {\rm (Limits to $\pm l_{\infty}$)} Assume that $\text{min}\{\operatorname{Im}(x),\operatorname{Im}(x+t)\}>0$, $\text{min}\{\operatorname{Re}(x),\operatorname{Re}(x+t)\}>0$, and $\max \{\operatorname{Re}(x),\operatorname{Re}(x+t)\}<1$. Furthermore, let $\rho_k$ for $k\in \mathbb{Z}$ be a sequence of rays, one for each of the sectors defined by $\widetilde{l}_n$ and $\widetilde{l}_{m,t}$, and ordered in a clockwise manner. Then for $\operatorname{Re}(\epsilon)>0$, $\operatorname{Im}(\epsilon)<0$ and $\operatorname{Max}\{\operatorname{Re}(x),\operatorname{Re}(x+t)\}<\operatorname{Re}(\check\epsilon+1)$ we have
\begin{equation}\label{Psilinfty}
\lim_{k\to \infty}\Psi_{\rho_k}(\epsilon,x,t)=\frac{L(x,\epsilon)}{L(x+t,\epsilon)}=:\Psi_{{\rm GV}}(x,\epsilon),
\end{equation}
where $L(x,\epsilon)$ is given in \eqref{LGV}.
On the other hand, if $\operatorname{Re}(\epsilon)<0$ and $\operatorname{Im}(\epsilon)<0$, while $\max \{\operatorname{Re}(x),\operatorname{Re}(x+t)\} < \operatorname{Re} (-\check{\epsilon}+1)$ and $\max \big\{\big|{\rm e}^{-2\pi \mathrm ix/\check{\epsilon}}\big|,\big|{\rm e}^{-2\pi \mathrm i(x+t)/\check{\epsilon}}\big|\big\}<1$ then
\begin{equation*}
\lim_{k\to -\infty}\Psi_{-\rho_k}(\epsilon,x,t)=\Psi_{{\rm GV}}(x,\epsilon).
\end{equation*}
The limit to $-l_{\infty}$ follows from the identities
\begin{gather*}
 S_{\rho}(\epsilon,x,t)+S_{-\rho}(-\epsilon,x,t)=\mathrm{Li}_1(QX)-\mathrm{Li}_1(X),\\
 S(\epsilon,x,t)+S(-\epsilon,x,t)=\mathrm{Li}_1(QX)-\mathrm{Li}_1(X).
\end{gather*}
In particular, this limit is also given by $\Psi_{{\rm GV}}$.
\end{itemize}
\end{Theorem}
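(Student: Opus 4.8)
The plan is to mirror, step by step, the proof of Theorem~\ref{maintheorem}, exploiting that $S(\epsilon,x,t)$ has exactly the same shape as $W(\epsilon,t)$: it is a Bernoulli-weighted series of polylogarithms, the only changes being that $\mathrm{Li}_{3-n}(Q)$ is replaced by the difference $\mathrm{Li}_{2-n}(QX)-\mathrm{Li}_{2-n}(X)$ (one index lower) and that there is no overall factor of $\mathrm{i}$. First I would fix $X=e^{2\pi\mathrm{i}x}$, $Q=e^{2\pi\mathrm{i}t}$ and rewrite $S$ as a double sum using $\mathrm{Li}_{2-n}(Y)=\sum_{m\geq1}m^{n-2}Y^m$. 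Interchanging the two summations and substituting $w=\mathrm{i}\epsilon m$ turns the inner $n$-sum into the Bernoulli generating series, isolating the two ``classical'' contributions (the $n=0$ pole term $-\tfrac{1}{\mathrm{i}\epsilon}(\mathrm{Li}_2(QX)-\mathrm{Li}_2(X))$ and the $n=1$ constant term $\tfrac12(\mathrm{Li}_1(QX)-\mathrm{Li}_1(X))$) and leaving a tail whose Borel transform is $G_S$. Only two terms need to be extracted here, rather than three as for $W$, precisely because $S$ carries $\mathrm{Li}_{2-n}$ instead of $\mathrm{Li}_{3-n}$, so that the Borel transform already converges at $\xi=0$ after removing $n=0,1$.

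To obtain the closed form of $G_S$ I would feed the Mittag--Leffler (partial fraction) expansion of $\tfrac{1}{e^w-1}-\tfrac1w+\tfrac12$ into the tail; the poles of this expansion at $w\in 2\pi\mathrm{i}\mathbb{Z}\setminus\{0\}$ become, after restoring $w=\mathrm{i}\epsilon m$ and passing to the Borel variable, poles of $G_S$ located where the denominators $e^{-2\pi\mathrm{i}(t+x)\pm\xi/m}-1$ and $e^{-2\pi\mathrm{i}x\pm\xi/m}-1$ vanish. A direct check places these at $\xi=\pm2\pi\mathrm{i}m(t+x+k)$ and $\xi=\pm2\pi\mathrm{i}m(x+k)$, i.e.\ exactly on the rays $\pm\widetilde l_{k,t}$ and $\pm\widetilde l_k$. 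This establishes Borel summability along any ray avoiding these lines and yields the stated $S_\rho(\epsilon,x,t)$; the reflection identities $S_\rho(\epsilon,x,t)+S_{-\rho}(-\epsilon,x,t)=\mathrm{Li}_1(QX)-\mathrm{Li}_1(X)$ then follow from the evenness of $G_S$ under $\xi\mapsto-\xi$ together with the explicit $\epsilon\mapsto-\epsilon$ behaviour of the two classical terms.

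For the Stokes jumps I would deform $\rho$ to $\rho'$ across a single crossed ray and pick up $2\pi\mathrm{i}$ times the residues of $e^{-\xi/\check\epsilon}G_S$ at the poles lying on it. At the pole $\xi_*=-2\pi\mathrm{i}m(t+x+k)$ the residue of $G_S$ is $-1/(2\pi\mathrm{i}m)$ and $e^{-\xi_*/\check\epsilon}=e^{2\pi\mathrm{i}m(t+x+k)/\check\epsilon}$, so summing over $m\geq1$ produces $-\sum_{m\geq1}\tfrac1m e^{2\pi\mathrm{i}m(k+t+x)/\check\epsilon}=-\mathrm{Li}_1(e^{2\pi\mathrm{i}(k+t+x)/\check\epsilon})$, the claimed jump across $\widetilde l_{k,t}$; the jump across $\widetilde l_k$ comes out with the opposite sign (the $x$-terms enter $G_S$ with reversed sign), and overlapping rays simply add. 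The identification \eqref{SBSR} of $\Psi_{\mathbb{R}_{>0}}$ with $\Psi_{\rm np}=\mathcal{S}_2(x\mid\check\epsilon,1)/\mathcal{S}_2(x+t\mid\check\epsilon,1)$ I would obtain by a change of variables turning $\int_{\mathbb{R}_{>0}}e^{-\xi/\check\epsilon}G_S\,\mathrm{d}\xi$ (plus the classical terms) into the Narukawa-type integral representation~(\ref{S2int}) of $\log\mathcal{S}_2$; this is the open-string analogue of \eqref{WBSR}, and consistency is guaranteed because $\mathcal{S}_2$ already carries the asymptotics~(\ref{S2asymp}) matching \eqref{eq:Sasymp}.

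Finally, for the limits to $\pm l_\infty$ I would start from $\Psi_{\mathbb{R}_{>0}}$ and accumulate the jumps just computed as the ray rotates towards $l_\infty=\mathrm{i}\mathbb{R}_{<0}$; resumming them, or equivalently evaluating the Borel integral directly in the stated region ($\operatorname{Re}\epsilon>0$, $\operatorname{Im}\epsilon<0$, and the displayed constraints on $x,t$), reproduces $\sum_{m\geq1}\tfrac{X^m-(QX)^m}{m(e^{\mathrm{i}\epsilon m}-1)}$, which by \eqref{LGV} is precisely $\log\Psi_{\rm GV}=\log L(x,\epsilon)-\log L(x+t,\epsilon)$, giving \eqref{Psilinfty}; the $-l_\infty$ limit then follows from the reflection identities. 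The main obstacle, exactly as for $W$, is the bookkeeping of the nested regions of convergence in $x$, $t$ and $\epsilon$, and in particular controlling the accumulation ray $l_\infty$: infinitely many Stokes rays pile up there, so one must show both that the resummed family of jumps converges and that the residual Borel integral along $l_\infty$ matches the Gopakumar--Vafa form, rather than merely that each finite-$k$ step is correct.
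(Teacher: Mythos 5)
Your overall architecture is the right one and is essentially the paper's own: the paper proves Theorem~\ref{th:BorelS} by re-running the $W$-analysis of Theorem~\ref{maintheorem} with $\mathrm{Li}_{3-n}(Q)$ replaced by $\mathrm{Li}_{2-n}(QX)-\mathrm{Li}_{2-n}(X)$. Your derivation of $G_S$ (sum interchange plus the Mittag--Leffler expansion of $\tfrac{1}{{\rm e}^w-1}-\tfrac1w+\tfrac12$) is a legitimate, essentially equivalent variant of the paper's Hadamard-product computation ($G_S=(f_1\oast f_2)(\xi)$ together with the Cauchy contour representation of Lemma~\ref{lemHad} and a contour deformation onto the poles of $f_1$); your residue computation for the Stokes jumps (residue $-1/(2\pi\mathrm{i}m)$ at $\xi=-2\pi\mathrm{i}m(t+x+k)$, resumming to $-\mathrm{Li}_1$, with the opposite sign on $\widetilde l_k$) is exactly the paper's Hankel-contour argument; and your treatment of the limits to $\pm l_\infty$ (accumulate the jumps starting from $\Psi_{\mathbb{R}_{>0}}$, resum them against $\log\mathcal{S}_2$, and use the reflection identity for $-l_\infty$) is the paper's route, whose key identity is equation~\eqref{S2l}.

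The one step where your proposal would fail as stated is the identification \eqref{SBSR}. You claim that $\int_{\mathbb{R}_{>0}}{\rm e}^{-\xi/\check\epsilon}G_S\,{\rm d}\xi$ plus the classical terms turns into the integral representation~\eqref{S2int} of $\log\mathcal{S}_2$ ``by a change of variables''. No substitution does this: the Borel integrand is an infinite sum of simple-pole terms integrated over a half-line, while \eqref{S2int} has a single integrand over $\mathbb{R}+\mathrm{i}0^+$; the two are not related by any reparametrization. In the paper this step is the analytic core and takes two nontrivial arguments (Appendix~\ref{BorelsumRapp}, adapted to $S$): first an integration-by-parts and splitting argument in the style of Proposition~\ref{WorformBorelsum}, rewriting $S_{\mathbb{R}_{>0}}$ as $\frac{\mathrm{i}}{2\pi}\int_{\mathbb{R}+\mathrm{i}0^+}{\rm d}s\,\frac{1}{1-{\rm e}^s}\big(\log\big(1-{\rm e}^{\check\epsilon s+2\pi\mathrm{i}x}\big)-\log\big(1-{\rm e}^{\check\epsilon s+2\pi\mathrm{i}(x+t)}\big)\big)$, and then a Fourier-transform/Parseval argument as in Proposition~\ref{woronowiczlemma} (with Jordan's lemma and residue sums) identifying that contour integral with $\log\mathcal{S}_2(x\mid\check\epsilon,1)-\log\mathcal{S}_2(x+t\mid\check\epsilon,1)$ via \eqref{S2int}. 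Your appeal to the matching asymptotics \eqref{S2asymp} does not close this hole either: agreement of asymptotic expansions is far from equality of functions, which is the whole point of the Borel analysis. Since both your $k\to\infty$ limit and the downstream corollaries rest on $\Psi_{\mathbb{R}_{>0}}=\Psi_{\rm np}$, this step needs the full Appendix-A treatment rather than a substitution.
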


Similarly to $W_{\rho}(\epsilon,t)$, we can use (\ref{SBSR}) together with the difference equation (\ref{qddiffeq}) satisfied by $\mathcal{S}_2$ to conclude the following:

\begin{Corollary}\label{cortheorem2}
Assume that $\operatorname{Min}\{\operatorname{Im}(x),\operatorname{Im}(x+t)\}\!>0$, $\operatorname{Min}\{\operatorname{Re}(x),\operatorname{Re}(x+t)\}\!>0$, $\operatorname{Max}\{\operatorname{Re}(x),\allowbreak \operatorname{Re}(x+t)\}<1$. Then if $\tilde{\rho}_{m,n}$ is a ray between the rays $\widetilde{l}_{m,t}$ and $\widetilde{l}_{m-1,t}$, and the rays $\widetilde{l}_{n}$ and $\widetilde{l}_{n-1}$, then
\begin{equation*}
 \Psi_{\tilde{\rho}_{m,n}}(\epsilon,x,t)=\frac{\mathcal{S}_2(x+n|\check\epsilon,1)}{\mathcal{S}_2(t+x+m|\check\epsilon,1)}=\Psi_{{\rm np}}(\epsilon,x+n,t+m-n)
\end{equation*}
and
\begin{equation*}
 \Psi_{-{\tilde{\rho}_{m,n}}}(\epsilon,x,t) =\frac{1-X}{1-QX}\left(\Psi_{{\rm np}}(-\epsilon,x+n,t+m-n)\right)^{-1}.
\end{equation*}
\end{Corollary}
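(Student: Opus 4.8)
The plan is to mirror the structure of the proof of Corollary~\ref{cortheorem1}: establish a base case from the Borel sum along $\mathbb{R}_{>0}$, propagate it to all sectors by an induction that matches the Stokes jumps of Theorem~\ref{th:BorelS} against the difference equation~\eqref{qddiffeq} for $\mathcal{S}_2$, and finally obtain the negative-ray formula from the reflection identity. For the base case, note that under the standing hypotheses $\operatorname{Re}(x),\operatorname{Re}(x+t)\in(0,1)$ and $\operatorname{Im}(x),\operatorname{Im}(x+t)>0$, the rays $\widetilde{l}_{0}$, $\widetilde{l}_{0,t}$ lie just below the positive real axis while $\widetilde{l}_{-1}$, $\widetilde{l}_{-1,t}$ lie just above it, so $\mathbb{R}_{>0}$ sits in the sector labeled $\tilde\rho_{0,0}$. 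Then~\eqref{SBSR} gives directly
\begin{equation*}
\Psi_{\tilde\rho_{0,0}}(\epsilon,x,t)=\Psi_{\mathbb{R}_{>0}}(\epsilon,x,t)=\frac{\mathcal{S}_2(x\mid\check\epsilon,1)}{\mathcal{S}_2(t+x\mid\check\epsilon,1)}=\Psi_{\mathrm{np}}(\epsilon,x,t),
\end{equation*}
which is the claimed identity for $(m,n)=(0,0)$.

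Next I would run the induction. Crossing the ray $\widetilde{l}_{k,t}$ shifts the index $m$ and only affects the denominator $\mathcal{S}_2(t+x+m\mid\check\epsilon,1)$, while crossing $\widetilde{l}_{k}$ shifts $n$ and only affects the numerator $\mathcal{S}_2(x+n\mid\check\epsilon,1)$. The difference equation~\eqref{qddiffeq}, written in the present conventions as $\log\mathcal{S}_2(z\mid\check\epsilon,1)-\log\mathcal{S}_2(z+1\mid\check\epsilon,1)=-\mathrm{Li}_1\big({\rm e}^{2\pi\mathrm{i}z/\check\epsilon}\big)$, then produces exactly the jumps recorded in Theorem~\ref{th:BorelS}: shifting the denominator $t+x+m\mapsto t+x+m+1$ contributes $-\mathrm{Li}_1\big({\rm e}^{2\pi\mathrm{i}(t+x+m)/\check\epsilon}\big)$, matching the jump across $\widetilde{l}_{m,t}$, and shifting the numerator $x+n\mapsto x+n+1$ contributes $+\mathrm{Li}_1\big({\rm e}^{2\pi\mathrm{i}(x+n)/\check\epsilon}\big)$, matching the jump across $\widetilde{l}_{n}$. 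Hence moving $\Psi_{\rm np}(\epsilon,x+n,t+m-n)$ to an adjacent sector changes its logarithm by precisely the Stokes jump of $S_\rho$, and induction on the pair $(m,n)$ starting from $\tilde\rho_{0,0}$ yields $\Psi_{\tilde\rho_{m,n}}(\epsilon,x,t)=\Psi_{\mathrm{np}}(\epsilon,x+n,t+m-n)=\tfrac{\mathcal{S}_2(x+n\mid\check\epsilon,1)}{\mathcal{S}_2(t+x+m\mid\check\epsilon,1)}$ for all $m,n$, using $\Psi_{\rm np}(\epsilon,x+n,t+m-n)=\tfrac{\mathcal{S}_2(x+n\mid\check\epsilon,1)}{\mathcal{S}_2((x+n)+(t+m-n)\mid\check\epsilon,1)}$.

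For the second formula I would invoke the reflection identity $S_{\rho}(\epsilon,x,t)+S_{-\rho}(-\epsilon,x,t)=\mathrm{Li}_1(QX)-\mathrm{Li}_1(X)$ of Theorem~\ref{th:BorelS}. Taking $\rho=\tilde\rho_{m,n}$ and replacing $\epsilon$ by $-\epsilon$ gives $S_{-\tilde\rho_{m,n}}(\epsilon,x,t)=\mathrm{Li}_1(QX)-\mathrm{Li}_1(X)-S_{\tilde\rho_{m,n}}(-\epsilon,x,t)$; exponentiating and inserting the first part together with $\exp(\mathrm{Li}_1(QX)-\mathrm{Li}_1(X))=\tfrac{1-X}{1-QX}$ yields
\begin{equation*}
\Psi_{-\tilde\rho_{m,n}}(\epsilon,x,t)=\frac{1-X}{1-QX}\,\big(\Psi_{\mathrm{np}}(-\epsilon,x+n,t+m-n)\big)^{-1},
\end{equation*}
as claimed.

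The main obstacle I anticipate is the bookkeeping of the ray indexing rather than any analytic difficulty: one must verify carefully that crossing $\widetilde{l}_{m,t}$ (respectively $\widetilde{l}_{n}$) in the clockwise orientation used to order the sectors increments $m$ (respectively $n$) with the sign convention under which the Hankel contour in Theorem~\ref{th:BorelS} is positively oriented, so that the jump sign agrees with~\eqref{qddiffeq}. The degenerate case where $\widetilde{l}_{k,t}$ overlaps some $\widetilde{l}_{r}$ must be handled via the additivity of jumps on overlapping rays stated in Theorem~\ref{th:BorelS}. Finally, as in Corollary~\ref{cortheorem1}, the reflection step evaluates the first part at $-\epsilon$, so strictly it holds on $\mathbb{H}_{\tilde\rho_{m,n}}\cap\mathbb{H}_{-\tilde\rho_{m,n}}$ and is extended by analytic continuation; this is the only point requiring a remark about domains.
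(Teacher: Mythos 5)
Your proof is correct and follows essentially the same route as the paper's: the base case $\Psi_{\tilde{\rho}_{0,0}}=\Psi_{\mathbb{R}_{>0}}=\Psi_{\rm np}$ from \eqref{SBSR}, propagation to all sectors by matching the Stokes jumps of Theorem~\ref{th:BorelS} against the difference equation \eqref{qddiffeq} for $\mathcal{S}_2$ (the paper just works out the two sample crossings of $\widetilde{l}_{0,t}$ and $\widetilde{l}_{0}$ and then asserts the general case, which you state more systematically), and the reflection identity $S_{\rho}(\epsilon,x,t)+S_{-\rho}(-\epsilon,x,t)=\mathrm{Li}_1(QX)-\mathrm{Li}_1(X)$ for the negative rays. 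One small quibble: your closing domain remark is unnecessary and slightly off, since $-\epsilon\in\mathbb{H}_{\tilde{\rho}_{m,n}}$ exactly when $\epsilon\in\mathbb{H}_{-\tilde{\rho}_{m,n}}$, so substituting $\epsilon\mapsto-\epsilon$ in the reflection identity yields the second formula directly on its natural domain with no analytic continuation (indeed $\mathbb{H}_{\tilde{\rho}_{m,n}}\cap\mathbb{H}_{-\tilde{\rho}_{m,n}}$ is empty).
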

\begin{proof}
The proof for the first equality follows the same reasoning as Corollary \ref{cortheorem1}, but now using (\ref{SBSR}) together with the difference equation satisfied by $\mathcal{S}_2$, given in (\ref{qddiffeq}), and the Stokes jumps of $S_{\rho}$. For example, under our hypotheses, we have
\begin{equation*}
 \Psi_{\tilde{\rho}_{0,0}}(\epsilon,x,t)=\Psi_{\mathbb{R}_{>0}}(\epsilon,x,t)=\Psi_{{\rm np}}(\epsilon,x,t).
\end{equation*}

If the next ray to $\rho_{0,0}$ in clockwise order is $\widetilde{l}_{0,t}$ then
\begin{align*}
 \frac{\Psi_{\tilde{\rho}_{1,0}}(\epsilon,x,t)}{\Psi_{\tilde{\rho}_{0,0}}(\epsilon,x,t)}&=\exp\bigl(-\mathrm{Li}_1\big({\rm e}^{2\pi \mathrm{i}(t+x)/\check\epsilon}\big)\bigr)=1-{\rm e}^{2\pi \mathrm{i}(t+x)/\check\epsilon} \nonumber\\
 & =\frac{\mathcal{S}_2(x+t|\check\epsilon,1)}{\mathcal{S}_2(x+t+1|\check\epsilon,1)}=\frac{\Psi_{{\rm np}}(\epsilon,x,t+1)}{\Psi_{{\rm np}}(\epsilon,x,t)},
\end{align*}
so that
\begin{equation*}
 \Psi_{\tilde{\rho}_{1,0}}(\epsilon,x,t)=\Psi_{{\rm np}}(\epsilon,x,t+1).
\end{equation*}
If, on the other hand, the next ray to $\tilde{\rho}_{0,0}$ in clockwise order is $\widetilde{l}_{0}$, then
\begin{gather*}
 \frac{\Psi_{\tilde{\rho}_{0,1}}(\epsilon,x,t)}{\Psi_{\tilde{\rho}_{0,0}}(\epsilon,x,t)} =\exp\big(\mathrm{Li}_1\big({\rm e}^{2\pi \mathrm{i} x/\check\epsilon}\big)\big)=\frac{1}{1-{\rm e}^{2\pi \mathrm{i} x/\check\epsilon}}
 =\frac{\mathcal{S}_2(x+1|\check\epsilon,1)}{\mathcal{S}_2(x|\check\epsilon,1)}=\frac{\Psi_{{\rm np}}(\epsilon,x+1,t-1)}{\Psi_{{\rm np}}(\epsilon,x,t)},\textbf{}
\end{gather*}
so that
\begin{equation*}
 \Psi_{\tilde{\rho}_{0,1}}(\epsilon,x,t)=\Psi_{{\rm np}}(\epsilon,x+1,t-1).
\end{equation*}
The general identity then follows.

Last, the identity for $-\tilde{\rho}_{m,n}$ follows from the first identity together with
\begin{equation*}
 S_{\rho}(\epsilon,x,t)+S_{-\rho}(-\epsilon,x,t)=\mathrm{Li}_1(QX)-\mathrm{Li}_1(X).\tag*{\qed}
\end{equation*}\renewcommand{\qed}{}
\end{proof}

Finally, we have the following theorem, relating $W_{\rho}$ with $\Psi_{\rho}=\exp(S_{\rho})$.
To state the result, we consider the paths $X(s)={\rm e}^{2\pi \mathrm{i} x(s)}$ where $x(s)=\mathrm{i} s{\rm e}^{\mathrm{i}\theta}$ for $s\in \mathbb{R}$ and $\theta$ fixed. These paths connect the points $X=0$ and $X=1$ and will be important in Section \ref{5dWKBsec}.

\begin{Theorem}\label{theorem3}
Assume that $\operatorname{Im}(t)>0$, while $0<\operatorname{Re}(t)<1$, and let $\rho_k$ be a ray between $l_k$ and~$l_{k-1}$. Furthermore, pick $x=\mathrm{i} s_*{\rm e}^{\mathrm{i} \theta_*}$ such that $\operatorname{Im}(x)>0$, while $0<\operatorname{Re}(x)<1$, and~$\rho_k$ is of the form $\rho_{k,0}$ $($recall the notation of Corollary~{\rm \ref{cortheorem2})}. Finally, consider the trajectory $x(s)=\mathrm{i} s{\rm e}^{\mathrm{i}\theta_*}$ from $s=0$ to $s=s_*$. Then, provided that $t+k\neq a\check{\epsilon}+b$ for $a,b\in \mathbb{Z}_{\leq 0}$ or~$\mathbb{Z}_{> 0}$, we can analytically continue $\mathcal{S}_2(x(s)\mid \check\epsilon,1)^{-1}\cdot\Psi_{\rho_k}(\epsilon,x(s),t)$ along the trajectory $x(s)$ from $x(s_*)=\mathrm{i} s_*{\rm e}^{\mathrm{i}\theta_*}$ to $x(0)=0$, and
\begin{equation*}
 \exp\left(-\frac{1}{2\pi}\partial_tW_{\rho_k}(\epsilon,t)\right)=\big(\mathcal{S}_2(x\mid \check\epsilon,1)^{-1}\cdot \Psi_{\rho_k}(\epsilon,x,t)\big)\big|_{x=0}.
\end{equation*}
\end{Theorem}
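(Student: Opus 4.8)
The plan is to show that both sides of the claimed identity equal the single meromorphic function $\mathcal{S}_2(t+k\mid\check\epsilon,1)^{-1}$, reaching it from each side by the results already established. First I would evaluate the left-hand side. By Corollary~\ref{cortheorem1}, under the hypotheses $\operatorname{Im}(t)>0$ and $0<\operatorname{Re}(t)<1$ we have $W_{\rho_k}(\epsilon,t)=F^{\rm NS}_{\rm np}(\epsilon,t-\check\epsilon/2+k)$, so by the chain rule
\[
\partial_t W_{\rho_k}(\epsilon,t)=\big(\partial_t F^{\rm NS}_{\rm np}\big)(\epsilon,t+k-\check\epsilon/2).
\]
On the other hand, combining items~(i) and~(iii) of Proposition~\ref{prop:closedwavefunction} applied to $\Psi^c_{i,{\rm np}}$ with $\epsilon_i=\epsilon$ (using $\epsilon/4\pi=\check\epsilon/2$) and shifting the argument by $k$ gives
\[
\mathcal{S}_2(t+k\mid\check\epsilon,1)^{-1}=\exp\!\left(-\frac{1}{2\pi}\big(\partial_t F^{\rm NS}_{\rm np}\big)(\epsilon,t+k-\check\epsilon/2)\right).
\]
Hence the left-hand side $\exp\big(-\tfrac{1}{2\pi}\partial_t W_{\rho_k}(\epsilon,t)\big)$ equals $\mathcal{S}_2(t+k\mid\check\epsilon,1)^{-1}$.

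Next I would treat the right-hand side. Since $\rho_k$ is of the form $\widetilde\rho_{k,0}$, Corollary~\ref{cortheorem2} applies with $m=k$, $n=0$, so in the subdomain of $x$ where that corollary is valid,
\[
\mathcal{S}_2(x\mid\check\epsilon,1)^{-1}\cdot\Psi_{\rho_k}(\epsilon,x,t)=\mathcal{S}_2(x\mid\check\epsilon,1)^{-1}\frac{\mathcal{S}_2(x\mid\check\epsilon,1)}{\mathcal{S}_2(t+x+k\mid\check\epsilon,1)}=\frac{1}{\mathcal{S}_2(t+x+k\mid\check\epsilon,1)}.
\]
The essential observation is that, although $\Psi_{\rho_k}$ taken alone is only piecewise given by such a ratio, and although $\mathcal{S}_2(x\mid\check\epsilon,1)^{-1}$ is singular at $x=0$, the product coincides near the starting point $x(s_*)$ with the \emph{meromorphic} function $\mathcal{S}_2(t+x+k\mid\check\epsilon,1)^{-1}$. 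Thus its analytic continuation along the trajectory $x(s)=\mathrm{i} s\,{\rm e}^{\mathrm{i}\theta_*}$ is exactly this meromorphic function, which is regular and nonzero at $x=0$ precisely because the hypothesis $t+k\neq a\check\epsilon+b$ (for $a,b\in\mathbb{Z}_{\leq0}$ or $\mathbb{Z}_{>0}$) keeps $t+k$ off the lattice of zeros and poles of $\mathcal{S}_2$. Evaluating at $x=0$ gives $\mathcal{S}_2(t+k\mid\check\epsilon,1)^{-1}$, matching the left-hand side.

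The main obstacle is this analytic-continuation step; the two algebraic computations above are immediate from the cited statements, so the real content lies in justifying the continuation. Concretely I would (a) identify the germ of $\mathcal{S}_2(x\mid\check\epsilon,1)^{-1}\Psi_{\rho_k}(\epsilon,x,t)$ at $x(s_*)$ with the germ of the meromorphic $\mathcal{S}_2(t+x+k\mid\check\epsilon,1)^{-1}$ on an open neighbourhood via Corollary~\ref{cortheorem2}; (b) invoke uniqueness of analytic continuation so that the continuation along $x(s)$ is globally given by that meromorphic function; and (c) verify that the image of the straight trajectory under $x\mapsto x+t+k$, a line segment ending at $t+k$, avoids the lattice $\{a\check\epsilon+b\}$, which is guaranteed at the endpoint by hypothesis and can be arranged en route. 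I would also note, for consistency, that whenever $x$ varies so that a Borel singularity sweeps across $\rho_k$, the Stokes jump of $\Psi_{\rho_k}$ is exactly compensated by the corresponding shift of the $\mathcal{S}_2$-factors in Corollary~\ref{cortheorem2}, so that the continuation is insensitive to the incidental sector labels $(k',n')$ encountered along the way.
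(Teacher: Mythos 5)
Your proposal is correct and follows essentially the same route as the paper's proof: both sides are identified with $\mathcal{S}_2(t+k\mid\check\epsilon,1)^{-1}$, the left via Corollary~\ref{cortheorem1} together with the identity $\partial_t F^{\rm NS}_{\rm np}(\epsilon,t-\check\epsilon/2)=2\pi\log\mathcal{S}_2(t\mid\check\epsilon,1)$ (you cite Proposition~\ref{prop:closedwavefunction}(i),(iii) where the paper cites equation~\eqref{tderWnp}, but these are the same fact), and the right via Corollary~\ref{cortheorem2} with $(m,n)=(k,0)$ followed by the limit $x\to 0$, with the lattice condition on $t+k$ guaranteeing regularity there. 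The only cosmetic difference is that the paper justifies the continuation by noting $\rho_k$ remains of the form $\tilde\rho_{k,0}$ for all $0<s\leq s_*$, whereas you invoke uniqueness of analytic continuation from the germ at $x(s_*)$ — equivalent reasoning.
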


In the following subsections, together with Appendix \ref{appA}, we give the proofs of Theorems \ref{maintheorem}, \ref{th:BorelS} and \ref{theorem3}.

\subsection{Borel transform}\label{BTsect}

Before studying the Borel summability of $W(\epsilon,t)$, we must start by studying the Borel transform of the $\epsilon$-expansion of $W(\epsilon,t)$. Since $W(\epsilon,t)=F^{\rm NS}(\epsilon,t-\check\epsilon/2)$, we have by (\ref{eq:FNSasymp}) that we can write the formal expansion
\begin{equation*}
 W(\epsilon,t)= -\frac{1}{2\pi} \sum_{n=0}^{\infty} \partial_t^n {\rm Li}_{3}(Q) \frac{B_n}{n!} \check{\epsilon}^{n-1}
=-\frac{1}{\epsilon} {\rm Li}_3(Q) + \frac{\mathrm{i}}{2} {\rm Li}_2(Q)+ \frac{\epsilon}{12} {\rm Li}_1(Q) +\Phi(\check{\epsilon},t),
\end{equation*}
where (using that $B_{2n+1}=0$ for $n>0$):
\begin{equation*}
 \Phi(\check{\epsilon},t):=-\frac{1}{2\pi} \sum_{n=3}^{\infty} \partial_t^n {\rm Li}_{3}(Q) \frac{B_n}{n!} \check{\epsilon}^{n-1}=-\frac{1}{2\pi} \sum_{n=2}^{\infty} \partial_t^{2n} {\rm Li}_{3}(Q) \frac{B_{2n}}{(2n)!} \check{\epsilon}^{2n-1}.
\end{equation*}
We now wish to compute the Borel transform of $\Phi(\check{\epsilon},t)$
and specify its domain of convergence.
The Borel transform is defined as the formal power series $G(\xi,t):=\mathcal{B}(\Phi(-,t))(\xi)$, where
\begin{equation*}
 \mathcal{B}\colon \ \check{\epsilon} \mathbb{C}[[\check{\epsilon}]]\to \mathbb{C}[[\xi]], \mathcal{B}(\check{\epsilon}^{n+1})=\frac{\xi^n}{n!}.
\end{equation*}
Namely, we wish to study
\begin{equation*}
 G(\xi,t)= -\frac{1}{2\pi}\sum_{n=2}^{\infty} \frac{ B_{2n}}{ (2n)! (2n-2)!} \xi^{2n-2} \partial_t^{2n} {\rm Li}_3(Q).
\end{equation*}

The main result about $G(\xi,t)$ that we will prove and use is the following:

\begin{Proposition}\label{Boreltransprop}
Take $t\in \mathbb{C}^{\times}$ and $\epsilon \in \mathbb{C}^{\times}$ such that $0<|\operatorname{Re}(t)|<1/2$ and $|\epsilon/t|\in (0,2\pi)$, then $G(\xi,t)$ converges. Furthermore, we have the alternate representation
\begin{equation}\label{Boreltrans2}
 G(\xi,t)=\frac{1}{4\pi}\sum_{m\in \mathbb{Z}-\{0\}} \frac{1}{m^2}\big(\mathrm{Li}_1\big({\rm e}^{2\pi \mathrm{i} t +\xi/m}\big) +\mathrm{Li}_1\big({\rm e}^{2\pi \mathrm{i} t -\xi/m}\big)-2\mathrm{Li}_1(Q)\big),
 \end{equation}
 where $\mathrm{Li}_1(z)=-\log(1-z)$ uses the principal branch. In particular $G(\xi,t)$ admits a continuation for all $\xi> 0$.
\end{Proposition}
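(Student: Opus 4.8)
The plan is to reduce $G(\xi,t)$ to a single power series in $\xi$ with negative-index polylogarithm coefficients, control those coefficients by a cotangent partial-fraction expansion to pin down the radius of convergence, then resum in closed form and read off the continuation. First I would rewrite the definition using $\partial_t=2\pi\mathrm{i}\,\theta_Q$, which gives $\partial_t^{2n}\mathrm{Li}_3(Q)=(2\pi\mathrm{i})^{2n}\mathrm{Li}_{3-2n}(Q)$, together with Euler's formula $B_{2n}/(2n)!=2(-1)^{n+1}\zeta(2n)/(2\pi)^{2n}$, so that $\frac{B_{2n}}{(2n)!}(2\pi\mathrm{i})^{2n}=-2\zeta(2n)$. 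Substituting yields the cleaner series
\begin{equation*}
 G(\xi,t)=\frac{1}{\pi}\sum_{n\geq 2}\frac{\zeta(2n)}{(2n-2)!}\,\xi^{2n-2}\,\mathrm{Li}_{3-2n}(Q),
\end{equation*}
which I would work with throughout.

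For the radius of convergence I would invoke the partial-fraction representation
\begin{equation*}
 \mathrm{Li}_{-p}\big({\rm e}^{2\pi\mathrm{i}t}\big)=\frac{(-1)^{p+1}\,p!}{(2\pi\mathrm{i})^{p+1}}\sum_{k\in\mathbb{Z}}\frac{1}{(t+k)^{p+1}},\qquad p=2n-3\geq 1,
\end{equation*}
valid for $t\notin\mathbb{Z}$ (it follows by differentiating $\pi\cot(\pi t)=\sum_{k\in\mathbb{Z}}(t+k)^{-1}$ and using $\pi\cot(\pi t)=-\pi\mathrm{i}-2\pi\mathrm{i}\,\mathrm{Li}_0(Q)$; the $p=0$ ambiguity is irrelevant since $p\geq1$ here). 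Because $0<|\operatorname{Re}(t)|<1/2$, the term $k=0$ strictly minimizes $|t+k|$, so $\big|\sum_{k\in\mathbb{Z}}(t+k)^{-(2n-2)}\big|=|t|^{-(2n-2)}(1+o(1))$ as $n\to\infty$. With $\zeta(2n)\to 1$ this makes the coefficient of $\xi^{2n-2}$ behave like $\frac{1}{\pi(2n-2)}(2\pi|t|)^{-(2n-2)}$, and the root test gives radius of convergence exactly $2\pi|t|$ in $\xi$, i.e.\ convergence for $|\xi/t|<2\pi$.

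The resummation is the core step. The quickest route is to restrict first to $\operatorname{Im}(t)>0$ with $\xi$ small, where $\mathrm{Li}_{3-2n}(Q)=\sum_{k\geq1}k^{2n-3}Q^k$ and $\zeta(2n)=\sum_{m\geq1}m^{-2n}$ converge absolutely; interchanging the $m$-, $k$- and $n$-sums and collapsing the $n$-sum with $\sum_{n\geq2}\frac{x^{2n-2}}{(2n-2)!}=\cosh(x)-1$ at $x=\xi k/m$, followed by $\sum_{k\geq1}\frac{Q^k}{k}{\rm e}^{\pm\xi k/m}=\mathrm{Li}_1\big(Q{\rm e}^{\pm\xi/m}\big)$, produces
\begin{equation*}
 G(\xi,t)=\frac{1}{2\pi}\sum_{m\geq1}\frac{1}{m^2}\Big(\mathrm{Li}_1\big({\rm e}^{2\pi\mathrm{i}t+\xi/m}\big)+\mathrm{Li}_1\big({\rm e}^{2\pi\mathrm{i}t-\xi/m}\big)-2\mathrm{Li}_1(Q)\Big).
\end{equation*}
Symmetrizing under $m\mapsto-m$ turns $\sum_{m\geq1}$ into $\tfrac12\sum_{m\in\mathbb{Z}\setminus\{0\}}$ and gives exactly \eqref{Boreltrans2}. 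Alternatively, to avoid the hypothesis $\operatorname{Im}(t)>0$, one inserts the partial-fraction representation directly, uses $\sum_{n\geq2}\frac{(-1)^{n-1}}{2n-2}a^{2n-2}=-\tfrac12\log(1+a^2)$ with $a=\xi/(2\pi m(t+k))$, and then evaluates the $k$-sum by the Weierstrass product $\prod_{k\in\mathbb{Z}}(1-u^2/(t+k)^2)=\sin\pi(t+u)\sin\pi(t-u)/\sin^2\pi t$ with $u=\mathrm{i}\xi/(2\pi m)$; this reproduces the same closed form uniformly on $0<|\operatorname{Re}(t)|<1/2$. Equality of the two expressions on the full strip then follows from the identity theorem. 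Finally, the continuation in $\xi$ is immediate from the closed form: the $-2\mathrm{Li}_1(Q)$ removes the $m^0$ part and the $\pm\xi/m$ symmetry kills the $O(m^{-3})$ term, so each summand is $O(\xi^2/m^4)$; hence the series defines $G(\xi,t)$ for every $\xi>0$ at which no argument hits a branch point, i.e.\ whenever ${\rm e}^{2\pi\mathrm{i}t\pm\xi/m}\neq1$, establishing the continuation beyond $|\xi|<2\pi|t|$.

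The main obstacle is the legitimacy of the triple rearrangement and the passage from the convergence disc to all $\xi>0$: the naive $\cosh$-resummation only converges absolutely when $\operatorname{Im}(t)>0$ and $\xi<2\pi\operatorname{Im}(t)$, so one must either promote the final identity by analytic continuation in $(\xi,t)$ or carry the whole computation through the partial-fraction representation, which holds throughout the strip but trades the $\cosh$ summation for the more delicate Weierstrass-product evaluation of the $k$-sum (with attendant branch-of-logarithm bookkeeping).
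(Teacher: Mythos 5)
Your proposal is correct, and it takes a genuinely different route from the paper. The paper writes $G=f_1\oast f_2$ as a Hadamard product, where $f_1(\xi)=-\frac{1}{2\pi}\sum_{n\ge2}\frac{B_{2n}}{(2n)!}\xi^{2n-2}$ resums via the Bernoulli generating function into $-\frac{1}{2\pi}\bigl(\frac{1}{\xi({\rm e}^{\xi}-1)}-\frac{1}{\xi^2}+\frac{1}{2\xi}-\frac{1}{12}\bigr)$ and $f_2(\xi,t)$ telescopes into $\frac{(2\pi\mathrm{i})^2}{2}\bigl(\mathrm{Li}_1\big({\rm e}^{2\pi\mathrm{i}(t+\xi)}\big)+\mathrm{Li}_1\big({\rm e}^{2\pi\mathrm{i}(t-\xi)}\big)-2\mathrm{Li}_1(Q)\bigr)$; the convergence statement comes from the product-of-radii property of $\oast$, and the closed form \eqref{Boreltrans2} comes from the Cauchy-type integral $\frac{1}{2\pi\mathrm{i}}\int_{|s|=\rho}\frac{{\rm d}s}{s}f_1(s)f_2(\xi/s,t)$ by pushing the contour to infinity and collecting the residues of $f_1$ at $s=2\pi\mathrm{i}m$, which works uniformly on the strip $0<|\operatorname{Re}(t)|<1/2$ (the branch cuts of $f_2(\xi/s,t)$ stay inside $|s|=\rho$). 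You instead use Euler's formula to trade $B_{2n}$ for $\zeta(2n)$, so that the $m$-sum of \eqref{Boreltrans2} arises by unpacking $\zeta(2n)=\sum_m m^{-2n}$ rather than from the poles of $1/({\rm e}^s-1)$ — these are two faces of the same residue structure — and you collapse the $n$-sum by $\cosh$, at the price of requiring $\operatorname{Im}(t)>0$ and $\xi<2\pi\operatorname{Im}(t)$ for the Tonelli rearrangement, followed by an identity-theorem continuation in $t$ and $\xi$ (or the Hurwitz partial-fraction plus Weierstrass-product variant on the whole strip, with log-branch bookkeeping). What the paper's method buys is that no separate continuation step in $t$ is needed and the argument matches the Garoufalidis--Kashaev template used throughout; what your method buys is elementarity (no contour deformation to justify), the exact radius of convergence $2\pi|t|$ via the root test (the paper's Hadamard lemma only gives a lower bound, which is all that is needed), and a transparent account of where the singularities $2\pi\mathrm{i}m(t+k)$ come from. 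One small point you could make explicit: under $0<|\operatorname{Re}(t)|<1/2$ the condition ${\rm e}^{2\pi\mathrm{i}t\pm\xi/m}\neq1$ is automatic for every real $\xi>0$, since $\operatorname{Im}(2\pi\mathrm{i}t\pm\xi/m)=2\pi\operatorname{Re}(t)\notin2\pi\mathbb{Z}$, so your final clause already yields the continuation to \emph{all} $\xi>0$ as the proposition asserts, and the same observation shows the cuts stay off a full strip around $\mathbb{R}_{>0}$, which is what the identity-theorem step needs.
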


\begin{proof} The proof of this proposition is given in Appendix \ref{BTapp}. The techniques and ideas are the same as the ones used in \cite{garoufalidis2020resurgence}, which were also used in~\cite{ASTT21} to study the Borel transform of the topological free energy of the resolved conifold.
\end{proof}

On the other hand, the study of the Borel transform of $S(\epsilon,x,t)$ follows from the same arguments in Appendix~\ref{BTapp} used to study $G(\xi,t)$. One first writes
\begin{equation*}
 S(\epsilon,x,t)=-\frac{1}{\mathrm{i}\epsilon}(\mathrm{Li}_2(QX)-\mathrm{Li}_2(X)) +\frac{1}{2}(\mathrm{Li}_1(QX)-\mathrm{Li}_1(X)) + \Phi_S(\check\epsilon,x,t),
\end{equation*}
where
\begin{align*}
 \Phi_S(\check\epsilon,x,t):={}&{-}\sum_{n=1}^{\infty}\frac{B_{2n}}{(2n)!}(\mathrm{i}\epsilon)^{2n-1}(\mathrm{Li}_{2-2n}(QX)-\mathrm{Li}_{2-2n}(X))\\
={}&{-}\frac{1}{2\pi \mathrm{i}}\sum_{n=1}^{\infty}\frac{B_{2n}}{(2n)!}\check{\epsilon}^{2n-1}\partial_x^{2n}(\mathrm{Li}_{2}(QX)-\mathrm{Li}_{2}(X)).
 \end{align*}
We wish to study $G_S(\xi,x,t):=\mathcal{B}(\Phi_S(-,x,t))(\xi)$ given by
\begin{equation}\label{BoreltransS}
 G_S(\xi,x,t)=-\frac{1}{2\pi \mathrm{i}}\sum_{n=1}^{\infty}\frac{B_{2n}}{(2n)!}\frac{\xi^{2n-2}}{(2n-2)!}\partial_x^{2n}(\mathrm{Li}_{2}(QX)-\mathrm{Li}_{2}(X)).
\end{equation}

By using the arguments of Appendix \ref{BTapp} with $G_S(\xi,x,t)=(f_1\oast f_2(-,x,t))(\xi)$, and
\begin{gather*}
 f_1(\xi):=-\frac{1}{2\pi \mathrm{i}}\sum_{n=1}^{\infty}\frac{B_{2n}}{(2n)!}\xi^{2n-2},\\
 f_2(\xi,x,t):=\sum_{n=1}^{\infty}\frac{\xi^{2n-2}}{(2n-2)!}\partial_x^{2n}(\mathrm{Li}_{2}(QX)-\mathrm{Li}_{2}(X)),
 \end{gather*}
we find the following:

\begin{Proposition}
Assume $t,x\in \mathbb{C}^{\times}$ with $t\neq -x$ and $\operatorname{Max}\{|\operatorname{Re}(x)|,|\operatorname{Re}(t+x)|\}<1/2$. Then for $|\xi|<\operatorname{Min}\{2\pi |t+x|,2\pi |x|\}$ we have that $G(\xi,x,t)$ in \eqref{BoreltransS} converges. Furthermore, using the integral representation \eqref{Hadprodint} for $G_S(\xi,x,t)=(f_1\oast f_2(-,x,t))(\xi)$ we find the alternate expression
\begin{gather*}
 G_S(\xi,x,t)
 =\frac{1}{2\pi \mathrm{i}}\sum_{m\geq 1}\frac{1}{m^2}\big(\mathrm{Li}_0\big(Q{\rm e}^{2\pi \mathrm{i} x +\xi/m}\big)+\mathrm{Li}_0\big(Q{\rm e}^{2\pi \mathrm{i} x -\xi/m}\big)\\
 \hphantom{G_S(\xi,x,t)=\frac{1}{2\pi \mathrm{i}}\sum_{m\geq 1}}{}
 -\mathrm{Li}_0\big({\rm e}^{2\pi \mathrm{i} x +\xi/m}\big)-\mathrm{Li}_0\big({\rm e}^{2\pi \mathrm{i} x -\xi/m}\big)\big)\\
\hphantom{G_S(\xi,x,t)}{}
 =\frac{1}{2\pi \mathrm{i}}\sum_{m\geq 1}\frac{1}{m^2}\bigg(\frac{1}{{\rm e}^{-2\pi \mathrm{i}(t+x)-\xi/m}-1}+\frac{1}{{\rm e}^{-2\pi \mathrm{i}(t+x)+\xi/m}-1}\\
 \hphantom{G_S(\xi,x,t)=\frac{1}{2\pi \mathrm{i}}\sum_{m\geq 1}}{}
 -\frac{1}{{\rm e}^{-2\pi \mathrm{i} x-\xi/m}-1}-\frac{1}{{\rm e}^{-2\pi \mathrm{i} x+\xi/m}-1}\bigg),
 \end{gather*}
so that $G_S(\xi,x,t)$ admits an analytic continuation in $\xi$ to a meromorphic function with poles at $2\pi \mathrm{i} (t+x+k)m$ and $2\pi \mathrm{i}(x+k)m$ for $k\in \mathbb{Z}$ and $m\in \mathbb{Z}-\{0\}$.
\end{Proposition}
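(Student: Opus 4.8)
The plan is to follow the same strategy as in Appendix~\ref{BTapp} (used to prove Proposition~\ref{Boreltransprop}), exploiting the Hadamard-product structure $G_S=f_1\oast f_2$ together with closed forms for the two factors, and to extract the meromorphic continuation by pushing the Hadamard contour to infinity and collecting residues. The present case differs from $G(\xi,t)$ only in the polylogarithm orders and the explicit form of $f_2$.

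First I would sum the two factors in closed form. Using the generating function \eqref{eq:Bernoulligen} one gets $\sum_{n\ge 1}\frac{B_{2n}}{(2n)!}w^{2n}=\frac{w}{2}\coth(w/2)-1$, so that
\[
 f_1(w)=-\frac{1}{2\pi \mathrm{i}}\left(\frac{\coth(w/2)}{2w}-\frac{1}{w^2}\right),
\]
which is analytic on $|w|<2\pi$ and extends to a meromorphic function whose only singularities are simple poles at $w=2\pi \mathrm{i} m$, $m\in\mathbb{Z}\setminus\{0\}$, with $\operatorname{Res}_{w=2\pi\mathrm{i} m}f_1=\frac{1}{4\pi^2 m}$ (the apparent pole at $w=0$ cancels). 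For $f_2$, writing $h(x):=\partial_x^2(\mathrm{Li}_2(QX)-\mathrm{Li}_2(X))=(2\pi\mathrm{i})^2(\mathrm{Li}_0(QX)-\mathrm{Li}_0(X))$ and reindexing, the even-order Taylor expansion gives
\[
 f_2(\zeta,x,t)=\sum_{m\ge 0}\frac{\zeta^{2m}}{(2m)!}h^{(2m)}(x)=\tfrac12\bigl(h(x+\zeta)+h(x-\zeta)\bigr).
\]
Since $\mathrm{Li}_0(\mathrm{e}^{2\pi\mathrm{i} y})=\mathrm{e}^{2\pi\mathrm{i} y}/(1-\mathrm{e}^{2\pi\mathrm{i} y})$ is singular exactly at $y\in\mathbb{Z}$, the hypothesis $\operatorname{Max}\{|\operatorname{Re}(x)|,|\operatorname{Re}(x+t)|\}<1/2$ forces the nearest integer to $x$ (resp.\ $x+t$) to be $0$, so $\zeta\mapsto f_2(\zeta,x,t)$ has radius of convergence $R_2=\operatorname{Min}\{|x|,|x+t|\}$. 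The standard estimate for Hadamard products then gives that the radius of $f_1\oast f_2$ is at least $(2\pi)\cdot R_2=\operatorname{Min}\{2\pi|x|,2\pi|x+t|\}$, which is the stated convergence domain.

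Next I would evaluate the integral representation \eqref{Hadprodint},
\[
 G_S(\xi,x,t)=\frac{1}{2\pi\mathrm{i}}\oint_{|w|=r}f_1(w)\,f_2(\xi/w,x,t)\,\frac{\mathrm{d}w}{w},\qquad |\xi|/R_2<r<2\pi,
\]
by pushing the contour outward. Since $f_2(\xi/w)\to h(x)$ and $f_1(w)=O(1/w)$ as $w\to\infty$, the arc at infinity contributes nothing, and the only singularities of the integrand outside $|w|=r$ are the poles of $f_1$ at $w=2\pi\mathrm{i} m$ (the poles of $f_2(\xi/w)$ sit inside $|w|=r$); collecting their residues yields
\[
 G_S(\xi,x,t)=-\frac{1}{(2\pi)^2(2\pi\mathrm{i})}\sum_{m\in\mathbb{Z}\setminus\{0\}}\frac{1}{m^2}\,f_2\!\left(\frac{\xi}{2\pi\mathrm{i} m},x,t\right).
\]
Substituting the closed form of $f_2$ and using $2\pi\mathrm{i}\cdot\frac{\xi}{2\pi\mathrm{i} m}=\frac{\xi}{m}$ turns each $\mathrm{e}^{2\pi\mathrm{i}(x\pm\zeta)}$ into $\mathrm{e}^{2\pi\mathrm{i} x}\mathrm{e}^{\pm\xi/m}$; the summand is even in $m$, so $\sum_{m\ne 0}=2\sum_{m\ge 1}$, and the constants collapse to the overall prefactor $\frac{1}{2\pi\mathrm{i}}$. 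This reproduces the first displayed form, and writing $\mathrm{Li}_0(w)=1/(w^{-1}-1)$ gives the second. Finally, each summand is meromorphic in $\xi$ with poles exactly where $\mathrm{e}^{2\pi\mathrm{i}(t+x)\pm\xi/m}=1$ or $\mathrm{e}^{2\pi\mathrm{i} x\pm\xi/m}=1$, i.e.\ at $\xi=2\pi\mathrm{i}(t+x+k)m$ and $\xi=2\pi\mathrm{i}(x+k)m$; since the series converges locally uniformly off these loci, $G_S$ continues to the claimed meromorphic function.

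The one genuinely technical point — and the main obstacle — is the justification of the contour deformation in the Hadamard integral: verifying the vanishing of the arc at infinity, checking that no poles of $f_2(\xi/w)$ are crossed, and establishing the local-uniform convergence needed to promote the residue series to a bona fide meromorphic continuation. This is precisely the analysis carried out in Appendix~\ref{BTapp} for $G(\xi,t)$, and it transfers verbatim here with the substitutions $\mathrm{Li}_1\rightsquigarrow\mathrm{Li}_0$ and the $f_2$ above.
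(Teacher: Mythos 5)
Your proof is correct and takes essentially the same approach as the paper: the paper's own argument for this proposition is precisely the Hadamard-product analysis of Appendix~\ref{BTapp} (Lemma~\ref{lemHad}, closed forms for $f_1$ and $f_2$, deformation of the contour to infinity picking up the residues of $f_1$ at $w=2\pi\mathrm{i}m$, and evenness in $m$), which you reproduce with the correct constants. Your cosmetic variations---packaging $f_1$ via $\coth(w/2)$ instead of $1/\bigl(w\bigl({\rm e}^w-1\bigr)\bigr)$, and reading off the radius of $f_2$ from its nearest pole rather than from polylogarithm asymptotics---are equivalent to the paper's steps.
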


\subsection{Borel sum}

We now study the Borel sum of the $\epsilon$-expansion of $W(\epsilon,t)$ along $\mathbb{R}_{>0}$. As before, we assume $t\in \mathbb{C}^{\times}$ such that $0<|\operatorname{Re}(t)|<1/2$, so that $G(\xi,t)$ admits a continuation for $\xi>0$ with the expression (\ref{Boreltrans2}), and we can consider the Borel sum along $\mathbb{R}_{>0}$.

In the following, we assume that $\epsilon \in \mathbb{H}_{\mathbb{R}_{>0}}$, where $\mathbb{H}_{\mathbb{R}_{>0}}$ denotes the half-plane centered at $\mathbb{R}_{>0}$. We then integrate by parts and use the fact that the boundary terms vanish to write
\begin{equation*}
 \int_{\mathbb{R}_{>0}} {\rm d} \xi\, {\rm e}^{-\xi/\check{\epsilon}} G(\xi,t)= \frac{\check{\epsilon}}{4\pi}\int_{\mathbb{R}_{>0}}\mathrm{d}\xi {\rm e}^{-\xi/\check{\epsilon}}\sum_{m\in \mathbb{Z}-\{0\}}\frac{1}{m^3}\left(\frac{1}{1-{\rm e}^{-2\pi \mathrm{i} t+\xi/m}}-\frac{1}{1-{\rm e}^{-2\pi \mathrm{i} t-\xi/m}}\right).
\end{equation*}

The resulting expression in the integrand has poles at the points $\xi=2\pi {\rm i}(t+k)m$. We define
\begin{align*}
 \widetilde{G}(\xi,t):={}& \frac{1}{4\pi}\sum_{m\in \mathbb{Z}-\{0\}}\frac{1}{m^3}\left(\frac{1}{1-{\rm e}^{-2\pi \mathrm{i} t+\xi/m}}-\frac{1}{1-{\rm e}^{-2\pi \mathrm{i} t-\xi/m}}\right)\\
 ={}& \frac{1}{2\pi}\sum_{m\geq 1}\frac{1}{m^3}\left(\frac{1}{1-{\rm e}^{-2\pi \mathrm{i} t+\xi/m}}-\frac{1}{1-{\rm e}^{-2\pi \mathrm{i} t-\xi/m}}\right).
 \end{align*}

One advantage of the expression of the Borel sum with $\widetilde{G}(\xi,t)$ is that we can now integrate freely along rays avoiding the poles, provided $\epsilon$ is in the appropriate range. More precisely, we define:

\begin{Definition}
For $t\in \mathbb{C}^{\times}-\mathbb{Z}$, for $\rho$ a ray from $0$ to $\infty$ avoiding the rays $\pm l_k=\pm \mathbb{R}_{<0}\cdot 2\pi \mathrm{i} (t+k)$ for $k\in \mathbb{Z}$ and $\pm l_{\infty}=\pm \mathrm{i} \mathbb{R}_{<0}$, and for $\check{\epsilon}\in \mathbb{H}_{\rho}$ where $\mathbb{H}_{\rho}$ denotes the half-plane centered at $\rho$, we denote the Borel sum of $W(\epsilon,t)$ along $\rho$ by
\begin{equation*}
 W_{\rho}(\epsilon,t) := -\frac{1}{\epsilon} \mathrm{Li}_{3}(Q)+ \frac{\mathrm{i}}{2}\mathrm{Li}_{2}(Q)+\frac{\epsilon}{12}\mathrm{Li_1}(Q) + \check{\epsilon}\int_{\rho}\mathrm{d}\xi\, {\rm e}^{-\xi/\check{\epsilon}}\widetilde{G}(\xi,t).
\end{equation*}
\end{Definition}

The arguments from Section \ref{NSlimitsec} then motivate the consideration of the following function:

\begin{Definition} For $\operatorname{Re}(\check{\epsilon}) >0$ and $-\operatorname{Re}(\check{\epsilon}) < \operatorname{Re}(t) < \operatorname{Re} (\check{\epsilon}+1)$, we define
\begin{equation*}
W_{\rm np}(\epsilon,t) :=F_{{\rm np}}^{\rm ref}(\epsilon,t-\check\epsilon/2)=2\pi \int_{\mathbb{R}+\mathrm i0^+} \frac{{\rm d}s}{s^2}\frac{{\rm e}^{s t}}{({\rm e}^s-1)({\rm e}^{\check{\epsilon}s}-1)},
\end{equation*}
\end{Definition}

This function generates the quantum dilogarithm $\mathcal{S}_2(z\mid \omega_1,\omega_2)$ (see Appendix \ref{specialfunctapp} or \cite{BridgelandCon}), in the sense that
\begin{equation*}
 \partial_t W_{{\rm np}}(\epsilon,t)=2\pi \log \mathcal{S}_2(t|\check\epsilon,1).
\end{equation*}

Indeed, we have
\begin{align}
 \partial_{t}W_{{\rm np}}(\epsilon,t)& =2\pi \partial_t\int_{\mathbb{R}+\mathrm{i} 0^+}\frac{{\rm d}s}{s^2} \frac{{\rm e}^{st}}{({\rm e}^s-1)({\rm e}^{\check\epsilon s}-1)} \nonumber\\
 & =2\pi \int_{\mathbb{R}+\mathrm{i} 0^+}\frac{{\rm d}s}{s} \frac{{\rm e}^{st}}{({\rm e}^s-1)({\rm e}^{\check\epsilon s}-1)}=2\pi \log \mathcal{S}_2(t\mid \check\epsilon,1).\label{tderWnp}
\end{align}

The function $W_{{\rm np}}$ has the following relation to $W_{\mathbb{R}_{>0}}$:
\begin{Proposition}\label{WRWnp}
If $\operatorname{Im}(t)>0$ and $0<\operatorname{Re}(t)<1$, then for $\operatorname{Re}(t)<\operatorname{Re}(\check\epsilon+1)$ we have
\begin{equation*}
 W_{\mathbb{R}_{>0}}(\epsilon,t)= W_{{\rm np}}(\epsilon,t).
\end{equation*}
\end{Proposition}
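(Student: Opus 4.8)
The plan is to prove the identity by evaluating the Laplace integral in the definition of $W_{\mathbb{R}_{>0}}(\epsilon,t)$ in closed form and recognizing the outcome as the contour integral $W_{\rm np}(\epsilon,t)$. Proposition~\ref{Boreltransprop} already hands us the closed form of $\widetilde{G}(\xi,t)$ together with the fact that it continues to a function holomorphic along $\mathbb{R}_{>0}$: its poles sit at $\xi=2\pi\mathrm{i}m(t+k)$, which stay off the positive real axis once $\operatorname{Im}(t)>0$ and $0<\operatorname{Re}(t)<1$. Hence the integral $\check{\epsilon}\int_{\mathbb{R}_{>0}}\mathrm{d}\xi\,\mathrm{e}^{-\xi/\check{\epsilon}}\widetilde{G}(\xi,t)$ converges for $\epsilon\in\mathbb{H}_{\mathbb{R}_{>0}}$, and this is the quantity to be matched against $W_{\rm np}$, whose residue form $F^{\rm NS}(\epsilon,t-\check\epsilon/2)+\check\epsilon F^{\rm NS}\big(2\pi(1/\check\epsilon+1),(t-\check\epsilon/2-1/2)/\check\epsilon\big)$ is supplied by \eqref{FnpFNS} evaluated at $t\mapsto t-\check\epsilon/2$.

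First I would insert the explicit sum for $\widetilde{G}$, interchange the summation over $m$ with the $\xi$-integration, and thereby reduce the whole problem to the two elementary Laplace transforms $\int_0^\infty\mathrm{e}^{-\xi/\check{\epsilon}}\big(1-Q^{-1}\mathrm{e}^{\pm\xi/m}\big)^{-1}\mathrm{d}\xi$. The $+$ transform is immediate: since $|Q^{-1}\mathrm{e}^{-\xi/m}|=|Q|\,\mathrm{e}^{-\xi/m}<1$ on all of $\mathbb{R}_{>0}$ when $\operatorname{Im}(t)>0$, one may expand the geometric series and integrate term by term, obtaining $-\sum_{n\ge1}Q^n/(\tfrac{1}{\check\epsilon}+\tfrac nm)$; summing over $m$ against $1/m^3$ and reorganizing the resulting double series by partial fractions is routine.

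The $-$ transform is the delicate one, and I expect it to be the main obstacle. There $|Q^{-1}\mathrm{e}^{-\xi/m}|>1$ for $\xi<-m\log|Q|=2\pi m\operatorname{Im}(t)$, so no single geometric expansion is valid throughout $\mathbb{R}_{>0}$. I would handle this either by splitting the contour at the positive threshold $\xi=-m\log|Q|$ and expanding in the appropriate series on each piece, or by deforming the $\xi$-contour past the poles at $\xi=2\pi\mathrm{i}m(k-t)$ lying in the right half-plane. Either route produces, beyond the naive perturbative series, an extra contribution assembled from the exponentials $\mathrm{e}^{-2\pi\mathrm{i}t/\check\epsilon}$; this is exactly the non-perturbative summand $\check\epsilon F^{\rm NS}\big(2\pi(1/\check\epsilon+1),(t-\check\epsilon/2-1/2)/\check\epsilon\big)$. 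Collecting both pieces and comparing with \eqref{FnpFNS} gives the equality, and the inequalities $\operatorname{Im}(t)>0$, $0<\operatorname{Re}(t)<1$ and $\operatorname{Re}(t)<\operatorname{Re}(\check\epsilon+1)$ are precisely the conditions making every expansion convergent and every contour move legitimate. The techniques here are those of \cite{garoufalidis2020resurgence} and \cite{ASTT21}, as used already in the proof of Proposition~\ref{Boreltransprop}.

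A cleaner but essentially equivalent packaging would bypass the explicit split by differentiating: one checks that $\partial_t W_{\mathbb{R}_{>0}}(\epsilon,t)$ equals $2\pi\log\mathcal{S}_2(t\mid\check\epsilon,1)=\partial_t W_{\rm np}(\epsilon,t)$, the latter being \eqref{tderWnp}, so that the two functions differ by a $t$-independent quantity, which is then pinned to zero by letting $\operatorname{Im}(t)\to+\infty$ so that $Q\to0$ and both sides reduce to their common leading behaviour. This reorganizes the argument neatly, but it only relocates the same analytic difficulty one derivative lower, since identifying $\partial_t W_{\mathbb{R}_{>0}}$ with $\log\mathcal{S}_2$ is itself a ``Borel sum equals contour integral'' statement of the type handled above.
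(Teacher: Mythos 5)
Your skeleton is reasonable --- reduce the claim to evaluating the Laplace integral $\check{\epsilon}\int_{\mathbb{R}_{>0}}{\rm e}^{-\xi/\check{\epsilon}}\,\widetilde{G}(\xi,t)\,{\rm d}\xi$ and match it against the decomposition \eqref{FnpFNS} of $W_{\rm np}$ --- but the proposal stops exactly where the content of the proposition lies. The whole point of the statement is that the Borel sum along $\mathbb{R}_{>0}$ reproduces not just the perturbative resummation but precisely the non-perturbative completion, and this is equivalent to your ``delicate'' transform $\int_0^\infty {\rm e}^{-\xi/\check{\epsilon}}\big(1-Q^{-1}{\rm e}^{-\xi/m}\big)^{-1}{\rm d}\xi$, summed over $m$ against $1/m^3$ and combined with the easy piece, producing exactly the GV sum $W(\epsilon,t)$ plus $\mathrm{i}\check{\epsilon}\sum_{l\geq 1}w^l/\big(l^2\big(1-\widetilde{q}^l\big)\big)$ with $w={\rm e}^{2\pi \mathrm{i}t/\check{\epsilon}}$. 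You assert this (``this is exactly the non-perturbative summand'') but never establish it: the split at $\xi=-m\log|Q|$ produces boundary terms of the form ${\rm e}^{2\pi \mathrm{i}n\operatorname{Re}(t)}{\rm e}^{-2\pi m\operatorname{Im}(t)/\check{\epsilon}}$ which only recombine into powers of $w$ after a nontrivial double resummation, while the contour-deformation variant leaves behind a rotated-ray integral that still has to be evaluated. Even the ``routine'' piece is not routine: term-by-term integration gives $\sum_{m,n}Q^n/\big(m^2(m+n\check{\epsilon})\big)$, whose partial fractions produce digamma-type sums in $m$, and polylogarithms only reappear after recombination with the hard piece. (There is also a small slip: $\big|Q^{-1}{\rm e}^{-\xi/m}\big|=|Q|^{-1}{\rm e}^{-\xi/m}$, not $|Q|\,{\rm e}^{-\xi/m}$; the convergent expansion of the easy term is in powers of $Q{\rm e}^{-\xi/m}$.)

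For comparison, the paper avoids any direct evaluation of the Laplace integral. In Appendix \ref{BorelsumRapp} both sides are identified with a common intermediate object, the integral $-\frac{1}{2\pi}\int_{\mathbb{R}+\mathrm{i}0^+}{\rm d}v\,\frac{1}{1-{\rm e}^v}\mathrm{Li}_2\big({\rm e}^{\check{\epsilon}v+2\pi \mathrm{i}t}\big)$: Proposition \ref{WorformBorelsum} shows this equals $W_{\mathbb{R}_{>0}}$ (change of variables, integration by parts, and a split into the $\widetilde{H}$ and $H$ pieces), and Proposition \ref{woronowiczlemma} shows it equals $W_{\rm np}$ via the Fourier-transform unitarity (Parseval) argument of \cite{garoufalidis2020resurgence}, computing the transforms of the two factors by Jordan's lemma and residues. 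That Parseval step is what replaces the computation you left open, and nothing in your proposal supplies an equivalent. Your alternative packaging --- showing $\partial_t W_{\mathbb{R}_{>0}}=2\pi\log\mathcal{S}_2(t\mid\check{\epsilon},1)=\partial_t W_{\rm np}$ via \eqref{tderWnp} and fixing the constant by $\operatorname{Im}(t)\to+\infty$ --- would indeed be a valid reduction (the $t$-domain is connected, no Borel singularity crosses $\mathbb{R}_{>0}$ while $0<\operatorname{Re}(t)<1$, and both sides vanish in the limit), but as you yourself concede, the derivative identity is again a Borel-sum-equals-contour-integral statement of the same difficulty, so the gap is relocated rather than closed.
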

\begin{proof}The proof of this is given in Appendix \ref{BorelsumRapp}. The proof follows the same lines of the corresponding proofs in \cite{ASTT21}, showing the relation between the Borel sum along $\mathbb{R}_{>0}$ of the non-constant map contribution of the topological free energy, and the triple sine function.
\end{proof}

On the other hand, we have the corresponding result for $S_{\mathbb{R}_{>0}}$:
\begin{Proposition}
For $\operatorname{Min}\{\operatorname{Im}(x),\operatorname{Im}(x+t)\}>0$, as well as $\operatorname{Min}\{\operatorname{Re}(x),\operatorname{Re}(x+t)\}>0$, and $\operatorname{Max}\{\operatorname{Re}(x),\operatorname{Re}(x+t)\}<1$, we have that
\begin{equation*}
 S_{\mathbb{R}_{>0}}(\epsilon,x,t)=\log(\mathcal{S}_2(x\mid \check\epsilon,1))-\log(\mathcal{S}_2(x+t\mid \check\epsilon,1)),
\end{equation*}
for $ \operatorname{Max}\{\operatorname{Re}(x),\operatorname{Re}(x+t)\}<\operatorname{Re}(\check\epsilon+1)$. Here, $\mathcal{S}_2$ denotes the Faddeev quantum dilogarithm defined in Appendix~{\rm \ref{specialfunctapp}}.
\end{Proposition}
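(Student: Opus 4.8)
The plan is to follow the strategy of Proposition~\ref{WRWnp} (whose proof in turn mirrors the triple-sine computation of \cite{ASTT21}), replacing the integral representation of $F^{\rm NS}_{\rm np}$ by that of the Faddeev quantum dilogarithm. The target identity is most naturally attacked in its integral form: by \eqref{tderWnp} we have
\begin{equation*}
 \log\mathcal{S}_2(z\mid\check\epsilon,1)=\int_{\mathbb{R}+\mathrm{i}0^+}\frac{\mathrm{e}^{zs}}{(\mathrm{e}^{\check\epsilon s}-1)(\mathrm{e}^s-1)}\frac{\mathrm{d}s}{s},
\end{equation*}
so that, setting $z=x$ and $z=x+t$ and subtracting,
\begin{equation*}
 \log\mathcal{S}_2(x\mid\check\epsilon,1)-\log\mathcal{S}_2(x+t\mid\check\epsilon,1)=\int_{\mathbb{R}+\mathrm{i}0^+}\frac{\mathrm{e}^{xs}-\mathrm{e}^{(x+t)s}}{(\mathrm{e}^{\check\epsilon s}-1)(\mathrm{e}^s-1)}\frac{\mathrm{d}s}{s}.
\end{equation*}
It therefore suffices to show that this contour integral equals $S_{\mathbb{R}_{>0}}(\epsilon,x,t)$.

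First I would split the factor $1/(\mathrm{e}^{\check\epsilon s}-1)$ by means of the Bernoulli generating function \eqref{eq:Bernoulligen}, writing it as $\frac{1}{\check\epsilon s}-\frac12+\sum_{n\ge1}\frac{B_{2n}}{(2n)!}(\check\epsilon s)^{2n-1}$. The two singular pieces $\frac{1}{\check\epsilon s}-\frac12$ produce, after closing the contour in the upper half-plane and summing the residues of $1/(\mathrm{e}^s-1)$ at $s=2\pi\mathrm{i}k$, exactly the ``perturbative'' leading terms $-\frac{1}{\mathrm{i}\epsilon}(\mathrm{Li}_2(QX)-\mathrm{Li}_2(X))+\frac12(\mathrm{Li}_1(QX)-\mathrm{Li}_1(X))$ appearing in $S_\rho$; this is a direct residue computation with $Q=\mathrm{e}^{2\pi\mathrm{i}t}$ and $X=\mathrm{e}^{2\pi\mathrm{i}x}$.

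The remaining tail $\sum_{n\ge1}\frac{B_{2n}}{(2n)!}\check\epsilon^{2n-1}\int_{\mathbb{R}+\mathrm{i}0^+}(\mathrm{e}^{xs}-\mathrm{e}^{(x+t)s})s^{2n-2}(\mathrm{e}^s-1)^{-1}\,\mathrm{d}s$ must then be recognized as the Borel--Laplace integral $\int_0^\infty\mathrm{e}^{-\xi/\check\epsilon}G_S(\xi,x,t)\,\mathrm{d}\xi$. I would do this by inserting the convergent closed form for $G_S(\xi,x,t)$ established in the preceding proposition, interchanging the sum over $m$ with the Laplace integral, and evaluating each term geometrically; reindexing then reconstructs the Mittag-Leffler expansion of $1/(\mathrm{e}^{\check\epsilon s}-1)$ and matches the tail term by term. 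Equivalently, one may verify directly that $\log\mathcal{S}_2(x)-\log\mathcal{S}_2(x+t)$ is Gevrey-$1$ asymptotic to $S(\epsilon,x,t)$ --- which is already recorded in Section~\ref{sec:psinp} --- and that its Borel transform is the function $G_S$ continued along $\mathbb{R}_{>0}$, so that a Watson/Nevanlinna--Sokal uniqueness argument identifies it with $S_{\mathbb{R}_{>0}}$.

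The main obstacle is the analytic bookkeeping of the contour manipulations rather than any conceptual difficulty: one must justify the interchange of summation and integration and the deformation of $\mathbb{R}+\mathrm{i}0^+$ onto the Borel ray $\mathbb{R}_{>0}$ without crossing the Stokes rays $\widetilde{l}_k,\widetilde{l}_{k,t}$ of Theorem~\ref{th:BorelS}, and check that the boundary terms from the integration by parts vanish. This is precisely where the hypotheses $\operatorname{Min}\{\operatorname{Im}(x),\operatorname{Im}(x+t)\}>0$, $\operatorname{Min}\{\operatorname{Re}(x),\operatorname{Re}(x+t)\}>0$ and $\operatorname{Max}\{\operatorname{Re}(x),\operatorname{Re}(x+t)\}<\operatorname{Re}(\check\epsilon+1)$ enter, guaranteeing convergence of the relevant geometric series and pinning down the branch of each $\mathrm{Li}_1$; the very same restrictions appear in the parallel statement for $W$ in Proposition~\ref{WRWnp}.
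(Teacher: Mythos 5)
Your reduction is sound up to a point: starting from \eqref{S2int}, writing the difference of quantum dilogarithms as $\int_{\mathbb{R}+\mathrm{i}0^+}\frac{\mathrm{e}^{xs}-\mathrm{e}^{(x+t)s}}{(\mathrm{e}^{\check\epsilon s}-1)(\mathrm{e}^s-1)}\frac{\mathrm{d}s}{s}$ is legitimate under the stated hypotheses, and splitting off the \emph{exact} singular part $\frac{1}{\check\epsilon s}-\frac{1}{2}$ of $\frac{1}{\mathrm{e}^{\check\epsilon s}-1}$ and evaluating those two pieces by residues does reproduce the leading terms $-\frac{1}{\mathrm{i}\epsilon}\left(\mathrm{Li}_2(QX)-\mathrm{Li}_2(X)\right)+\frac{1}{2}\left(\mathrm{Li}_1(QX)-\mathrm{Li}_1(X)\right)$ of $S_\rho$. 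The gap is in identifying the remainder with the Laplace integral $\int_{\mathbb{R}_{>0}}\mathrm{e}^{-\xi/\check\epsilon}G_S(\xi,x,t)\,\mathrm{d}\xi$. As written, you do this by integrating the Bernoulli series term by term; but that series converges only for $|s|<2\pi/|\check\epsilon|$ while the contour is unbounded, and the term-by-term output is exactly the \emph{divergent} formal series $\Phi_S(\check\epsilon,x,t)$ (each integral $\int_{\mathbb{R}+\mathrm{i}0^+}s^{2n-2}\big(\mathrm{e}^{xs}-\mathrm{e}^{(x+t)s}\big)(\mathrm{e}^s-1)^{-1}\mathrm{d}s$ evaluates by residues to the $\mathrm{Li}_{2-2n}$ coefficient of $\Phi_S$). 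So this manipulation cannot produce an exact identity with a convergent Laplace integral; bridging precisely that gap is the content of the proposition, not bookkeeping. Your fallback uniqueness argument also fails as stated: the asymptotics \eqref{S2asymp} are recorded only on closed subsectors of the half-plane $\operatorname{Re}(\omega_1)>0$, i.e.\ on sectors of opening at most $\pi$, and Gevrey-1 asymptotics on such a sector does not determine a function uniquely (any $\mathrm{e}^{-c/\epsilon}$ with $c>0$ is Gevrey-1 asymptotic to zero there); Watson or Nevanlinna--Sokal requires uniform bounds on a sector of opening $>\pi$ or on a disc tangent to $\mathrm{i}\mathbb{R}$, which you would have to establish separately.

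Your middle suggestion---inserting the closed form of $G_S$ into the Laplace integral, expanding, and resumming---is the one that can be completed, but ``evaluating each term geometrically'' conceals the crux: for the terms of $G_S$ containing $\mathrm{e}^{-2\pi\mathrm{i}x-\xi/m}$ and $\mathrm{e}^{-2\pi\mathrm{i}(x+t)-\xi/m}$ the geometric expansion changes regime at $\xi=2\pi m\operatorname{Im}(x)$, resp.\ $\xi = 2\pi m\operatorname{Im}(x+t)$, so it does not converge on all of $\mathbb{R}_{>0}$ and the integrals must be split and regularized. This is exactly where the paper invests its effort, with a different organization: it identifies \emph{both} sides with the intermediate contour integral $\frac{\mathrm{i}}{2\pi}\int_{\mathbb{R}+\mathrm{i}0^+}\frac{\mathrm{d}s}{1-\mathrm{e}^s}\bigl(\log\bigl(1-\mathrm{e}^{\check\epsilon s+2\pi\mathrm{i}x}\bigr)-\log\bigl(1-\mathrm{e}^{\check\epsilon s+2\pi\mathrm{i}(x+t)}\bigr)\bigr)$, handling the Borel-sum side by the integration-by-parts/Fubini argument of Proposition~\ref{WorformBorelsum} and the quantum-dilogarithm side by the Fourier-transform (Parseval) argument of Proposition~\ref{woronowiczlemma} applied to \eqref{S2int}. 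If you keep your organization, working directly from \eqref{S2int}, you must import equivalent devices ($\delta\to0^+$ regularizations, range splitting, justified interchanges of sum and integral); at that point your argument becomes the paper's two steps composed in reverse order rather than a shortcut around them.
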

\begin{proof}
The proof follows easier versions of the same argument given in Appendix \ref{BorelsumRapp} for $W_{\mathbb{R}_{>0}}=W_{{\rm np}}(\epsilon,t)$. More specifically, by following a similar computation to Proposition~\ref{WorformBorelsum}, one shows that for $t,x\in \mathbb{C}^{\times}$ with $\operatorname{Im}(t)>0$, $\operatorname{Im}(t+x)>0$, and $\epsilon >0$, we have
\begin{equation*}
 S_{\mathbb{R}_{>0}}(\epsilon,x,t)=\frac{\mathrm{i}}{2\pi}\int_{\mathbb{R}+\mathrm{i}0^+}\mathrm{d}s \frac{1}{1-{\rm e}^s} \big(\log\big(1-{\rm e}^{\check\epsilon s+2\pi \mathrm{i} x}\big)-\log\big(1-{\rm e}^{\check\epsilon s+2\pi \mathrm{i}(x+t)}\big)\big),
\end{equation*}
 and then one shows that
\begin{equation*}
 \log(\mathcal{S}_2(x|\check{\epsilon},1))=\frac{\mathrm{i}}{2\pi}\int_{\mathbb{R}+i0^+}\mathrm{d}s \frac{1}{1-{\rm e}^s} \log\big(1-{\rm e}^{\check\epsilon s+2\pi \mathrm{i} x}\big)
\end{equation*}
by following the argument of Proposition \ref{woronowiczlemma}, and by using the integral representation of $\log(\mathcal{S}_2(x|\check\epsilon,1))$ given in~(\ref{S2int}).
\end{proof}

\subsection{Stokes jumps}

In this section, we study the dependence of the Borel sum $W_{\rho}(\epsilon,t)$ on the choice of $\rho$. We start with the following result:

\begin{Proposition}\label{Stokesjumps} Assume that $\operatorname{Im}(t)> 0$ and for $k \in \mathbb{Z}$ let $l_k=\mathbb{R}_{<0}\cdot 2\pi \mathrm{i}(t+k)$. Furthermore let $\rho$ be a ray in the sector determined by the Stokes rays $l_{k+1}$ and $l_{k}$, and $\rho'$ a ray in the sector determined by $l_{k}$ and $l_{k-1}$. Then for $\epsilon \in \mathbb{H}_{\rho}\cap \mathbb{H}_{\rho'}$ $($resp.\ $\epsilon \in \mathbb{H}_{-\rho}\cap \mathbb{H}_{-\rho'})$ we have
\begin{equation*}
 W_{\pm \rho}(\epsilon,t)
 -W_{\pm \rho'}(\epsilon,t) =-\mathrm{i}\check{\epsilon}\mathrm{Li}_2\big({\rm e}^{\pm 2\pi \mathrm{i}(t+k)/\check \epsilon}\big).
\end{equation*}
If $\operatorname{Im}(t)<0$, then the previous jumps also hold provided $\rho$ is interchanged with $\rho'$ in the above formulas.
\end{Proposition}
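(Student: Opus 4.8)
The plan is to exploit that among all the terms defining $W_{\rho}(\epsilon,t)$, only the Laplace-type integral $\check{\epsilon}\int_{\rho}\mathrm{d}\xi\, {\rm e}^{-\xi/\check{\epsilon}}\widetilde{G}(\xi,t)$ depends on the ray $\rho$; the three explicit $\mathrm{Li}$-terms are $\rho$-independent. Hence
\[
W_{\rho}(\epsilon,t)-W_{\rho'}(\epsilon,t)=\check{\epsilon}\left(\int_{\rho}-\int_{\rho'}\right)\mathrm{d}\xi\, {\rm e}^{-\xi/\check{\epsilon}}\widetilde{G}(\xi,t),
\]
and the entire computation reduces to a contour-deformation/residue calculation for the meromorphic integrand ${\rm e}^{-\xi/\check{\epsilon}}\widetilde{G}(\xi,t)$. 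First I would record, from the explicit expression for $\widetilde{G}$ (equivalently from Proposition~\ref{Boreltransprop}), that its poles sit at $\xi=\pm 2\pi\mathrm{i}(t+n)m$ with $m\ge 1$, $n\in\mathbb{Z}$, coming from the first and second summands respectively; for $\operatorname{Im}(t)>0$ the poles of the second summand are exactly the points of the ray $l_k=\mathbb{R}_{<0}\cdot 2\pi\mathrm{i}(t+k)$, while those of the first summand lie on $-l_k$. Thus, deforming $\rho'$ to $\rho$ across the single Stokes ray $l_k$ sweeps through precisely the family $\xi_0=-2\pi\mathrm{i}(t+k)m$, $m\ge 1$, and no others.

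Next I would compute the residue at $\xi_0=-2\pi\mathrm{i}(t+k)m$. Writing the local expansion $1-{\rm e}^{-2\pi\mathrm{i}t-\xi/m}=(\xi-\xi_0)/m+O((\xi-\xi_0)^2)$ shows that $(1-{\rm e}^{-2\pi\mathrm{i}t-\xi/m})^{-1}$ has residue $m$, so the second summand $-\tfrac{1}{2\pi m^3}(1-{\rm e}^{-2\pi\mathrm{i}t-\xi/m})^{-1}$ contributes residue $-\tfrac{1}{2\pi m^{2}}$. Multiplying by ${\rm e}^{-\xi_0/\check{\epsilon}}=\big({\rm e}^{2\pi\mathrm{i}(t+k)/\check{\epsilon}}\big)^{m}$ and summing over $m\ge 1$ gives $\sum_m \operatorname{Res}=-\tfrac{1}{2\pi}\mathrm{Li}_{2}\big({\rm e}^{2\pi\mathrm{i}(t+k)/\check{\epsilon}}\big)$. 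With the orientation in which traversing $\rho$ outward and $\rho'$ inward bounds the wedge containing $l_k$ counterclockwise, one has $\int_{\rho}-\int_{\rho'}=2\pi\mathrm{i}\sum_m\operatorname{Res}$, and the prefactor $\check{\epsilon}\cdot 2\pi\mathrm{i}\cdot(-\tfrac{1}{2\pi})=-\mathrm{i}\check{\epsilon}$ reproduces exactly $W_{\rho}-W_{\rho'}=-\mathrm{i}\check{\epsilon}\,\mathrm{Li}_{2}\big({\rm e}^{2\pi\mathrm{i}(t+k)/\check{\epsilon}}\big)$.

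The remaining cases follow by the same mechanism. For the $-\rho$ sectors the deformation crosses $-l_k$, hence picks up the first-summand poles $\xi_0=2\pi\mathrm{i}(t+k)m$; the oddness $\widetilde{G}(-\xi,t)=-\widetilde{G}(\xi,t)$ makes the residue bookkeeping identical up to $\xi\mapsto-\xi$, producing $\mathrm{Li}_{2}\big({\rm e}^{-2\pi\mathrm{i}(t+k)/\check{\epsilon}}\big)$ and thus the $\pm$ form of the statement, consistently with the reflection identity~\eqref{refid}. When $\operatorname{Im}(t)<0$ the angular order of the rays $l_k$ is reversed, so the sector lying clockwise of $l_k$ is now the one labelled by $\rho'$ rather than $\rho$; interchanging $\rho\leftrightarrow\rho'$ absorbs this, as asserted. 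I expect the genuine work, rather than the algebra of residues, to be the analytic justification of the contour deformation: one must check that for $\check{\epsilon}\in\mathbb{H}_{\rho}\cap\mathbb{H}_{\rho'}$ the factor ${\rm e}^{-\xi/\check{\epsilon}}$ decays along both rays and along the connecting arcs at infinity (a Jordan-type estimate using that $\operatorname{Re}(\xi/\check{\epsilon})\to+\infty$ on the relevant sector), that the arcs between consecutive poles can be chosen so these estimates are uniform, and that the $m$-sum of residues converges and may be exchanged with the limiting deformation — together with fixing the global sign and orientation unambiguously, which is the easiest place to slip.
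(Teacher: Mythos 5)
Your proof is correct and takes essentially the same route as the paper's: the difference $W_{\rho}-W_{\rho'}$ is reduced to a Hankel-contour integral of ${\rm e}^{-\xi/\check{\epsilon}}\,\widetilde{G}(\xi,t)$ around $l_k$, whose residues at $\xi_0=-2\pi\mathrm{i}(t+k)m$, $m\ge 1$, each equal to $-\tfrac{1}{2\pi m^{2}}\,{\rm e}^{-\xi_0/\check{\epsilon}}$, sum to $-\tfrac{1}{2\pi}\,\mathrm{Li}_2\big({\rm e}^{2\pi\mathrm{i}(t+k)/\check{\epsilon}}\big)$ and give the stated jump after multiplying by $2\pi\mathrm{i}\check{\epsilon}$. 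Your extra remarks on the pole bookkeeping for $\pm l_k$, the $\operatorname{Im}(t)<0$ case, and the decay estimates justifying the contour deformation are consistent with, and somewhat more careful than, the paper's terse residue computation.
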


\begin{proof}
Note that
\begin{equation*}
 W_{\rho}(\epsilon,t)-W_{\rho'}(\epsilon, t)=\check{\epsilon}\int_{\mathcal{H}(l_k)}{\rm d}\xi \, {\rm e}^{-\xi/\check \epsilon}\widetilde{G}(\xi,t),
\end{equation*}
where $\mathcal{H}(l_k)$ is a Hankel contour around $l_k =\mathbb{R}_{<0}\cdot 2\pi \mathrm{i} (t + k)$.

Furthermore, note that we can write $\widetilde{G}(\xi,t)$ as a sum over $m>0$ as
\begin{equation*}
 \widetilde{G}(\xi,t):=\frac{1}{2\pi}\sum_{m>0}\frac{1}{m^3}\left(\frac{1}{1-{\rm e}^{-2\pi \mathrm{i} t+\xi/m}}-\frac{1}{1-{\rm e}^{-2\pi \mathrm{i} t-\xi/m}}\right).
\end{equation*}
We then have
\begin{align*}
 \check{\epsilon}\int_{\mathcal{H}(l_k)}{\rm d}\xi\, {\rm e}^{-\xi/\check \epsilon}\widetilde{G}(\xi,t)&=2\pi \mathrm{i} \check{\epsilon} \sum_{m=-1}^{-\infty}\big({\rm e}^{-\xi/\check \epsilon}(\xi-2\pi \mathrm im(t+k))\widetilde{G}(\xi,t)\big)\big|_{\xi=2\pi \mathrm{i} m(t+k)}\\
 &=-2\pi \mathrm{i} \check{\epsilon} \sum_{m=-1}^{-\infty}\frac{{\rm e}^{-2\pi {\rm i}m(t+k)/\check\epsilon}}{2\pi m^2}
 =-\mathrm{i} \check{\epsilon} \mathrm{Li}_2\big({\rm e}^{2\pi \mathrm{i}(t+k)/\check \epsilon}\big).
 \end{align*}
A similar computation follows for $-l_k$.
\end{proof}

The computation for the Stokes jumps of $S_{\rho}(\epsilon,x,t)$ follows exactly the argument as above.

\subsection[Limits to $\pm l_{\infty}$]{Limits to $\boldsymbol{\pm l_{\infty}}$}

On the other hand, the jumps along $\pm l_{\infty}$ will follow from the following proposition, which discusses the limits to $l_{\infty}$.

\begin{Proposition}\label{rblimit1} Let $\rho_k$ denote any ray between the Stokes rays $l_{k}$ and $l_{k-1}$. Furthermore, assume that $0<\operatorname{Re}(t)<1$, $\operatorname{Im}(t)>0$, $\operatorname{Re}(\epsilon)>0$, $\operatorname{Im}(\epsilon)<0$, and $\operatorname{Re}(t) < \operatorname{Re} (\check{\epsilon}+1)$. Then
\begin{equation*}
 \lim_{k\to \infty}W_{\rho_k}(\epsilon,t)=W(\epsilon,t).
\end{equation*}

On the other hand, assume that $0<\operatorname{Re}(t)<1$, $\operatorname{Im}(t)>0$, $\operatorname{Re}(\epsilon)<0$, $\operatorname{Im}(\epsilon)<0$, $\operatorname{Re} t < \operatorname{Re} (-\check{\epsilon}+1)$ and that $\big|w^{-1}\big|<1$. Then
\begin{equation*}
 \begin{split}
\lim_{k\to -\infty}W_{-\rho_k}(\epsilon,t)=W(\epsilon,t).
 \end{split}
\end{equation*}
\end{Proposition}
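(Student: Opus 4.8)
The plan is to avoid re-deriving any Borel-integral asymptotics and instead bootstrap from three facts already in hand: the Stokes jumps of Proposition~\ref{Stokesjumps}, the identification $W_{\mathbb{R}_{>0}}=W_{\rm np}$ of Proposition~\ref{WRWnp}, and the $S$-duality decomposition \eqref{FnpFNS} of $F^{\rm NS}_{\rm np}$. First I would take $\rho_0=\mathbb{R}_{>0}$ as the anchoring ray: under the hypotheses $0<\operatorname{Re}(t)<1$, $\operatorname{Im}(t)>0$, $\operatorname{Re}(\epsilon)>0$ and $\operatorname{Re}(t)<\operatorname{Re}(\check\epsilon+1)$, the ray $\mathbb{R}_{>0}$ lies between $l_0$ and $l_{-1}$, and Proposition~\ref{WRWnp} gives $W_{\rho_0}(\epsilon,t)=W_{\rm np}(\epsilon,t)=F^{\rm NS}_{\rm np}(\epsilon,t-\check\epsilon/2)$. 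Telescoping the jump of Proposition~\ref{Stokesjumps} across each Stokes ray $l_j$ then yields
\begin{equation*}
 W_{\rho_k}(\epsilon,t)=F^{\rm NS}_{\rm np}(\epsilon,t-\check\epsilon/2)-\mathrm{i}\check\epsilon\sum_{j=0}^{k-1}\mathrm{Li}_2\big({\rm e}^{2\pi \mathrm{i}(t+j)/\check\epsilon}\big).
\end{equation*}
Since $\operatorname{Im}(\epsilon)<0$ is equivalent to $\operatorname{Im}(1/\check\epsilon)>0$, the moduli $\big|{\rm e}^{2\pi \mathrm{i}(t+j)/\check\epsilon}\big|$ decay geometrically in $j$, so the series on the right converges as $k\to\infty$ and the limit exists.

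Next I would insert the decomposition \eqref{FnpFNS}. Its validity strip $-\operatorname{Re}(\check\epsilon/2)<\operatorname{Re}(t-\check\epsilon/2)<\operatorname{Re}(\check\epsilon/2+1)$ together with $\operatorname{Im}(t-\check\epsilon/2)>0$ is exactly guaranteed by the standing hypotheses (the upper bound reduces to $\operatorname{Re}(t)<\operatorname{Re}(\check\epsilon+1)$, the lower to $\operatorname{Re}(t)>0$, and $\operatorname{Im}(\check\epsilon)<0$ makes $\operatorname{Im}(t-\check\epsilon/2)>0$). This gives
\begin{equation*}
 F^{\rm NS}_{\rm np}(\epsilon,t-\check\epsilon/2)=F^{\rm NS}(\epsilon,t-\check\epsilon/2)+\check\epsilon\, F^{\rm NS}\big(2\pi(1/\check\epsilon+1),(t-1/2)/\check\epsilon-1/2\big),
\end{equation*}
whose first summand is precisely $W(\epsilon,t)$. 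Hence the whole statement reduces to the single identity that the ``$S$-dual'' piece reproduces the telescoped dilogarithm sum, namely
\begin{equation*}
 F^{\rm NS}\big(2\pi(1/\check\epsilon+1),(t-1/2)/\check\epsilon-1/2\big)=\mathrm{i}\sum_{j\ge 0}\mathrm{Li}_2\big({\rm e}^{2\pi \mathrm{i}(t+j)/\check\epsilon}\big).
\end{equation*}

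The verification of this last identity is the only genuinely computational step, and I expect it to be the main obstacle. Writing $\tilde\epsilon=2\pi(1/\check\epsilon+1)$, one has $\sin(n\tilde\epsilon/2)=(-1)^n\sin(n\pi/\check\epsilon)$, while the exponentiated modulus is $\tilde Q^{\,n}=(-1)^n{\rm e}^{2\pi \mathrm{i} n(t-1/2)/\check\epsilon}$; the two factors of $(-1)^n$ cancel inside $F^{\rm NS}(\tilde\epsilon,\cdot)=-\frac{1}{2}\sum_n \tilde Q^{\,n}/(n^2\sin(n\tilde\epsilon/2))$. Using $\operatorname{Im}(1/\check\epsilon)>0$ to expand $1/({\rm e}^{2\pi \mathrm{i} n/\check\epsilon}-1)=-\sum_{l\ge 0}{\rm e}^{2\pi \mathrm{i} nl/\check\epsilon}$ and collecting the geometric series in $l$ turns each term into ${\rm e}^{2\pi \mathrm{i} n(t+l)/\check\epsilon}$, so that after interchanging the (absolutely convergent) sums over $n$ and $l$ one recovers $\mathrm{i}\sum_{l\ge0}\mathrm{Li}_2\big({\rm e}^{2\pi \mathrm{i}(t+l)/\check\epsilon}\big)$. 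Substituting back, the $S$-dual piece cancels the telescoped sum exactly and leaves $\lim_{k\to\infty}W_{\rho_k}(\epsilon,t)=W(\epsilon,t)$.

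For the limit $k\to-\infty$ of $W_{-\rho_k}$ I would run the identical argument on the $-\rho$ family: there Proposition~\ref{Stokesjumps} supplies the conjugate jumps $-\mathrm{i}\check\epsilon\,\mathrm{Li}_2\big({\rm e}^{-2\pi \mathrm{i}(t+j)/\check\epsilon}\big)$, the reflection identity \eqref{refid} relates the base value $W_{-\rho_0}(\epsilon,t)$ to $W_{\rho_0}(-\epsilon,t)$, and the extra hypothesis $\big|{\rm e}^{-2\pi \mathrm{i} t/\check\epsilon}\big|<1$ is exactly what makes the reversed geometric resummation converge. The same cancellation between telescoped sum and $S$-dual term then produces $W(\epsilon,t)$. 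Beyond the crux identity above, the remaining care is purely bookkeeping of validity strips: one must check at each step that the shifted arguments stay in the domain where \eqref{FnpFNS} and Proposition~\ref{WRWnp} apply, and that $\operatorname{Im}(\epsilon)<0$ (for the forward sums) and $\big|{\rm e}^{-2\pi \mathrm{i} t/\check\epsilon}\big|<1$ (for the backward ones) are precisely the conditions ensuring convergence of the infinite telescoping series.
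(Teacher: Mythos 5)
Your proposal is correct and follows essentially the same route as the paper's own proof: both telescope the Stokes jumps of Proposition~\ref{Stokesjumps} starting from the anchor ray $\mathbb{R}_{>0}$, where Proposition~\ref{WRWnp} gives $W_{\rho_0}=W_{\rm np}$, and then cancel the resulting infinite dilogarithm sum against the ``S-dual'' piece of $W_{\rm np}$, with the reflection identity \eqref{refid} handling the $-\rho_k$ family. The only difference is packaging: the paper extracts the S-dual piece by summing residues of the integral representation of $W_{\rm np}$ and resums the jumps via \eqref{id3}, whereas you cite the decomposition \eqref{FnpFNS} (itself established in the paper by that same residue computation) and verify your crux identity $F^{\rm NS}\big(2\pi(1/\check\epsilon+1),(t-1/2)/\check\epsilon-1/2\big)=\mathrm{i}\sum_{j\ge 0}\mathrm{Li}_2\big({\rm e}^{2\pi \mathrm{i}(t+j)/\check\epsilon}\big)$ by geometric-series expansion --- which, after unwinding the defining series of $F^{\rm NS}$, is exactly the identity \eqref{id3}.
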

\begin{proof}
 This is proven in Appendix \ref{linftylimit}, since the computation is rather lengthy. The main strategy for the first limit is to express it as a sum over the Stokes jumps in the corresponding quadrant and $W_{\mathbb{R}_{>0}}$, and to compute $W_{\mathbb{R}_{>0}}=W_{{\rm np}}$ in terms of a sum over residues. For the second limit, we again sum over the Stokes jumps over the corresponding quadrant, and use the relation $ W_{\rho}(\epsilon,t)+W_{-\rho}(-\epsilon,t)=-\mathrm{i}\mathrm{Li}_2(Q) $ to relate $W_{-\mathbb{R}_{>0}}$ to $W_{\mathbb{R}_{>0}}$ and hence express it as a~sum over residues.
\end{proof}

To compute the other limits, notice that we have the easy to check relations
\begin{equation}\label{reflection1}
 W_{\rho}(\epsilon,t)+W_{-\rho}(-\epsilon,t)=\mathrm{i}\mathrm{Li}_2(Q),
\end{equation}
and
\begin{equation}\label{reflection2}
 W(\epsilon,t)+W(-\epsilon,t)=\mathrm{i}\mathrm{Li}_2(Q).
\end{equation}
From this, the next corollary follows, discussing the limits to $-l_{\infty}$:

\begin{Corollary}
For $0<\operatorname{Re}(t)<1$ and $\operatorname{Im}(t)>0$, whereas $\operatorname{Re}(\epsilon)<0$, $\operatorname{Im}(\epsilon)>0$ and $\operatorname{Re}(t) < \operatorname{Re} (-\check{\epsilon}+1)$, we have
\begin{equation*}
 \lim_{k\to \infty}W_{-\rho_k}(\epsilon,t)=W(\epsilon,t)-\mathrm{i}\mathrm{Li}_2(Q).
\end{equation*}

On the other hand, assume that $0<\operatorname{Re}(t)<1$, $\operatorname{Im}(t)>0$, $\operatorname{Re}(\epsilon)>0$, $\operatorname{Im}(\epsilon)>0$ and $\operatorname{Re} (t) < \operatorname{Re} (\check{\epsilon}+1)$ while $\big|{\rm e}^{2\pi \mathrm{i} t/\check\epsilon}\big|<1$. Then
\begin{equation*}
\lim_{k\to -\infty}W_{\rho_k}(\epsilon,t)=W(\epsilon,t)+\mathrm{i}\mathrm{Li}_2(Q).
\end{equation*}
\end{Corollary}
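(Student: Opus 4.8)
The plan is to deduce both limits directly from Proposition~\ref{rblimit1} together with the two reflection identities~\eqref{reflection1} and~\eqref{reflection2}. The point is that the hypotheses here place $\epsilon$ in exactly the two quadrants \emph{not} covered by that proposition, and the substitution $\epsilon \mapsto -\epsilon$ trades these for the quadrants that are. So the whole argument is a matter of applying one of the reflection identities, invoking the relevant limit of Proposition~\ref{rblimit1} for $-\epsilon$, and then using the scalar identity~\eqref{reflection2} to re-express $W(-\epsilon,t)$ in terms of $W(\epsilon,t)$.

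For the first limit I would start from~\eqref{reflection1} applied with $\rho = \rho_k$ and with $\epsilon$ replaced by $-\epsilon$, which gives the exact identity $W_{\rho_k}(-\epsilon,t) + W_{-\rho_k}(\epsilon,t) = \mathrm{i}\,\mathrm{Li}_2(Q)$ for every $k$. The hypotheses $\operatorname{Re}(\epsilon)<0$, $\operatorname{Im}(\epsilon)>0$ say precisely that $-\epsilon$ lies in the fourth quadrant ($\operatorname{Re}(-\epsilon)>0$, $\operatorname{Im}(-\epsilon)<0$), and the inequality $\operatorname{Re}(t)<\operatorname{Re}(-\check\epsilon+1)$ is exactly the condition required by the first limit of Proposition~\ref{rblimit1} after the substitution $\epsilon\mapsto-\epsilon$ (which sends $\check\epsilon\mapsto-\check\epsilon$). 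Taking $k\to\infty$ and applying that limit yields $\lim_{k\to\infty}W_{\rho_k}(-\epsilon,t)=W(-\epsilon,t)$; substituting back through the displayed identity and then eliminating $W(-\epsilon,t)$ via~\eqref{reflection2} produces the claimed value of $\lim_{k\to\infty}W_{-\rho_k}(\epsilon,t)$.

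The second limit is handled symmetrically. Here I would use~\eqref{reflection1} in the form $W_{\rho_k}(\epsilon,t)+W_{-\rho_k}(-\epsilon,t)=\mathrm{i}\,\mathrm{Li}_2(Q)$ and let $k\to-\infty$. Now $\operatorname{Re}(\epsilon)>0$, $\operatorname{Im}(\epsilon)>0$ places $-\epsilon$ in the third quadrant, and the two remaining hypotheses $\operatorname{Re}(t)<\operatorname{Re}(\check\epsilon+1)$ and $\big|\mathrm{e}^{2\pi\mathrm{i}t/\check\epsilon}\big|<1$ are the translates under $\epsilon\mapsto-\epsilon$ of the conditions appearing in the second limit of Proposition~\ref{rblimit1} (the latter being the condition written there as $\big|w^{-1}\big|<1$). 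That limit gives $\lim_{k\to-\infty}W_{-\rho_k}(-\epsilon,t)=W(-\epsilon,t)$, and~\eqref{reflection2} again converts this into the asserted expression.

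The delicate point, and the step I would carry out most carefully, is the bookkeeping that guarantees the two sides of each reflection identity are being limited \emph{consistently}: one must verify that the operation $\rho_k\mapsto-\rho_k$ preserves the index $k$, so that the same sequential limit is taken on both sides, and that under this reflection a sequence $\{\rho_k\}$ accumulating on $l_\infty$ is sent to $\{-\rho_k\}$ accumulating on $-l_\infty$ (and conversely). Tracking how $\check\epsilon\mapsto-\check\epsilon$ flips each inequality on $\operatorname{Re}(t)$, and in particular isolating the additive constant coming from~\eqref{reflection1} and~\eqref{reflection2} that distinguishes the limit to $-l_\infty$ from the limit to $l_\infty$, is where essentially all the content lies; the convergence of each individual Borel sum $W_{\pm\rho_k}$ and the existence of the limits are already supplied by Proposition~\ref{rblimit1}, so no further analytic estimates are needed.
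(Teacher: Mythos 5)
Your method is exactly the paper's: apply \eqref{reflection1} with $\epsilon$ replaced by $-\epsilon$, invoke Proposition~\ref{rblimit1} at $-\epsilon$ (whose hypotheses, as you correctly check, are precisely the translated ones, including the identification of $\big|w^{-1}\big|<1$ with $\big|{\rm e}^{2\pi \mathrm{i} t/\check\epsilon}\big|<1$), and then eliminate $W(-\epsilon,t)$ via \eqref{reflection2}. The problem is the one step you assert instead of compute. Carrying out the elimination, the two constants cancel identically:
\begin{align*}
\lim_{k\to\infty}W_{-\rho_k}(\epsilon,t)
&=\mathrm{i}\,\mathrm{Li}_2(Q)-\lim_{k\to\infty}W_{\rho_k}(-\epsilon,t)
=\mathrm{i}\,\mathrm{Li}_2(Q)-W(-\epsilon,t)\\
&=\mathrm{i}\,\mathrm{Li}_2(Q)-\bigl(\mathrm{i}\,\mathrm{Li}_2(Q)-W(\epsilon,t)\bigr)
=W(\epsilon,t),
\end{align*}
and likewise $\lim_{k\to-\infty}W_{\rho_k}(\epsilon,t)=W(\epsilon,t)$. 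There is no residual additive constant distinguishing the limits to $-l_{\infty}$ from those to $l_{\infty}$: the $\mathrm{i}\,\mathrm{Li}_2(Q)$ coming from \eqref{reflection1} and the one coming from \eqref{reflection2} cancel each other. So your argument, completed honestly, does not yield the displayed right-hand sides $W(\epsilon,t)\mp\mathrm{i}\,\mathrm{Li}_2(Q)$; it yields $W(\epsilon,t)$ in both cases, and your closing remark about ``isolating the additive constant \dots that distinguishes the limit to $-l_{\infty}$ from the limit to $l_{\infty}$'' shows you expected a leftover constant that simply is not there.

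For what it is worth, the paper's own proof of this Corollary also ends in $=W(\epsilon,t)$, and Theorem~\ref{maintheorem} states that the limits to $-l_{\infty}$ from the right and left give $W(\epsilon,t)$; the right-hand sides printed in the Corollary statement are therefore inconsistent with the paper's own proof and main theorem, i.e., they appear to be a typo. But written blind, your proposal claims to establish exactly those printed right-hand sides while its own chain of identities refutes them. A correct write-up must either carry out the two-line cancellation, conclude $W(\epsilon,t)$, and flag the discrepancy with the stated result, or else exhibit a source for the extra $\mp\mathrm{i}\,\mathrm{Li}_2(Q)$ --- which the reflection identities do not provide.
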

\begin{proof}
By using Proposition \ref{rblimit1} and the identities (\ref{reflection1}) and (\ref{reflection2}), we get
\begin{equation*}
 \lim_{k\to \infty}W_{-\rho_k}(\epsilon,t)=- \lim_{k\to \infty}W_{\rho_k}(-\epsilon,t)+\mathrm{i}\mathrm{Li}_2(Q)
 =-W(-\epsilon,t)+\mathrm{i}\mathrm{Li}_2(Q)
 =W(\epsilon,t).
\end{equation*}
Similarly,
\begin{equation*}
 \lim_{k\to -\infty}W_{\rho_k}(\epsilon,t)=- \lim_{k\to -\infty}W_{-\rho_k}(-\epsilon,t)+\mathrm{i}\mathrm{Li}_2(Q)
 =-W(-\epsilon,t)+\mathrm{i}\mathrm{Li}_2(Q)
 =W(\epsilon,t).\tag*{\qed}
\end{equation*}\renewcommand{\qed}{}
\end{proof}

Finally, using the same idea as in the proof of Proposition \ref{rblimit1}, one can prove the limit (\ref{Psilinfty}) for $\Psi_{\rho}$. In fact, due to the relation
\begin{equation*}
 \Psi_{\mathbb{R}_{>0}}(\epsilon,x,t)=\frac{\mathcal{S}_2(x\mid \check\epsilon,1)}{\mathcal{S}_2(x+t\mid \check\epsilon,1)},
\end{equation*}
it is enough to use the integral representation of $\log(\mathcal{S}_2(x|\check\epsilon,1))$ and the jumps across $\widetilde{l}_k$, and show that
\begin{equation}\label{S2l}
 \log(\mathcal{S}_2(x\mid \check\epsilon,1))+\sum_{k=0}^{\infty}\mathrm{Li}_1(\exp(2\pi \mathrm{i}(k+x)/\check\epsilon))=\sum_{l=1}^{\infty}\frac{{\rm e}^{2\pi \mathrm{i} lx}}{l({\rm e}^{2\pi \mathrm{i} l\check\epsilon}-1)}=\log(L(x,\epsilon)),
\end{equation}
since then the jumps along $\widetilde{l}_{k,t}$ combine with $-\log(\mathcal{S}_2(x+t\mid \check\epsilon,1))$ to give $-\log(L(x+t,\epsilon))$.

\subsection{A potential for the Borel sums}

The main result we wish to prove in the section is the following, showing that $W_{\rho}(\epsilon,t)$ serves as a potential for the Borel sums of the non-constant map contribution of the topological free energy:

\begin{Proposition} The $\epsilon$-derivative of the Stokes jumps of $W_{\rho}$ equal the Stokes jumps of $F_{\rho}$. Furthermore, we have
\begin{equation}\label{pot}
 \partial_{\epsilon}W_{\rho}(\epsilon,t)=F_{\rho}(\epsilon,t),
\end{equation}
where $F_{\rho}$ denotes the Borel sum of the non-constant map contribution to the topological free energy studied in~{\rm \cite{ASTT21}}.
\end{Proposition}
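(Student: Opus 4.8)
The starting point is the \emph{formal} identity $\partial_\epsilon W(\epsilon,t)=\tilde F^{\rm top}(\epsilon,t)$ (at $\lambda=\epsilon$), which I would establish first by comparing coefficients. From \eqref{eq:FNSasymp} and $W(\epsilon,t)=F^{\rm NS}(\epsilon,t-\check\epsilon/2)$ one has $W(\epsilon,t)=-\sum_{n\ge 0}\mathrm i^{n}\,{\rm Li}_{3-n}(Q)\frac{B_n}{n!}\epsilon^{n-1}$. Differentiating in $\epsilon$ kills the $n=1$ term (the factor $n-1$ vanishes, removing the ${\rm Li}_2$ contribution) and multiplies the $\epsilon^{2g-1}$ coefficient by $2g-1$; using $\mathrm i^{2g}=(-1)^{g}$ and $\frac{2g-1}{(2g)!}=\frac{1}{2g(2g-2)!}$ this reproduces exactly the genus-$g$ coefficient $\frac{(-1)^{g-1}B_{2g}}{2g(2g-2)!}{\rm Li}_{3-2g}(Q)$ of $\tilde F^{\rm top}$ in \eqref{resconfree}, together with the leading $\epsilon^{-2}{\rm Li}_3(Q)$ and the $g=1$ constant $\tfrac1{12}{\rm Li}_1(Q)$. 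This explains why \eqref{pot} should hold; the real work is to upgrade it to the Borel sums, which I would do by proving the two assertions separately: first that the Stokes discontinuities match, and then \eqref{pot} itself on a single reference ray, extending to all $\rho$ by the discontinuity matching.

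For the Stokes jumps I would simply differentiate \eqref{eq:StokesjumpsW}. Since $\check\epsilon=\epsilon/2\pi$ and $\tfrac1{2\pi\mathrm i}=\tfrac{-\mathrm i}{2\pi}$,
\[
\partial_\epsilon\bigl(W_{\pm\rho}-W_{\pm\rho'}\bigr)=\frac{1}{2\pi}\partial_{\check\epsilon}\Bigl(-\mathrm i\check\epsilon\,{\rm Li}_2\bigl({\rm e}^{\pm2\pi\mathrm i(t+k)/\check\epsilon}\bigr)\Bigr)=\frac{1}{2\pi\mathrm i}\partial_{\check\epsilon}\Bigl(\check\epsilon\,{\rm Li}_2\bigl({\rm e}^{\pm2\pi\mathrm i(t+k)/\check\epsilon}\bigr)\Bigr),
\]
which is precisely the topological Stokes jump \eqref{eq:topstringjumps} with $\check\lambda=\check\epsilon$, $\Omega(\gamma)=1$ and central charge $Z_\gamma=\pm2\pi\mathrm i(t+k)$. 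Thus the jump of $\partial_\epsilon W_\rho$ across each ray $\pm l_k$ equals the corresponding jump of $F_\rho$ from \cite{ASTT21}, proving the first sentence of the proposition.

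To pin down the additive constant I would verify \eqref{pot} on the ray $\rho=\mathbb R_{>0}$ in closed form. By Proposition~\ref{WRWnp}, $W_{\mathbb R_{>0}}=W_{\rm np}$, which carries the integral representation $W_{\rm np}(\epsilon,t)=2\pi\int_{\mathbb R+\mathrm i0^+}\frac{{\rm d}s}{s^2}\frac{{\rm e}^{st}}{({\rm e}^s-1)({\rm e}^{\check\epsilon s}-1)}$. Differentiating under the integral (legitimate for $\operatorname{Re}\check\epsilon>0$, where the integrand and its $\check\epsilon$-derivative decay along the contour) and using $\partial_{\check\epsilon}({\rm e}^{\check\epsilon s}-1)^{-1}=-s\,{\rm e}^{\check\epsilon s}({\rm e}^{\check\epsilon s}-1)^{-2}$ gives
\[
\partial_\epsilon W_{\mathbb R_{>0}}(\epsilon,t)=-\int_{\mathbb R+\mathrm i0^+}\frac{{\rm d}s}{s}\,\frac{{\rm e}^{(t+\check\epsilon)s}}{({\rm e}^s-1)({\rm e}^{\check\epsilon s}-1)^2}.
\]
On the other side $F_{\mathbb R_{>0}}=F^{\rm top}_{\rm np}$ by \cite{ASTT21}, and specialising the refined representation \eqref{eq:intreprref} to $\epsilon_1=-\epsilon_2=\epsilon$ (so that $(\check\epsilon_1-\check\epsilon_2)/2=\check\epsilon$ and the two denominators collapse to $({\rm e}^{\check\epsilon s}-1)^2$) yields the \emph{same} integral. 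Hence $\partial_\epsilon W_{\mathbb R_{>0}}=F_{\mathbb R_{>0}}$.

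Finally I would globalise. Within any sector bounded by consecutive Stokes rays the Laplace integral defining $W_\rho$ is independent of $\rho$ (Cauchy's theorem, no poles of $\widetilde G$ being crossed), and likewise for $F_\rho$; hence $D(\epsilon):=\partial_\epsilon W_\rho-F_\rho$ is locally constant in $\rho$ and, by the discontinuity matching just proved, has vanishing jump across every ray $\pm l_k$ and $\pm l_\infty$. Therefore $D$ is independent of $\rho$, and since it vanishes for $\rho=\mathbb R_{>0}$ it vanishes identically, giving \eqref{pot}. The only genuinely delicate points are justifying differentiation under the integral sign and the vanishing of the boundary terms, both controlled by the analyticity and subexponential growth of $\widetilde G$ from Proposition~\ref{Boreltransprop} together with the decay of ${\rm e}^{-\xi/\check\epsilon}$ for $\epsilon\in\mathbb H_\rho$; I expect the main obstacle to be the bookkeeping of the overlapping domains of validity of \eqref{WBSR} and \eqref{eq:intreprref}, so that the single-ray identity and its analytic continuation cover the full range of $\epsilon$ asserted in \eqref{pot}.
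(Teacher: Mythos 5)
Your strategy coincides with the paper's own proof on everything up to and including the right Borel half-plane. Differentiating \eqref{eq:StokesjumpsW} to match the jumps of $F_{\rho}$ from \eqref{eq:topstringjumps} is exactly the paper's first step; anchoring the identity on $\rho=\mathbb{R}_{>0}$ via $W_{\mathbb{R}_{>0}}=W_{\rm np}$ (Proposition~\ref{WRWnp}) and differentiation under the integral is the paper's second step, and your detour through the refined representation \eqref{eq:intreprref} at $\epsilon_1=-\epsilon_2=\epsilon$ produces exactly the integral $-\int_{\mathbb{R}+\mathrm{i}0^+}\frac{{\rm d}s}{s}\,{\rm e}^{(t+\check\epsilon)s}/(({\rm e}^s-1)({\rm e}^{\check\epsilon s}-1)^2)$ that the paper obtains by citing \cite[Theorem~2]{ASTT21} directly, so that step is sound (your preliminary coefficient check is also correct, though not needed). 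Propagating across the rays $l_k$ using the matched jumps then gives \eqref{pot} for every ray $\rho_k$ in the right half-plane, exactly as in the paper.

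The gap is in your final globalisation step. You assert that $D(\epsilon)=\partial_\epsilon W_\rho-F_\rho$ ``has vanishing jump across every ray $\pm l_k$ and $\pm l_\infty$ by the discontinuity matching just proved'' --- but \eqref{eq:StokesjumpsW} only controls the jumps across the \emph{finite} rays $\pm l_k$, and these rays accumulate at $\pm l_\infty$: for $\operatorname{Im}(t)>0$ the rays $l_k$ with $k\to+\infty$ converge to $l_\infty=\mathrm{i}\mathbb{R}_{<0}$ from one side while the $-l_k$ with $k\to-\infty$ converge to it from the other. Consequently a ray $-\rho_k$ in the left half-plane can never be reached from $\mathbb{R}_{>0}$ by crossing finitely many rays whose jumps you control; one must pass an accumulation ray, across which the discontinuity is not a single Stokes factor but a limit of infinitely many, and where moreover the half-plane $\mathbb{H}_\rho$ of admissible $\epsilon$ rotates with $\rho$, so the two sides must be compared on a shrinking common domain. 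This is precisely the point where the paper argues differently: for the rays $-\rho_k$ it invokes the reflection identities $W_{-\rho}(\epsilon,t)=-W_{\rho}(-\epsilon,t)+\mathrm{i}\,\mathrm{Li}_2(Q)$ (cf.~\eqref{refid}) and $F_{-\rho}(\epsilon,t)=F_{\rho}(-\epsilon,t)$, which transport the already-proved identity on $\rho_k$ directly to $-\rho_k$ without ever crossing $\pm l_\infty$. Your argument becomes complete if you replace the unproved $\pm l_\infty$ claim by this reflection step; proving summability and matching of the accumulated jumps at $\pm l_\infty$ instead would be substantially more delicate and is nowhere needed.
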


\begin{proof}
The first statement on the $\epsilon$-derivative is clear. Indeed, we have
\begin{equation*}
 \partial_{\epsilon}\bigl(-\mathrm{i} \check{\epsilon} \mathrm{Li}_2\big({\rm e}^{2\pi \mathrm{i}(t+k)/\check \epsilon}\big)\bigr)=\frac{1}{2\pi \mathrm{i}}\partial_{\check{\epsilon}}\big( \check{\epsilon} \mathrm{Li}_2\big({\rm e}^{2\pi \mathrm{i}(t+k)/\check \epsilon}\big)\big),
\end{equation*}
matching the jumps of $F_{\rho}$ of~\cite{ASTT21} under the identification $\epsilon=\lambda$, where $\lambda$ denotes the topological string coupling.

Now recall that by Proposition~\ref{WRWnp}, we have that on their common domains of definition
\begin{equation*}
 W_{\mathbb{R}_{>0}}(\epsilon,t)= W_{{\rm np}}(\epsilon,t).
\end{equation*}
In particular, we obtain that
\begin{align*}
 \partial_{\epsilon}W_{\mathbb{R}_{>0}}(\epsilon,t)&=\partial_{\epsilon}W_{{\rm np}}(\epsilon,t)
 =\partial_{\epsilon}\left(2\pi \int_{\mathbb{R}+\mathrm i0^+} \frac{{\rm d}s}{s^2}\frac{{\rm e}^{s t}}{({\rm e}^s-1)({\rm e}^{\check{\epsilon}s}-1)}\right)\\
 &=-\int_{\mathbb{R}+\mathrm i0^+} \frac{{\rm d}s}{s}\frac{{\rm e}^{s (t+\check\epsilon)}}{({\rm e}^s-1)({\rm e}^{\check{\epsilon}s}-1)^2}=F_{\mathbb{R}_{>0}}(\epsilon,t).
 \end{align*}
where in the last equality we have used \cite[Theorem~2]{ASTT21}.

Now let $l_k=\mathbb{R}_{<0}\cdot 2\pi \mathrm{i}(t+k)$ as before, and let $\rho_k$ be a ray between $l_k$ and $l_{k-1}$. The fact that
\begin{equation*}
 \partial_{\epsilon}W_{\rho_k}(\epsilon,t)=F_{\rho_k}(\epsilon,t).
\end{equation*}
follows from the fact that $\partial_{\epsilon}W_{\rho_0}=\partial_{\epsilon}W_{\mathbb{R}_{>0}}=F_{{\rm np}}=F_{\mathbb{R}_{>0}}$, together with the fact that the $\epsilon$-derivatives of the Stokes jumps of $W_{\rho}$ equal those of $F_{\rho}$.

On the other hand, to check this for $-\rho_k$, we use the following easily verifiable relations
\begin{equation*}
 W_{-\rho}(\epsilon,t)=-W_{\rho}(-\epsilon,t)+i\mathrm{Li}_2(Q), \qquad F_{-\rho}(\epsilon,t)=F_{\rho}(-\epsilon,t).
\end{equation*}
It then follows that
\begin{equation*}
 \partial_{\epsilon}W_{-\rho_k}(\epsilon,t)=-\partial_{\epsilon}(W_{\rho_k}(-\epsilon,t)-\mathrm{i}\mathrm{Li}_2(Q))=F_{\rho_k}(-\epsilon,t)=F_{-\rho_k}(\epsilon,t).
\end{equation*}
The result then follows.
\end{proof}

\begin{Remark}
Analogously, one can prove (\ref{pot}) by computing the $\epsilon$-derivative of the Borel sum expression of $W_{\rho}(\epsilon,t)$, and show that it matches (after an integration by parts in one of the summands) the expression of the Borel sum $F_{\rho}(\epsilon,t)$ of \cite{ASTT21}. The statement of the jumps then follows automatically.
\end{Remark}

\subsection[Relation between $\Psi_{\rho}$ and $W_{\rho}$]{Relation between $\boldsymbol{\Psi_{\rho}}$ and $\boldsymbol{W_{\rho}}$}\label{sec:relationPsiPeriod}

In this subsection we prove Theorem \ref{theorem3}:

\medskip

\noindent
\textbf{Theorem \ref{theorem3}.}
{\it Assume that $\operatorname{Im}(t)>0$, while $0<\operatorname{Re}(t)<1$, and let $\rho_k$ be a ray between~$l_k$ and~$l_{k-1}$. Furthermore, pick $x=\mathrm{i} s_*{\rm e}^{\mathrm{i}\theta_*}$ such that $\operatorname{Im}(x)>0$, while $0<\operatorname{Re}(x)<1$, and~$\rho_k$ is of the form $\tilde{\rho}_{k,0}$ $($recall the notation of Corollary~{\rm \ref{cortheorem2})}.\footnote{For example, we can pick $0<s_*$ with $s_*$ sufficiently close to $0$ and $-\pi/2<\theta_*<0$ with $\theta_*$ sufficiently close to $-\pi/2$.\label{footnote9}}
Finally, consider the trajectory $x(s)=\mathrm{i} s{\rm e}^{\mathrm{i} \theta_*}$ from $s=0$ to $s=s_*$. Then, provided that $t+k\neq a\check{\epsilon}+b$ for $a,b\in \mathbb{Z}_{\leq 0}$ or~$\mathbb{Z}_{> 0}$, we can analytically continue $\mathcal{S}_2(x(s)\mid \check\epsilon,1)^{-1} \cdot\Psi_{\rho_k}(\epsilon,x(s),t)$ along the trajectory $x(s)$ from $x(s_*)=\mathrm{i} s_*{\rm e}^{\mathrm{i}\theta_*}$ to $x(0)=0$, and}
\begin{equation}\label{relWpsi}
 \exp\left(-\frac{1}{2\pi}\partial_tW_{\rho_k}(\epsilon,t)\right)=\big(\mathcal{S}_2(x\mid \check\epsilon,1)^{-1}\cdot \Psi_{\rho_k}(\epsilon,x,t)\big)\big|_{x=0}.
\end{equation}

\begin{proof}
Recall that by Corollary \ref{cortheorem1} and equation (\ref{tderWnp}),
\begin{equation*}
 \partial_{t}W_{\rho_k}(\epsilon,t)=\partial_{t}F_{{\rm np}}^{\rm NS}(\epsilon,t-\check\epsilon/2+k)=2\pi \log\mathcal{S}_2(t+k\mid \check\epsilon,1).
\end{equation*}
On the other hand, note that for all $s$ such that $0<s\leq s_*$ we continue to satisfy that $\rho_k=\tilde{\rho}_{k,0}$. We can therefore apply Corollary \ref{cortheorem2} and write
\begin{equation*}
 \Psi_{\rho_k}(\epsilon,x(s),t)=\Psi_{\rho_{k,0}}(\epsilon,x(s),t) =\Psi_{{\rm np}}(\epsilon,x(s),t+k)=\frac{\mathcal{S}_2(x(s)\mid \check\epsilon,1)}{\mathcal{S}_2(x(s)+t+k\mid \check\epsilon,1)},
\end{equation*}
so that by taking the limit to $s\to 0$ we obtain
\begin{equation*}
 \mathcal{S}_2(x\mid \check\epsilon,1)^{-1}\cdot \Psi_{\rho_k}(\epsilon,x,t)|_{x=0}=\mathcal{S}_2(t+k\mid \check\epsilon,1)^{-1}.
\end{equation*}
The previous quantity is well defined due to the contraint $t+k\neq a\check{\epsilon}+b$ for $a,b\in \mathbb{Z}_{\leq 0}$ or $\mathbb{Z}_{> 0}$ (recall point $1$ of Appendix \ref{appendix:qdilog}).
The desired equation (\ref{relWpsi}) then follows.
\end{proof}

\begin{Remark}\label{remark:thetastar}
It may be helpful to note that the rays $\tilde{l}_{k,t}$ approximate the rays $l_{k}$ when $x \to 0$, while the rays $\tilde{l}_m$ for $m<0$ (or $m>0$) collapse into the positive (or negative) imaginary axis when $x \to 0$, and the phase of the ray $\tilde{l}_0$ is equal to $\vartheta_*$ along the trajectory $x(s)$. It is thus impossible to choose the fixed ray $\rho_k$ of the form $\tilde{\rho}_{k,m}$ for $m \neq \{0,1\}$ when $x \to 0$.

With the restrictions on $x$ as in the assumptions of the above theorem, the phase $\vartheta_*$ along the trajectory $x(s)$ should be in between 0 and $-\pi/2$. This means that we can only choose $\rho_k$ of the form $\tilde{\rho}_{k,1}$ when the phase of $\rho_k$ is in between 0 and $-\pi/2$ (in which case $\vartheta_*$ should be chosen in between 0 and the phase of $\rho_k$, and equation \eqref{relWpsi} would need to be modified with a~small factor). On the other hand, if we choose $\rho_k$ to be of the form $\tilde{\rho}_{k,0}$, this means $\vartheta_*$ should be chosen smaller than 0 while in between the phase of $\rho_k$ and $-\pi/2$. This can be achieved by choosing $\vartheta_*$ sufficiently close to $-\pi/2$ as in footnote~\eqref{footnote9}.

We remark that the restrictions on $x$ in the above theorem are mainly there to be able to apply Corollary \ref{cortheorem1}, which in turn uses these restrictions to be able to say that $\Psi_{\mathbb{R}_{>0}}=\Psi_{{\rm np}}$. The restrictions on $x$ in the above theorem are however not stringent and can be relaxed at the cost of modifying the statement \eqref{relWpsi} slightly.
\end{Remark}

\begin{Corollary}\label{corthm3}
Under the assumption of Theorem~{\rm \ref{theorem3}} and the assumptions on $\epsilon$ from Proposition~{\rm \ref{rblimit1}}, we can take the limit $k \to \infty$ and obtain
\begin{equation*}
 \exp\left(-\frac{1}{2\pi}\partial_tW_{\mathrm{GV}}(\epsilon,t)\right)=\big(L(x\mid \check\epsilon,1)^{-1}\cdot \Psi_{\mathrm{GV}}(\epsilon,x,t)\big)\big|_{x=0}.
\end{equation*}
Here we have introduced the notation $W_{{\rm GV}}(\epsilon,t):=W(\epsilon,t)$.
\end{Corollary}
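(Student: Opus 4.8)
The plan is to pass to the limit $k\to\infty$ in the finite-$k$ identity furnished by Theorem \ref{theorem3},
\[
\exp\left(-\frac{1}{2\pi}\partial_t W_{\rho_k}(\epsilon,t)\right) = \big(\mathcal{S}_2(x\mid \check\epsilon,1)^{-1}\cdot \Psi_{\rho_k}(\epsilon,x,t)\big)\big|_{x=0},
\]
and to identify the common limiting value of the two sides as $L(t,\epsilon)^{-1}$. Throughout I keep the hypotheses of Theorem \ref{theorem3} together with those of Proposition \ref{rblimit1}, i.e.\ $0<\operatorname{Re}(t)<1$, $\operatorname{Im}(t)>0$, $\operatorname{Re}(\epsilon)>0$, $\operatorname{Im}(\epsilon)<0$ and $\operatorname{Re}(t)<\operatorname{Re}(\check\epsilon+1)$; this is exactly the range in which the limits \eqref{eq:limitImW} and \eqref{Psilinfty} are both available.

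For the left-hand side I would use Proposition \ref{rblimit1}, which gives $W_{\rho_k}(\epsilon,t)\to W(\epsilon,t)=W_{\mathrm{GV}}(\epsilon,t)$. Since the $W_{\rho_k}$ and $W$ are holomorphic in $t$ on the relevant domain and the convergence established in the proof of Proposition \ref{rblimit1} is locally uniform in $t$, the $t$-derivatives converge as well, so $\partial_t W_{\rho_k}(\epsilon,t)\to\partial_t W_{\mathrm{GV}}(\epsilon,t)$ and the left-hand side tends to $\exp\big(-\tfrac{1}{2\pi}\partial_t W_{\mathrm{GV}}(\epsilon,t)\big)$. (Equivalently one may use Corollary \ref{cortheorem1} and \eqref{tderWnp}, which give the exact formula $\partial_t W_{\rho_k}(\epsilon,t)=2\pi\log\mathcal{S}_2(t+k\mid\check\epsilon,1)$.) I would then evaluate this limit directly: differentiating $W(\epsilon,t)=-\mathrm{i}\sum_{k\ge 1}\frac{1}{k^2}\frac{\mathrm{e}^{2\pi\mathrm{i} tk}}{\mathrm{e}^{\mathrm{i} k\epsilon}-1}$ term by term yields
\[
\partial_t W(\epsilon,t)=2\pi\sum_{k\ge 1}\frac{1}{k}\frac{\mathrm{e}^{2\pi\mathrm{i} tk}}{\mathrm{e}^{\mathrm{i} k\epsilon}-1}=2\pi\log L(t,\epsilon),
\]
where the last equality is the series \eqref{LGV} for $\log L(x,\epsilon)$ specialized to $x=t$ (using $\mathrm{e}^{2\pi\mathrm{i} l\check\epsilon}=\mathrm{e}^{\mathrm{i} l\epsilon}$), valid precisely for $\operatorname{Im}(t)>0$. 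Hence the left-hand limit equals $L(t,\epsilon)^{-1}$.

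For the right-hand side I would invoke the limit \eqref{Psilinfty}, $\Psi_{\rho_k}(\epsilon,x,t)\to\Psi_{\mathrm{GV}}(\epsilon,x,t)=L(x,\epsilon)/L(x+t,\epsilon)$, proved in the same regime, which supplies the $\Psi_{\mathrm{GV}}$ appearing in the statement. The key structural point is the cancellation at $x=0$: at finite $k$ the product $\mathcal{S}_2(x\mid\check\epsilon,1)^{-1}\Psi_{\rho_k}$ is regular at $x=0$ only because the singular prefactor $\mathcal{S}_2(x\mid\check\epsilon,1)^{-1}$ cancels the matching factor in $\Psi_{\rho_k}=\Psi_{\mathrm{np}}(\epsilon,x,t+k)$, leaving $\mathcal{S}_2(x+t+k\mid\check\epsilon,1)^{-1}$. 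In the limit the analogous cancellation occurs with $L$ in place of $\mathcal{S}_2$: since $\Psi_{\mathrm{GV}}=L(x,\epsilon)/L(x+t,\epsilon)$, one has $L(x\mid\check\epsilon,1)^{-1}\Psi_{\mathrm{GV}}=L(x+t,\epsilon)^{-1}$, which is regular at $x=0$ and evaluates to $L(t,\epsilon)^{-1}$. Thus the right-hand side of the asserted identity is also $L(t,\epsilon)^{-1}$, and the two sides agree, proving the corollary.

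The main obstacle is the justification of the two interchanges of limits: differentiation with the $k\to\infty$ limit on the left (handled cleanly by the local uniformity and holomorphicity in $t$ coming out of the proof of Proposition \ref{rblimit1}, so that Cauchy's estimates apply), and the survival of the $x=0$ cancellation in the limit on the right. The latter is the more delicate point, since both $\mathcal{S}_2(x\mid\check\epsilon,1)^{-1}$ and $L(x\mid\check\epsilon,1)^{-1}$ are singular at $x=0$ and only become finite after cancellation against the zero of $\Psi$; one must check that the analytic continuation of the product along the trajectory $x(s)$ to $x=0$ commutes with $k\to\infty$, which is exactly the content of the locally uniform convergence $\Psi_{\rho_k}\to\Psi_{\mathrm{GV}}$ underlying \eqref{Psilinfty}. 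Once these are in place, the computation reduces to the two explicit evaluations above, both giving $L(t,\epsilon)^{-1}$.
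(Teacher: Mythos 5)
Your proposal is correct, and it follows the same skeleton as the paper's proof: pass to $k\to\infty$ in the identity of Theorem~\ref{theorem3} and reduce both sides to $L(t,\epsilon)^{-1}$, using Proposition~\ref{rblimit1} and the explicit form $\Psi_{\mathrm{GV}}=L(x,\epsilon)/L(x+t,\epsilon)$. Where you genuinely deviate is in how the limiting value is identified. The paper works with the common finite-$k$ value: by Corollary~\ref{cortheorem2} both sides equal $\mathcal{S}_2(t+k\mid\check\epsilon,1)^{-1}$, and this tends to $L(t,\check\epsilon)^{-1}$ by equation~\eqref{S2l} combined with the shift relation $\mathcal{S}_2(x+1\mid\check\epsilon,1)/\mathcal{S}_2(x\mid\check\epsilon,1)=\big(1-{\rm e}^{2\pi\mathrm{i} x/\check\epsilon}\big)^{-1}$; Proposition~\ref{rblimit1} is then quoted to recognize the left-hand side, with the derivative/limit interchange left implicit. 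You instead make that interchange explicit (locally uniform convergence plus Cauchy estimates) and then evaluate $\partial_t W_{\mathrm{GV}}(\epsilon,t)=2\pi\log L(t,\epsilon)$ by term-by-term differentiation of the Gopakumar--Vafa series against~\eqref{LGV} --- an independent computation the paper never performs, which buys a self-contained identification of the left-hand limit without ever passing through $\mathcal{S}_2(t+k\mid\check\epsilon,1)$. One remark: the ``delicate point'' you flag about the $x=0$ cancellation surviving the limit is not actually an issue and needs no uniform-convergence argument, since at finite $k$ the product equals $\mathcal{S}_2(x+t+k\mid\check\epsilon,1)^{-1}$ exactly (Corollary~\ref{cortheorem2}), and the right-hand side of the corollary equals $L(x+t,\epsilon)^{-1}\big|_{x=0}=L(t,\epsilon)^{-1}$ exactly; both evaluations are identities of functions regular at $x=0$, so no interchange of the $x\to 0$ continuation with $k\to\infty$ is required.
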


\begin{proof}
Using our previous results, we have
\begin{align*}
 \lim_{k \to \infty} \mathcal{S}_2(x\mid \check{\epsilon},1)^{-1} \cdot \Psi_{\rho_k}(\epsilon,x,t)|_{x=0} &= \lim_{k \to \infty} \mathcal{S}_2(t+k\mid \check{\epsilon},1)^{-1} \\
 & = L(t,\check{\epsilon})^{-1} = \big(L(x,\check{\epsilon})^{-1}\cdot\Psi_{\mathrm{GV}}(\epsilon,x,t)\big)\big|_{x=0}.
 \end{align*}
In the computation of the limit above, we have used equation~\eqref{S2l} together with
\begin{equation*}
 \frac{\mathcal{S}_2(x+1\mid \check\epsilon,1)}{\mathcal{S}_2(x\mid \check\epsilon,1)}=\frac{1}{1-{\rm e}^{2\pi \mathrm{i} x/\check\epsilon}}.
\end{equation*}
The result then follows from Proposition \ref{rblimit1}.
\end{proof}

\begin{Remark}\label{remper}
In the setting of the previous theorem, one has
\begin{equation*}
 \Psi_{\rho_{k}}(\epsilon,x(s),t)=\Psi_{{\rm np}}(\epsilon,x(s),t+k)=\frac{\mathcal{S}_2(x(s)\mid \check\epsilon,1)}{\mathcal{S}_2(x(s)+t+k\mid \check\epsilon,1)}
\end{equation*}
for $0<s\leq s_*$, since $\rho_k$ is of the form $\rho_{k,0}$.
We can use the right-hand side of the above equation to analytically continue $\Psi_{\rho_{k}}(\epsilon,x,t)$ by taking $s\to \infty$. If we further assume that $\operatorname{Im}(\check\epsilon)>0$, we can use the product formula of $\mathcal{S}_2$ given in~(\ref{S2prod}) to conclude that
\begin{equation*}
 \Psi_{\rho_{k}}(\epsilon,x(s),t)|_{s=\infty}=1.
\end{equation*}

In Section \ref{5dWKBsec}, we will consider exponentiated trajectories of the form $X(s)={\rm e}^{2\pi {\rm i}x(s)}={\rm e}^{-2\pi s{\rm e}^{{\rm i}\theta_*}}$ connecting $X=0$ and $X=1$. The previous result motivates the definition of the following ``regularized" period:

\begin{Definition}\label{def:regBperiod}
In the setting of Theorem \ref{theorem3} and Corollary \ref{corthm3}, we define the regularized quantum period
\begin{equation*}
\Pi_{B,\rho_k}^{\rm reg} := \log \frac{\big(\mathcal{S}_2(x\mid \check\epsilon,1)^{-1}\cdot \Psi_{\rho_k}(\epsilon,x,t)\big)\big|_{x=0}}{\Psi_{\rho_k}(\epsilon,x,t)|_{x=\mathrm{i}\infty}},
\end{equation*}
and its $k \to \infty$ limit
\begin{equation*}
\Pi_{B,\mathrm{GV}}^{\rm reg} := \lim_{k \to \infty} \Pi_{B,\rho_k}^{\rm reg} = \log \frac{\big(L(x\mid \check\epsilon,1)^{-1}\cdot \Psi_{\mathrm{GV}}(\epsilon,x,t)\big)\big|_{x=0}}{\Psi_{\mathrm{GV}}(\epsilon,x,t)|_{x=\mathrm{i}\infty}}.
\end{equation*}
\end{Definition}

One can then interpret equation~\eqref{relWpsi} as a relation between the NS free energies $W$ and the corresponding regularized exact quantum $B$-periods $\Pi_{B}^{\rm reg}$. Namely,
\begin{equation*}
 \exp\left(-\frac{1}{2\pi}\partial_tW_{\rho_k}(\epsilon,t)\right)=\exp\big(\Pi_{B,\rho_k}^{\rm reg}\big),
\end{equation*}
and its $k \to \infty$ limit
\begin{equation*}
 \exp\left(-\frac{1}{2\pi}\partial_tW_{\mathrm{GV}}(\epsilon,t)\right)=\exp\big(\Pi_{B,\mathrm{GV}}^{\rm reg}\big),
\end{equation*}
with again $W_{{\rm GV}}(\epsilon,t):=W(\epsilon,t)$.
This perspective will be useful later in Section~\ref{5dWKBsec}.
\end{Remark}

\section{DT invariants and line bundles}\label{sec:DT}

In \cite{ASTT21}, the Borel summability of the Gromov--Witten (GW) potential for the resolved conifold is studied, giving the following result for the Borel sum along a ray $\rho$ \cite[equation~(4.35)]{ASTT21}:
\begin{equation}\label{normalizedpartfunc}
 \widehat{F}_{\rho}(\lambda,t)=F_{\rho}(\lambda,t)-F_{\rho}(\lambda,0)-\frac{\log(\check{\lambda})}{12}.
\end{equation}
Here $t$ denotes the K\"{a}hler parameter as before, and $\lambda$ ($\check{\lambda}=\lambda/2\pi$) denotes the topological string coupling. $F_{\rho}(\lambda,t)$ denotes the Borel sum of the non-constant map contribution of the GW potential, while $-F_{\rho}(\lambda,0)-\log\big(\check{\lambda}\big)/12$ is shown to be (up to the addition of a constant) the Borel sum of constant map contribution of the GW potential. $F_{\rho}(\lambda,0)$ is defined in terms of~$F_{\rho}(\lambda,t)$ by a certain limit in $t\to 0$ (see \cite[Section~4.2.1]{ASTT21} for more details).

The Borel sum $\hat{F}_{\rho}$ experiences Stokes jumps in the $\lambda$-plane along the rays $\pm l_{k}$ and $\pm l_{\infty}$ from before, with the jumps containing the information of the DT invariants of the resolved conifold in the following way (we assume below that $\operatorname{Im}(t)>0$):

\begin{itemize}
 \item Let $M$ be the space parametrized by the $t$-parameter and consider $\Gamma\to M$ a trivial rank-$2$ local system of lattices spanned by $\beta$ and $\delta$. The BPS indices of the resolved conifold are then expressed as
 \begin{equation*}
 \Omega(\gamma) =
 \begin{cases}
 \hphantom{-}1 & \quad \text{if } \gamma = \pm \beta + n \delta\quad \text{for } n\in \mathbb{Z},\\
 -2 & \quad \text{if } \gamma= k \delta \quad \text{for } k\in \mathbb{Z}\setminus \{ 0\},\\
 \hphantom{-}0 & \quad \text{otherwise}.
 \end{cases}
 \end{equation*}
 We further consider the central charge function $Z$ (a holomorphic section of ${\Gamma^*\otimes \mathbb{C} \to M}$) defined by
 \begin{equation*}
 Z_{n\beta+m\delta}=2\pi \mathrm{i}(nt+m), \qquad n,m \in \mathbb{Z}.
 \end{equation*}
 \item Along $\pm l_k= \pm \mathbb{R}_{<0}\cdot 2\pi \mathrm{i}(t+k)=\pm \mathbb{R}_{<0}\cdot Z_{\beta +k\delta}$ the jump of $\hat{F}_{\rho}$ encodes $\Omega(\beta+k\delta)$ as
 \begin{equation}\label{jumpsDT1}
 \frac{1}{2\pi \mathrm{i}}\partial_{\check{\lambda}}\big(\check{\lambda}\mathrm{Li}_2\big({\rm e}^{\pm 2\pi \mathrm{i}(t+k)/\check{\lambda}}\big)\big)=\frac{\Omega(\beta+k\delta)}{2\pi \mathrm{i}}\partial_{\check{\lambda}}\big(\check{\lambda}\mathrm{Li}_2\big({\rm e}^{\pm Z_{\beta+k\delta}/\check{\lambda}}\big)\big).
 \end{equation}
 \item Along $\pm l_{\infty}=\pm \mathrm{i}\mathbb{R}_{<0}= \mathbb{R}_{<0}\cdot Z_{\pm k\delta}$ the jumps of $\hat{F}_{\rho}$ encode the $\Omega(k\delta)$ via
 \begin{equation}\label{jumpsDT2}
 \sum_{k\geq 1}-\frac{1}{\pi \mathrm{i}}\partial_{\check{\lambda}}\big(\check{\lambda}\mathrm{Li}_2\big({\rm e}^{\pm 2\pi \mathrm{i} k/\check{\lambda}}\big)\big)\mp \frac{\pi \mathrm{i}}{12}=\sum_{k\geq 1}\frac{\Omega(k\delta)}{2\pi \mathrm{i}}\partial_{\check{\lambda}}\big(\check{\lambda}\mathrm{Li}_2\big({\rm e}^{\pm Z_{k\delta}/\check{\lambda}}\big)\big)\mp \frac{\pi \mathrm{i}}{12}.
 \end{equation}
\end{itemize}
We remark that the interpretation of the coefficients in front of the $\mathrm{Li}_2$ summands as the DT invariants comes from the relation done in \cite[Section~4.2]{ASTT21} to the Riemann--Hilbert (RH) problem associated to the resolved conifold \cite{BridgelandCon}. It is shown that the jumps of $\hat{F}_{\rho}$ serve as potentials for the jumps of the RH problem, involving the DT invariants \cite[Corollary~4.13]{ASTT21}.

We would like to show that a properly normalized $W_{\rho}(\epsilon,t)$ also encodes the DT invariants of the resolved conifold, in a similar way. Given the relation
\begin{equation*}
 \partial_{\epsilon}W_{\rho}(\epsilon,t)=F_{\rho}(\epsilon,t)
\end{equation*}
shown in \eqref{pot}, together with \eqref{normalizedpartfunc},
a natural object to consider is
\begin{equation}\label{normalizedW}
 \widehat{W}_{\rho}(\epsilon,t):=W_{\rho}(\epsilon,t)-W_{\rho}(\epsilon,0) -\frac{\epsilon}{12}\log(\check\epsilon)+\frac{\pi}{6}\epsilon,
\end{equation}
where $W_{\rho}(\epsilon,0)$ is defined by a certain limit $t\to 0$. We discuss this limit and the Stokes jumps of~$\hat{W}_{\rho}(\epsilon,t)$ below. We then establish the link between the jumps of $\hat{W}_{\rho}(\epsilon,t)$ and the DT invariant of the resolved conifold.

On the other hand, in \cite[Section 4]{ASTT21}, a certain projectivized version of $\hat{F}_{\rho}$ is shown to define a~section of a~line bundle, related to a conformal limit of a hyperholomorphic line bundle discussed in \cite{APP, Neitzke_hyperhol}. We show below that (a projectived version of)~$\hat{W}_{\rho}(\epsilon,t)$ can be thought as defining a~section of the same line bundle as the one associated to $\hat{F}_{\rho}(\epsilon,t)$.

\subsection[Normalized $W_{\rho}$ and the DT invariants]{Normalized $\boldsymbol{W_{\rho}}$ and the DT invariants}\label{sec:normalizedW}

In this section, we wish to show how the DT invariants of the resolved conifold as encoded in the Stokes jumps of a properly normalized $W_{\rho}(\epsilon,t)$. To show how the $-2$ BPS index is contained in the Borel sums $W_{\rho}(\epsilon,t)$, we will first study a limit of the form
\begin{equation*}
 W_{\rho}(\epsilon,0):= \lim_{t\to 0}W_{\rho}(\epsilon,t),
\end{equation*}
where $t$ is taken to satisfy $\operatorname{Re}(t)>0$, $\operatorname{Im}(t)>0$; and such that along the limit, $\rho$ is always between $l_{-1}$ and $l_0$ (resp.\ $-l_{-1}$ and $-l_0$) if $\rho$ is on the right (resp.\ left) Borel half-plane. This is analogous to the way $F_{\rho}(\lambda,0)$ was defined in \cite{ASTT21}.

To see that this gives a well defined limit, notice that when $\rho$ is on the right Borel plane we can write
\begin{align*}
 W_{\rho}(\epsilon,t)&=-\frac{1}{\epsilon} \mathrm{Li}_{3}(Q)+ \frac{\mathrm{i}}{2}\mathrm{Li}_{2}(Q)+\frac{\epsilon}{12}\mathrm{Li_1}(Q) + \check{\epsilon}\int_{\rho}\mathrm{d}\xi\, {\rm e}^{-\xi/\check{\epsilon}}\widetilde{G}(\xi,t)\\
 &=-\frac{1}{\epsilon} \mathrm{Li}_{3}(Q)+ \frac{\mathrm{i}}{2}\mathrm{Li}_{2}(Q) +\check{\epsilon}\int_{\rho}\mathrm{d}\xi \left({\rm e}^{-\xi/\check{\epsilon}}\widetilde{G}(\xi,t)+\frac{\pi}{6}\frac{1}{{\rm e}^{\xi-2\pi {\rm i}t}-1}\right).
 \end{align*}
Now when $t\to 0$ as above, we see that ${\rm e}^{-\xi/\check\epsilon}\widetilde{G}(\xi,0)$ develops a simple pole at $\xi=0$ of the from $-\frac{\pi}{6\xi}$, which gets canceled with the simple pole arising from the second term. Furthermore, no Stokes rays $l_k$ cross $\rho$ in the above limit in $t$, so we get a well defined limit. When $\rho$ is on the left Borel plane, we can reduce to the previous case by using the identity $W_{\rho}(\epsilon,t)=-W_{-\rho}(-\epsilon,t)+\mathrm{i}\mathrm{Li}_2(Q)$. Hence, we have a well defined limit
\begin{equation*}
 W_{\rho}(\epsilon,0)=-\frac{1}{\epsilon} \mathrm{Li}_{3}(1)+ \frac{\mathrm{i}}{2}\mathrm{Li}_{2}(1) +\lim_{t\to 0}\left(\check{\epsilon}\int_{\rho}\mathrm{d}\xi \left({\rm e}^{-\xi/\check{\epsilon}}\widetilde{G}(\xi,t)+\frac{\pi}{6}\frac{1}{{\rm e}^{\xi-2\pi \mathrm{i} t}-1}\right)\right),\\
\end{equation*}
where $t$ satisfies the constraints specified above.

The following proposition suggests that one can obtain the appropriate Stokes jumps at $\pm l_{\infty}$ by considering a normalization of $W_{\rho}(\epsilon,t)$ involving $W_{\rho}(\epsilon,0)$:
\begin{Proposition}\label{linftyjumps}
Let $\rho$ $($resp.\ $\rho')$ be a ray close to $l_{\infty}=\mathrm{i} \mathbb{R}_{<0}$ from the left $($resp.\ right$)$. Then for $\epsilon$ in their common domain of definition
\begin{equation*}
 W_{\pm \rho}(\epsilon,0)-W_{\pm \rho'}(\epsilon,0)=-2\mathrm{i}\check{\epsilon}\sum_{k\geq 1} \mathrm{Li}_2\big({\rm e}^{\pm 2\pi \mathrm{i} k/\check\epsilon}\big) -\frac{\pi \mathrm{i} }{12}\epsilon.
\end{equation*}

Furthermore $W_{\rho}(\epsilon,0)$ only has Stokes jumps along $\pm l_{\infty}$.
\end{Proposition}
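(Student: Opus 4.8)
The plan is to reduce the jump across $l_\infty$ to the finite-$t$ Stokes jumps of Proposition~\ref{Stokesjumps}, summed over the infinitely many rays that accumulate at $l_\infty$, and only then send $t\to0$. Concretely, I would first fix a small $t$ with $\operatorname{Im}(t)>0$ and write $W_{\rho}(\epsilon,t)-W_{\rho'}(\epsilon,t)=\check\epsilon\int_{\rho-\rho'}\mathrm{d}\xi\,\mathrm{e}^{-\xi/\check\epsilon}\widetilde G(\xi,t)$, the three polylogarithmic prefactors being independent of the ray and hence cancelling. Since $\rho$ and $\rho'$ straddle $l_\infty$, the difference contour is an unbounded Hankel-type wedge whose interior, for $t$ small, encloses two families of poles of $\widetilde G$ accumulating at $l_\infty$: those on the rays $l_k$ with $k\to+\infty$ (approaching from the right) and those on the rays $-l_k$ with $k\to-\infty$ (approaching from the left). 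Under the stated hypotheses on $\epsilon$ the arc at infinity is negligible, so the integral equals $2\pi\mathrm{i}\check\epsilon$ times the sum of residues, and the residue computation already carried out in Proposition~\ref{Stokesjumps} shows that each family contributes a tower $-\mathrm{i}\check\epsilon\sum_{k\ge1}\mathrm{Li}_2(\mathrm{e}^{\pm 2\pi\mathrm{i}(t+k)/\check\epsilon})$.

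The point producing the factor $2$ is that both families survive and coincide in the limit $t\to0$: the right family gives $-\mathrm{i}\check\epsilon\sum_{k\ge1}\mathrm{Li}_2(\mathrm{e}^{2\pi\mathrm{i}(t+k)/\check\epsilon})$, while the left family, after relabelling $k\mapsto -k$, gives $-\mathrm{i}\check\epsilon\sum_{k\ge1}\mathrm{Li}_2(\mathrm{e}^{2\pi\mathrm{i}(k-t)/\check\epsilon})$, and both tend to $-\mathrm{i}\check\epsilon\sum_{k\ge1}\mathrm{Li}_2(\mathrm{e}^{2\pi\mathrm{i}k/\check\epsilon})$. This doubling is exactly the Donaldson--Thomas expectation, the D0-brane charges $k\delta$ carrying $\Omega(k\delta)=-2$. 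I would justify interchanging the $t\to0$ limit with the infinite residue sum using the convergence conditions (the inequalities $|\mathrm{e}^{\mp 2\pi\mathrm{i}t/\check\epsilon}|<1$ together with the sign conditions on $\operatorname{Re}\epsilon,\operatorname{Im}\epsilon$) precisely as in the limit statements of Theorem~\ref{maintheorem}.

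The remaining constant $-\tfrac{\pi\mathrm{i}}{12}\epsilon$ does not come from this naive residue sum but from the regularization implicit in the definition of $W_{\rho}(\epsilon,0)$ — the subtraction $\tfrac{\pi}{6}(\mathrm{e}^{\xi-2\pi\mathrm{i}t}-1)^{-1}$ that tames the $\tfrac{\epsilon}{12}\mathrm{Li}_1(Q)$ term as $Q\to1$, together with the reflection identity $W_{\rho}(\epsilon,t)=-W_{-\rho}(-\epsilon,t)+\mathrm{i}\,\mathrm{Li}_2(Q)$ that defines $W_{\rho}(\epsilon,0)$ on the left Borel half-plane. The cleanest way to fix its value is through the potential relation $\partial_\epsilon W_{\rho}(\epsilon,0)=F_{\rho}(\epsilon,0)$ of \eqref{pot}: the jump of the constant-map Borel sum $F_{\rho}(\epsilon,0)$ across $l_\infty$ is the one computed in \cite{ASTT21}, namely the $\mp\tfrac{\pi\mathrm{i}}{12}$ term of \eqref{jumpsDT2}, so integrating it in $\epsilon$ and fixing the integration constant from a reference sector reproduces $-\tfrac{\pi\mathrm{i}}{12}\epsilon$. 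This is really the zeta-regularization of the otherwise divergent $M=1$ tower, the $\tfrac1{12}$ being the familiar $\zeta(-1)$ of the D0-brane sector; the jumps across $-l_\infty$ then follow from the reflection identities \eqref{reflection1}--\eqref{reflection2}.

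For the final assertion that $W_{\rho}(\epsilon,0)$ jumps only across $\pm l_\infty$, I would use that the defining limit keeps $\rho$ inside the sector between $l_{-1}$ and $l_0$ (resp.\ $-l_{-1}$ and $-l_0$), which as $t\to0$ fill out the right (resp.\ left) Borel half-plane; within each half-plane the limit is independent of $\rho$, so $W_{\rho}(\epsilon,0)$ is locally constant there and can only jump across the common boundary $\pm l_\infty$, with $l_0$ (and $-l_0$) merging into $l_\infty$ (resp.\ $-l_\infty$) so as to produce no separate jump. The step I expect to be the main obstacle is the rigorous control of the $t\to0$ limit: justifying that the wedge integral equals the infinite residue sum (vanishing of the arc at infinity in the correct $\epsilon$-sector), that the limit commutes with the sum, and above all that the formally divergent D0-brane tower is correctly regularized to the finite $-\tfrac{\pi\mathrm{i}}{12}\epsilon$ rather than simply discarded — the delicate bookkeeping being the matching of signs and orientations between the Hankel-contour convention used here and the jump conventions of \cite{ASTT21}.
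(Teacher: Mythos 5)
Your reduction of the jump across $l_\infty$ to a Hankel contour enclosing the Stokes rays that accumulate there, followed by the limit $t\to 0$, is exactly the paper's strategy, and your identification of the two towers coming from $l_k$ and from $-l_{-k}$, together with their doubling in the limit, is correct. However, there is a genuine error in your bookkeeping of the enclosed rays: you omit $l_0=\mathbb{R}_{<0}\cdot 2\pi\mathrm{i}\,t$. Because the defining limit keeps the right ray $\rho'$ in the sector between $l_{-1}$ and $l_0$, the ray $l_0$ lies between $\rho'$ and $l_\infty$ and is therefore enclosed by the Hankel contour (the paper's proof states explicitly that $\mathcal{H}$ contains $l_k$ for $k\geq 0$ and $-l_k$ for $k<0$). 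Its jump contributes $-\mathrm{i}\check\epsilon\,\mathrm{Li}_2\big({\rm e}^{2\pi\mathrm{i}t/\check\epsilon}\big)$, which as $t\to 0$ tends to $-\mathrm{i}\check\epsilon\,\mathrm{Li}_2(1)=-\mathrm{i}\check\epsilon\,\pi^2/6=-\frac{\pi\mathrm{i}}{12}\epsilon$. This is precisely the constant you were trying to produce: contrary to your claim, it does come from the ``naive'' residue sum, namely from the $k=0$ term you dropped. No regularization is involved at all — for $\epsilon$ near $l_\infty$ one has $\big|{\rm e}^{\pm 2\pi\mathrm{i}k/\check\epsilon}\big|<1$, so both towers converge and there is no divergent sum to zeta-regularize; the $\tfrac{1}{12}$ arises from $\zeta(2)=\pi^2/6$ combined with $\check\epsilon=\epsilon/2\pi$, not from $\zeta(-1)$.

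Your proposed substitute derivation of the constant through $\partial_\epsilon W_\rho(\epsilon,0)=F_\rho(\epsilon,0)$ does not close this gap: equation \eqref{pot} is established for $t\neq 0$, so at $t=0$ you would first need to justify interchanging $\partial_\epsilon$ with the limit $t\to 0$, and even granting that, integrating the jump of $F_\rho(\epsilon,0)$ in $\epsilon$ determines the jump of $W_\rho(\epsilon,0)$ only up to an integration constant — which is exactly the quantity at issue, and ``fixing it from a reference sector'' is not an argument. The correct repair is simply to include the $k=0$ ray in the right-hand family and take the limit of its jump, as the paper does. Your argument for the final assertion (no jumps other than across $\pm l_\infty$, because the defining limit pins the ray inside the sector between $l_{-1}$ and $l_0$, resp.\ $-l_{-1}$ and $-l_0$) agrees with the paper's.
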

\begin{proof}
First, notice that by our definition of the limit in $t$, we have
\begin{equation*}
\begin{split}
 W_{\rho}(\epsilon,0)-W_{\rho'}(\epsilon,0)
 & =\lim_{t\to 0}\bigg(\int_{\rho}{\rm d}\xi\, {\rm e}^{-\xi/\check{\epsilon}} \widetilde{G}(\xi,t)-\int_{\rho'}{\rm d}\xi\, {\rm e}^{-\xi/\check{\epsilon}} \widetilde{G}(\xi,t)\bigg)\\
& =\lim_{t\to 0}\int_{\mathcal{H}}{\rm d}\xi \,{\rm e}^{-\xi/\check{\epsilon}}\widetilde{G}(\xi,t),
\end{split}
 \end{equation*}
where $\mathcal{H}=\rho-\rho'$ denotes a Hankel contour along $\mathrm{i}\mathbb{R}_{<0}$, containing $l_k$ for $k\geq 0$ and $-l_{k}$ for $k<0$. Hence, for $\epsilon$ close to $l_{\infty}$, the Hankel contour just gives the contribution of these rays that we previously computed:
\begin{align*}
 W_{\rho}(\epsilon,0)-W_{\rho'}(\epsilon,0)
 &=\lim_{t\to 0}\int_{\mathcal{H}}{\rm d}\xi \, {\rm e}^{-\xi/\check{\epsilon}}\widetilde{G}(\xi,t)\\
 &=\lim_{t\to 0} \biggl(-\mathrm{i}\sum_{k\geq 1}\big[ \check\epsilon \mathrm{Li}_2\big({\rm e}^{2\pi \mathrm{i}(t+k)/\check\epsilon}\big)+\check\epsilon \mathrm{Li}_2\big({\rm e}^{-2\pi \mathrm{i}(t-k)/\check\epsilon}\big)\big]
 -\mathrm{i}\check\epsilon \mathrm{Li}_2({\rm e}^{2\pi \mathrm{i} t/\check\epsilon})\biggr)\\
 &=-2\mathrm{i}\check{\epsilon}\sum_{k\geq 1} \mathrm{Li}_2\big({\rm e}^{2\pi \mathrm{i} k/\check\epsilon}\big) -\mathrm{i}\check\epsilon \mathrm{Li}_2(1)
 =-2\mathrm{i}\check{\epsilon}\sum_{k\geq 1} \mathrm{Li}_2\big({\rm e}^{2\pi \mathrm{i} k/\check\epsilon}\big) -\frac{\pi \mathrm{i} }{12}\epsilon.
 \end{align*}

A similar argument follows for $-l_{\infty}=\mathrm{i}\mathbb{R}_{>0}$. Furthermore, the fact that there are no other Stokes jumps follows from the way we have defined the limit $W_{\rho}(\epsilon,0)$.
\end{proof}

By the previous arguments, we would like to now consider:
\begin{equation*}
 \widehat{W}_{\rho}(\epsilon,t)=W_{\rho}(\epsilon,t)-W_{\rho}(\epsilon,0) -\frac{\epsilon}{12}\log(\check\epsilon)+\frac{\pi}{6}\epsilon,
\end{equation*}
where the $\log$-term has a branch cut at $l_{\infty}$, and study its Stokes jumps.

\begin{Corollary}
The Stokes jumps of $\widehat{W}_{\rho}(\epsilon,t)$ along $\pm l_{k}$ are given by the same jumps as $W_{\rho}(\epsilon,t)$, while the Stokes jumps at $\pm l_{\infty}$ are given by
\begin{equation*}
 2\mathrm{i}\check{\epsilon}\sum_{k\geq 1} \mathrm{Li}_2\big({\rm e}^{\pm 2\pi \mathrm{i} k/\check\epsilon}\big) \mp \frac{\pi \mathrm{i} }{12}\epsilon.
\end{equation*}
In particular, the $\epsilon$-derivatives of the Stokes jumps of $\hat{W}_{\rho}$ give the Stokes jumps of $\hat{F}_{\rho}$.
\end{Corollary}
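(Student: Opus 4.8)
The plan is to isolate the contribution of each summand in the definition \eqref{normalizedW} of $\widehat{W}_{\rho}$ to the Stokes jump across a given critical ray, and to treat the finite rays $\pm l_k$ and the accumulation rays $\pm l_{\infty}$ separately. For a finite ray $\pm l_k$ this is immediate: by Proposition~\ref{linftyjumps} the limit $W_{\rho}(\epsilon,0)$ jumps only across $\pm l_{\infty}$, the term $\frac{\epsilon}{12}\log(\check\epsilon)$ carries its branch cut only along $l_{\infty}$, and $\frac{\pi}{6}\epsilon$ is entire, so all three subtracted pieces vary continuously as $\rho$ sweeps past $\pm l_k$. Hence the jump of $\widehat{W}_{\rho}$ across $\pm l_k$ equals that of $W_{\rho}(\epsilon,t)$, namely the value $-\mathrm{i}\check{\epsilon}\,\mathrm{Li}_2\big(\mathrm{e}^{\pm 2\pi \mathrm{i}(t+k)/\check\epsilon}\big)$ recorded in Proposition~\ref{Stokesjumps} (equation \eqref{eq:StokesjumpsW}).

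The substance is the computation along $\pm l_{\infty}$, where I would write the jump of $\widehat{W}_{\rho}$ as the jump of $W_{\rho}(\epsilon,t)$ minus the jump of $W_{\rho}(\epsilon,0)$ minus $\frac{\epsilon}{12}$ times the jump of $\log(\check\epsilon)$. The crucial observation is that, for $\operatorname{Im}(t)>0$, the Borel sum $W_{\rho}(\epsilon,t)$ is itself \emph{continuous} across $\pm l_{\infty}$: by the limit \eqref{eq:limitImW} together with its left/right counterparts in Proposition~\ref{rblimit1}, the lateral boundary values of $W_{\rho_k}(\epsilon,t)$ on the two sides of $l_{\infty}$ both converge to the single convergent function $W(\epsilon,t)$, so there is no discontinuity exactly at $l_{\infty}$ (the genuine jumps all sit on the accumulating rays $l_k$). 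Consequently the $l_{\infty}$-jump of $\widehat{W}_{\rho}$ comes entirely from $-W_{\rho}(\epsilon,0)$ and from the branch cut of $\log(\check\epsilon)$. Inserting the value $-2\mathrm{i}\check\epsilon\sum_{k\geq 1}\mathrm{Li}_2\big(\mathrm{e}^{\pm 2\pi \mathrm{i} k/\check\epsilon}\big)-\frac{\pi \mathrm{i}}{12}\epsilon$ from Proposition~\ref{linftyjumps}, and using that $\log(\check\epsilon)$ jumps by $+2\pi\mathrm{i}$ across $l_{\infty}$ while remaining continuous across $-l_{\infty}$, produces exactly $2\mathrm{i}\check\epsilon\sum_{k\geq 1}\mathrm{Li}_2\big(\mathrm{e}^{\pm 2\pi \mathrm{i} k/\check\epsilon}\big)\mp\frac{\pi \mathrm{i}}{12}\epsilon$.

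For the final assertion I would differentiate \eqref{normalizedW} in $\epsilon$ and use \eqref{pot}, i.e.\ $\partial_{\epsilon}W_{\rho}=F_{\rho}$: this gives $\partial_{\epsilon}\widehat{W}_{\rho}=F_{\rho}(\epsilon,t)-F_{\rho}(\epsilon,0)-\tfrac{1}{12}\log(\check\epsilon)+c$ with $c=\tfrac{\pi}{6}-\tfrac{1}{12}$ a constant, so that $\partial_{\epsilon}\widehat{W}_{\rho}=\widehat{F}_{\rho}+c$ after the identification $\epsilon=\lambda$ and \eqref{normalizedpartfunc}. Since every Stokes jump is a difference $\widehat{W}_{\rho}-\widehat{W}_{\rho'}$, the constant $c$ cancels and $\partial_{\epsilon}$ of each jump of $\widehat{W}_{\rho}$ equals the corresponding jump of $\widehat{F}_{\rho}$; one can cross-check this directly against \eqref{jumpsDT2} using $\partial_{\epsilon}=\tfrac{1}{2\pi}\partial_{\check\epsilon}$, which matches the $-\tfrac{1}{\pi\mathrm{i}}\partial_{\check\lambda}\big(\check\lambda\,\mathrm{Li}_2\big)$ form of the DT jumps and reproduces the constant $\mp\tfrac{\pi\mathrm{i}}{12}$.

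The main obstacle is the continuity claim for $W_{\rho}(\epsilon,t)$ across $\pm l_{\infty}$, which is what makes the normalization work. Making this rigorous requires justifying that the two lateral limits in Proposition~\ref{rblimit1} indeed describe the approach to $l_{\infty}$ from opposite sides and that their common analytic continuation is the same $W(\epsilon,t)$, which in turn forces attention to the overlap of the admissible $\epsilon$-domains (the half-plane conditions $\operatorname{Re}(\epsilon)\gtrless 0$, $\operatorname{Im}(\epsilon)<0$) in which those limits are valid. Secondarily, one must pin down the sign and normalization of the branch jump of $\log(\check\epsilon)$ so that the $\mp\frac{\pi\mathrm{i}}{12}\epsilon$ term and the constant $c$ come out consistently; this is a bookkeeping matter once the continuity of $W_{\rho}(\epsilon,t)$ at $l_{\infty}$ and the $t\to 0$ definition of $W_{\rho}(\epsilon,0)$ are in hand.
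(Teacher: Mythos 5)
Your proof is correct and follows essentially the same route the paper intends: the Corollary is stated there without a separate proof, as an immediate consequence of the decomposition \eqref{normalizedW}, Proposition~\ref{linftyjumps}, the branch cut of the $\log(\check\epsilon)$ term at $l_{\infty}$, and the vanishing of the jump of $W_{\rho}(\epsilon,t)$ concentrated exactly at $\pm l_{\infty}$ guaranteed by the two-sided limits of Proposition~\ref{rblimit1}, together with equation~\eqref{pot} for the final assertion. Your explicit verification that the lateral limits at $l_{\infty}$ agree (so only $-W_{\rho}(\epsilon,0)$ and the $\log$ cut contribute there, producing the $\mp\frac{\pi\mathrm{i}}{12}\epsilon$ asymmetry) and that the constant $c=\frac{\pi}{6}-\frac{1}{12}$ cancels in jump differences makes precise exactly what the paper leaves implicit.
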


Joining the results of the previous corollary, together with the way the DT invariants are encoded in terms of the Stokes jumps of $\hat{F}_{\rho}$, given in (\ref{jumpsDT1}) and (\ref{jumpsDT2}), we find that:

\begin{Corollary}
The DT invariants of the resolved conifold are encoded in the Stokes jumps of~$\hat{W}_{\rho}$ in the following way:

\begin{itemize}
\item Along $\pm l_k= \pm \mathbb{R}_{<0}\cdot 2\pi \mathrm{i}(t+k)=\pm \mathbb{R}_{<0}\cdot Z_{\beta +k\delta}$ the jump of $\hat{W}_{\rho}$ encodes $\Omega(\beta+k\delta)$ as
 \begin{equation*}
 -\mathrm{i}\check{\epsilon}\mathrm{Li}_2\big({\rm e}^{\pm 2\pi \mathrm{i}(t+k)/\check{\epsilon}}\big)=\frac{ \Omega(\beta+k\delta)}{2\pi \mathrm{i}}\epsilon \mathrm{Li}_2\big({\rm e}^{\pm Z_{\beta+k\delta}/\check{\epsilon}}\big).
 \end{equation*}
 \item Along $\pm l_{\infty}=\pm \mathrm{i}\mathbb{R}_{<0}= \mathbb{R}_{<0}\cdot Z_{\pm k\delta}$ the jumps of $\hat{W}_{\rho}$ encode the $\Omega(k\delta)$ via
 \begin{equation*}
 2\mathrm{i}\check{\epsilon}\sum_{k\geq 1} \mathrm{Li}_2\big({\rm e}^{\pm 2\pi \mathrm{i} k/\check\epsilon}\big) \mp \frac{\pi \mathrm{i} }{12}\epsilon =\sum_{k\geq 1}\frac{\Omega(k\delta)}{2\pi \mathrm{i}}\epsilon \mathrm{Li}_2\big({\rm e}^{\pm Z_{k\delta}/\check{\epsilon}}\big)\mp \frac{\pi \mathrm{i} }{12}\epsilon.
 \end{equation*}
\end{itemize}
\end{Corollary}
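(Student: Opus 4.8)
The plan is to obtain this Corollary as a purely algebraic consequence of the preceding Corollary, which already records the Stokes jumps of $\widehat{W}_{\rho}(\epsilon,t)$, together with the explicit BPS data $\Omega(\gamma)$ and central charges $Z_{\gamma}$ fixed at the beginning of the section. No new analytic input is required: the substantive work---computing the discontinuities of the Borel sum across $\pm l_k$ in Proposition~\ref{Stokesjumps} and across $\pm l_{\infty}$ in Proposition~\ref{linftyjumps}, and transporting them to $\widehat{W}_\rho$ via the normalization \eqref{normalizedW}---has already been done, so all that remains is to rewrite the jumps in terms of $Z_\gamma$ and $\Omega(\gamma)$.

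First I would treat the rays $\pm l_k$. By the preceding Corollary the jump there equals $-\mathrm{i}\check{\epsilon}\,\mathrm{Li}_2\big({\rm e}^{\pm 2\pi \mathrm{i}(t+k)/\check{\epsilon}}\big)$, inherited unchanged from $W_{\rho}$. Using $Z_{\beta+k\delta}=2\pi\mathrm{i}(t+k)$ one has ${\rm e}^{\pm 2\pi\mathrm{i}(t+k)/\check{\epsilon}}={\rm e}^{\pm Z_{\beta+k\delta}/\check{\epsilon}}$, and since $\Omega(\beta+k\delta)=1$ while $\epsilon=2\pi\check{\epsilon}$ yields $\epsilon/(2\pi\mathrm{i})=-\mathrm{i}\check{\epsilon}$, the identity
\begin{equation*}
 -\mathrm{i}\check{\epsilon}\,\mathrm{Li}_2\big({\rm e}^{\pm 2\pi \mathrm{i}(t+k)/\check{\epsilon}}\big)=\frac{\Omega(\beta+k\delta)}{2\pi \mathrm{i}}\,\epsilon\,\mathrm{Li}_2\big({\rm e}^{\pm Z_{\beta+k\delta}/\check{\epsilon}}\big)
\end{equation*}
is immediate. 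Next I would treat $\pm l_{\infty}$: here the preceding Corollary gives the jump $2\mathrm{i}\check{\epsilon}\sum_{k\ge 1}\mathrm{Li}_2\big({\rm e}^{\pm 2\pi\mathrm{i}k/\check{\epsilon}}\big)\mp\tfrac{\pi\mathrm{i}}{12}\epsilon$. With $Z_{k\delta}=2\pi\mathrm{i}k$ and $\Omega(k\delta)=-2$ for $k\neq 0$, the same conversion $\epsilon/(2\pi\mathrm{i})=-\mathrm{i}\check{\epsilon}$ turns $\sum_{k\ge1}\frac{\Omega(k\delta)}{2\pi\mathrm{i}}\epsilon\,\mathrm{Li}_2\big({\rm e}^{\pm Z_{k\delta}/\check{\epsilon}}\big)$ into $2\mathrm{i}\check{\epsilon}\sum_{k\ge1}\mathrm{Li}_2\big({\rm e}^{\pm 2\pi\mathrm{i}k/\check{\epsilon}}\big)$, while the leftover constant $\mp\tfrac{\pi\mathrm{i}}{12}\epsilon$ carries no BPS-index prefactor and sits identically on both sides. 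This establishes the claim.

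Since the statement is pure bookkeeping, I do not expect a genuine obstacle; the only points demanding care are the normalization conventions, namely keeping the factor $\epsilon=2\pi\check{\epsilon}$ and the sign $1/\mathrm{i}=-\mathrm{i}$ straight, and correctly carrying the prefactor-free term $\mp\pi\mathrm{i}\epsilon/12$ through the $l_{\infty}$ computation---this being precisely the contribution that the counterterms $-\tfrac{\epsilon}{12}\log(\check\epsilon)+\tfrac{\pi}{6}\epsilon$ in \eqref{normalizedW} were engineered to produce alongside the $\mathrm{Li}_2(1)$ evaluation in Proposition~\ref{linftyjumps}. The deeper content---that these coefficients genuinely \emph{are} the DT invariants $\Omega(\gamma)$---is not proved here but imported from the relation to Bridgeland's Riemann--Hilbert problem established in \cite{ASTT21,BridgelandCon}, exactly as the parallel identities \eqref{jumpsDT1} and \eqref{jumpsDT2} for $\widehat{F}_\rho$ were justified.
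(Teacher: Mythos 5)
Your proposal is correct and takes essentially the same route as the paper: the paper also treats this corollary as pure bookkeeping, obtained by combining the preceding corollary on the Stokes jumps of $\hat{W}_{\rho}$ with the BPS data $\Omega(\gamma)$, $Z_{\gamma}$ and the DT identification inherited from the jumps \eqref{jumpsDT1}--\eqref{jumpsDT2} of $\hat{F}_{\rho}$ established in \cite{ASTT21}. Your explicit checks (the conversion $\epsilon/(2\pi\mathrm{i})=-\mathrm{i}\check{\epsilon}$, the values $\Omega(\beta+k\delta)=1$, $\Omega(k\delta)=-2$, and the prefactor-free term $\mp\pi\mathrm{i}\epsilon/12$ appearing identically on both sides) are all accurate.
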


\subsection[Line bundle defined by $\hat{W}_{\rho}$]{Line bundle defined by $\boldsymbol{\hat{W}_{\rho}}$}

In \cite{ASTT21}, it turned out to be convenient to projectize the K\"ahler parameter $t$ and the topological string coupling $\lambda$ as
\begin{equation*}
 \lambda = 2\pi \lambda_B/w, \qquad t=v/w,
\end{equation*}
in order to relate to the Riemann--Hilbert problem considered in \cite{Bridgeland1,BridgelandCon}. Furthermore, the projectivized partition function
\begin{equation*}
 Z^{{\rm top}}_{\rho}(v,w,\lambda_B):=\exp\big(\widehat{F}_{w^{-1}\cdot \rho}(2\pi\lambda_B/w, v/w)\big)
\end{equation*}
was considered, which reduces to the usual partition function when $w=1$. That is,
\begin{equation*}
 Z^{{\rm top}}_{\rho}(v,1,\lambda_B):=\exp\big(\widehat{F}_{\rho}(2\pi\lambda_B, v)\big)=\exp\big(\widehat{F}_{\rho}(\lambda, t)\big).
\end{equation*}

The projectivized free energy $\widehat{F}_{w^{-1}\cdot \rho}(2\pi\lambda_B/w, v/w)$ now has Stokes jumps in the $\lambda_B$ variable along $\pm \mathcal{L}_k:= \pm w\cdot l_k$ and $\pm \mathcal{L}_{\infty}:= \pm w\cdot l_{\infty}$. Provided $\operatorname{Im}(v/w)>0$, these are given by
\begin{itemize}\itemsep=0pt
\item Along $\pm \mathcal{L}_k$ the jump is
 \begin{equation*}
 \frac{1}{2\pi \mathrm{i}}\partial_{\lambda_B}\big(\lambda_B\mathrm{Li}_2\big({\rm e}^{\pm 2\pi \mathrm{i}(v+kw)/\lambda_B}\big)\big).
 \end{equation*}
 \item Along $\pm \mathcal{L}_{\infty}$ the jump is
 \begin{equation*}
 \sum_{k\geq 1}-\frac{1}{\pi \mathrm{i}}\partial_{\lambda_B}\big(\lambda_B\mathrm{Li}_2\big({\rm e}^{\pm 2\pi {\rm i}kw/\lambda_B}\big)\big)\mp \frac{\pi \mathrm{i}}{12}.
 \end{equation*}
\end{itemize}

The exponentials of such jumps were then interpreted as specifying the transition functions of a line bundle $\mathcal{L}^{{\rm top}}\to \mathbb{C}^{\times}\times M_{+}$ having a global section specified by the Borel sums $Z_{\rho}^{{\rm top}}(v,w,\lambda_B)$, and where
\begin{equation*}
 M_{+}:=\{(v,w) \mid \operatorname{Im}(v/w)>0\}.
\end{equation*}

The line bundle $\mathcal{L}^{{\rm top}}$ was then shown to correspond to a certain conformal limit of hyperholomorphic line bundles previously considered in \cite{APP,Neitzke_hyperhol}.

In this section we wish to consider an analogous projectivized version of $\hat{W}_{\rho}(\epsilon,t)$, and show that it defines a section of the same line bundle from before.

We start by recalling more specifically how $\mathcal{L}^{{\rm top}}\to \mathbb{C}^{\times}\times M_+$ is defined. Let $\rho_k$ be the middle ray between $\mathcal{L}_k$ and $\mathcal{L}_{k-1}$, and consider the open subsets
\begin{equation*}
 U_k^{\pm}:=\{(\lambda_B,v,w)\in \mathbb{C}^{\times}\times M_{+} \mid \lambda_B\in \mathbb{H}_{\pm \rho_k}\}.
\end{equation*}

Then $\{U_k^+\}_{k\in \mathbb{Z}}\cup \{U_k^-\}_{k\in \mathbb{Z}}$ form an open covering of $\mathbb{C}^{\times}\times M_{+}$ and $Z^{{\rm top}}_{\pm \rho_k}(\lambda_B,v,w)$ is defined on $U_k^{\pm}$. We then define a 1-\v{C}ech cocycle specifying $\mathcal{L}^{{\rm top}}$ and associated to the previous cover as follows:
\begin{itemize}\itemsep=0pt
\item If $U_{k_1}^{\pm}\cap U_{k_2}^{\pm }\neq \varnothing$ for $k_1<k_2$, we then define for $(\lambda_{\mathrm B},v,w)\in U_{k_1}^{\pm}\cap U_{k_2}^{\pm }$,
\begin{equation*}
 g^{\pm}_{k_1,k_2}(\lambda_{\mathrm B},v,w):=\prod_{k_1\leq k < k_2}\Xi_{\pm \mathcal{L}_k}(\lambda_{\mathrm B},v,w),
\end{equation*}
where
\begin{equation*}
 \Xi_{\pm \mathcal{L}_k}(\lambda_{\mathrm B},v,w):=\exp \left(\frac{1}{2\pi \mathrm{i}}\partial_{\lambda_B}\big(\lambda_B\mathrm{Li}_2\big({\rm e}^{\pm 2\pi \mathrm{i}(v+kw)/\lambda_B}\big)\big)\right).
\end{equation*}

\item On the other hand, if for some $k_1,k_2\in \mathbb{Z}$ we have $U_{k_1}^{+}\cap U_{k_2}^{-}\neq \varnothing$, then $\rho_{k_1}\neq \rho_{k_2}$ and hence out of the two sectors determined by $\rho_{k_1}$ and $-\rho_{k_2}$ there is a smallest one, which we denote by $[\rho_{k_1},-\rho_{k_2}]$. For all $(\lambda_{\mathrm B},v,w)\in U_{k_1}^{+}\cap U_{k_2}^{-}$ we must either have that $\mathcal{L}_{\infty} \subset [\rho_{k_1},-\rho_{k_2}]$ or $- \mathcal{L}_{\infty} \subset [\rho_{k_1},-\rho_{k_2}]$. In the first case we define
\begin{equation*}
 g_{k_1,k_2}^{\infty}(\lambda_{\mathrm B},v,w):={\rm e}^{-\pi\mathrm i/12}\prod_{k\geq k_1}\Xi_{\mathcal{L}_k}(\lambda_{\mathrm B},v,w)\prod_{k< k_2}\Xi_{-\mathcal{L}_{k}}(\lambda_{\mathrm B},v,w)\prod_{k\geq 1}\Xi_{\mathcal{L}_{\infty},k}(\lambda_{\mathrm B},v,w),
\end{equation*}
and $g_{k_2,k_1}^{\infty}:=\big(g_{k_1,k_2}^{\infty}\big)^{-1}$, where
\begin{equation*}
 \Xi_{\pm \mathcal{L}_{\infty},k}(v,w,\lambda_\mathrm B)=\exp\left(-\frac{1}{\pi \mathrm{i}}\partial_{\lambda_{\mathrm B}}\big(\lambda_{\mathrm B} \mathrm{Li}_2\big({\rm e}^{\pm kw/ \lambda_{\mathrm B}}\big)\big)\right) ;
\end{equation*}
while in the second case we define
\begin{equation*}
 g_{k_2,k_1}^{-\infty}(\lambda_{\mathrm B},v,w):={\rm e}^{\pi \mathrm{i}/12}\prod_{k\geq k_2}\Xi_{-\mathcal{L}_{k}}(\lambda_{\mathrm B},v,w)\prod_{k< k_1}\Xi_{\mathcal{L}_{k}}(\lambda_{\mathrm B},v,w)\prod_{k\geq 1}\Xi_{-\mathcal{L}_{\infty},k}(\lambda_{\mathrm B},v,w),
\end{equation*}
and $g_{k_1,k_2}^{-\infty}:=\big(g_{k_2,k_1}^{-\infty}\big)^{-1}$.
\end{itemize}

We now define a projectived version of $W_{\rho}(\epsilon,t)$. As before, we consider the projectivized variables
\begin{equation*}
 \epsilon=2\pi \epsilon_B/w, \qquad t=v/w,
\end{equation*}
and
\begin{equation*}
 Z_{\rho}^{\rm NS}(v,w,\epsilon_B):=\exp\big(\widehat{W}_{w^{-1}\cdot \rho}(2\pi\epsilon/w, v/w)\big).
\end{equation*}

From the same analysis as before, we obtain that the jumps of $\widehat{W}_{w^{-1}\cdot \rho}(2\pi\epsilon/w, v/w)$ along~$\pm \mathcal{L}_k$ and~$\mathcal{L}_{\infty}$ are given by
\begin{itemize}\itemsep=0pt
\item Along $\pm \mathcal{L}_k$ the jump is
 \begin{equation*}
 -\mathrm{i}\check{\epsilon}\mathrm{Li}_2\big({\rm e}^{\pm 2\pi \mathrm{i}(v+kw)/\epsilon_B}\big).
 \end{equation*}
 \item Along $\pm \mathcal{L}_{\infty}$ the jump is
 \begin{equation*}
 \sum_{k\geq 1} 2\mathrm{i}\check{\epsilon}\mathrm{Li}_2\big({\rm e}^{\pm 2\pi {\rm i}kw/\epsilon_B}\big)\mp \frac{\pi \mathrm{i} }{12}\frac{2\pi \epsilon_B}{w}.
 \end{equation*}
\end{itemize}

Under the identification $\epsilon_B=\lambda_B$, we have that $Z_{\pm \rho_k}^{\rm NS}$ is defined on $U_k^{\pm}$, and by using the exponentials of the Stokes jumps of $Z_{\pm \rho_k}^{\rm NS}$ we can as before construct a line bundle $\mathcal{L}^{\rm NS}\to \mathbb{C}^{\times}\times M_{+}$ such that $Z_{\pm \rho_k}^{\rm NS}$ define a section of $\mathcal{L}^{\rm NS}$.

To show that $\mathcal{L}^{\rm NS}=\mathcal{L}^{{\rm top}}$ it is enough to show that we can scale each section $Z_{\pm \rho_k}^{\rm NS}$ over $U_{k}^{\pm}$ by a function defined on $U_k^{\pm}$ such that we recover the same transition functions of $\mathcal{L}^{{\rm top}}$.

For this we will use the functions $f_{\pm \rho_k}(\epsilon_B,v,w)\colon U_k^{\pm} \to \mathbb{C}$ defined in \cite[Lemma 4.18]{ASTT21}, which satisfy
\begin{equation*}
 f_{\pm \rho_{k+1}}(\epsilon_B,v,w)-f_{\pm \rho_k}(\epsilon_B,v,w)=\pm \frac{2\pi \mathrm{i}(v+kw)}{\epsilon_B}\log\big(1-{\rm e}^{\pm 2\pi \mathrm{i}(v+kw)/\epsilon_B}\big).
\end{equation*}
We then scale
\begin{equation*}
 Z_{\pm \rho_k}^{\rm NS}\to \widetilde{Z}_{\pm \rho_k}^{\rm NS}:=s_{\pm \rho_k}Z_{\pm \rho_k}^{\rm NS},
\end{equation*}
where $s_{\pm \rho_k}\colon U_k^{\pm} \to \mathbb{C}^{\times}$ is defined by
\begin{equation*}
 s_{\pm \rho_k}(\epsilon_B,v,w):=\exp\big((1-\epsilon)\cdot\widehat{F}_{\pm w^{-1}\cdot \rho_k}(2\pi\lambda_B/w, v/w)\big) \exp\left(\frac{\epsilon}{2\pi \mathrm{i}}f_{\pm \rho_k}(\epsilon_B,v,w) \right),
\end{equation*}
and where we recall that $\epsilon=2\pi \epsilon_B/w$.

One then finds that
\begin{gather*}
 \log\big(\widetilde{Z}_{\pm \rho_{k+1}}^{\rm NS}\big) -\log\big(\widetilde{Z}_{\pm \rho_k}^{\rm NS}\big) \\
 \qquad{} = \frac{(1-\epsilon)}{2\pi \mathrm{i}}\left(\mathrm{Li}_2\big({\rm e}^{ \pm 2\pi \mathrm{i}(v+kw)/\epsilon_B}\big)\pm \frac{ 2\pi \mathrm{i}(v+kw)}{\epsilon_B}\log\big(1-{\rm e}^{ \pm 2\pi \mathrm{i}(v+kw)/\epsilon_B}\big)\right)\\
 \qquad\quad{} \pm \frac{\epsilon}{2\pi \mathrm{i}}\left(\frac{2\pi \mathrm{i}(v+kw)}{\epsilon_B}\log\big(1-{\rm e}^{\pm 2\pi \mathrm{i}(v+kw)/\epsilon_B}\big)\right)
 +\frac{\epsilon}{2\pi \mathrm{i}}\mathrm{Li}_2\big({\rm e}^{\pm 2\pi \mathrm{i}(v+kw)/\epsilon_B}\big)\\
\qquad{} = \frac{1}{2\pi \mathrm{i}}\left(\mathrm{Li}_2\big({\rm e}^{\pm 2\pi \mathrm{i}(v+kw)/\epsilon_B}\big)\pm \frac{2\pi \mathrm{i}(v+kw)}{\epsilon_B}\log\big(1-{\rm e}^{\pm 2\pi \mathrm{i}(v+kw)/\epsilon_B}\big)\right)\\
 \qquad{} = \frac{1}{2\pi \mathrm{i}} \partial_{\epsilon_B}\big(\epsilon_B \mathrm{Li}_2\big({\rm e}^{\pm 2\pi \mathrm{i}(v+kw)/\epsilon_B}\big)\big).
 \end{gather*}
This implies that $\widetilde{Z}_{\pm \rho_{k+1}}^{\rm NS}$ and $\widetilde{Z}_{\pm \rho_{k}}^{\rm NS}$ are related by the corresponding transition function of $\mathcal{L}^{{\rm top}}$ on $U_{k+1}^{\pm}\cap U_{k}^{\pm}$. The remaining transition functions (i.e. the ones corresponding to $U_{k_1}^{+}\cap U_{k_2}^{-}$) follow similarly, showing the following:

\begin{Proposition}
The line bundle $\mathcal{L}^{\rm NS}\to \mathbb{C}^{\times}\times M_+$ defined by the Borel sums $Z_{\rho}^{\rm NS}$ is the same as the line bundle $\mathcal{L}^{{\rm top}}\to \mathbb{C}^{\times}\times M_+$ defined by the Borel sums $Z_{\rho}^{{\rm top}}$.
\end{Proposition}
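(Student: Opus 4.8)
The plan is to use the standard fact that a holomorphic line bundle presented by a \v{C}ech $1$-cocycle of transition functions subordinate to an open cover is determined up to isomorphism by that cocycle modulo coboundaries, and that rescaling the local trivializing sections $Z_{\pm\rho_k}^{\rm NS}$ on $U_k^{\pm}$ by nowhere-vanishing holomorphic functions is exactly a coboundary operation. Since both $\mathcal{L}^{\rm NS}$ and $\mathcal{L}^{\rm top}$ are presented on the common cover $\{U_k^{\pm}\}$ with transition functions given by the exponentials of the Stokes jumps of $Z_{\pm\rho_k}^{\rm NS}$ and $Z_{\pm\rho_k}^{\rm top}$ respectively, it suffices to produce functions $s_{\pm\rho_k}\colon U_k^{\pm}\to\mathbb{C}^{\times}$ such that the rescaled sections $\widetilde{Z}_{\pm\rho_k}^{\rm NS}:=s_{\pm\rho_k}Z_{\pm\rho_k}^{\rm NS}$ glue with precisely the transition functions $g_{k_1,k_2}^{\pm}$ and $g^{\pm\infty}$ that define $\mathcal{L}^{\rm top}$; the relation $\widetilde{Z}_i^{\rm NS}/\widetilde{Z}_j^{\rm NS}=(s_i/s_j)\,g^{\rm NS}_{ij}=g^{\rm top}_{ij}$ is then exactly the coboundary identity $g^{\rm top}_{ij}=s_i\,g^{\rm NS}_{ij}\,s_j^{-1}$ certifying $\mathcal{L}^{\rm NS}\cong\mathcal{L}^{\rm top}$.

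With the explicit $s_{\pm\rho_k}$ recorded above, the first thing I would check is that these are genuine coboundary data: they are exponentials, hence nowhere vanishing, and they are holomorphic on $U_k^{\pm}$ because both $\widehat{F}_{\pm w^{-1}\cdot\rho_k}$ and the functions $f_{\pm\rho_k}$ of \cite[Lemma~4.18]{ASTT21} are defined there. The heart of the argument for the finite rays is the computation displayed above, which verifies that on $U_{k+1}^{\pm}\cap U_k^{\pm}$ the difference $\log(\widetilde{Z}_{\pm\rho_{k+1}}^{\rm NS})-\log(\widetilde{Z}_{\pm\rho_k}^{\rm NS})$ collapses to $\frac{1}{2\pi\mathrm{i}}\partial_{\epsilon_B}\big(\epsilon_B\,\mathrm{Li}_2\big({\rm e}^{\pm 2\pi\mathrm{i}(v+kw)/\epsilon_B}\big)\big)=\log\Xi_{\pm\mathcal{L}_k}$. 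The mechanism is the cancellation between the $(1-\epsilon)$-weighted jump of $\widehat{F}_{\pm w^{-1}\cdot\rho_k}$, the $\epsilon$-weighted jump of $f_{\pm\rho_k}$, and the $\epsilon$-weighted $\mathrm{Li}_2$-jump of $\widehat{W}_{\pm\rho_k}$, which conspire to make $\widetilde{Z}^{\rm NS}$ transform by the $\mathcal{L}^{\rm top}$ transition function across each $\mathcal{L}_k$.

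The remaining and genuinely more delicate step, indicated in the text only by ``follow similarly,'' is the treatment of the mixed overlaps $U_{k_1}^{+}\cap U_{k_2}^{-}$, where I must reproduce the cocycles $g^{\infty}_{k_1,k_2}$ and $g^{-\infty}_{k_2,k_1}$ that straddle $\pm\mathcal{L}_{\infty}$. Here I would feed in the jump of $\widehat{W}_{\rho}$ across $\pm\mathcal{L}_{\infty}$ obtained in the Corollary to Proposition~\ref{linftyjumps}, namely $2\mathrm{i}\check{\epsilon}\sum_{k\geq 1}\mathrm{Li}_2\big({\rm e}^{\pm 2\pi\mathrm{i}kw/\epsilon_B}\big)\mp\frac{\pi\mathrm{i}}{12}\frac{2\pi\epsilon_B}{w}$, together with the corresponding $\widehat{F}_{\rho}$-jump recorded in \eqref{jumpsDT2}, and repeat the $(1-\epsilon)/\epsilon$ bookkeeping of the finite-ray case. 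The main obstacle I anticipate is matching the discrete ingredients simultaneously: the infinite product of factors $\Xi_{\pm\mathcal{L}_{\infty},k}$, the branch cut of the $\log(\check{\epsilon})$ term placed along $l_{\infty}$ in the definition of $\widehat{W}_{\rho}$, and the constant prefactors ${\rm e}^{\mp\pi\mathrm{i}/12}$ appearing in $g^{\pm\infty}$. Once the jump of $\log\widetilde{Z}^{\rm NS}$ across $\pm\mathcal{L}_{\infty}$ is shown to equal $\log g^{\pm\infty}$, the cocycle $\{\widetilde{Z}_{\pm\rho_k}^{\rm NS}\}$ coincides with the one defining $\mathcal{L}^{\rm top}$; since $\{s_{\pm\rho_k}\}$ is a coboundary, this yields $\mathcal{L}^{\rm NS}\cong\mathcal{L}^{\rm top}$, as claimed.
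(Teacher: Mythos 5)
Your proposal is correct and takes essentially the same approach as the paper: the same rescaling factors $s_{\pm\rho_k}$, the same \v{C}ech-coboundary reasoning, and the same key cancellation of the $(1-\epsilon)$- and $\epsilon$-weighted jumps across the finite rays $\pm\mathcal{L}_k$. The paper likewise dispatches the mixed overlaps $U_{k_1}^{+}\cap U_{k_2}^{-}$ with a ``follow similarly,'' so your explicit list of ingredients needed for the $\pm\mathcal{L}_{\infty}$ cocycles (the jumps at infinity, the $\log(\check{\epsilon})$ branch cut, and the ${\rm e}^{\mp\pi\mathrm{i}/12}$ prefactors) is, if anything, slightly more careful than the original.
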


\section{Gauge theory, exact WKB and integrable systems}\label{5dWKBsec}

In this section, we interpret the previously obtained results in terms of exact WKB analysis, five-dimensional gauge theory in the $\frac{1}{2}\Omega$-background and quantum integrable systems. In Section~\ref{sec:BPS}, we summarize the resurgence results from Section~\ref{sec:Borelsums} in terms of BPS states in M-theory and gauge theory. In Section~\ref{sec:spectralnetwork}, we see how these BPS states are encoded in exponential spectral networks. Here we also introduce two special examples, called $\mathcal{W}_\mathrm{GV}$ and $\mathcal{W}_\mathrm{np}$. In Section~\ref{sec:exactWKB}, we explain how the quantum vevs $\Psi_\rho$ may be interpreted as local sections in an extension of the exact WKB analysis to difference operators. In Section~\ref{sec:abelianization}, we define two types of spectral coordinates by abelianizing with respect to the networks $\mathcal{W}_\mathrm{GV}$ and $\mathcal{W}_\mathrm{np}$. In Section~\ref{sec:NRS}, we interpret the relation between $\Psi_\rho$ and $F^{\rm NS}_\rho$ stated in Theorem~\ref{theorem3} in terms of a five-dimensional analogue of the Nekrasov--Rosly--Shatahvili proposal \cite{Nekrasov:NRS}, while in Section~\ref{sec:spectralproblem}, we formulate two spectral problems associated to the two special networks $\mathcal{W}_\mathrm{GV}$ and $\mathcal{W}_\mathrm{np}$.

\subsection{Resurgence and BPS states}\label{sec:BPS}

In Section~\ref{sec:Borelsums}, we calculated and analyzed the Borel sum
\begin{equation*}
W_\rho(\epsilon,t) = F_\rho^{\rm NS}(\epsilon, t - \check{\epsilon}/2)
\end{equation*}
of the NS free energy in the resolved conifold geometry, along any ray $\rho$ in the $\epsilon$-plane where its Borel transform does not have any singularities. As stated in Theorem~\ref{maintheorem}, these singularities lie along an infinite set of rays
\begin{equation*}
\pm l_k = \pm \mathbb{R}_{< 0} \cdot 2 \pi \mathrm{i} (t+k),
\end{equation*}
together with and $\pm l_\infty = \pm \mathrm{i} \mathbb{R}_{< 0}$, and $W_\rho$ jumps across the ray $\pm l_k$ with a Stokes factor{\samepage
\begin{equation*}
\Delta W_\rho(\epsilon,t) = -\mathrm{i}\check{\epsilon} \mathrm{Li}_2\big({\rm e}^{\pm 2\pi \mathrm{i}(t + k)/\check \epsilon}\big),
\end{equation*}
when $\operatorname{Im}(t)>0$.}

Furthermore, when additionally $0 < \operatorname{Re}(t) <1$, the Borel sum of $W_\rho$ along any ray $\rho_k$ between~$l_k$ and~$l_{k-1}$ can be written in terms of the non-perturbative free energy $F_{\rm np}^{\rm NS}$ (first encountered in equation~\eqref{eqn:FNSnpfirst}) as
\begin{equation*}
 W_{\rho_k}(\epsilon,t) = F_{\rm np}^{\rm NS}(\epsilon, t - \check{\epsilon}/2 + k),
\end{equation*}
as stated in Corollary~\ref{cortheorem1}.

Note that all $W_{\rho_k}(\epsilon,t)$ are solutions to the difference equation~\eqref{eq:NSdiffeq}, and that in the limit $k \to \infty$ we retrieve the Gopakumar--Vafa expression (see (\ref{eq:limitImW}))
\begin{align*}
 W_\mathrm{GV}(\epsilon,t) := W(\epsilon,t) = \lim_{k \to \infty} W_\mathrm{np}(\epsilon,t+k).
\end{align*}

In Section~\ref{sec:DT}, we interpreted the jumps geometrically in terms of DT invariants and a distinguished line bundle. In particular, we found that each ray $l_k$ is associated to a bound state of one D2-brane and $k$ D0-branes in the resolved conifold geometry, with central charge
\begin{equation*}
Z_{k,t} = \frac{2 \pi \mathrm{i}}{R} (t+k),
\end{equation*}
where we have re-introduced the radius $R$. In the dual five-dimensional ${\rm U}(1)$ gauge theory in the ``space-time" background $\mathbb{R}^4 \times S^1_R$, geometrically engineered by the resolved conifold geometry, the DT invariants correspond to a Kaluza--Klein tower of 5d BPS particles with electro-magnetic charge $\gamma_k = \gamma_A + k \gamma_0$.

In Section~\ref{sec:Borelsums}, we also computed the Borel sum
\begin{equation*}
 \Psi_\rho(\epsilon,x,t) = \exp ( S_\rho(\epsilon,x,t) )
\end{equation*}
of the quantum vev of a brane in the resolved conifold geometry, along any ray in the $\epsilon$-plane where its Borel transform does not have any singularities. As summarized in Theorem~\ref{th:BorelS}, these singularities lie along two sets of rays
\begin{align*}
\pm \widetilde{l}_k =\pm \mathbb{R}_{< 0} \cdot 2 \pi \mathrm{i} (x+k) \qquad \mathrm{and} \qquad
\pm \widetilde{l}_{k,t} = \pm \mathbb{R}_{< 0} \cdot 2 \pi \mathrm{i} (t+x+k),
\end{align*}
for $k \in \mathbb{Z}$ and $\pm l_\infty = \pm \mathrm{i} \mathbb{R}_{<0}$, and $S_\rho$ jumps across the ray $\pm \widetilde{l}_k$ (resp.~$\pm \widetilde{l}_{k,t}$) with a Stokes factor
\begin{align*}
\Delta S_{\rho}(\epsilon,x,t) = \mathrm{Li}_1 \big( {\rm e}^{\pm 2\pi \mathrm{i}(x+k)/\check\epsilon} \big) \qquad \mathrm{resp.} \quad \Delta S_{\rho}(\epsilon,x,t) = - \mathrm{Li}_1 \big( {\rm e}^{\pm 2\pi \mathrm{i}(t+x+k)/\check \epsilon} \big).
\end{align*}
Again, the $\Psi_\rho$ can be written in terms of the non-perturbative quantum vev $\Psi_\mathrm{np}$ (first encountered in equation~\eqref{eq:psiopennp}) with shifted arguments as in Corollary~\ref{cortheorem2}.

Geometrically, to each such Stokes ray we may associate an open DT invariant in the resolved conifold geometry with central charge
\begin{equation*}
\pm \widetilde{Z}_{k} = \mp \frac{2 \pi \mathrm{i}}{R} (x+k) \qquad \mathrm{vs.} \quad \pm \widetilde{Z}_{k,t} = \mp \frac{2 \pi \mathrm{i}}{R} (t+x+k),
\end{equation*}
that counts the bound states of one D2-brane and $k$ D0-branes (or the bound states of the corresponding anti-branes) ending on a D4-brane that ``wraps'' the Lagrangian A-brane corresponding to $\Psi_\rho$. Perhaps, these invariants are best understood in terms of the M-theory setting, where they may be lifted to open M2-branes ending on an M5-brane \cite{Ooguri:1999bv}. The M5-brane wraps the Lagrangian A-brane corresponding to $\Psi_\rho$, while the open M2-branes either wrap the disc passing through the north pole or the south pole of the compact $\mathbb{P}^1$ \cite{Aganagic:2000gs}. Similar jumps in this context have been found previously in \cite{Aganagic:2009cg}.
Aside from wrapping a Lagrangian A-brane in the resolved conifold geometry, the M5-brane wraps a $\mathbb{R}^2 \times S^1_R$ in the ``space-time" directions. In the dual five-dimensional ${\rm U}(1)$ gauge theory the M5-brane thus engineers a BPS surface defect. In the three-dimensional worldvolume theory on the surface defect, compactified on $S^1_R$, the open DT invariants correspond to two Kaluza--Klein towers of 3d BPS particles \cite{Dimofte:2010tz}.

\subsection{Exponential spectral networks}\label{sec:spectralnetwork}

Similar to 2d-4d BPS states in 4d $\mathcal{N}=2$ theories of class S (see \cite{Gaiotto:2009hg} and follow-ups), the 3d-5d BPS states of their five-dimensional lifts may be visualized using the technology of spectral networks (also known as exponential BPS graphs in this context) \cite{Banerjee:2018syt, Eager:2016yxd}.

Exponential spectral networks are generally more complex than their 4d analogs because of the logarithm of $Y$ in the classical Liouville form. It is therefore convenient to choose a~trivialization of the $\log$-covering $\widetilde{\Sigma} \to \Sigma$, and label the $\vartheta$-trajectories by an additional index when crossing the logarithmic branch-cuts.

Say we are given a covering $\Sigma \to \mathbb{C}^*_X$ with classical Liouville form
\begin{equation*}
\lambda^\mathrm{cl} = \frac{1}{2 \pi \mathrm{i}} \log Y(x) \,{\rm d}x.
\end{equation*}
Choose a local trivialization for the spectral covering $\Sigma \to \mathbb{C}^*_X$ as well as for the $\log$-covering $\widetilde{\Sigma} \to \Sigma$. Fix a phase $\vartheta \in \mathbb{R}/2 \pi \mathbb{Z}$, and denote the additional $\log$-index by $N \in \mathbb{Z}$. Then the exponential spectral network $\mathcal{W}_\vartheta(\mathbf{t})$ is defined in terms of the differential equation
\begin{align}\label{eq:network_trajectory}
 \frac{1}{2 \pi \mathrm{i}} \big( \log Y_j(x) - \log Y_i(x) + 2 \pi \mathrm{i} N \big) \frac{{\rm d} x}{{\rm d} s} \in {\rm e}^{\mathrm{i} \vartheta} \mathbb{R}^\times,
\end{align}
where the subscripts $i$ and $j$ correspond to the restriction of $\lambda^\mathrm{cl}$ to the corresponding sheet in the local trivialization of $\Sigma \to \mathbb{C}^*_X$.

More precisely, the exponential spectral network $\mathcal{W}_\vartheta(\mathbf{t})$ is a collection of $\vartheta$-trajectories with the labels $(ij,N)$. Each trajectory is a path $\exp (2 \pi {\rm i} \gamma(s))$ in $\mathbb{C}^*_X$ such that
\begin{align}\label{eqn:thetatrajectoryRC}
 \big( y_j ( \exp(2 \pi \mathrm{i} \gamma(s)) ) - y_i ( \exp(2 \pi \mathrm{i} \gamma(s)) ) + N \big) \frac{ \partial \gamma(s)}{\partial s} \in {\rm e}^{\mathrm{i} \vartheta} \mathbb{R}^\times.
\end{align}
Note that if a path parametrized by $\gamma(s)$ is a solution to the above constraint, then the opposite path will be a solution as well with the opposite sign. We say that $\gamma(s)$ is positively oriented if the constraint \eqref{eqn:thetatrajectoryRC} is valued in ${\rm e}^{\mathrm{i} \vartheta} \mathbb{R}_{>0}$.

A 3d BPS particle with central charge $\widetilde{Z}$, bound to the surface defect inserted at position $X=X_* \in \mathbb{C}^*_X$, appears as a finite web of $\vartheta$-trajectories with endpoint at $X=X_*$ where
\begin{equation*}
\vartheta = \arg\big(\widetilde{Z}\big).
\end{equation*}
If this 3d BPS particle arises as a three-dimensional field configuration that interpolates between the vacua labeled by $(i,N)$ and $(j,M)$ of the worldvolume theory on this surface defect, the $\vartheta$-trajectory ending on the position $X$ carries the label $(ij,n)$ with $n = M-N$.
A 5d BPS particle with central charge~$Z$, on the other hand, shows up as a topology change of the exponential network $\mathcal{W}_{\vartheta}(u)$ at phase
\begin{equation*}
 \vartheta= \arg(Z).
\end{equation*}
Such a 5d BPS particle generally appears as a finite web of $\vartheta$-trajectories with endpoints on the branch-points of the covering $\Sigma \to \mathbb{C}^*_X$.

Exponential spectral networks for the resolved conifold geometry and the corresponding spectrum of 5d BPS particles have been extensively studied in \cite{BLR}, albeit with the mirror curve $\Sigma$ in a different framing than the one we are considering in equation~\eqref{eqn:mirrorcurve1p}. The resulting 5d BPS spectrum should however be independent of this choice. A brief discussion of the closed spectrum, in our choice of framing, as well as an comprehensive analysis of the open spectrum of 3d BPS states has furthermore appeared recently in \cite{GHN}. We will therefore keep our discussion concise and merely emphasize key aspects, as well as introduce two distinguished networks that play an important role later.

\begin{figure}
\begin{center}
\includegraphics[width=8cm]{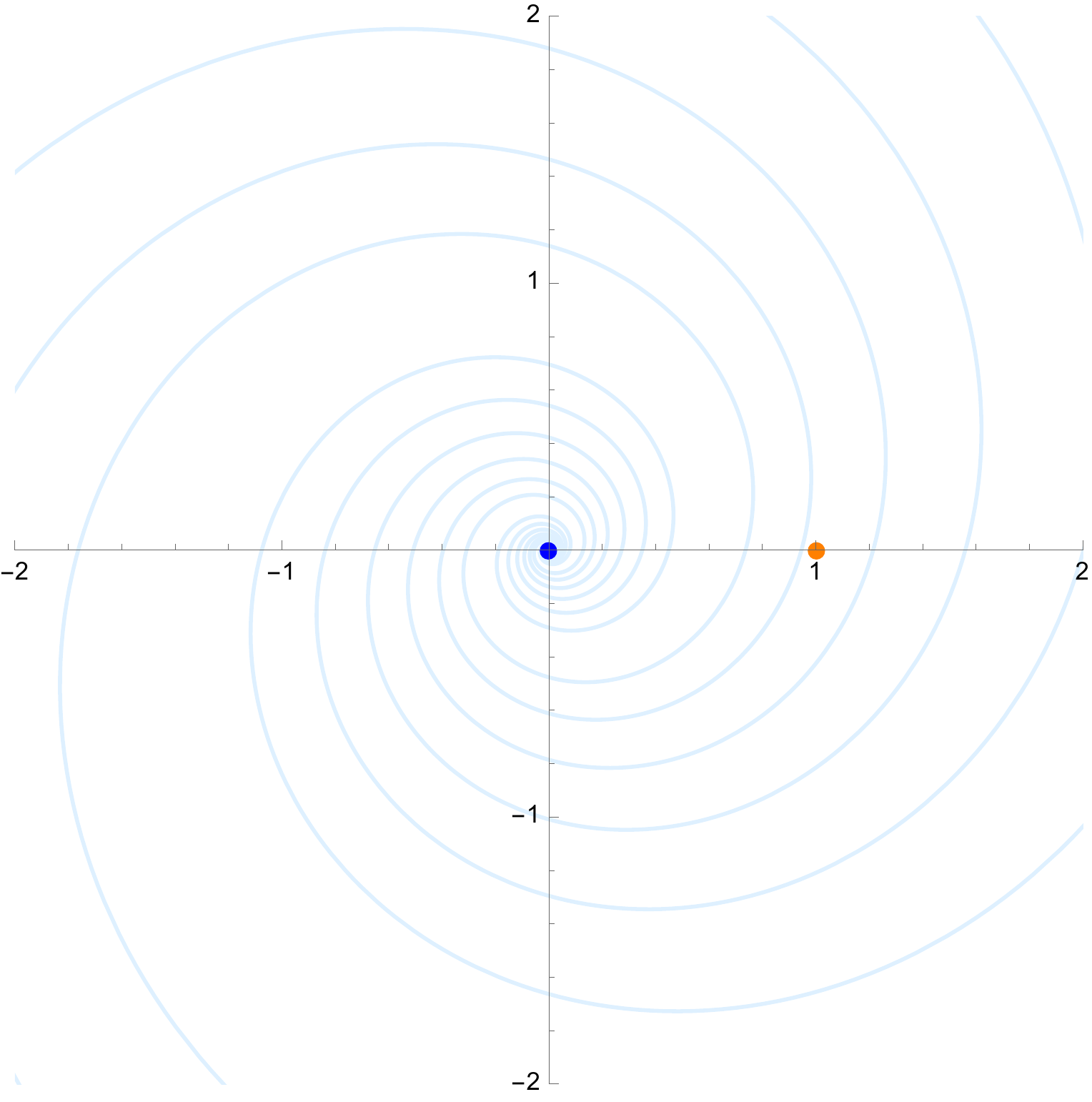}
\end{center}
\caption{A generic family of $\vartheta$-trajectories in the neighborhood of $X=0$ (the blue dot), drawn for $\vartheta = 1.2$. }
\label{fig:traj_close0}
\end{figure}

For our choice of covering $\Sigma \to \mathbb{C}^*_X$, the classical part of the Liouville form is given by
\begin{equation*}
 \lambda^\mathrm{cl} = \frac{1}{2\pi \mathrm{i}} \log \left( \frac{1-Q {\rm e}^{2 \pi \mathrm{i} x}} {1-{\rm e}^{2 \pi \mathrm{i} x}}\right) {\rm d}x.
\end{equation*}
There are no branch-points of $\Sigma \to \mathbb{C}^*_X$, because the covering only has a single sheet. Trajectories may therefore either be compact, or have their end-points on the punctures of $\Sigma$. We choose a trivialization of the $\log$-covering $\widetilde{\Sigma} \to \Sigma$ by fixing a logarithmic branch-cut between the punctures at $X=1$ and $X=Q^{-1}$, and label the trajectories $\gamma(s)$ accordingly with an extra index $N \in \mathbb{Z}$.

Since our covering $\Sigma \to \mathbb{C}^*$ has a single sheet, the $\vartheta$-trajectories simply carry the label $(11,n) = n$ and are parametrized by paths $\exp (2 \pi \mathrm{i} \gamma(s))$ in the $X$-plane such that
\begin{equation*}
 n \gamma'(s) \in \mathrm{i} {\rm e}^{\mathrm{i} \vartheta} \mathbb{R}^\times.
\end{equation*}
This implies that $\gamma(s) = i s {\rm e}^{i \vartheta} + C$ for some constant $C$ and it follows that the $\vartheta$-trajectories are parametrized by the paths
\begin{equation}\label{eqn:3d-theta-traj}
 X(s) = {\rm e}^{2 \pi {\rm i} \gamma(s)} = {\rm e}^{-2 \pi s {\rm e}^{\mathrm{i} \vartheta} } {\rm e}^{2 \pi {\rm i} C}
\end{equation}
for some constant $C$.

Figure~\ref{fig:traj_close0} shows a generic family of $\vartheta$-trajectories for $\vartheta = 1.2$. Indeed, if $-\pi/2 < \vartheta < \pi/2$, the $\vartheta$-trajectories spiral away from the puncture at $X=\infty$ and into the puncture at $X=0$ as~$s$ increases, whereas if $\pi/2 < \vartheta < 3\pi/2$ they have the opposite orientation. Yet, when $\vartheta=0$ (or $\vartheta=\pi$) the $\vartheta$-trajectories run along rays $X(r) = r {\rm e}^{2 \pi {\rm i} C}$ with $r \ge 0$, and if $\vartheta = \pm \pi/2$ the $\vartheta$-trajectories form a family of concentric circles interpolating between $X=0$ and $X=\infty$.

We only need to analyze a small selection of the $\vartheta$-trajectories~\eqref{eqn:3d-theta-traj} to locate the 3d and 5d BPS states. This is similar to the four-dimensional story, where we would draw the spectral network as the collection of $\vartheta$-trajectories that have at least one end-point on a branch-point of the covering $\Sigma \to C$. In this example, the (finite-mass) 3d BPS states bound to a surface defect at position $X_*$ must be encoded in the $\vartheta$-trajectories that have one end-point at $X_*$ and one end-point at a log-puncture (either at $X=1$ or $X=Q^{-1}$). Hence we may draw the exponential network $\mathcal{W}_\vartheta(Q)$ as merely the collection of $\vartheta$-trajectories that pass through either the puncture at $X=1$ or $X=Q^{-1}$. Generically, there are two of these, see Figure~\ref{fig:expnetwork} for an example.

\begin{figure}
\begin{center}
\includegraphics[width=6cm]{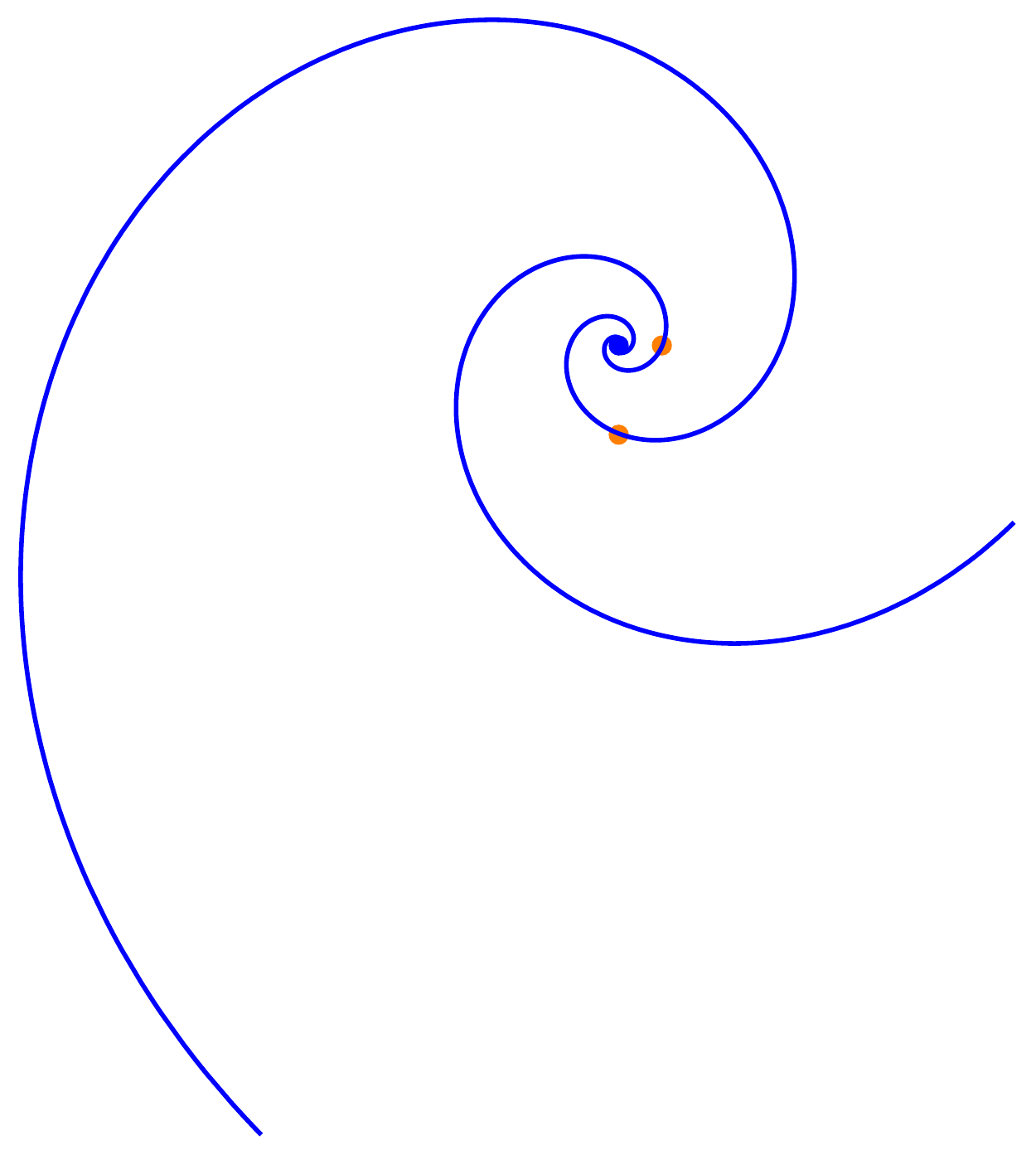}
\end{center}
\caption{Example of an exponential spectral network $\mathcal{W}_\vartheta(Q)$ for $Q=\frac{\mathrm{i}}{2}$ and $\vartheta=1.2$. The log-punctures at $X=1$ and $X=Q^{-1}$ are colored orange, while the two $\vartheta$-trajectories and the puncture at $X=0$ are colored blue.}
\label{fig:expnetwork}
\end{figure}

\begin{figure}
\begin{center}
\includegraphics[width=5.4cm]{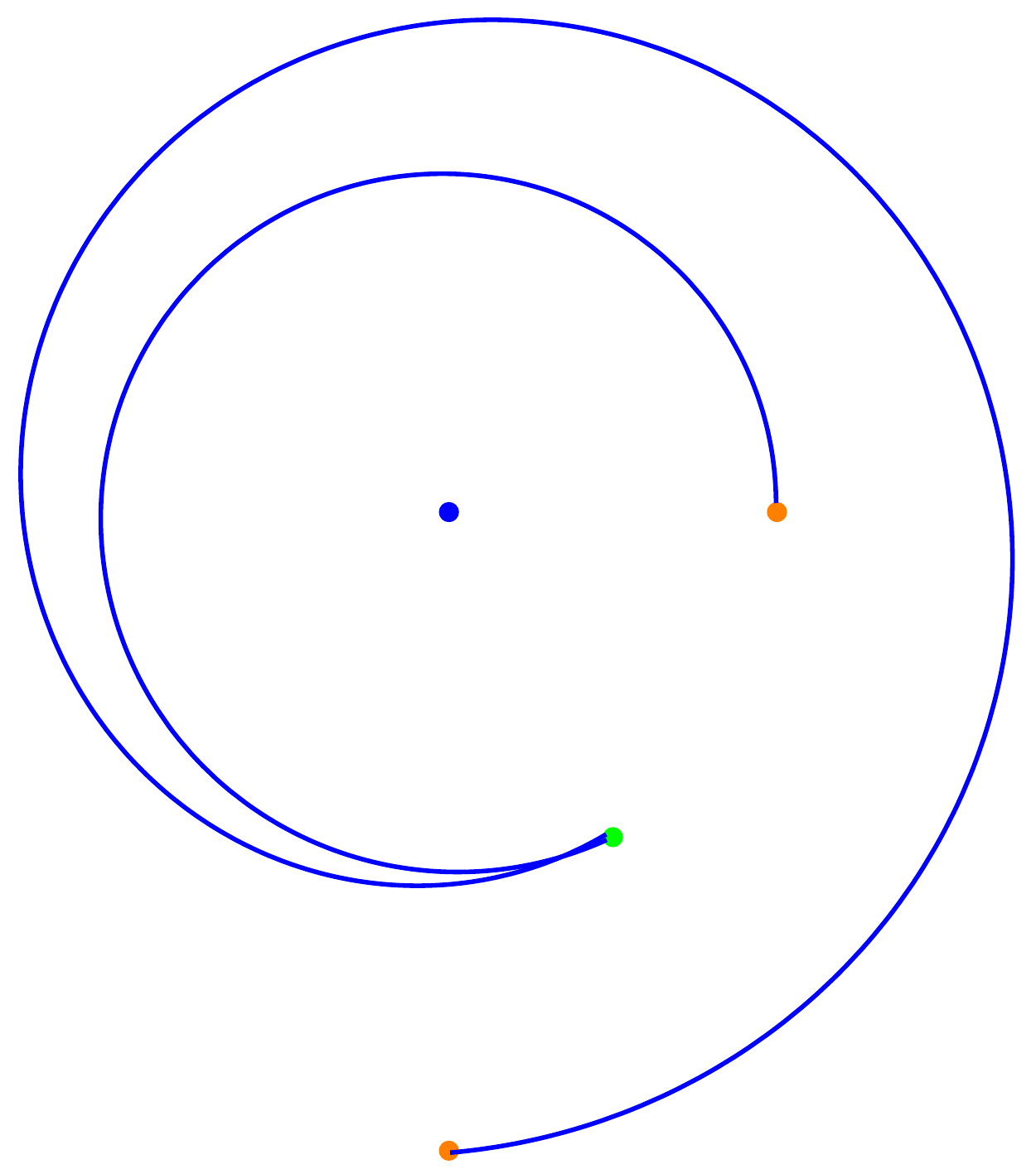}
\end{center}
\caption{Two $\vartheta$-trajectories (in blue) encoding the 3d BPS states with central charges $\widetilde{Z}_1$ and $\widetilde{Z}_{1,t}$ at $Q=\frac{\mathrm{i}}{2}$ and $X_*=\frac{1}{2}-\mathrm{i}$. The surface defect is inserted at $X=X_*$ (the green dot), the log-punctures are at $X=1$ and $X=Q^{-1}$ (the orange dots). }
\label{fig:Expnetwork3dBPS}
\end{figure}

Suppose we insert a surface defect at the position $X= X_*=\exp (2 \pi {\rm i} x_*)$. Then there are two kinds of $\vartheta$-trajectories that are relevant to describe the corresponding 3d BPS states. The first kind has an end-point at $X=1$ and the second kind has an end-point at $X=Q^{-1}$.

The first kind of trajectory may be parameterized by the path $\gamma(s)$ with
\begin{equation*}
 2 \pi {\rm i} \gamma(s) = - 2 \pi s {\rm e}^{{\rm i} \vartheta},
\end{equation*}
where $\vartheta$ is chosen such that the trajectory $X(s) = \exp (2 \pi {\rm i} \gamma(s))$ passes through the point ${X=X_*}$. Using that $\exp (2 \pi {\rm i} x_*) = \exp (2 \pi {\rm i} (x_*+k))$ for any $k\in \mathbb{Z}$, we find that these trajectories appear at the phases
\begin{equation*}
 \vartheta = \arg(- 2 \pi {\rm i} (x_*+k)) = \arg \widetilde{Z}_k.
\end{equation*}

The second kind of trajectory may similarly be parameterized by the path with
\begin{equation*}
 2 \pi {\rm i} (\gamma(s)+t) = - 2 \pi s {\rm e}^{{\rm i} \vartheta},
\end{equation*}
where $\vartheta$ is chosen such that the trajectory $X(s) = \exp (2 \pi {\rm i} \gamma(s))$ passes through the point ${X=X_*}$. Using that $\exp (2 \pi {\rm i} (x_*+t)) = \exp (2 \pi {\rm i} (x_*+t+k))$ for any $k\in \mathbb{Z}$, we find that these trajectories appear at the phases
\begin{equation*}
 \vartheta = \arg(- 2 \pi {\rm i} (x_*+t+k)) = \arg \widetilde{Z}_{k,t}.
\end{equation*}
We conclude that the 3d BPS particles with central charges $\widetilde{Z}_k$ and $\widetilde{Z}_{k,t}$ are indeed realized as the anticipated $\vartheta$-trajectories in the exponential network. An example is shown in Figure~\ref{fig:Expnetwork3dBPS}.

\begin{figure}[t]\centering
\includegraphics[width=4.55cm]{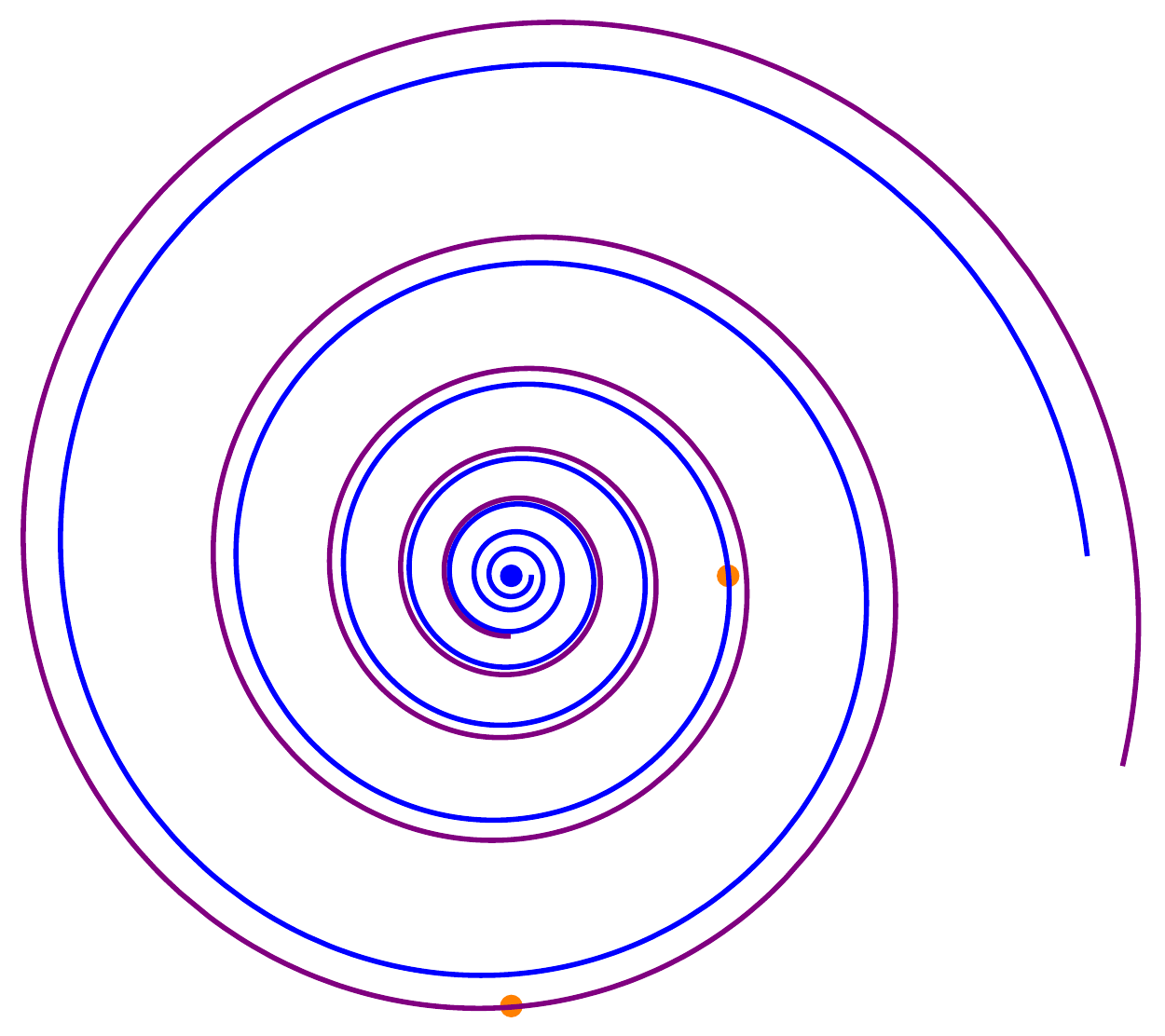}\quad \includegraphics[width=4.55cm]{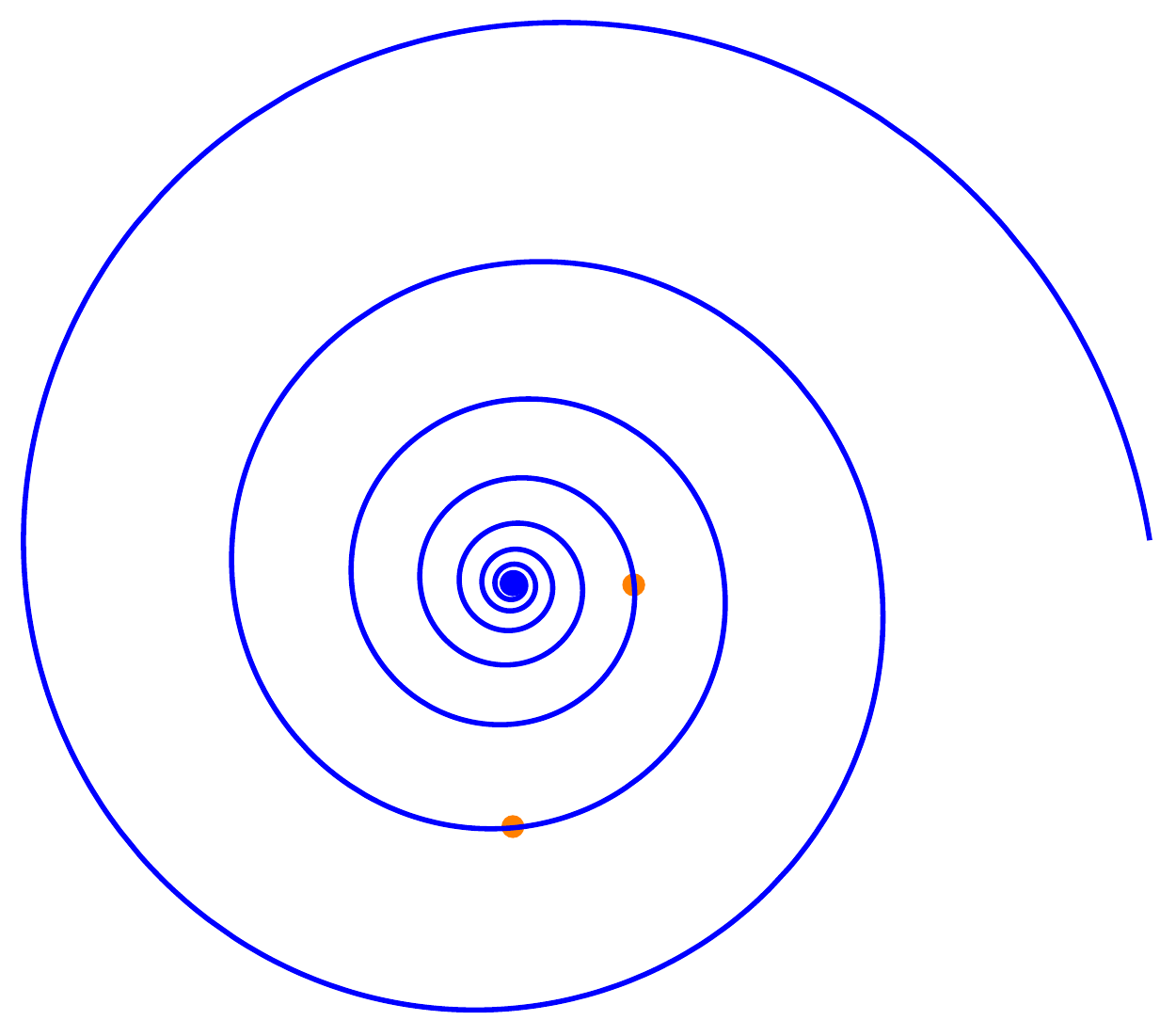}\quad \includegraphics[width=4.55cm]{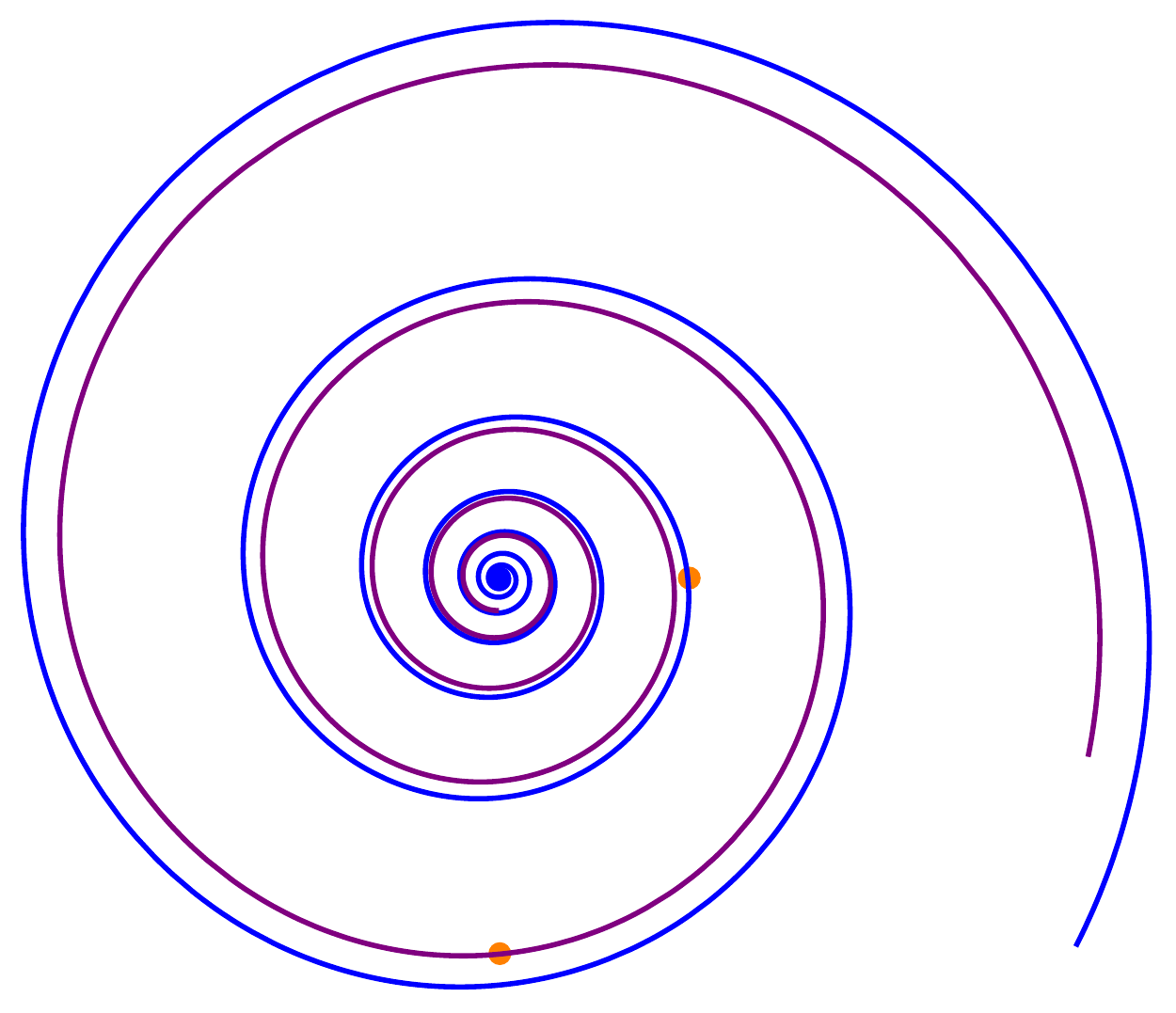}
\caption{Topology change in the network $\mathcal{W}_\vartheta(Q)$ for $Q=\frac{\mathrm{i}}{2}$ and with $\vartheta$ varying from $\vartheta_- = \arg(-2 \pi \mathrm{i} (t + 1))-0.01$ (on the left) to $\vartheta_+ = \arg(-2 \pi \mathrm{i} (t + 1))+0.01$ (on the right).}\label{fig:expnetwork5dBPStopologychange}
\end{figure}

Generically a $\vartheta$-trajectory in the exponential network $\mathcal{W}_\vartheta$ has one end-point on the regular puncture at $X=0$ or $X=\infty$ and one end-point on a log-puncture at $X=1$ or $X=\infty$. There are however special phases for which $\vartheta$-trajectories appear with both end-points on a~log-puncture. As is illustrated in Figure~\ref{fig:expnetwork5dBPStopologychange}, such trajectories correspond to topology changes in the exponential network when the phase $\vartheta$ is varied. They thus ought to realize 5d BPS states. Indeed, with the same argument as before, these trajectories appear at the phases
\begin{equation}\label{eqn:phase5dBPS}
 \vartheta = \arg(2 \pi {\rm i} (t+k)) = \arg Z_{k,t}.
\end{equation}
All these 5d BPS trajectories are illustrated in the example with $Q={\rm i}/2$ in Figure~\ref{fig:expnetwork5dBPS}.

Note that equation~\eqref{eqn:phase5dBPS} may be written as
\begin{equation*}
\tan(\vartheta) = - \frac{\operatorname{Re}(t)+k}{\operatorname{Im}(t)}.
\end{equation*}
For instance, if $\operatorname{Re}(t)=0$ the 5d BPS state with central charge $Z_{0,t}$ is encoded at $\vartheta=0$ as a~straight line segment between the log-punctures at $X=1$ and $X=Q^{-1}$. Indeed, an auxiliary path around this trajectory lifts to the 1-cycle~$\gamma_A$ on~$\Sigma$. When $\operatorname{Re}(t)$ is varied, the trajectories of the network change accordingly. In particular, a $\vartheta$-trajectory corresponding to the same 1-cycle~$\gamma_A$ now appears at the phase with ${\rm e}^{{\rm i} \vartheta} = 2 \pi {\rm i} t$.
More generally, the 5d BPS states with central charge $Z_{k,t}$ correspond to $\vartheta$-trajectories between the log-punctures at $X=1$ and $X=Q^{-1}$ that cross the logarithmic branch-cut $k$ times. Indeed, an auxiliary path around such a trajectory lifts to the 1-cycle $\gamma_A + k \gamma_0$ on $\Sigma$.

Suppose that $\operatorname{Re}(t)=0$. Note that there are two special limits. In the limit $\operatorname{Im}(t) \to \infty$ (i.e.,~$Q \to 0)$ all 5d BPS states of central charge $Z_{k,t}$ appear as a single trajectory along the positive $X$-axis at the phase $\vartheta=0$. In the opposite limit $\operatorname{Im}(t) \to 0$ (i.e.,~$Q \to 1)$ all BPS states of charge $Z_{k,t}$ appear as a single concentric $\vartheta$-trajectory at the phase $\vartheta = \pi/2$.

\begin{figure}[t]\centering
\includegraphics[width=4.2cm]{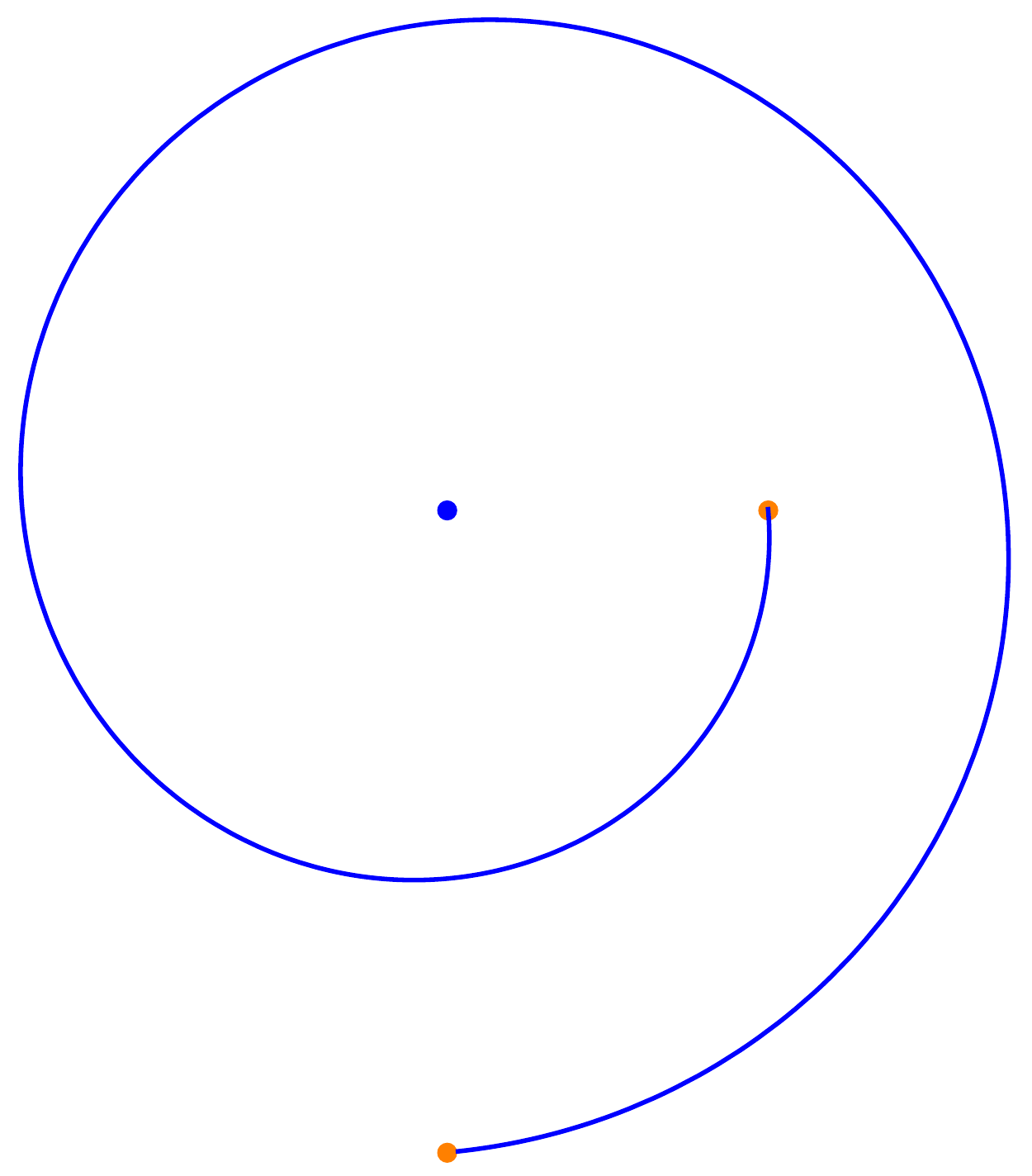} \quad \includegraphics[width=4.2cm]{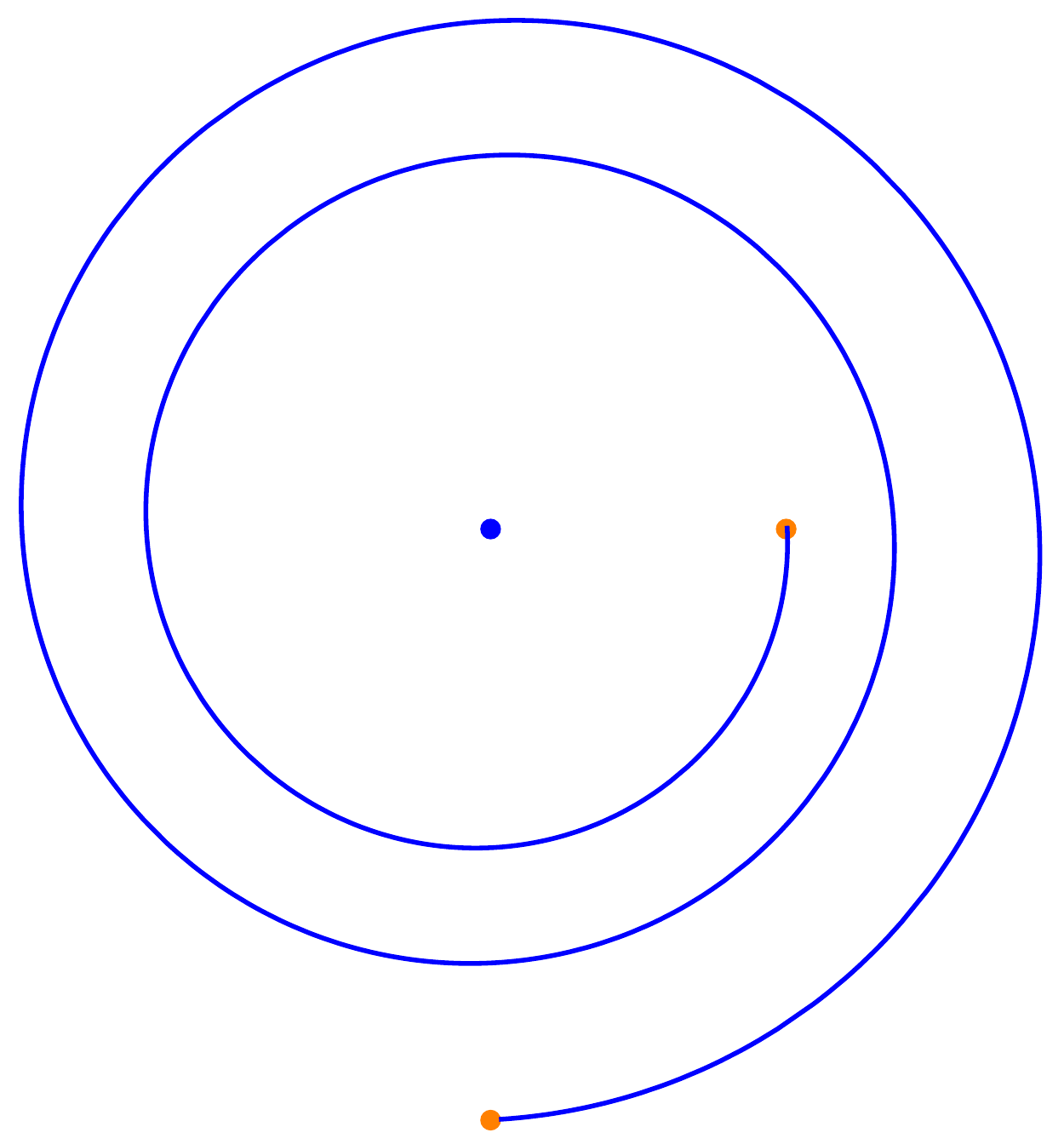} \quad
\includegraphics[width=4.2cm]{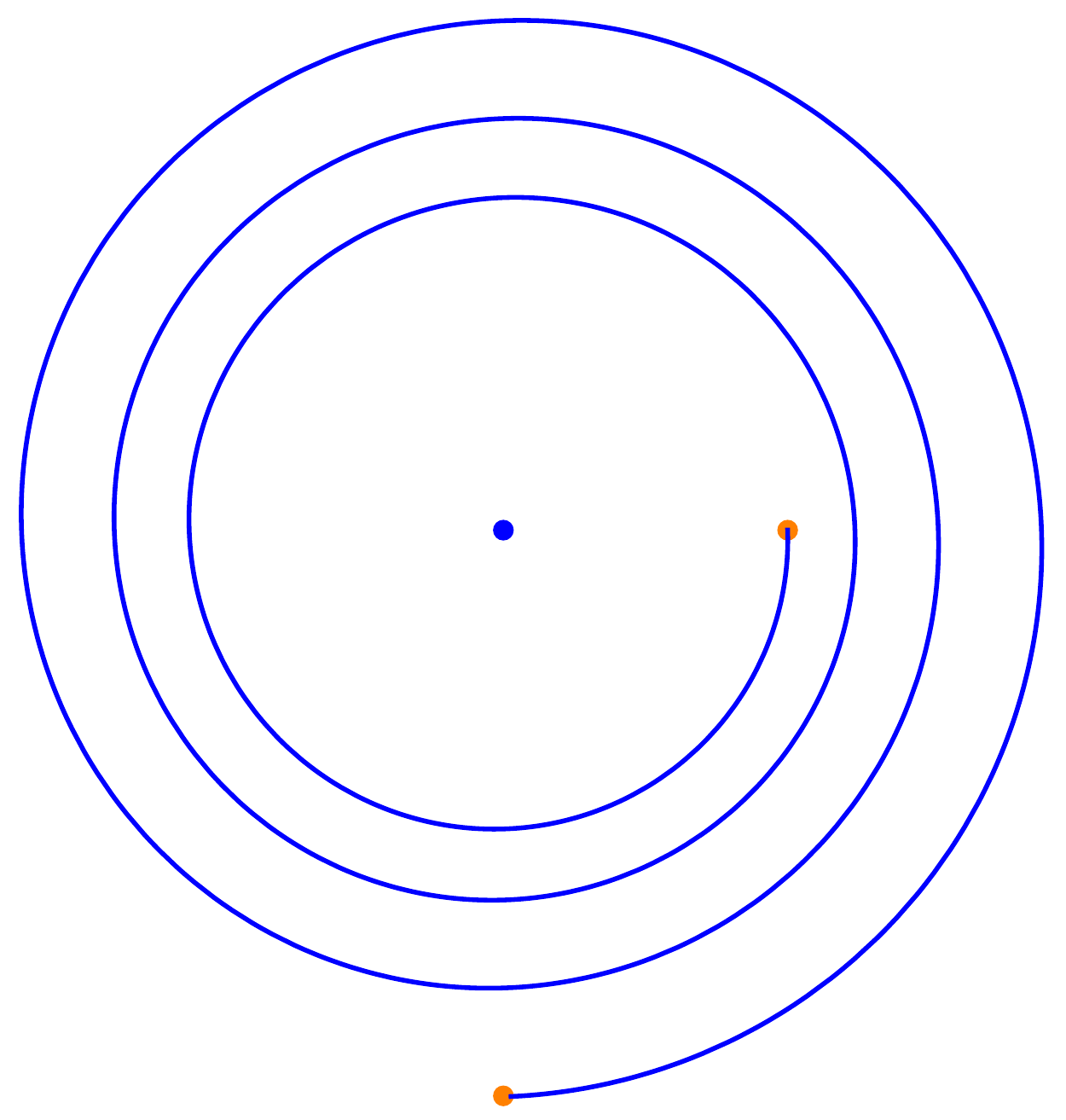}
\caption{A few $\vartheta$-trajectories encoding 5d BPS states for $Q=\frac{\mathrm{i}}{2}$. 5d BPS trajectories appear at phases $\vartheta = \arg(- 2 \pi {\rm i} (t+k))$ for $k \in \mathbb{Z}$, and are illustrated here for $k=1,2,3$, from left to right.}\label{fig:expnetwork5dBPS}
\end{figure}

\begin{figure}[t]\centering
\includegraphics[width=6.5cm]{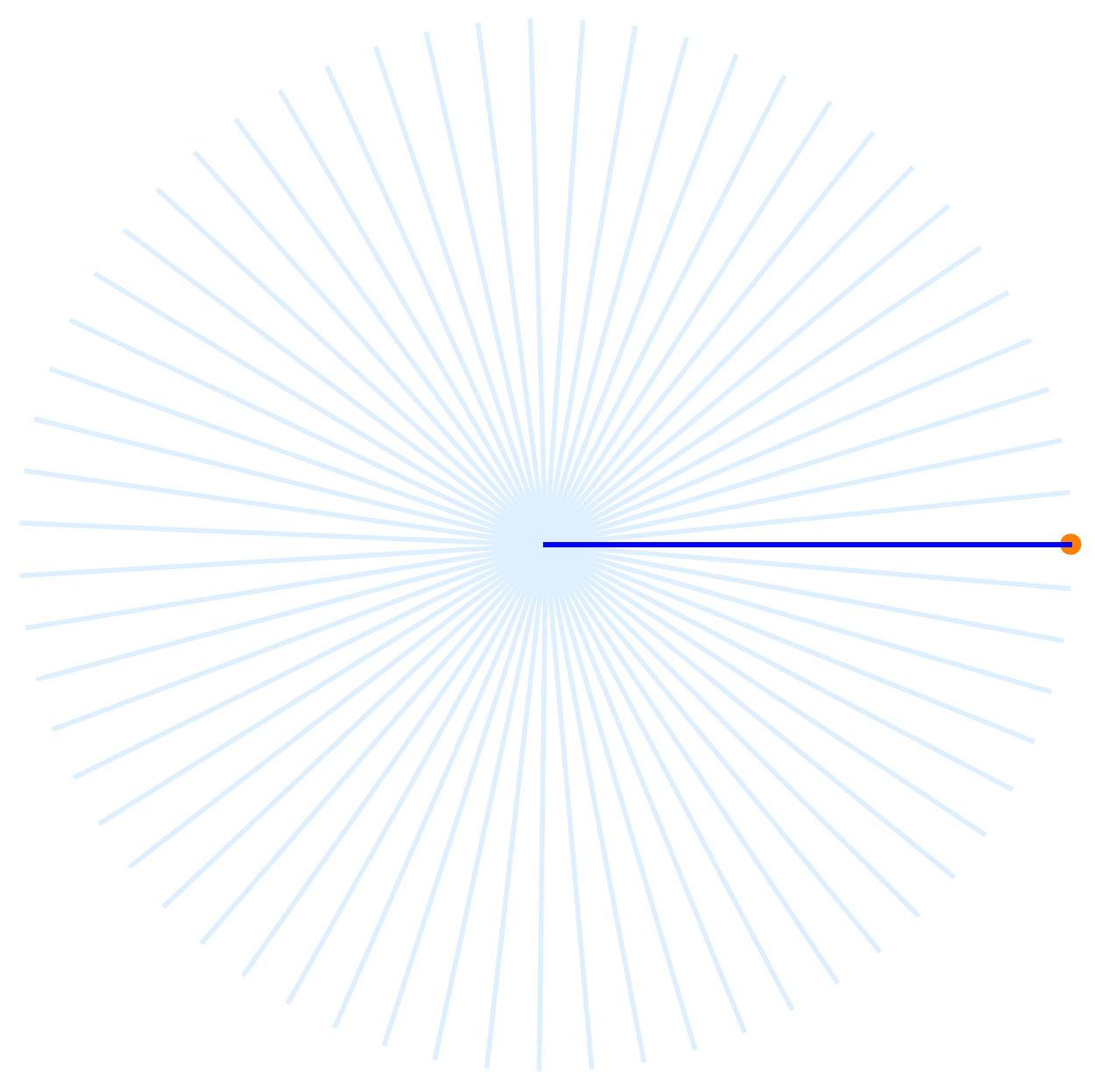}\quad
\includegraphics[width=6.5cm]{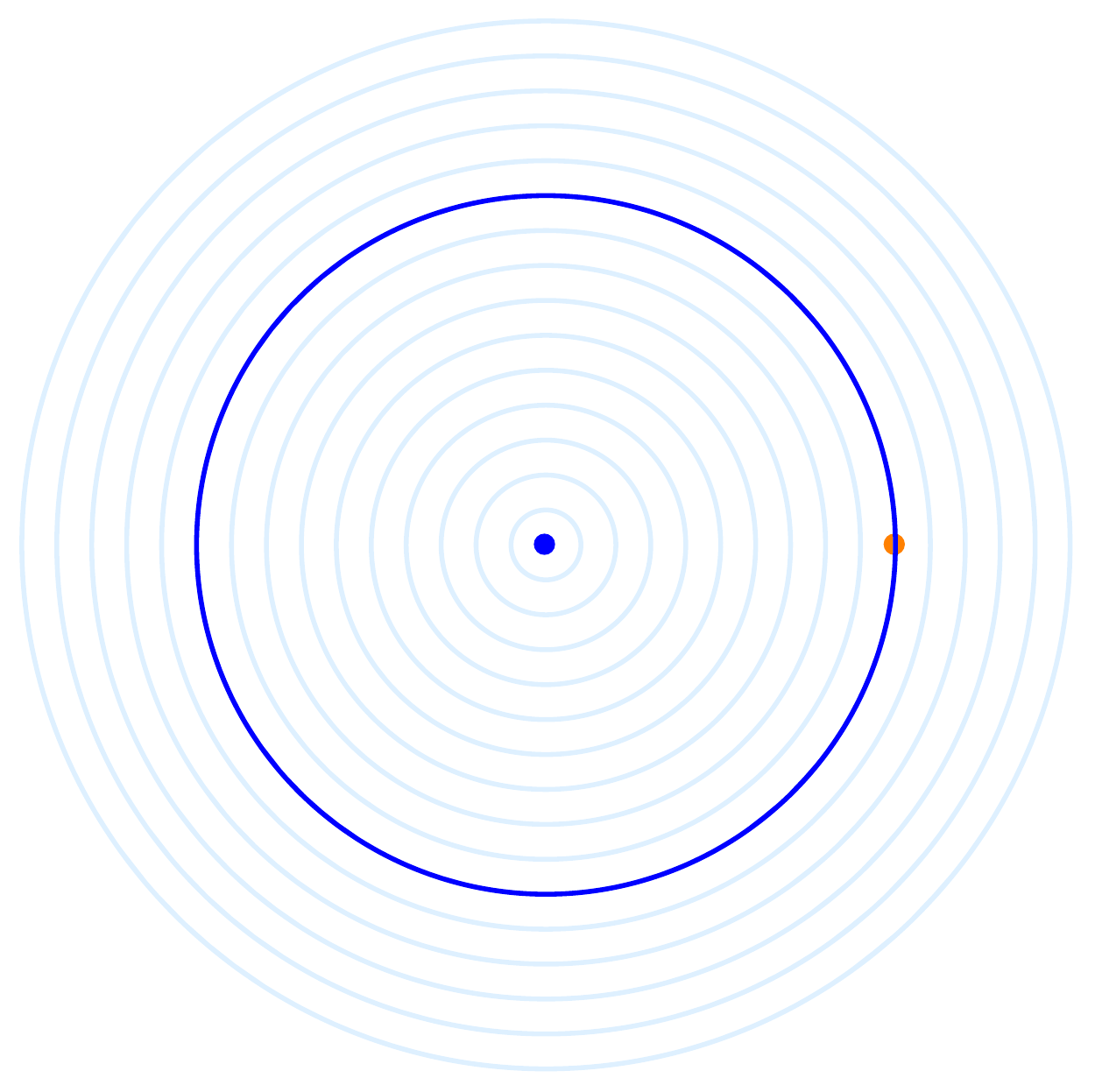}
\caption{Two distinguished networks: $\mathcal{W}_\mathrm{np}$ on the left and $\mathcal{W}_\mathrm{GV}$ on the right. The $\vartheta$-trajectories through the log-punctures are colored blue, whereas all other $\vartheta$-trajectories are included as well in light-blue.}\label{fig:WGVandWnp}
\end{figure}

In the following we refer to the exponential network $\mathcal{W}_{\vartheta =0}$ as $\mathcal{W}_\mathrm{np}$, and to the network $\mathcal{W}_{\vartheta=\pi/2}$ as $\mathcal{W}_\mathrm{GV}$. These two exponential networks (along with all their $\vartheta$-trajectories) are illustrated in Figure~\ref{fig:WGVandWnp}. Note that the network $\mathcal{W}_\mathrm{GV}$ is resemblant to a Fenchel--Nielsen network in the terminology of \cite{Hollands:2013qza}.

\subsection{Exact WKB for difference operators}\label{sec:exactWKB}

Spectral networks are also known as Stokes graphs in the exact WKB analysis, where they encode the Stokes phenomena of linear differential operators $\mathbf{d}(\epsilon)$ (also known as Schr\"odinger operators or more generally $\mathfrak{g}$-opers) on (punctured) Riemann surfaces. See for instance \cite[Section 4.4]{Hollands:recipe} for a brief introduction and references.

Local solutions $\psi$ of the Schr\"odinger equation are written as the exponential of a formal series expansion in the Planck constant $\epsilon$, starting with
\begin{equation}\label{eq:formalepspsi}
 \psi^\mathrm{for} (\epsilon, z)= \exp \left( \frac{4 \pi^2 \mathrm{i}}{\epsilon} \int^z \lambda^\mathrm{cl} + \mathcal{O}\big(\epsilon^0\big) \right),
\end{equation}
where $z$ is a local coordinate on the Riemann surface $C$. Similar to~\eqref{eq:network_trajectory}, the meromorphic differential $\lambda^\mathrm{cl}$ defines the $\vartheta$-trajectories of the Stokes graph $\mathcal{W}_\vartheta$ through the constraint
\begin{equation*}
 \big(\lambda^\mathrm{cl}_j(z) - \lambda^\mathrm{cl}_i(z)\big) (v) \in {\rm e}^{\mathrm{i} \vartheta} \mathbb{R}^\times,
\end{equation*}
where $v$ is a tangent vector to the trajectory.

The Borel sum $\psi_\rho(\epsilon,z)$ of the local solution $\psi(\epsilon,z)$ along the ray $\rho = {\rm e}^{\mathrm{i} \vartheta} \mathbb{R}_{< 0}$ plays an important role in computing the monodromies, bound states and Stokes phenomena of the differential operator $\mathbf{d}(\epsilon)$. In particular, note that the Borel sum $\psi_\rho(\epsilon,z)$ has good WKB asymptotics: the function $\psi_\rho(\epsilon,z)$ has the expansion $\psi^\mathrm{for} (\epsilon, z)$ in~\eqref{eq:formalepspsi} when $\epsilon \to 0$ while remaining in the closed half-plane with $\operatorname{Re}\big({\rm e}^{- \mathrm{i} \vartheta} \epsilon\big) \ge 0$, while $\psi_\rho(\epsilon,z)$ decreases fastest along any $\vartheta$-trajectory with $\arg(\epsilon) = \vartheta$. In~\cite{Hollands:t3abel}, the exact WKB analysis is reformulated in terms of the $\mathcal{W}$-abelianization of flat $\mathfrak{g}$-connections on $C$ for $\mathfrak{g}= \mathfrak{su}(K)$.

Our aim in this section is to interpret the results in this paper as a generalization of $\mathcal{W}$-abelianization (as well as the exact WKB method) to linear $q$-difference operators $\mathbf{D}(\epsilon)$, such as the Schr\"odinger operator~\eqref{eqn:diffDX1} for the resolved conifold geometry.

The very simplest examples occur when the difference operator $\mathbf{D}(\epsilon)$ is defined with respect to a 1-fold covering $\Sigma \to \mathbb{C}^*$. For instance, consider the Schr\"odinger operator~\eqref{eqn:diffDX1}
\begin{equation*}
 \mathbf{D} (\epsilon) = (1-X) {\rm e}^{\check{\epsilon} \partial_x} - (1 - Q X)
\end{equation*}
 for the resolved conifold geometry, with formal solution
$\Psi(\epsilon,x,t)$ as in equation~\eqref{eq:Sasymp}. Its Borel sum
\begin{equation*}
\Psi_\rho(\epsilon,x,t) = \exp S_\rho(\epsilon,x,t)
\end{equation*}
was computed in Theorem~\ref{th:BorelS}. The exact quantum periods (or Voros periods) are defined by
 \begin{equation*}
 \Pi_{\gamma,\rho} = \mathcal{B}_\rho \oint_\gamma \lambda^{\mathrm{qu}}(\epsilon,x,t) = \frac{\epsilon}{4 \pi^2 \mathrm{i}} \oint_\gamma \frac{\partial}{\partial z} S_\rho(\epsilon,x,t),
 \end{equation*}
where $\gamma$ is an open or closed 1-cycle on $\Sigma$ and $\mathcal{B}_\rho$ stands for Borel sum along the ray $\rho$. They characterize the monodromies of $\mathbf{D} (\epsilon)$ and can be expressed in terms of the Borel sums $\Psi_\rho$.

For example, using \eqref{eqn:Sformal} we may compute the quantum A-period to all orders in $\epsilon$. We find:

\begin{Proposition}\label{prop:A-period}
The quantum A-period has the $\epsilon$-expansion
\begin{equation*}
 \Pi^\rho_A = \oint_{\gamma_A} \lambda^{\mathrm{qu}}(\epsilon,x,t) = t.
\end{equation*}
\end{Proposition}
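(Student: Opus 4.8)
The plan is to compute the contour integral of the quantum Liouville form $\lambda^{\mathrm{qu}}$ over the compact A-cycle $\gamma_A$ directly from its all-order $\epsilon$-expansion obtained in Corollary~\ref{col:formalS}, and to show that all quantum corrections integrate to zero, leaving only the classical A-period, which equals $t$. Recall from Corollary~\ref{col:formalS} that
\begin{equation*}
\lambda^{\mathrm{qu}}(\epsilon,x,t)=\left(- \frac{1}{2\pi \mathrm{i}}\sum_{n=0}^{\infty} \frac{B_n}{n!} (\mathrm{i}\epsilon)^{n} \left( {\rm Li}_{1-n}(Q X)-{\rm Li}_{1-n}(X)\right) \right) \mathrm{d}x,
\end{equation*}
with $X=\exp(2\pi\mathrm{i}x)$. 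The first step is to observe that the $n=0$ term reproduces the classical Liouville form $\lambda^{\mathrm{cl}}=y\,\mathrm{d}x$, whose A-period equals $t$ by the defining relation~\eqref{eqn:classical_periods} together with the identification $t=\frac{1}{2\pi\mathrm{i}}\log Q$ of the mirror map. So it remains to show that the contribution of every $n\geq 1$ term vanishes upon integration over $\gamma_A$.

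First I would use the polylogarithm identity ${\rm Li}_{1-n}(W)=\frac{1}{2\pi\mathrm{i}}\partial_x {\rm Li}_{2-n}(W)$ (for $W=X$ or $W=QX$), already invoked in the proof of Corollary~\ref{col:formalS}, to rewrite each summand with $n\geq 1$ as a total $x$-derivative:
\begin{equation*}
\left({\rm Li}_{1-n}(Q X)-{\rm Li}_{1-n}(X)\right)\mathrm{d}x = \frac{1}{2\pi\mathrm{i}}\,\mathrm{d}\!\left({\rm Li}_{2-n}(Q X)-{\rm Li}_{2-n}(X)\right).
\end{equation*}
The key point is then that $\gamma_A$ is a \emph{compact} 1-cycle on $\Sigma$ (as emphasized in the caption of Figure~\ref{fig:cyclesSigma} and in Section~\ref{sec:freeenergyrescon}), so the integral of any single-valued exact form over it vanishes. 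The main step is therefore to verify single-valuedness: I would check that for each $n\geq 1$ the function ${\rm Li}_{2-n}(QX)-{\rm Li}_{2-n}(X)$ is single-valued around the A-cycle. For $n\geq 2$ this is immediate since ${\rm Li}_{2-n}$ is a rational function of its argument (a ratio of polynomials in $QX$, resp.\ $X$), hence manifestly single-valued on $\mathbb{C}^*_X$. The potentially subtle case is $n=1$, where ${\rm Li}_1(W)=-\log(1-W)$ carries a logarithmic branch cut; here I would argue that $\gamma_A$, being the compact cycle encircling the segment between the punctures at $X=1$ and $X=Q^{-1}$ (where $1-X$ and $1-QX$ respectively vanish), can be chosen not to enclose either branch point individually in a way that produces a net monodromy, or more carefully that the combination picks up canceling $2\pi\mathrm{i}$ jumps.

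The hard part will be this $n=1$ monodromy bookkeeping: one must confirm that the A-cycle, with the orientation and placement shown in Figure~\ref{fig:cyclesSigma}, yields $\oint_{\gamma_A}\mathrm{d}\big({\rm Li}_1(QX)-{\rm Li}_1(X)\big)=0$, i.e.\ that the logarithmic windings of $\log(1-QX)$ and $\log(1-X)$ around $\gamma_A$ either both vanish or exactly cancel. Since $\gamma_A$ is compact and does not pass through the punctures, the integral of the exact form is $2\pi\mathrm{i}$ times the difference of winding numbers of $1-QX$ and $1-X$ along $\gamma_A$; because $\gamma_A$ separates the two logarithmic punctures symmetrically, these winding numbers coincide and the difference is zero. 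Once this is established, every $n\geq 1$ term drops out and only the $n=0$ classical period survives, giving $\Pi^\rho_A=t$ independently of the Borel ray $\rho$ (the Borel summation $\mathcal{B}_\rho$ acts trivially on the exact, convergent classical contribution). I would close by noting that the result is automatically $\rho$-independent, consistent with the left-hand relation of~\eqref{eqn:quantum_periods} defining $t^{\mathrm{qu}}_i$, and that no quantum correction to the A-period arises — a hallmark of this geometry.
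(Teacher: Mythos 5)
Your proof is correct, but it takes a genuinely different route from the paper's. You expand $\lambda^{\mathrm{qu}}$ via Corollary~\ref{col:formalS}, write each $n\geq 1$ term as the exact form $\tfrac{1}{2\pi\mathrm{i}}\,\mathrm{d}\bigl({\rm Li}_{2-n}(QX)-{\rm Li}_{2-n}(X)\bigr)$, and kill it by single-valuedness of the primitive around the compact cycle: trivially for $n\geq 2$ (the primitive is a rational function of $X$), and for $n=1$ by the argument principle, since $1-X$ and $1-QX$ each wind once along $\gamma_A$ and the windings cancel in $\log\tfrac{1-X}{1-QX}$. The paper instead uses a deformation trick: it introduces the shifted family $\lambda^s$ obtained by $x\mapsto x+s$, notes that $\lambda^{\mathrm{qu}}_n=\tfrac{B_n}{n!}\check{\epsilon}^n\,\partial_s^n\lambda^s\big|_{s=0}$, and shows $\oint_{\gamma_A}\lambda^s=t$ independently of $s$ by expanding $\lambda^s$ at the punctures $X=0$ and $X=\infty$ (regular at $X=0$, residue proportional to $\log Q$ at $X=\infty$); the $s$-derivatives of a constant then vanish, handling all $n\geq 1$ uniformly with no case distinction and no explicit monodromy bookkeeping. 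Your approach buys an elementary, self-contained winding-number computation that makes visible exactly where the potential obstruction sits (the $n=1$ logarithmic term); the paper's buys uniformity in $n$, and it also re-derives the classical period $\oint_{\gamma_A}\lambda^{\mathrm{cl}}=\tfrac{1}{2\pi\mathrm{i}}\log Q=t$ by direct computation, whereas you import it from the defining relation~\eqref{eqn:classical_periods} and the mirror map.

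Two points to tighten. First, your phrase that $\gamma_A$ ``separates the two logarithmic punctures symmetrically'' states the opposite of what you need: the correct statement is that $\gamma_A$ does \emph{not} separate $X=1$ from $X=Q^{-1}$ --- it encloses both, i.e., it separates the pair $\{1,Q^{-1}\}$ from $\{0,\infty\}$, consistent with Figure~\ref{fig:cyclesSigma} and the paper's proof --- so that each of $1-X$ and $1-QX$ has winding number exactly one along $\gamma_A$ and the difference vanishes. Second, since the value of the classical A-period is part of what the proposition asserts for this particular choice of cycle, it is cleaner to verify it directly (e.g., by shrinking $\gamma_A$ onto the logarithmic cut of $y$ between $X=1$ and $X=Q^{-1}$, or by the paper's expansion at $X=0,\infty$) rather than to cite relation~\eqref{eqn:classical_periods} as definitional.
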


\begin{proof}
As illustrated in Figure~\ref{fig:cyclesSigma}, we have chosen the A-cycle $\gamma_A$ such that it loops around the two punctures at $X=0$ and $X=\infty$. The classical Liouville form
\begin{equation*}\lambda^\mathrm{cl} = \frac{1}{2 \pi\mathrm{i}} \log \left( \frac{1-Q{\rm e}^{2 \pi \mathrm{i} x}}{1-{\rm e}^{2 \pi \mathrm{i} x}}\right) {\rm d} x\end{equation*}
can be expanded close to the punctures at $X=0$ and $X=\infty$ as
 \begin{alignat*}{3}
 & X=0\colon \quad && \lambda^\mathrm{cl} = -\frac{1}{4 \pi^2} (1-Q) \,{\rm d} X + \cdots, &\\
 &X=\infty\colon \quad && \lambda^\mathrm{cl} = \frac{1}{4 \pi^2} \log Q \,{\rm d} \log \widetilde{X}_\infty + \cdots &
\end{alignat*}
with $\widetilde{X}_\infty = 1/X$ local coordinates near $X=\infty$. We thus find that
\begin{align*}
 \oint_{\gamma_A} \lambda^\mathrm{cl} = \frac{1}{2 \pi {\rm i}} \log Q = t.
\end{align*}

To show that this classical result does not obtain any quantum corrections, define
\begin{align*}
\lambda^s := \frac{1}{2\pi \mathrm{i}} \log \left( \frac{1-Q {\rm e}^{2 \pi \mathrm{i} (x+s)}}{1-{\rm e}^{2 \pi \mathrm{i} (x+s)}}\right) {\rm d} x.
\end{align*}
This may be expanded similarly as
\begin{alignat*}{3}
 & X=0\colon \quad && \lambda^s = -\frac{1}{4 \pi^2} {\rm e}^{2 \pi \mathrm{i} s} (1-Q) \,{\rm d} X + \cdots, & \\
& X=\infty\colon \quad && \lambda^s = \frac{1}{4\pi^2} \log Q \,{\rm d} \log \widetilde{X}_\infty + \cdots.&
\end{alignat*}
close to the punctures at $X=0$ and $X=\infty$, respectively.
It follows that
\begin{gather*}
 \oint_{\gamma_A} \lambda^s = \frac{1}{2 \pi \mathrm{i}} \log Q = t
\end{gather*}
is not dependent on $s$. Using Proposition~\eqref{prop:Sasymp}, we have
\begin{gather*}
 \lambda^{\mathrm{qu}}_n = \frac{B_n}{n!} \check{\epsilon}^n \partial_x^n \lambda
 = \frac{B_n}{n!} \check{\epsilon}^n \partial_s^n \lambda^s \big|_{s=0},
\end{gather*}
so that we may conclude that
\begin{gather*}
 \oint_{\gamma_A} \lambda^{\mathrm{qu}}_n = \frac{B_n}{n!} \check{\epsilon}^n \left( \partial_s^n \oint_{\gamma_A} \lambda^s \right)\Big|_{s=0} = 0
\end{gather*}
for $n>0$.
 \end{proof}

Note that this implies that the exact quantum A-period does not receive any non-perturbative corrections and hence does not have any dependence on the ray $\rho$. This agrees with the fact that the underlying BPS problem is uncoupled.

On the other hand, since the non-compact B-cycle runs from the logarithmic puncture at $X=1$ to the puncture at $X=0$ as in Figure~\ref{fig:cyclesSigma}, the exact quantum B-period along the ray $\rho$ is naively given by
\begin{equation*}
\Pi_{B,\rho} = \mathcal{B}_\rho \int_{\gamma_B} \lambda^{\mathrm{qu}}(\epsilon,z,t) = \frac{ \epsilon} {4 \pi^2 \mathrm{i}} \oint_{\gamma_B} \frac{\partial}{\partial z} S_\rho(\epsilon,x,t)
= \frac{ \epsilon} {4 \pi^2 \mathrm{i}} \log \frac{\Psi_\rho(\epsilon,x,t)|_{x=0}}{\Psi_\rho(\epsilon,x,t)|_{x=\mathrm{i}\infty}},
\end{equation*}
The numerator of this ratio is ill-defined, however, which prompted us to define a regularized version of the exact quantum B-periods in Definition~\ref{def:regBperiod}.

That is, with the assumptions of Theorem~\ref{theorem3}, we defined
\begin{equation}\label{prop:B-period}
\Pi_{B,\rho_k}^{\rm reg}(\epsilon,t) = \frac{ \epsilon} {4 \pi^2 \mathrm{i}} \log \frac{\mathcal{S}_2(x\mid \check\epsilon,1)^{-1} \Psi_{\rho_k}(\epsilon,x,t)|_{x=0}}{\Psi_{\rho_k}(\epsilon,x,t)|_{x=\mathrm{i}\infty}},
\end{equation}
where $\mathcal{S}_2$ is the Fadeev quantum dilogarithm, defined in Appendix \ref{appendix:qdilog}, and
\begin{equation}\label{prop:B-periodGV}
\Pi_{B,\mathrm{GV}}^{\rm reg} = \lim_{k \to \infty} \Pi_{B,\rho_k}^{\rm reg}(\epsilon,t) = \frac{ \epsilon} {4 \pi^2 \mathrm{i}} \log \frac{L(x\mid \check\epsilon,1)^{-1} \Psi_{\mathrm{GV}}(\epsilon,x,t)|_{x=0}}{\Psi_{\mathrm{GV}}(\epsilon,x,t)|_{x=\mathrm{i}\infty}},
\end{equation}
where $L$ is the quantum dilogarithm, defined in equation~\eqref{eqn:quandilog}.

Note that these regularized periods are defined with respect to the path $X(s)= \exp \big(\mathrm{i} s {\rm e}^{\mathrm{i} \vartheta_*} \big)$ in Theorem~\ref{theorem3}. This implies that we want to deform the B-cycle into the path $X(s)$. The assumptions of Theorem~\ref{theorem3} enforce some restrictions on $\vartheta_*$. However, as noted at the end of Remark~\ref{remark:thetastar}, these may be lifted in return for some small modifications to the definitions~\eqref{prop:B-period} and~\eqref{prop:B-periodGV}.

\subsection{Spectral coordinates}\label{sec:abelianization}

The (regularized) exact quantum periods above may be interpreted as spectral coordinates in the context of $\mathcal{W}$-abelianization. Analogous to the 4d set-up, we thus want to consider the $q$-difference operators $\mathbf{D}_\epsilon(t)$ as $q$-opers $\nabla_\epsilon(t)$ on $\mathbb{C}^*_X$ and study the complex 1-dimensional family~$\nabla_\epsilon(t)$, with complex parameter $t$, as a half-dimensional subspace of some other hyperk\"ahler moduli space. This other moduli space turns out to be the moduli space $\mathcal{M}_\mathrm{mon}$ of solutions of the Bogomolny equations on $\mathbb{C}^* \times S^1$ (with corresponding singularities), which is also known as the moduli space of periodic monopoles (see \cite{Elliott:2018yqm} for more details and references).

Indeed, it is known $\mathcal{M}_\mathrm{mon}$ has a hyperk\"ahler structure.\footnote{Although similar to the moduli space of solutions to the Hitchin equations, one important difference is that the moduli space of solutions of the Bogomolny equations on $\mathbb{C}^* \times S^1$ does not have an ${\rm U}(1)$ isometry rotating the complex structures.}
In complex structure $I$ it is equivalent to the moduli space of multiplicative Higgs bundles, while for a generic complex structure it is equivalent to the moduli space $\mathcal{M}_\mathrm{diff}$ of difference connections. We consider $\mathcal{M}_\mathrm{mon}$ in complex structure $J$, where the family of $q$-opers $\nabla_\epsilon(t)$ forms a half-dimensional complex Lagrangian subspace of $\mathcal{M}_\mathrm{diff}$.

After fixing a choice of spectral network $\mathcal{W}$, we may construct spectral coordinates $\mathcal{X}^\mathcal{W}_\gamma$ on~$\mathcal{M}_\mathrm{diff}$ using the technology of \cite{Hollands:2013qza}. These spectral coordinates have the special property that, if the spectral network $\mathcal{W}$ is in the same isotopy class as the WKB spectral network~$\mathcal{W}_\vartheta(\mathbf{\mathbf{t}})$,
\begin{equation*}
 \log \mathcal{X}^\mathcal{W}_\gamma (\nabla_\epsilon(\mathbf{t})) = \oint_\gamma \frac{\partial}{\partial x} S_\rho(\epsilon,x,\mathbf{t}) \,{\rm d}x = \frac{4 \pi^2 \mathrm{i}}{\epsilon} \Pi_{\gamma,\rho},
\end{equation*}
where $\mathrm{arg}(\epsilon) = \vartheta$ and $\rho = \exp {\rm i} \vartheta$.
That is, the log of the spectral coordinate $\mathcal{X}^\mathcal{W}_\gamma$, evaluated on the $q$-difference opers $\nabla_\epsilon(\mathbf{t})$, is proportional to the exact quantum period $\Pi_{\gamma,\rho}$.

To be precise, the spectral coordinates $\mathcal{X}^\mathcal{W}_\gamma$ are defined on the slightly different moduli space of $\mathcal{W}$-framed difference connections. The framing on the difference connections is generically given by a choice of local sections at the punctures. The spectral coordinates may then be computed in terms of this framing data. Here, we are mostly interested in determining the spectral coordinates $\mathcal{X}^\mathrm{np}_\gamma$ and $\mathcal{X}^\mathrm{GV}_\gamma$ corresponding to the exponential networks $\mathcal{W}_\mathrm{np}$ and $\mathcal{W}_\mathrm{GV}$, respectively.

Let us start with the degenerate network $\mathcal{W}_\mathrm{np}$. In this example the framing at $X=0$ ($X=\infty$) is given by a choice of local section at $X=0$ ($X=\infty$) that decays fastest when approaching the puncture at $X=0$ along the positive $X$-axis. For the family of $q$-difference opers $\nabla_\epsilon(\mathbf{t})$ this implies that the framing is given by the Borel sum $\Psi_\mathrm{np}(\epsilon,t)$ at $X=0$ (and its analytic continuation at $X=\infty$). The spectral A-coordinate is thus given by the holonomy of~$\Psi_\mathrm{np}$ along the 1-cycle $\gamma_A$, whereas the spectral B-coordinate is given by the (regularized) ratio of $\Psi_\mathrm{np}$ between $X=1$ and $X=0$.
 In particular, the spectral coordinates $\mathcal{X}^{\mathrm{np}}_\gamma$, evaluated at~$\nabla_\epsilon(\mathbf{t})$ with $\arg(\epsilon) = 0$, are equal to the (regularized) exact quantum periods~$\Pi_{A}$ and~$\Pi^{\rm reg}_{B,\mathrm{np}}$.

On the other hand, the framing for the degenate network $\mathcal{W}_\mathrm{GV}$ at $X=0$ (as well as $X=\infty$) is given by a local section invariant under the monodromy $x \mapsto x+1$. For the family of $q$-difference opers $\nabla_\epsilon(\mathbf{t})$ the framing is thus given by the Borel sum $\Psi_\mathrm{GV}(\epsilon,t)$ at $X=0$ (and its analytic continuation at~$X=\infty$). The canonical spectral A-coordinate is then given by the holonomy of~$
\Psi_\mathrm{GV}$ around $X=0$, whereas the spectral B-coordinate is given by the (regularized) ratio of~$\Psi_\mathrm{GV}$ between $X=1$ and $X=0$. We thus find that the spectral coordinates $\mathcal{X}^{\mathrm{GV}}_\gamma$, evaluated at $\nabla_\epsilon(\mathbf{t})$ with $\arg(\epsilon) = \pi/2$, are equivalent to the (regularized) exact quantum periods~$\Pi_{A}$ and~$\Pi^{\rm reg}_{B,\mathrm{GV}}$.

\subsection{5d NRS proposal}\label{sec:NRS}

As reviewed in Section~\ref{sec:quantummirrorcurve}, motivated by \cite{ADKMV} and \cite{NS} it was first proposed in \cite{ACDKV} that the topological string partition function $Z^{\rm NS}$ in the NS limit may be recovered from the monodromies of the open partition function $\Psi$. This was checked in an $\epsilon$-expansion for various toric geometries.

In Theorem~\ref{theorem3}, we proved its non-perturbative generalization for the resolved conifold geometry. That is, with the assumptions on the ray $\rho_k$ and parameters as in Theorem \ref{theorem3}, we found that the exact NS free energy $W_{\rho_k}$ and the exact quantum vev $\Psi_{\rho_k}$ are related as
\begin{equation}\label{eqn:relationWandPsi}
 \exp\left(-\frac{1}{2\pi}\partial_tW_{\rho_k}(\epsilon,t)\right)=\big(\mathcal{S}_2(x\mid \check\epsilon,1)^{-1}\cdot \Psi_{\rho_k}(\epsilon,x,t)\big)\big|_{x=0},
\end{equation}
and in the limit $k \to \infty$ as
\begin{equation}\label{eqn:relationWandPsiGV}
 \exp\left(-\frac{1}{2\pi}\partial_t W_{\mathrm{GV}}(\epsilon,t)\right)=\big(L(x\mid \check\epsilon,1)^{-1}\cdot \Psi_{\mathrm{GV}}(\epsilon,x,t)\big)\big|_{x=0},
\end{equation}
where $W_{\mathrm{GV}}(\epsilon,t)$ was called $W(\epsilon,t)$ in Section~\ref{sec:Borelsums}.

The aim of this section is to interpret equation~\eqref{eqn:relationWandPsiGV} as a five-dimensional lift of the Nekrasov--Rosly--Shatashvili conjecture \cite{Nekrasov:NRS}, and equation~\eqref{eqn:relationWandPsi} as its extension to more general boundary conditions in the $\frac{1}{2}\Omega$-background.

Remember that for any four-dimensional $\mathcal{N}=2$ gauge theory $T$ of rank $K$, with a ramified Seiberg--Witten covering $\Sigma \to C$ of degree $K$, it was realized in \cite{Hollands:Heun, Hollands:2013qza} that the corresponding spectral network takes a special form at a certain phase $\vartheta_{\mathrm{FN}}$ (as well as $\vartheta_{\mathrm{FN}}+\pi$) in any weak-coupling region. This network is characterized by the appearance of families of compact trajectories, and called a Fenchel--Nielsen (FN) type network. The spectral coordinates corresponding to such a FN-type network, defined using abelianization, agree with (a possibly higher-rank version of) complexified FN length-twist coordinates $\mathcal{X}_\gamma^{\mathrm{FN}}$.

Furthermore, in \cite{Hollands:Heun, Hollands:recipe} the generating function $W^\mathrm{oper}$ of the relevant family of oper connections $\nabla_\epsilon(u)$ was computed with respect to these complexified FN coordinates in several examples. This was accomplished by evaluating the spectral coordinates $\mathcal{X}_\gamma^{\mathrm{FN}}$ on the family~$\nabla_\epsilon(u)$. As predicted by \cite{Nekrasov:NRS}, the generating function $W^\mathrm{oper}$ in terms of the spectral coordinates~$\mathcal{X}_\gamma^\mathrm{FN}$ was found to agree with the
NS free energy~$F^{\rm NS}$. That is, it was found that
\begin{equation*}
y^i = \frac{1}{\epsilon} \frac{\partial F^{\mathrm{NS}}(x,\epsilon)}{\partial x_i},
\end{equation*}
where the $x_i$ and $y^i$ are defined in terms of the spectral coordinates $\mathcal{X}_\gamma^{\mathrm{FN}}$ as
\begin{gather*}
x_i = \frac{\epsilon}{\pi \mathrm{i}} \log \mathcal{X}^{\mathrm{FN}}_{A^i} (\nabla_\epsilon(u)) = \frac{\epsilon}{\pi \mathrm{i}} \log \bigg(\frac{1}{\epsilon} \mathcal{B}_{\vartheta^{\mathrm{FN}}} \oint_{A^i} \lambda^{\mathrm{qu}}(\epsilon,u) \bigg),\\
y^i = \frac{1}{2 \epsilon} \log \mathcal{X}^{\mathrm{FN}}_{B_i} (\nabla_\epsilon(u)) =\frac{1}{2 \epsilon} \log \bigg(\frac{1}{\epsilon} \mathcal{B}_{\vartheta^{\mathrm{FN}}} \oint_{B_i} \lambda^{\mathrm{qu}}(\epsilon,u) \bigg).
\end{gather*}

Our goal in this section is to interpret the free energy $F^{\mathrm{NS}}_\mathrm{GV}(\epsilon,t)$ for the resolved conifold geometry similarly as a generating function of difference opers with respect to a special choice of Darboux coordinates on the associated moduli space of difference connections $\mathcal{M}_\mathrm{diff}$. The relevant Darboux coordinates in this set-up are the spectral coordinates $\mathcal{X}_\gamma^\mathrm{GV}$ that were defined in the previous section.

Indeed, equation~\eqref{eqn:relationWandPsiGV}, together with the limit
\begin{equation*}
 \lim_{x\to \mathrm{i} \infty} \Psi_{\mathrm{GV}}(\epsilon,x,t) = 1,
\end{equation*}
implies that the NS free energy $F^{\rm NS}_\mathrm{GV}(\epsilon,t) = W_\mathrm{GV} \big(\epsilon, t- \frac{\check{\epsilon}}{2}\big)$ for the resolved conifold geometry may be obtained as a solution from the relation
\begin{gather}\label{eqn:5dNRSconifold}
v = - \frac{1}{2 \pi } \frac{\partial W_\mathrm{GV}(\epsilon,t)}{\partial t},
\end{gather}
where
\begin{equation*}
t = \frac{\epsilon}{4 \pi^2 \mathrm{i}} \log \mathcal{X}^{\mathrm{GV}}_{A} (\nabla_\epsilon(t)),\qquad
v = \log \mathcal{X}^{\mathrm{GV}}_{B} (\nabla_\epsilon(t)).
\end{equation*}
These three equations together generalize the Nekrasov--Rosly--Shatashvili conjecture to five dimensions.\footnote{Note that in this paper we have mostly restricted to Re$(t)>0$ as well as Im$(t)>0$. Whereas we expect the free energy $W_{\rm GV}$ to be continuous at Re$(t)=0$, this does not seem to be true at Im$(t)=0$. In the 5d gauge theory this corresponds to the statement that the 1-loop contribution to the Nekrasov partition function is not uniquely defined. The 1-loop contribution that we wrote down in equation~\eqref{eqn:one-loop} is also known as the 1-loop contribution to the partition function in the topological string scheme. We expect that the average partition function is given by the 1-loop contribution in the so-called Liouville scheme. See \cite{Hollands:recipe} for 4d examples and references. }

More generally, the function $W_\rho(\epsilon,t)$ is the generating function of $q$-difference opers with respect to the spectral coordinates $\mathcal{X}^\rho_\gamma$ defined by Borel summing the quantum periods $\Pi_\gamma(\epsilon)$ with respect to any ray $\rho$.
Similarly as in four dimensions, the exponent of the free energy $F^{\mathrm{NS}}_\rho(\epsilon,t)= W_\rho \big(\epsilon, t+ \frac{\check{\epsilon}}{2}\big)$ has an interpretation as a five-dimensional partition function as well, but now with respect to a boundary condition labeled by $\rho$.

Whereas the four-dimensional boundary conditions can be realized explicitly by coupling a~3d $\mathcal{N}=2$ theory of class $\mathcal{R}$ to the boundary $\mathbb{R}^2 \times S^1$ of the $\frac{1}{2}\Omega$-background $ \mathbb{R}^2 \times \mathbb{R}_\epsilon^2$ (see \cite{Hollands:t3abel} and in particular \cite[Section 5.7]{Hollands:recipe}), in the five-dimensional set-up we should consider the lift of the 3d theory to $S^1_R \times \mathbb{R}^2 \times S^1$. In particular, the Stokes jumps
\begin{equation*}
\Delta W_\rho(\epsilon,t) = -\mathrm{i} \check{\epsilon} \mathrm{Li}_2\big({\rm e}^{\pm 2\pi \mathrm{i} ( t + k)/\check \epsilon R }\big),
\end{equation*}
can be realized by coupling a lift of the tetrahedron theory $\Delta$ (analyzed in \cite{Dimofte:2011ju}) to the boundary of the five-dimensional $\frac{1}{2}\Omega$-background.

Note that equation~\eqref{eqn:5dNRSconifold} does not fix the $t$-independent part of $F^{\mathrm{NS}}_\rho(\epsilon,t)$, i.e.,~the constant map contribution to the free energy. The additional equation~(\ref{normalizedW}),
\begin{gather}\label{eqn:Wpot}
\partial_\epsilon \hat{W}_\rho (\epsilon,t) = \hat{F}^\mathrm{top}_\rho (\epsilon,t),
\end{gather}
however, does fix the normalization of $\hat{W}_\rho$ in terms of the normalization of $\hat{F}^\mathrm{top}_\rho (\epsilon,t)$. Since this relation contains a derivative with respect to merely $\epsilon$ (and no other moduli), it might have an interpretation in terms of the BPS states with central charge~$Z_k$, corresponding to bound states of D0-branes.

In fact, it is tempting to interpret the equations~\eqref{eqn:5dNRSconifold} and equation~\eqref{eqn:Wpot} together as a~system of relations on the extended moduli space of local special geometry, parametrized by all four 1-cycles $\gamma_0$, $\gamma_A$, $\gamma_B$ and $\gamma_6$, that determine $F^{\rm NS}$ (including constant map contribution) as a~generating function of the family of difference equations in suitable Darboux coordinates.

\subsection{Spectral problems and exact quantization conditions}\label{sec:spectralproblem}

In the work of Nekrasov and Shatashvili \cite{NS} it was proposed that the NS limit of the Nekrasov partition function provides a link between 4d $\mathcal{N}=2$ gauge theories in the $\frac{1}{2}\Omega$-background and quantum integrable systems with quantization parameter $\hbar = \epsilon$. In particular, it was proposed that the equation
\begin{equation}\label{eq:exactquantization}
 \exp\big(\partial_t F^{\rm NS}(\epsilon,t)\big) = 1,
\end{equation}
whose solutions determine the vacua of the effective 2d gauge theory, corresponds to the Bethe ansatz equation for the dual quantum integrable system. A similar equation was considered in~\cite{ACDKV} as a condition for the open topological string wave function to be well defined.

The equation~\eqref{eq:exactquantization} was interpreted in \cite{Mironov:2009uv} as an analogue of a Bohr--Sommerfeld quantization condition, where the volume of the phase space given by the quantum periods is quantized in units of $\hbar$. The solutions of this equation determine a spectral problem: the values of $t$ for which this equation is satisfied determine the energy eigenvalues of the corresponding Schr\"odinger operator. Analogous spectral problems in the context of topological string theory and quantum mirror curves have been analyzed in great depth in~\cite{GHM} and many follow-ups (see~\cite{MarinoSpectral} for a~review and references). Here, it was realized that in order to recover the correct energy spectrum, the free energy~$F^{\rm NS}$ in equation~\eqref{eq:exactquantization} needs to be replaced with its non-perturbative completion~$F^{\rm NS}_\mathrm{np}$. Simplifying expressions were found at special values of~$\hbar$, such as the maximally symmetric value $\hbar=2\pi$. Indeed, the latter value is the fixed point of the S-duality transformation $\hbar \rightarrow 2\pi/\hbar$, which was observed to be a symmetry of the Grassi--Hatsuda--Marino quantization condition in~\cite{Hatsuda:2015fxa,Wang:2015wdy}.

In this section, we formulate and study two novel spectral problems associated to the quantum mirror curve for the resolved conifold. But before getting there, let us explain in more detail what a spectral problem is and how one can extract spectral problems from a Schr\"odinger operator~$\mathbf{d}$ (or~$\mathbf{D}$) associated to a~4d $\mathcal{N}=2$ field theory (or its 5d lift).

To define a spectral problem, it is important to keep in mind that a Schr\"odinger operator alone does not yet specify a quantum mechanics problem. Indeed, this also requires a choice of Hilbert space containing the wave-functions on which the Schr\"odinger operator acts. Together, a Schr\"odinger operator and a suitable Hilbert space $\mathcal{H}$ define a spectral problem: namely, to determine the energy eigenfunctions $\psi$ (or $\Psi$) and eigenvalues $E$ of the operator $\mathbf{d}$ (or $\mathbf{D}$) in the space of functions~$\mathcal{H}$. The energy eigenfunctions are often referred to as bound states.

The most famous example of a spectral problem is the quantum harmonic oscillator, i.e., determining the $L^2(\mathbb{R})$ spectrum of the Schr\"odinger equation
\begin{equation*}
 \hbar^2 \psi''(x) + \big(x^2 - E \big) \psi(x) = 0,
\end{equation*}
with $E>0$ and $x \in \mathbb{R}$. Its energy spectrum can of course be exactly determined as the discrete set of eigenvalues $E = \hbar \left( 2 n + 1 \right)$ for $n>0$.

Given a differential operator $\mathbf{d}(\epsilon,z)$ on a (punctured) Riemann surface $C$ with $\epsilon \in \mathbb{C}$, there are often multiple interesting choices of Hilbert space $\mathcal{H}$. Consider for instance the Schr\"odinger operator
\begin{equation}\label{eq:Mathieu}
 \mathbf{d}(\epsilon,z) = \epsilon^2 \partial_z^2 + \left( \frac{1}{z^3} + \frac{2E+\frac{1}{4} \epsilon^2} {z^2} + \frac{1}{z} \right)
\end{equation}
which plays an important role in the analysis of the 4d $\mathcal{N}=2$ pure ${\rm SU}(2)$ theory.

If we assume $\epsilon = \hbar >0$ and write $z={\rm e}^{i x}$ (as well as redefine the wave-function $\psi$ slightly), we recover the Mathieu differential equation
\begin{equation*}
 \hbar^2 \psi''(x) - \left( 2 \cos(x) + 2 E \right) \psi(x)=0.
\end{equation*}
If we consider $x \in \mathbb{R}$, the associated Mathieu spectral problem is to find quasi-periodic solutions
\begin{equation*}
 \psi(x + 2 \pi) = {\rm e}^{{\rm i} \nu} \psi(x),
\end{equation*}
with $\nu \in \mathbb{R}/2 \pi \mathbb{Z}$. It turns out that there is a countable set of solutions labeled by $n \in \mathbb{N}$ with energies $E_n < 1$.

On the other hand, if we make the substitution $x = \mathrm{i} x'+ \pi$, then the Schr\"odinger operator~\eqref{eq:Mathieu} defines the modified Mathieu equation
\begin{equation*}
 \hbar^2 \psi''(x') - \left( \cosh(x') - 2 E \right) \psi(x')=0.
\end{equation*}
If we consider $x' \in \mathbb{R}$, the associated modified Mathieu spectral problem is to find $L^2(\mathbb{R})$ solutions for this confining potential. Again, there is a discrete set of solutions labeled by $n \in \mathbb{N}$ with energies $E_n > 1$.

It turns out to be very helpful to consider these spectral problems from the perspective of spectral networks. Indeed, the condition for bound states can be written in terms of spectral coordinates, and the solutions may be analysed using the exact WKB analysis \cite{Hollands:t3abel}.

For instance, bound states for the modified Mathieu spectral problem correspond to solutions of the exact quantization condition
\begin{equation*}
 \mathcal{X}^\mathrm{FN}_B = 1
\end{equation*}
in terms of the Fenchel--Nielsen coordinates $\mathcal{X}^\mathrm{FN}_{A,B}$ associated to the Fenchel--Nielsen spectral network (which appears at $E>1$ and $\vartheta = 0$). Because the energy eigenvalues for this spectral problem have values~$E_n>1$ as well, this means the WKB asymptotics of the spectral coordinates~$\mathcal{X}^\mathrm{FN}_{A,B}$ may be used to find good approximations of the energies $E_n$.

On the other hand, bound states for the original Mathieu spectral problem correspond to solutions of
\begin{equation*}
 \mathcal{X}^\mathrm{FN}_A = {\rm e}^{{\rm i} \nu}.
\end{equation*}
Since the eigenvalues for this spectral problem have values $E_n<1$, the WKB asymptotics of the spectral coordinates $\mathcal{X}^\mathrm{FN}_{A,B}$ do not give a good approximation in this case. Rather, this problem should be analysed in terms of a different set of spectral coordinates corresponding to the spectral network at values $E<1$ and $\vartheta =0$. The resulting exact quantization condition is given in \cite[equation~(4.25)]{Hollands:t3abel}.

Consider now the Schr\"odinger difference operator~\eqref{eqn:diffDX1}
\begin{equation*}
 \mathbf{D}(\epsilon,x,t) = (1-X) {\rm e}^{\check{\epsilon} \partial_x} - (1 - Q X),
\end{equation*}
on $\mathbb{C}^*_X$ with $X = {\rm e}^{2 \pi {\rm i} x}$, obtained by canonically quantizing the mirror curve of the resolved conifold geometry. Similarly as in the Mathieu example, we may try to formulate two distinct spectral problems associated to this difference operator, which are in this case intrinsically linked to the two spectral networks $\mathcal{W}^\mathrm{GV}$ and $\mathcal{W}^\mathrm{np}$.

To formulate the spectral problem corresponding to the spectral network $\mathcal{W}^\mathrm{GV}$ one would suppose $\epsilon = i \hbar \in i \mathbb{R}_{>0}$, and consider periodic solutions $\Psi(X)$ to the difference equation
\begin{equation*}
 (1-X) \Psi(q X) - (1 - Q X) \Psi(X) =0,
\end{equation*}
with $q = \exp (i \epsilon)$ and $Q = \exp (-2 \pi E)$, for $x \in \mathbb{R}$. This corresponds to the exact quantization condition
\begin{equation*}
 \log \mathcal{X}_A^\mathrm{GV} = 2 \pi \mathrm{i} n
\end{equation*}
for $n \in \mathbb{Z}$. This spectral problem is rather trivial, however. It just implies that
\begin{equation*}
E = \frac{\hbar n}{2\pi},
\end{equation*}
since the quantum $A$-period does not receive any quantum corrections.

A more interesting spectral problem may be found by supposing that $\epsilon = \hbar >0$, corresponding to $\vartheta = 0$ and the spectral network $\mathcal{W}^\mathrm{np}$. The associated spectral problem is to find $L^2$ solutions to the difference equation
\begin{equation*}
 (1-X) \Psi(q X) - (1 - Q X) \Psi(X) =0,
\end{equation*}
in the interval $X \in (0,1)$ (or equivalently $x \in (0,\mathrm{i} \infty)$).
This corresponds to the exact quantization condition
\begin{equation*}
 \mathcal{X}_B^\mathrm{np} = 1
\end{equation*}
for $n \in \mathbb{N}$. Using Theorem~\ref{theorem3} and the definition of the spectral coordinate $\mathcal{X}_B^\mathrm{np}$ in Section~\ref{sec:exactWKB}, this condition may be reformulated in terms of the NS free energy as
\begin{equation*}
 \exp \left(-\frac{1}{2 \pi} \partial_t W_{\rm np} (\epsilon,t) \right) = 1.
\end{equation*}

If we include the classical contributions to $W_{\rm np} (\epsilon,t) = F^{\rm NS}_{{\rm np}}(\epsilon,t-\check{\epsilon}/2)$ as discussed in Section~\ref{sec:quantumPF}, and use the resulting non-perturbative form of the NS free energy from equation~\eqref{eq:nonpertquantumperiodwithclassical}, we find that the exact quantization condition can be rewritten as
\begin{equation}\label{eq:exactquantcond}
 \exp\left( \frac{1}{2\pi} \partial_t F^{{\rm NS},\sharp}_{{\rm np}} \left(\epsilon,t-\frac{\check{\epsilon}}{2} \right) \right) = \exp\left( \pi \mathrm{i} B_{2,2}(t\mid \check{\epsilon},1) \right) \cdot \mathcal{S}_2(t\mid \check{\epsilon},1) = 1.
\end{equation}
Note that this quantization condition is manifestly invariant under the S-duality transformation $t\rightarrow t/\check{\epsilon}$ with $\check{\epsilon}\rightarrow 1/\check{\epsilon}$, as expected in \cite{Hatsuda:2015fxa}. This follows from the fact that $\mathcal{S}_2(z \mid \omega_1,\omega_2)$ and $B_{2,2}(z \mid \omega_1,\omega_2)$ are both invariant under a simultaneous rescaling of the arguments, and symmetric in $\omega_1$ and $\omega_2$.

Although the expression~\eqref{eq:exactquantcond} combines classical pieces in $t$ with the quantum dilogarithm, the combination again has a product expansion. This follows from the product representation of $\mathcal{S}_2$ together with the identities from~\eqref{eq:multiplesineproduct}. That is,
\begin{equation*}
 \exp\left(\pi \mathrm{i} B_{2,2}(t\mid \check{\epsilon},1) \right) \cdot \mathcal{S}_2(t\mid \check{\epsilon},1)= \prod_{k=0}^{\infty} \frac{1- {\rm e}^{-\frac{2\pi \mathrm{i}}{\check{\epsilon}} (t+k)}}{1- {\rm e}^{- 2\pi \mathrm{i} (t-(k+1)\check{\epsilon})}}.
\end{equation*}
The spectral problem thus translates into finding the solutions of
\begin{equation}\label{eqn:TBA}
 \prod_{k=0}^{\infty} \frac{1- {\rm e}^{-\frac{2\pi \mathrm{i}}{\check{\epsilon}} (t+k)}}{1- {\rm e}^{- 2\pi \mathrm{i} (t-(k+1)\check{\epsilon})}} =1.
\end{equation}

This problem can be simplified in the limit $\epsilon \to 0$. Indeed, assuming $0<\operatorname{Re}(t)<1$ and $\operatorname{Im}(t)>0$, we have that as $\epsilon \to 0$ with $\operatorname{Re}(\epsilon)>0$
\begin{equation*}
 \pi \mathrm{i} B_{2,2}(t\mid \check\epsilon,1) +\log(\mathcal{S}_2(t\mid \check\epsilon,1))\sim \frac{1}{2\pi \mathrm{i}\check\epsilon}\mathrm{Li_2(Q)} + \frac{\pi \mathrm{i}}{\check\epsilon}\left(t^2-t+\frac{1}{6}\right).
\end{equation*}
In such a limit, we would then like to solve
\begin{equation}\label{eqn:TBAapprox}
 \frac{1}{2\pi \mathrm{i}\check\epsilon}\mathrm{Li_2(Q)} + \frac{\pi \mathrm{i}}{\check\epsilon}\left(t^2-t+\frac{1}{6}\right)=2\pi \mathrm{i} n,
\end{equation}
or equivalently
\begin{equation*}
 \mathrm{Li}_2(Q)=2\pi^2\big(t^2-t+1/6\big)-2\pi n \epsilon.
\end{equation*}
Numerical solutions to this equation are plotted in Figure~\ref{fig:densityplotTBA}.

By taking $\operatorname{Im}(t)$ very large, we can assume the polynomial term dominates over the $\mathrm{Li}_2(Q)$, so that we may drop the latter. Hence, an approximated solution satisfying all the constraints of $0<\operatorname{Re}(t)<1$ and $\operatorname{Im}(t)>0$ very large is given by
\begin{equation*}
 t=\frac{1}{2}+{\rm i}\frac{\sqrt{-\pi -12 n \epsilon}}{2\sqrt{3\pi}}
\end{equation*}
for all $n<N<0$ with $|N|>>0$. In particular, as $n\to -\infty$ we have $t\sim {\rm i}\sqrt{-n\epsilon/\pi}$.

\begin{figure}
\begin{center}
\includegraphics[width=8cm]{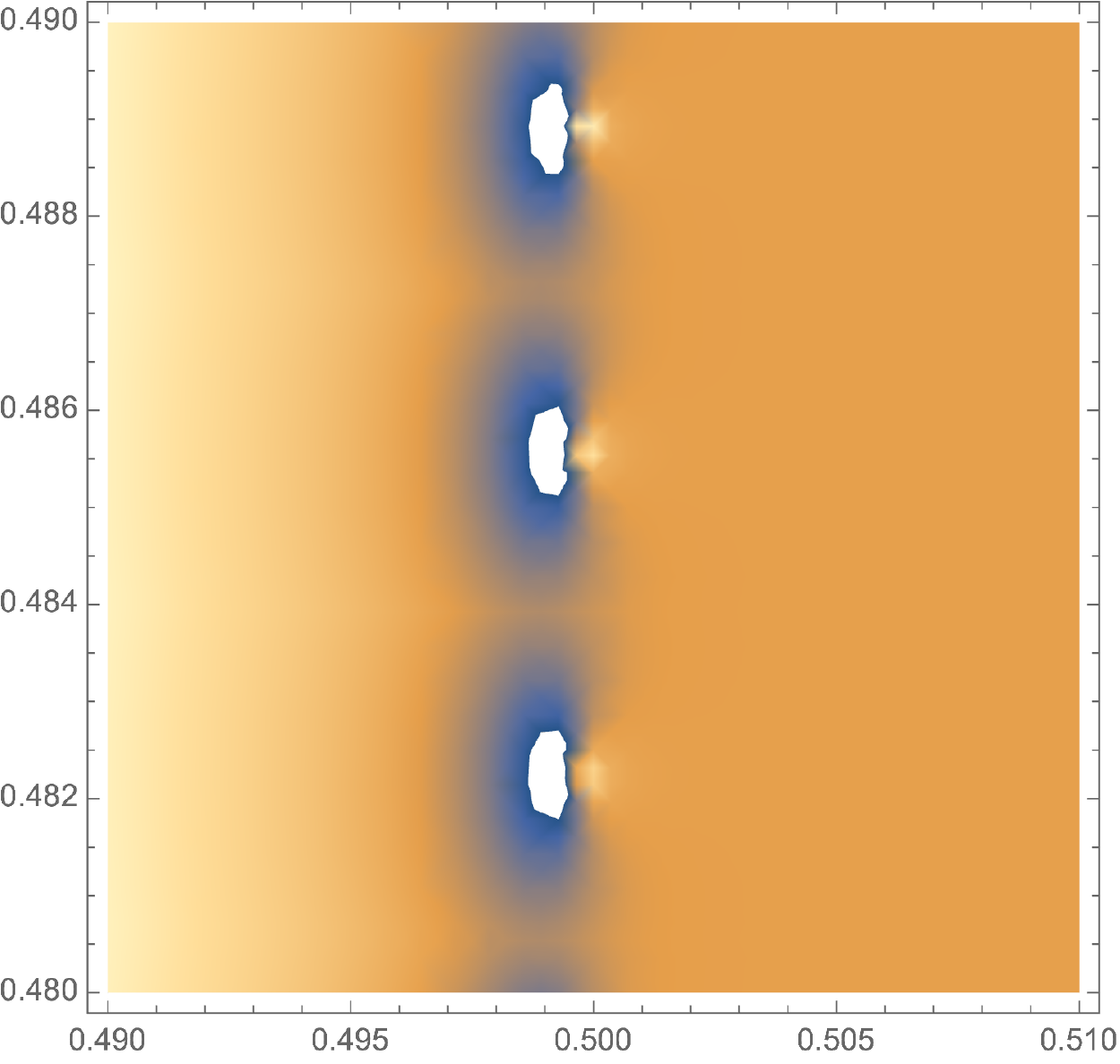}
\end{center}
\caption{Density plot showing the solutions to the TBA~\eqref{eqn:TBA} in the limit $\epsilon \to 0$ with $\operatorname{Im}(t)$ very large. This plot shows the solutions in the complex $t$-plane to equation~\eqref{eqn:TBAapprox} for $\epsilon = 0.01$.}
\label{fig:densityplotTBA}
\end{figure}

On the other hand, the spectral problem also simplifies when special values of $\check{\epsilon}$ are considered, see \cite{GHM,MarinoSpectral}. For instance when $\check{\epsilon}=1$, which is the fixed point under the S-duality transformation, the exact quantization condition is
\begin{equation}\label{qdspecialval}
 \pi \mathrm{i} B_{2,2}(t\mid 1,1) + \log ( \mathcal{S}_2(t\mid 1,1)) = 2\pi \mathrm{i} n,
\end{equation}
for $n\in \mathbb{Z}$. After using equation~\eqref{eq:qdilogrepeat}, this reads
\begin{equation*}
 \pi \mathrm{i}\left( \frac{5}{6} -2t+t^2 \right) -\frac{1}{2\pi \mathrm{i}} {\rm Li}_2(Q)+ (t-1){\rm Li}_1(Q) = 2\pi \mathrm{i} n.
\end{equation*}
Here we can, as before, assume that $\operatorname{Im}(t)>0$ is sufficiently big, so that the polynomial terms dominate over the polylog terms. We can then approximate the previous expression by dropping the polylog terms. The solutions for $\operatorname{Im}(t)\gg 0$ are therefore approximated by
\begin{equation*}
 t=1+ {\rm i}\frac{\sqrt{-1-12n}}{\sqrt{6}}
\end{equation*}
for $n<N<0$ and $|N|\gg 0$. In particular, as $n\to -\infty$ we have $t\sim {\rm i}\sqrt{-2n}$.

This type of equation \eqref{qdspecialval} that contains the quantum dilogarithm function evaluated at special rational values of $\epsilon$ is similar to equations studied in \cite{Garoufalidisevaluation}. It may also be used to find solutions numerically as in \cite{MarinoSpectral}.

Note that in terms of gauge theory, the solutions of the TBA equation~\eqref{eqn:TBA} determine the vacua of the five-dimensional ${\rm U}(1)$ theory in the $\frac{1}{2} \Omega$-background, with respect to the boundary conditions corresponding to $\rho= \mathbb{R}_{>0}$.

\section{Discussion and outlook}\label{sec:discussion}

In this paper we have developed mathematical tools and methods to study exact and non-perturbative refined topological string theory as well as its limits, guided by the example of the resolved conifold and its mirror. We were in particular able to provide the connections as well as further develop the appearance of finite difference equations in the open and closed string moduli.

The finite difference equations in the open string case are the quantum mirror curves. These were advocated in \cite{ADKMV} to capture the quantum corrections to the disk generating functions \cite{Aganagic:2000gs} and in \cite{ACDKV} to capture the NS limit of refined topological string theory via monodromies of the open topological string wave-function. We revisited the WKB method applied to a quantum mirror curve associated to the resolved conifold and proved the all-order expression for the asymptotic expansion of the open topological string wave-function for the resolved conifold. With this all-order expression we constructed the Borel transform and studied its Borel sums as well as the corresponding Stokes jumps across the singular rays in the Stokes plane.

On the closed string moduli side we derived new finite difference equations obtained from the asymptotic expansion of the refined topological string free energy (as well as its limits), generalizing the study of the unrefined topological string in~\cite{alim2020difference}. We found analytic solutions in the topological string coupling for these difference equations, with the correct asymptotics, that were shown to encode the expected non-perturbative content to the refined free energy. These analytic solutions furthermore made it possible to derive a new kind of finite difference equations that cannot be accessed from the asymptotic expansion, see for instance the third equation in~\eqref{eps1shift}. The latter difference equations correspond to integer shifts in the closed string moduli and, in the case of the NS-limit as well as the unrefined limit, were shown to encode the Stokes jumps of the Borel resummation of the initial asymptotic expansion. This establishes a novel link between finite difference equations and Borel analysis. While we have not worked out the Borel analysis for the refined case in detail, we expect $F_{{\rm np}}^{\rm ref}$ to play an analogous role as $W_{{\rm np}}$, and the refined difference equation to encode the Stokes jumps.

In the light of the discussion of the S-duality transformation of topological string theory of~\cite{ASTT21}, we are led to the interpretation that there are two dual sets of difference equations in both S-duality frames in the topological string coupling (or~$\epsilon$ in the NS limit). One kind of difference equation is derived from the asymptotic expansion, while the other kind encodes the corresponding Stokes jumps. The role of the two is exchanged when the duality frame is changed. While the asymptotic expansion and the Stokes jumps in the case of the unrefined topological string theory differ drastically, as worked out in \cite{ASTT21}, we find that in the NS limit the S-duality is almost exact. This can be seen from the two dual difference equations~\eqref{eq:NSdiffeq} and \eqref{tdiffeqNS}, as well from the explicit S-duality transformation of the non-perturbative free energy in the NS limit of Proposition~\ref{prop:Sduality}.

In terms of the closed string moduli, the finite difference equations are most naturally interpreted as being related to the integrable hierarchies underlying Gromov--Witten theory, see, e.g.,~\cite{Pandharipande:2000}. Finite difference equations of the type put forward in our work are shown in~\cite{NO} to govern the perturbative parts of certain natural representation-theoretic partition functions. Related to this, in~\cite{BGT} finite difference equations in the closed string moduli, corresponding to $q$-deformations of Painlev\'e equations, were shown to be satisfied by certain tau functions that are related to topological string theory. In the similar context of the study of tau-functions relating to topological string theory associated to class S theories, finite difference equations in the closed moduli also feature in~\cite{CLT20}.

On different grounds, the topological string partition function is expected to correspond to a quantum mechanical problem obtained from the geometric quantization of a bundle over the closed string moduli space~\cite{Witten:1993ed}. It is thus natural to wonder whether the closed string difference equations are reflections of this quantization problem. Motivated by this question, we re-interpreted the finite difference equations in the closed string moduli as resulting from the quantization of a curve in a phase space associated to the closed string moduli. The wave-function that is annihilated by this quantum curve, and thus only depends on the closed string moduli, is given by a difference quotient of the topological string partition function, but, happens to be equivalent to the refined partition function in the NS limit. The closed quantum curve therefore corresponds to a quantum mechanical problem for the refined topological string in the NS limit.

Although our guiding example in this paper is the resolved conifold and its mirror, we are certain that many aspects of our work can be generalized to a larger class of geometries, and in particular to toric Calabi--Yau threefolds.\footnote{Note that even though the GV free energy for more general geometries is generically not a convergent series in the K\"ahler parameters~$Q_i$, we can still apply Borel summation techniques to such a GV free energy when we considered it a series expansion in the string coupling $\lambda$ whose coefficients are formal series in the~$Q_i$.} In this case, the mirror curves are available and the corresponding quantum curves have been intensively studied. While the closed quantum curve in our work was obtained from the explicit knowledge of the asymptotic series of the topological string free energy in this case, we think that the path to generalization is through understanding the more general link between the open and the closed quantum curve. This link will also lead to the generalization of the quantum Picard--Fuchs operator that we derived from the difference equation for the quantum periods. We speculate that the quantum Picard--Fuchs operator for a~generic toric Calabi--Yau threefold could be obtained from the quantum curve in the same way that the classical Picard--Fuchs operator is obtained from the classical curve, namely through a~study of the changes in the open wave function under variation of the closed moduli.

Further guidance towards the generalization of the resurgence properties of more general CY geometries may be offered by understanding the links to the studies of resurgence in the context the holomorphic anomaly equations of~\cite{Bershadsky:1993cx} as in~\cite{Couso-Santamaria:2014iia}; see also~\cite{CousoSantamaria:2014xml} and references therein.
These techniques have been applied to the study of the proposal of~\cite{GHM} in~\cite{CMS}.

Another place where $q$-deformations of classical Picard--Fuchs operators have appeared recently is in the context of quantum K-theory, see, e.g.,~\cite{Jockers:2019wjh,Jockers:2018sfl}. It would be very beneficial to examine the precise links. Yet another very active area of research where similar questions on the interplay between quantization of curves, resurgence and non-perturbative structures is examined is that of the topological recursion. See, for example,~\cite{Eynard:2021sxg} for a recent study in this direction.

In Section~\ref{sec:Borelsums}, we have given the non-perturbative solutions to the open and closed difference equations (in the NS limit) an interpretation in terms of Borel summation. For instance, the Borel summation of the NS free energy $F^{\mathrm{NS}}(\epsilon)$ along any ray in the Borel plane, corresponding to a phase $\vartheta = \arg(\epsilon)$, yields an exact solution that may be analytically continued to the half-plane centered around $\vartheta$. The particular solution $F^{\mathrm{NS}}_\mathrm{np}(\epsilon)$, obtained by Borel summation along the ray $\rho = \mathbb{R}_{>0}$, is in some sense the richest of all these exact solutions. All other Borel summations may be obtained after adding a number of Stokes jumps. In particular, in the limit $\vartheta \to \pi/2$ the solution $F^{\mathrm{NS}}_\rho(\epsilon)$ limits to the Gopakumar--Vafa resummation of the NS free energy.

Borel summation is known to play an important role in the exact WKB analysis of differential operators. In Section~\ref{5dWKBsec}, we have shown that a similar story is true for $q$-difference operators, at least in the example of the resolved conifold. In particular, we show that the rays~$\rho$ along which the Borel transform of $F^{\mathrm{NS}}(\epsilon)$ has its singularities correspond to phases for which the corresponding Stokes graph (or exponential spectral network) undergoes a topology change. In the dual 5d gauge theory these topology changes correspond to 5d BPS particles. Although the spectrum for more general toric geometries may be extracted from the corresponding Stokes graphs (see~\cite{Banerjee:2020moh} for the example of the local $\mathbb{P}^1 \times \mathbb{P}^1$ geometry), it is not clear whether the corresponding phases~$\vartheta$ have a similar interpretation in terms of the Borel sum of the corresponding NS free energy (some aspects of this Borel resummation have been studied numerically in \cite{Grassi:2014cla}) and which type of invariants these BPS states correspond to in terms of the Calabi--Yau geometry (see \cite{Banerjee:2022oed} for recent progress on this).

It has been known for a while that the open and closed topological string in the NS limit are related through the quantum periods \cite{ACDKV}. By interpreting
Theorem~\ref{theorem3} as a relation between the Borel summed quantum periods and the NS free energy in Section~\ref{sec:NRS}, we found a non-perturbative generalization of this statement. We furthermore argued that the exact quantum periods may be interpreted as spectral coordinates on the corresponding moduli space of multiplicative Hitchin systems through a $q$-difference version of abelianization. Using this language we formulated a 5d lift of the Nekrasov--Rosly--Shatashvili conjecture \cite{Nekrasov:NRS}. It would be very exciting to try to generalize this story to mirror curves of higher genus. This is certainly not obvious. For instance, one obstacle is that one cannot define difference operators on an arbitrary Riemann surface---the surface has to be of type $\mathbb{C}$, $\mathbb{C}^*$ or a torus in order to be able to define the difference operator as a global object. We do think that there should be a way around this, for instance by considering a degeneration limit of the higher genus mirror curves, or by considering differential operators for affine Lie algebras.

\appendix
\section{Some technical or long proofs}\label{appA}
\subsection{Proof of the Borel transform}\label{BTapp}

In order to prove Proposition \ref{Boreltransprop}, we follow the same arguments as the ones used to compute the Borel transform of the topological free energy in \cite{ASTT21}, which in turn were based on the techniques of \cite{garoufalidis2020resurgence}.

We start by recalling the Hadamard product:

\begin{Definition}
Consider two formal power series $\sum_{n=0}^{\infty}a_nz^n,\sum_{n=0}^{\infty}b_nz^n\in \mathbb{C}[[z]]$. Then the Hadamard product $\oast\colon\mathbb{C}[[z]]\times \mathbb{C}[[z]] \to \mathbb{C}[[z]]$ is defined by
\begin{equation*}
 \bigg(\sum_{n=0}^{\infty}a_nz^n\bigg)\oast \bigg(\sum_{n=0}^{\infty}b_nz^n\bigg)=\sum_{n=0}^{\infty}a_nb_nz^n.
\end{equation*}
\end{Definition}

Whenever the power series $\sum a_nz^n$ and $\sum b_nz^n$ have a non-zero radius of convergence, their Hadamard product also has a non-zero radius of convergence, and we have the following integral representation for the product, which follows easily from the Cauchy integral formula:

\begin{Lemma} \label{lemHad} Consider two holomorphic functions near $z=0$ having series expansions
\begin{equation*}
 f_1(z)=\sum_{n=0}^{\infty}a_nz^n, \qquad f_2(z)=\sum_{n=0}^{\infty}b_nz^n
\end{equation*}
with radius of convergence $r_1>0$ and $r_2>0$, respectively. Then $(f_1\oast f_2)(z)$ converges for $|z|<r_1r_2$, and for any $\rho \in (0,r_1)$ the following holds for $|z|<\rho r_2$:
\begin{equation}\label{Hadprodint}
 (f_1\oast f_2)(z)=\frac{1}{2\pi \mathrm{i}}\int_{|s|=\rho}\frac{{\rm d} s}{s}f_1(s)f_2\Big(\frac{z}{s}\Big).
\end{equation}
\end{Lemma}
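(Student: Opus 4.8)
The plan is to split the argument into two independent parts: first the statement about the radius of convergence of $(f_1 \oast f_2)$, and then the integral representation \eqref{Hadprodint}, which I would obtain by feeding the Cauchy integral formula for the Taylor coefficients $a_n$ into the definition of the Hadamard product and exchanging a sum with a contour integral. Both ingredients are standard complex-analytic facts, so the emphasis will be on making the domains of validity match the stated ranges $|z|<r_1r_2$ and $|z|<\rho r_2$.

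For the convergence claim I would invoke the Cauchy--Hadamard formula. Since $f_1$ and $f_2$ have radii of convergence $r_1$ and $r_2$, we have $\limsup_n |a_n|^{1/n} = 1/r_1$ and $\limsup_n |b_n|^{1/n} = 1/r_2$. Using submultiplicativity of $\limsup$ for nonnegative sequences, $\limsup_n |a_n b_n|^{1/n} \le (\limsup_n |a_n|^{1/n})(\limsup_n |b_n|^{1/n}) = 1/(r_1 r_2)$, so the radius of convergence of $\sum_n a_n b_n z^n$ is at least $r_1 r_2$, giving convergence for $|z|<r_1 r_2$. For the integral formula I would fix $\rho \in (0,r_1)$ and write $a_n = \frac{1}{2\pi \mathrm{i}}\int_{|s|=\rho} f_1(s)\, s^{-n-1}\,\mathrm{d}s$, valid because $f_1$ is holomorphic on a disk of radius $r_1 > \rho$. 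Substituting this into $(f_1 \oast f_2)(z) = \sum_n a_n b_n z^n$ and interchanging summation and integration would give
\[
(f_1 \oast f_2)(z) = \frac{1}{2\pi \mathrm{i}} \int_{|s|=\rho} \frac{f_1(s)}{s} \sum_{n=0}^{\infty} b_n \Big(\frac{z}{s}\Big)^n \,\mathrm{d}s = \frac{1}{2\pi \mathrm{i}}\int_{|s|=\rho} \frac{\mathrm{d}s}{s}\, f_1(s)\, f_2\Big(\frac{z}{s}\Big),
\]
where identifying the inner sum with $f_2(z/s)$ requires $|z/s| = |z|/\rho < r_2$, i.e.\ exactly $|z| < \rho r_2$.

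The one step I would treat with care — and the only genuine obstacle here — is the justification of exchanging the sum and the contour integral. On the compact contour $|s|=\rho$ the continuous function $f_1$ is bounded, say $|f_1(s)|\le M$, so the general term of the integrand is dominated by $\frac{M}{\rho}\,|b_n|\,(|z|/\rho)^n$. Since $|z|/\rho < r_2$ lies strictly inside the disk of convergence of $f_2$, the numerical series $\sum_n |b_n|\,(|z|/\rho)^n$ converges, and the Weierstrass $M$-test then yields uniform and absolute convergence of the series of integrands on $|s|=\rho$; this licenses the interchange (equivalently one applies Fubini/Tonelli to the product of counting measure and arclength measure on the contour). With this justified, the displayed identity holds on the full range $|z|<\rho r_2$, which completes the proof.
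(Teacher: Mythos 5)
Your proof is correct and follows exactly the route the paper intends: the paper offers no detailed argument, merely asserting that the lemma ``follows easily from the Cauchy integral formula,'' and your write-up is precisely that argument carried out carefully (Cauchy's formula for the coefficients $a_n$, substitution into the Hadamard series, and a Weierstrass $M$-test to justify the interchange of sum and contour integral), supplemented by the Cauchy--Hadamard estimate $\limsup_n |a_nb_n|^{1/n}\le 1/(r_1r_2)$ for the convergence claim. Nothing is missing; the domains $|z|<r_1r_2$ and $|z|<\rho r_2$ come out exactly as stated.
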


In order to compute the desired $G(\xi,t)$, we write
\begin{equation*}
 G(\xi,t)=(f_1\oast f_2(-,t))(\xi)
\end{equation*}
for two functions $f_1(\xi)$, $f_2(\xi,t)$ which are holomorphic near $\xi=0$, and then use the integral representation of Lemma \ref{lemHad}.
We will take $f_1(\xi)$, $f_2(\xi,t)$ to be the following:
\begin{gather}
 f_1(\xi) = -\frac{1}{2\pi}\sum_{n=2}^{\infty} \frac{ B_{2n}}{ (2n)!} \xi^{2n-2},\nonumber\\
 f_2(\xi,t)= \sum_{n=2}^{\infty} \frac{\xi^{2n-2}}{(2n-2)!} \partial_t^{2n} {\rm Li}_3(Q)=\sum_{n=1}^{\infty} \frac{\xi^{2n-2}}{(2n-2)!} (2\pi \mathrm{i})^{2n} {\rm Li}_{3-2n}(Q).\label{f'sdef}
 \end{gather}

\begin{Proposition}\label{boreltransconv} Let $t\in \mathbb{C}^{\times}$ with $|\operatorname{Re}(t)|<\frac12$. Then $G(\xi,t)$ converges for $|\xi|<2\pi |t|$.
\end{Proposition}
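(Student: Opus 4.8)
The plan is to recognize $G(\xi,t)$ as the Hadamard product $\bigl(f_1\oast f_2(-,t)\bigr)(\xi)$ of the two functions defined in \eqref{f'sdef}, and to apply Lemma~\ref{lemHad}, which guarantees convergence of the product for $|\xi|<r_1r_2$, where $r_1$ and $r_2$ are the radii of convergence of $f_1$ and $f_2(-,t)$ in the variable $\xi$. Thus everything reduces to computing these two radii and checking that $r_1r_2=2\pi|t|$.

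For $r_1$: the coefficients of $f_1$ are, up to the factor $-1/2\pi$ and the overall shift by $\xi^{-2}$, the even Bernoulli coefficients $B_{2n}/(2n)!$. Since
\[
\sum_{n\geq 1}\frac{B_{2n}}{(2n)!}w^{2n}=\frac{w}{{\rm e}^{w}-1}-1+\frac{w}{2},
\]
which is meromorphic with poles exactly at $w\in 2\pi \mathrm{i}(\mathbb{Z}\setminus\{0\})$, the nearest singularity to the origin lies at distance $2\pi$. Removing finitely many low-order terms and multiplying by the entire factor $\xi^{-2}$ does not change this, so $r_1=2\pi$.

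For $r_2$: the first step is to rewrite the polylogarithms at negative integer index as an Eisenstein-type sum. Using the Lipschitz summation formula (equivalently, repeatedly differentiating the partial-fraction expansion $\pi\cot(\pi t)=\sum_{m}(t+m)^{-1}$), one has for $s=2n-2\geq 2$
\[
\mathrm{Li}_{3-2n}(Q)=\frac{(2n-3)!}{(-2\pi \mathrm{i})^{2n-2}}\sum_{m\in\mathbb{Z}}\frac{1}{(t+m)^{2n-2}},
\]
valid for all non-integer $t$. Substituting into \eqref{f'sdef} and simplifying the prefactors (the constant $(2\pi \mathrm{i})^{2n}/(-2\pi \mathrm{i})^{2n-2}=-4\pi^2$ and $(2n-3)!/(2n-2)!=1/(2n-2)$), the coefficient of $\xi^{2n-2}$ in $f_2$ is $-4\pi^{2}\,\frac{1}{2n-2}\sum_{m}(t+m)^{-(2n-2)}$. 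The decisive observation is that the hypothesis $|\operatorname{Re}(t)|<\tfrac12$ forces $|t+m|\geq |t|$ for every $m\in\mathbb{Z}$, with equality only at $m=0$: indeed $|t+m|^2-|t|^2=m\bigl(m+2\operatorname{Re}(t)\bigr)\geq 0$ precisely because $-\tfrac12<\operatorname{Re}(t)<\tfrac12$. Hence the $m=0$ term dominates, $\sum_{m}(t+m)^{-(2n-2)}\sim |t|^{-(2n-2)}$, and since $\bigl(4\pi^2/(2n-2)\bigr)^{1/(2n-2)}\to 1$ the root test gives $r_2=|t|$.

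Combining the two computations, $r_1r_2=2\pi|t|$, and Lemma~\ref{lemHad} yields convergence of $G(\xi,t)$ for $|\xi|<2\pi|t|$. The only genuinely delicate point is the determination of $r_2$: one must justify both the Eisenstein-type rewriting of $\mathrm{Li}_{3-2n}(Q)$ (valid in both half-planes $\operatorname{Im}(t)\gtrless 0$, since for a negative integer index the polylogarithm is a rational function of $Q$ and the cotangent identity holds throughout, possibly up to an overall sign that is irrelevant for the radius) and the claim that the $m=0$ term governs the exponential growth rate of the coefficients, which is exactly where the constraint $|\operatorname{Re}(t)|<\tfrac12$ enters.
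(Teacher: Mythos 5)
Your proof is correct and follows essentially the same route as the paper: write $G(\xi,t)=\bigl(f_1\oast f_2(-,t)\bigr)(\xi)$, compute the radii $r_1=2\pi$ and $r_2=|t|$, and conclude via Lemma~\ref{lemHad}. The only difference is in justification rather than strategy: where the paper simply quotes the asymptotics $B_{2n}\sim(-1)^{n+1}4\sqrt{\pi n}\,(n/(\pi {\rm e}))^{2n}$ and $\mathrm{Li}_{3-2n}(Q)\sim\Gamma(2n-2)(-2\pi \mathrm{i} t)^{-(2n-2)}$, you derive these facts from the pole structure of the Bernoulli generating function and from the Lipschitz summation formula, whose $m=0$ term is exactly the paper's asserted asymptotic and whose domination is precisely where the hypothesis $|\operatorname{Re}(t)|<\tfrac12$ enters in both treatments.
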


\begin{proof}
Using the fact that
\begin{equation*}
 B_{2n}\sim (-1)^{n+1}4\sqrt{\pi n}\Big(\frac{n}{\pi {\rm e}}\Big)^{2n} \qquad \text{as} \quad n\to \infty,
\end{equation*}
we find that the radius of convergence for $f_1(\xi)$ is $2\pi$. On the other hand, using the fact that for $t\in \mathbb{C}^{\times}$ with $|\operatorname{Re}(t)|<1/2$, we have
\begin{equation*}
 \mathrm{Li}_{3-2n}\big({\rm e}^{2\pi \mathrm{i} t}\big)\sim \Gamma(1-3+2n)(-2\pi \mathrm{i} t)^{3-2n-1} \qquad \text{as}\quad n\to \infty,
\end{equation*}
we find that the radius of convergence of $f_2(\xi,t)$ is $r_2(t)=|t|$.

By the use of Lemma \ref{lemHad}, we find that provided $t\in \mathbb{C}^{\times}$ satisfies $|\operatorname{Re}(t)|<\frac12$, we have that $G(\xi,t)=(f_1\oast f_2(-,t))(\xi)$ converges for $|\xi|<r_1r_2(t)=2\pi |t|$.
\end{proof}

We now prove Proposition \ref{Boreltransprop}.

\begin{proof}[Proof of Proposition \ref{Boreltransprop}]
We use the integral representation of the Hadamard product in Lemma~\ref{lemHad}, together with the convergence results of Proposition~\ref{boreltransconv}. For $t\in \mathbb{C}^{\times}$ with $|\operatorname{Re}(t)|<1/2$ and $\rho \in (0,2\pi)$, we have for $|\xi|<\rho|t|$
\begin{equation*}
 G(\xi,t)=\frac{1}{2\pi \mathrm i}\int_{|s|=\rho}\frac{{\rm d}s}{s}f_1(s)f_2\left(\frac{\xi}{s},t\right),
\end{equation*}
where $f_1(\xi)$ and $f_2(\xi,t)$ are as in (\ref{f'sdef}).

Now notice that
\begin{gather*}
 f_1(\xi) =- \frac{1}{2\pi}\sum_{n=2}^{\infty} \frac{ B_{2n}}{ (2n)!} \xi^{2n-2}= - \frac{1}{2\pi}\frac{1}{\xi^2}\sum_{n=2}^{\infty} \frac{ B_{2n}}{ (2n)!} \xi^{2n}
 =- \frac{1}{2\pi}\left( \frac{1}{\xi({\rm e}^{\xi}-1)} -\frac{1}{\xi^2}+\frac{1}{2\xi}-\frac{1}{12}\right),\!
 \end{gather*}
where we have used the expression for the generating function of the Bernoulli numbers
\begin{equation*}
\frac{w}{{\rm e}^w-1} = \sum_{n=0}^{\infty} B_n \frac{w^n}{n!},
\end{equation*}
and the fact that except $B_1=-\frac12$, all odd Bernoulli numbers vanish. From the final expression we see that $f_1(\xi)$ admits an analytic continuation to a meromorphic function with simple poles at $\xi=2\pi \mathrm{i} \mathbb{Z} \setminus \{0\}$.

On the other hand, for $f_2(\xi,t)$ we find
\begin{align*}
 f_2(\xi,t) &= \sum_{n=2}^{\infty} \frac{\xi^{2n-2}}{(2n-2)!} \partial_t^{2n} {\rm Li}_3(Q)
 = \partial_{\xi}^2 \left(\sum_{n=2}^{\infty} \frac{\partial_t^{2n} {\rm Li}_3(Q)}{(2n)!} \xi^{2n}\right) \\
 &= \partial_{\xi}^2 \left( \frac{1}{2} \big({\rm Li}_3\big({\rm e}^{2\pi \mathrm{i}(t+\xi)}\big)+{\rm Li}_3\big({\rm e}^{2\pi \mathrm{i}(t-\xi)}\big)\big) - {\rm Li}_3\big({\rm e}^{2\pi \mathrm{i} t}\big)-\partial_t^2 {\rm Li}_3\big({\rm e}^{2\pi \mathrm{i} t}\big) \frac{\xi^2}{2} \right) \\
 &=\frac{(2\pi \mathrm{i})^2}{2}\big(\mathrm{Li}_1\big({\rm e}^{2\pi \mathrm{i}(t + \xi)}\big)+\mathrm{Li}_1\big({\rm e}^{2\pi \mathrm{i}(t - \xi)}\big)-2\mathrm{Li}_1 (Q)\big),
\end{align*}
so that $f_2(\xi,t)$ admits an analytic continuation in $\xi$ with branch cuts at $k \mp t\pm x/2\pi {\rm i}$ for $x\geq 0$ and $k\in \mathbb{Z}$. Furthermore the function $f_2(\xi/s,t)$ has branch cuts in $s$ along $\pm \frac{2\pi \mathrm{i} \xi}{2\pi \mathrm{i}(k-t)+x}$ for $x\geq 0$. The integral representation then becomes
\begin{gather*}
 G(\xi,t)=\frac{1}{2\mathrm{i}}\int_{|s|=\rho} \frac{{\rm d}s}{s}\left(\frac{1}{ s({\rm e}^{s}-1)}-\frac{1}{s^2}+\frac{1}{2s}-\frac{1}{12}\right)\\
\hphantom{G(\xi,t)=}{}\times \big(\mathrm{Li}_1\big({\rm e}^{2\pi \mathrm{i}(t + \xi/s)}\big)+\mathrm{Li}_1\big({\rm e}^{2\pi \mathrm{i}(t - \xi/s)}\big) -2\mathrm{Li}_1(Q) \big).
\end{gather*}
This expression has simple poles at $s\in 2\pi {\rm i} (\mathbb{Z}-\{0\})$, while for the parameters satisfying the hypotheses, the branches of $f_2(\xi/s,t)$ are all inside $|s|=\rho$. Hence, one may deform the contour to $\infty$, capturing the simple poles of the first factor (notice that this encircles the poles clock-wise), without crossing any branch cut of the second factor:
\begin{align*}
 G(\xi,t)&=-\pi \sum_{m\in \mathbb{Z}-\{0\}} \lim_{s\rightarrow 2\pi \mathrm{i} m} \left(\frac{(s-2\pi \mathrm{i} m)}{ s^2({\rm e}^{s}-1)}\right)\big(\mathrm{Li}_1\big({\rm e}^{2\pi \mathrm{i}(t + \xi/s)}\big)+\mathrm{Li}_1\big({\rm e}^{2\pi \mathrm{i}(t - \xi/s)}\big)-2\mathrm{Li}_1(Q)\big) \\
 &=-\pi \sum_{m\in \mathbb{Z}-\{0\}} \frac{1}{(2\pi {\rm i} m)^2}\big(\mathrm{Li}_1\big({\rm e}^{2\pi \mathrm{i} t + \xi/m}\big)+\mathrm{Li}_1\big({\rm e}^{2\pi \mathrm{i} t - \xi/m}\big)-2\mathrm{Li}_1(Q)\big)\\
 &=\frac{1}{4\pi}\sum_{m\in \mathbb{Z}-\{0\}} \frac{1}{m^2}\big(\mathrm{Li}_1\big({\rm e}^{2\pi \mathrm{i} t + \xi/m}\big)+\mathrm{Li}_1\big({\rm e}^{2\pi \mathrm{i} t - \xi/m}\big)-2\mathrm{Li}_1(Q)\big).
\end{align*}
The previous expression has branch cuts at $2\pi \mathrm{i}(t+k)m +mx$ for $x\geq 0$, $k\in \mathbb{Z}$, $m\in \mathbb{Z}-\{0\}$. In particular since $0<|\operatorname{Re}(t)|<1/2$, we have that the branches are away from $\mathbb{R}_{>0}$, and we can analytically continue $G(\xi,t)$ to $\xi>0$.
\end{proof}

\subsection[The Borel sum along $\mathbb{R}_{>0}$]{The Borel sum along $\boldsymbol{\mathbb{R}_{>0}}$}\label{BorelsumRapp}
\begin{Proposition}\label{WorformBorelsum}
Let $t\in \mathbb{C}^{\times}$ with $\operatorname{Im}(t)>0$ and let $\epsilon >0$. Then $W_{\mathbb{R}_{>0}}(\epsilon,t)$
admits the following representation:
\begin{equation}\label{altform2}
 W_{\mathbb{R}_{>0}}(\epsilon,t)=-\frac{1}{2\pi}\int_{\mathbb{R}+\mathrm{i}0^+}{\rm d}v\, \frac{1}{1-{\rm e}^v}\mathrm{Li}_2\big({\rm e}^{\check\epsilon v +2\pi \mathrm{i} t}\big).
\end{equation}
where $\mathbb{R}+\mathrm{i}0^+$ denotes a contour along $\mathbb{R}$ avoiding $0$ by doing a small detour along the upper half-plane.
\end{Proposition}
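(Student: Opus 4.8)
The plan is to start from the right-hand side of \eqref{altform2} and reduce it to the defining Borel-sum expression for $W_{\mathbb{R}_{>0}}(\epsilon,t)$. The first move is the change of variables $u=\check\epsilon v$; since $\check\epsilon=\epsilon/2\pi>0$ is real and positive the contour $\mathbb{R}+\mathrm{i}0^{+}$ is preserved, and the right-hand side becomes
\[
-\frac{1}{2\pi\check\epsilon}\int_{\mathbb{R}+\mathrm{i}0^{+}}\mathrm{d}u\,\frac{\mathrm{Li}_2\big({\rm e}^{u+2\pi\mathrm{i}t}\big)}{1-{\rm e}^{u/\check\epsilon}}.
\]
The point of this form is that the factor $\big(1-{\rm e}^{u/\check\epsilon}\big)^{-1}$ now carries the Laplace/Borel kernel: away from the line $\operatorname{Re}(u)=0$ it expands geometrically as $\sum_{n\ge 0}{\rm e}^{nu/\check\epsilon}$ for $\operatorname{Re}(u)<0$ and as $-\sum_{n\ge 1}{\rm e}^{-nu/\check\epsilon}$ for $\operatorname{Re}(u)>0$, so that after the rescaling $\xi=\mp n u$ each summand is precisely a kernel ${\rm e}^{-\xi/\check\epsilon}$. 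The hypotheses $\operatorname{Im}(t)>0$ and $\epsilon>0$ guarantee absolute convergence of the integral (via the asymptotics of the integrand as $\operatorname{Re}(u)\to\pm\infty$, where $\mathrm{Li}_2({\rm e}^{u+2\pi\mathrm{i}t})$ either tends to $0$ or grows only polynomially while $(1-{\rm e}^{u/\check\epsilon})^{-1}$ decays), and they keep the branch cuts of $\mathrm{Li}_2({\rm e}^{u+2\pi\mathrm{i}t})$ off the contour.

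For the bulk of the computation I would split the contour at $\operatorname{Re}(u)=0$, insert the two geometric expansions, and treat the terms with $n\ge 1$: rescaling $\xi=\mp n u$ on each half turns each into a genuine Laplace transform $\check\epsilon\int_0^{\infty}\mathrm{d}\xi\,{\rm e}^{-\xi/\check\epsilon}(\cdots)$, after which expanding $\mathrm{Li}_2$ as $\sum_{k\ge 1}k^{-2}{\rm e}^{k(\cdots)}$ and resumming over $n$ (which plays the role of the index $m$) reproduces exactly $\widetilde{G}(\xi,t)$ from the definition of $W_{\rho}$. This is the step that produces the integral $\check\epsilon\int_{\mathbb{R}_{>0}}\mathrm{d}\xi\,{\rm e}^{-\xi/\check\epsilon}\widetilde{G}(\xi,t)$.

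The remaining pieces — the $n=0$ term together with the contributions localized at $u=0$, where the geometric series sits on its boundary of convergence and the $\mathrm{i}0^{+}$ prescription is essential — must reassemble into the three explicit terms $-\tfrac1\epsilon\mathrm{Li}_3(Q)+\tfrac{\mathrm{i}}{2}\mathrm{Li}_2(Q)+\tfrac{\epsilon}{12}\mathrm{Li}_1(Q)$. Here the $n=0$ contribution integrates in closed form using $\int^{u}\mathrm{Li}_2({\rm e}^{u+2\pi\mathrm{i}t})\,\mathrm{d}u=\mathrm{Li}_3({\rm e}^{u+2\pi\mathrm{i}t})$ to give $-\tfrac{1}{2\pi\check\epsilon}\mathrm{Li}_3(Q)=-\tfrac1\epsilon\mathrm{Li}_3(Q)$, while the local analysis at $u=0$ is governed by the Laurent expansion $\big(1-{\rm e}^{u/\check\epsilon}\big)^{-1}=-\tfrac{\check\epsilon}{u}+\tfrac12-\tfrac{u}{12\check\epsilon}+\cdots$, whose coefficients are the Bernoulli numbers $B_0,B_1,B_2$; these are exactly the $\check\epsilon^{-1},\check\epsilon^{0},\check\epsilon^{1}$ terms separated off in the definition of $W_{\rho}$. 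I expect this last point to be the main obstacle: the naive term-by-term treatment diverges near $u=0$, so the contour splitting, the $\mathrm{i}0^{+}$ regularization, and the interchange of summation and integration have to be justified together, isolating precisely the finite local contributions $\tfrac{\mathrm{i}}{2}\mathrm{Li}_2(Q)$ and $\tfrac{\epsilon}{12}\mathrm{Li}_1(Q)$. This is the same delicate manipulation carried out for the topological free energy in \cite{ASTT21}, itself based on the techniques of \cite{garoufalidis2020resurgence}, whose arguments I would adapt essentially verbatim.
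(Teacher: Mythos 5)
Your route is genuinely different from the paper's. The paper's proof regularizes the kernel as $\big(1-{\rm e}^{2\pi y/\epsilon-\mathrm{i}\delta}\big)^{-1}$, integrates by parts once so that both factors become logarithms, folds the negative half-line onto the positive one, and splits the result into two pieces: a piece $\widetilde{H}$ evaluated in closed form, which produces $-\frac{1}{\epsilon}\mathrm{Li}_3(Q)+\frac{\mathrm{i}}{2}\mathrm{Li}_2(Q)$ (the $\mathrm{Li}_2$ coefficient being fixed by the branch choice $\log(-1)=-\mathrm{i}\pi$), and a piece $H$ which, after expanding one logarithm in a series and rescaling, yields $\frac{\epsilon}{12}\mathrm{Li}_1(Q)+\check\epsilon\int_0^\infty {\rm e}^{-\xi/\check\epsilon}\widetilde{G}(\xi,t)\,{\rm d}\xi$. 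Your plan instead expands the kernel geometrically on the two half-lines and pairs terms. That skeleton is sound: the $n=0$ term gives $-\frac{1}{\epsilon}\mathrm{Li}_3(Q)$ exactly as you say, and the paired $n\ge 1$ terms (pairing is forced, since each one-sided sum diverges like $\sum_n \mathrm{Li}_2(Q)/(2\pi n)$) can be rescaled and integrated by parts twice to produce $\check\epsilon\int_0^\infty {\rm e}^{-\xi/\check\epsilon}\widetilde{G}(\xi,t)\,{\rm d}\xi$.

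However, your accounting of the remaining two terms is wrong, and the error sits exactly at the step you defer. First, $\frac{\epsilon}{12}\mathrm{Li}_1(Q)$ is \emph{not} a local contribution at $u=0$: in your scheme it emerges from the paired sums themselves, as the boundary terms of the second integration by parts, $\sum_{n\ge1}\frac{\check\epsilon}{\pi n^2}\mathrm{Li}_1(Q)=\frac{\pi\check\epsilon}{6}\mathrm{Li}_1(Q)=\frac{\epsilon}{12}\mathrm{Li}_1(Q)$; more generally the paired (principal-value) part contains \emph{all} even-Bernoulli terms, so the Laurent coefficients $B_0$, $B_2$ you invoke never act separately. Second, the only contribution genuinely localized at $u=0$ is the half-residue from the detour — this is the $B_1$ term — and if you actually compute it for the contour as stated (detour in the upper half-plane, hence passing above the simple pole of $(1-{\rm e}^v)^{-1}$, a clockwise half-circle) you get $-\mathrm{i}\pi\cdot\frac{\mathrm{Li}_2(Q)}{2\pi}=-\frac{\mathrm{i}}{2}\mathrm{Li}_2(Q)$: the \emph{opposite} sign to the $+\frac{\mathrm{i}}{2}\mathrm{Li}_2(Q)$ in the definition of $W_{\mathbb{R}_{>0}}$. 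Nothing else in your decomposition can repair this, because principal-value pairing kills the odd-Bernoulli term (compare the toy identity $\mathrm{P.V.}\int_{\mathbb{R}}\frac{{\rm e}^{av}}{1-{\rm e}^v}{\rm d}v=\pi\cot\pi a$, which contains only even Bernoulli numbers, while the upper- and lower-detour integrals are $\pi\cot\pi a\pm \mathrm{i}\pi$).

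So the gap is concrete: carried out with the upper detour as written, your computation produces $W_{\mathbb{R}_{>0}}(\epsilon,t)-\mathrm{i}\,\mathrm{Li}_2(Q)$, i.e.\ the identity \eqref{altform2} holds for the prescription in which the path passes \emph{below} the origin (equivalently, the pole is pushed to $v=\mathrm{i}0^+$). This is in fact the prescription the paper's own proof implements silently: the regularization ${\rm e}^{2\pi y/\epsilon-\mathrm{i}\delta}$ moves the pole into the upper half-plane while the integration runs along the real axis (this is also what legitimizes the single-branch antiderivative used in its integration by parts), and the branch choice $\log(-1)=-\mathrm{i}\pi$ is the footprint of that choice; the same orientation issue enters the companion Parseval argument through the complex conjugation of $g_\delta$. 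Your proposal waves this away as a delicate manipulation to be "adapted essentially verbatim" from the references, but in your setup it reduces to a one-line residue computation whose sign must be confronted head-on; without either flipping the detour or tracking the $\pm\mathrm{i}\delta$ prescriptions, the proposal proves a statement that is off by $\mathrm{i}\,\mathrm{Li}_2(Q)$.
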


\begin{proof}
We start by performing the change of variables $y=\check{\epsilon} v$ on the right-hand side of (\ref{altform2}), obtaining
\begin{align*}
 -\frac{1}{2\pi\check{\epsilon}}\int_{\epsilon(\mathbb{R}+\mathrm{i} 0^+)}{\rm d}y \, \frac{1}{1-{\rm e}^{2\pi y/\epsilon }}\mathrm{Li}_2\big({\rm e}^{y+2\pi \mathrm{i} t}\big)
 &=-\frac{1}{2\pi\check{\epsilon}}\int_{\mathbb{R}+\mathrm{i} 0^+}{\rm d}y\, \frac{1}{1-{\rm e}^{2\pi y/\epsilon }}\mathrm{Li}_2\big({\rm e}^{y+2\pi \mathrm{i} t}\big)\\
 &=\lim_{\delta \to 0^+}-\frac{1}{2\pi\check{\epsilon}}\int_{\mathbb{R}+\mathrm{i} 0^+}{\rm d}y\, \frac{1}{1-{\rm e}^{2\pi y/\epsilon -i\delta }}\mathrm{Li}_2\big({\rm e}^{y+2\pi \mathrm{i} t}\big),
 \end{align*}
where in the second equality we have used that the range of $\epsilon$ allows us to deform the contour back to $\mathbb{R}+\mathrm{i} 0^+$. Now using
\begin{equation*}
 \frac{{\rm d}}{{\rm d}y}\big({-}\log\big(1-{\rm e}^{-2\pi y/\epsilon +\mathrm{i} \delta}\big)\big)=\frac{2\pi }{\epsilon(1-{\rm e}^{2\pi y/\epsilon -\mathrm{i} \delta})}
\end{equation*}
and integration by parts, we find
\begin{gather*}
 \lim_{\delta \to 0^+} -\frac{1}{2\pi\check{\epsilon}}\int_{\mathbb{R}+\mathrm{i} 0^+}{\rm d}y \,\frac{1}{1-{\rm e}^{2\pi y/\epsilon -\mathrm{i}\delta }}\mathrm{Li}_2\big({\rm e}^{y+2\pi \mathrm{i} t}\big)\\
\qquad{}=\lim_{\delta\to 0^+}\bigg[\frac{1}{2\pi }\log\big(1-{\rm e}^{-2\pi y/\epsilon +\mathrm{i}\delta}\big)\mathrm{Li}_2\big({\rm e}^{y+2\pi {\rm i}t}\big)\Big|_{y=-\infty}^{\infty}\\
\qquad\quad{}+\frac{1}{2\pi }\int_{\mathbb{R}}{\rm d}y\, \log\big(1-{\rm e}^{-2\pi y/\epsilon +\mathrm{i}\delta}\big)\log\big(1-{\rm e}^{y+2\pi {\rm i}t}\big)\bigg].
 \end{gather*}
Because $\epsilon>0$, we obtain that the boundary terms vanish. Furthermore,
splitting the integration over the left and right half intervals, one then obtains
\begin{gather*}
 \lim_{\delta\to 0^+} \bigg[\frac{1}{2\pi }\int_{0}^{\infty}{\rm d}y\, \log\big(1-{\rm e}^{-2\pi y/\epsilon +\mathrm{i} \delta}\big)\log\big(1-{\rm e}^{y+2\pi \mathrm{i} t}\big)\\
 \qquad\quad{}+\frac{1}{2\pi }\int_{0}^{\infty}{\rm d}y \,\log\big(1-{\rm e}^{2\pi y/\epsilon +\mathrm{i}\delta}\big)\log\big(1-{\rm e}^{-y+2\pi \mathrm{i} t}\big)\bigg]\\
 \qquad{} =\lim_{\delta \to 0^+} \widetilde{H}(\epsilon,t,\delta) + \lim_{\delta\to 0^+}H(\epsilon,t,\delta),
 \end{gather*}
where we have defined
\begin{gather*}
 \widetilde{H}(\epsilon,t,\delta):=\frac{1}{2\pi}\int_{0}^{\infty}{\rm d}y \, \log\big(1-{\rm e}^{-y+2\pi \mathrm{i} t}\big)\big(\log\big(1-{\rm e}^{2\pi y/\epsilon+\mathrm{i}\delta}\big)-\log\big(1-{\rm e}^{-2\pi y/\epsilon+\mathrm{i}\delta}\big)\big),\\
 H(\epsilon,t,\delta) = \frac{1}{2\pi }\int_{0}^{\infty}{\rm d}y\, \log\big(1-{\rm e}^{-2\pi y/\epsilon +\mathrm{i}\delta}\big)\big(\log\big(1-{\rm e}^{y+2\pi \mathrm{i} t}\big)+\log\big(1-{\rm e}^{-y+2\pi \mathrm{i} t}\big)\big).
\end{gather*}

We can compute the limit $\widetilde{H}(\epsilon,t):=\lim_{\delta \to 0^+}\widetilde{H}(\epsilon,t,\delta)$, obtaining
\begin{align*}
 \widetilde{H}(\epsilon,t)&=\lim_{\delta \to 0^+}\frac{1}{2\pi}\int_{0}^{\infty}{\rm d}y\, \log\big(1-{\rm e}^{-y+2\pi \mathrm{i} t}\big)\left(\log\left(-{\rm e}^{2\pi y/\epsilon-\mathrm{i}\delta}\frac{1-{\rm e}^{2\pi y/\epsilon+\mathrm{i}\delta}}{1-{\rm e}^{2\pi y/\epsilon -\mathrm{i}\delta}}\right)\right)\\
 &=\frac{1}{2\pi \epsilon}\int_{0}^{\infty}{\rm d}y\, (2\pi y-\mathrm{i}\pi \epsilon) \log\big(1-{\rm e}^{-y+2\pi \mathrm{i} t}\big),
 \end{align*}
where in the last equality we have used that as $\delta \to 0^+$:
\begin{equation*}
 \operatorname{Im}\left(-{\rm e}^{2\pi y/\epsilon-\mathrm{i}\delta}\frac{1-{\rm e}^{2\pi y/\epsilon+\mathrm{i}\delta}}{1-{\rm e}^{2\pi y/\epsilon -\mathrm{i}\delta}}\right)<0.
\end{equation*}

We can furthermore compute $\widetilde{H}(\epsilon,t)$ explicitly by performing an integration by parts to get rid of the log term:
\begin{align}
 \widetilde{H}(\epsilon,t)&=\frac{1}{2\pi \epsilon}\left(\big(\pi y^2-\pi \mathrm{i} \epsilon y\big)\log\big(1-{\rm e}^{-y+2\pi \mathrm{i} t}\big)\Big|_{y=0}^{\infty}-\int_{0}^{\infty}{\rm d}y \, \big(\pi y^2 -\pi \mathrm{i} \epsilon y\big) \frac{-1}{1-{\rm e}^{y-2\pi \mathrm{i} t}}\right)\nonumber\\
 &=-\frac{1}{2\epsilon}\int_0^{\infty}{\rm d}y \, \frac{y^2}{{\rm e}^{y-2\pi \mathrm{i} t}-1}+\frac{{\rm i}}{2}\int_0^{\infty}{\rm d}y \, \frac{y}{{\rm e}^{y-2\pi \mathrm{i} t}-1}.\label{Gcomp}
 \end{align}
Since $\operatorname{Im}(t)>0$, we find that $\big|{\rm e}^{2\pi \mathrm{i} t}\big|<1$, so that the last integrals in (\ref{Gcomp}) corresponds to
\begin{equation*}
 \widetilde{H}(\epsilon,t)=-\frac{1}{\epsilon}\mathrm{Li}_3(Q) +\frac{\mathrm{i}}{2}\mathrm{Li}_2(Q).
\end{equation*}

On the other hand, by expanding the first log term of $H$ and applying the Fubini--Tonelli theorem, we find that
\begin{equation*}
 H(\epsilon,t,\delta)
 =-\sum_{n=1}^\infty\frac{1}{2\pi }\int_{0}^{\infty}\mathrm{d}y\, \frac{{\rm e}^{-2\pi n y/\epsilon +\mathrm{i} n\delta}}{n}\big(\log\big(1-{\rm e}^{y+2\pi \mathrm{i} t}\big)+\log\big(1-{\rm e}^{-y+2\pi \mathrm{i} t}\big)\big).
\end{equation*}
Performing a change of variables in each integral, and interchanging integral and summations again, we obtain
\begin{gather*}
 H(\epsilon,t,\delta)
 =-\frac{1}{2\pi}\int_{0}^{\infty}{\rm d}y\, \mathrm{e}^{-2\pi y/\epsilon}\sum_{n=1}^\infty\frac{{\rm e}^{in\delta}} {n^2}\big( \log\big(1-{\rm e}^{y/n+2\pi \mathrm{i} t}\big)+\log\big(1-{\rm e}^{-y/n+2\pi \mathrm{i} t}\big)\big).
 \end{gather*}
Letting $H(\epsilon,t):=\lim_{\delta \to 0}H(\epsilon,t,\delta)$, we get
\begin{gather*}
H(\epsilon,t) =-\frac{1}{2\pi}\int_{0}^{\infty}{\rm d}y \, \mathrm{e}^{-2\pi y/\epsilon}\sum_{n=1}^\infty\frac{1} {n^2}\big(\log\big(1-{\rm e}^{y/n+2\pi \mathrm{i} t}\big)+\log\big(1-{\rm e}^{-y/n+2\pi \mathrm{i} t}\big)\big).
\end{gather*}
Finally, using that $-\frac{2\pi}{\epsilon}{\rm e}^{-2\pi y/\epsilon}=\frac{{\rm d}}{{\rm d}y}{\rm e}^{-2\pi y/\epsilon}$ and integrating by parts yields
\begin{gather*}
 H(\epsilon,t) =
\left[\frac{\epsilon {\rm e}^{-2\pi y/\epsilon}}{(2\pi)^2}\sum_{n=1}^\infty\frac{1} {n^2}\big(\log\big(1-{\rm e}^{y/n+2\pi \mathrm{i} t}\big)+\log\big(1-{\rm e}^{-y/n+2\pi \mathrm{i} t}\big)\big)\right]\bigg|_{y=0}^{\infty}\\
\hphantom{H(\epsilon,t) =}{}
-\epsilon\int_{0}^{\infty}dy \frac{{\rm e}^{-2\pi y/\epsilon}}{(2\pi)^2}\frac{{\rm d}}{{\rm d}y}\left[\sum_{n=1}^\infty\frac{1} {n^2}\big(\log\big(1-{\rm e}^{y/n+2\pi \mathrm{i} t}\big)+\log\big(1-{\rm e}^{-y/n+2\pi \mathrm{i} t}\big)\big)\right].
\end{gather*}
Using that the boundary term at $\infty$ vanishes, and interchanging the derivative with the sum, we obtain
\begin{align*}
H(\epsilon,t) &=-\frac{2\epsilon}{(2\pi)^2}\log(1-Q)\sum_{n=1}^{\infty}\frac{1}{n^2}+\check{\epsilon}\int_{0}^{\infty}{\rm d}y\, \mathrm{e}^{-2\pi y/\epsilon}\widetilde{G}(y,t)\\
 &=\frac{\epsilon}{12}\mathrm{Li}_1(Q)+\check{\epsilon}\int_{0}^{\infty}{\rm d}y \,\mathrm{e}^{-2\pi y/\epsilon}\widetilde{G}(y,t).
\end{align*}

The result then follows.
\end{proof}

Finally, the result from Proposition \ref{WRWnp} follows from the previous proposition, together with the following:

\begin{Proposition}\label{woronowiczlemma} Let $t\in \mathbb{C}$ be such that $0<\operatorname{Re}(t)<1$, $\operatorname{Im}(t)>0$, and let $\epsilon>0$. Furthermore, assume that $\operatorname{Re}(t)<\operatorname{Re}(\check\epsilon +1)$. Then $W_{\rm np}(\epsilon,t)$
admits the following expression:
\begin{equation*}
 W_{\rm np}(\epsilon,t)=-\frac{1}{2\pi}\int_{\mathbb{R}+\mathrm{i}0^+}{\rm d}v\,\frac{1}{1-{\rm e}^v}\mathrm{Li}_2\big({\rm e}^{\check\epsilon v +2\pi \mathrm{i} t}\big).
\end{equation*}
In particular, on their common domains of definition we have
\begin{equation*}
 W_{\mathbb{R}_{>0}}(\epsilon,t)= W_{{\rm np}}(\epsilon,t).
\end{equation*}
\end{Proposition}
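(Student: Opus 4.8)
The plan is to prove the integral formula for $W_{\rm np}(\epsilon,t)$ first and then obtain the equality $W_{\mathbb{R}_{>0}}=W_{\rm np}$ for free. Write $I(\epsilon,t):=-\frac{1}{2\pi}\int_{\mathbb{R}+\mathrm{i}0^+}\frac{\mathrm{d}v}{1-\mathrm{e}^v}\,\mathrm{Li}_2(\mathrm{e}^{\check\epsilon v+2\pi \mathrm{i} t})$ for the proposed right-hand side. Since Proposition~\ref{WorformBorelsum} already gives $W_{\mathbb{R}_{>0}}(\epsilon,t)=I(\epsilon,t)$ on the overlap of the relevant domains, it suffices to establish $W_{\rm np}(\epsilon,t)=I(\epsilon,t)$; the ``in particular'' claim $W_{\mathbb{R}_{>0}}=W_{\rm np}$ then follows by transitivity wherever $\operatorname{Im}(t)>0$ and $\epsilon>0$ simultaneously hold.

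First I would dispose of convergence and analyticity. On $\mathbb{R}+\mathrm{i}0^+$ the factor $(1-\mathrm{e}^v)^{-1}$ decays exponentially as $\operatorname{Re}(v)\to+\infty$ and tends to $1$ as $\operatorname{Re}(v)\to-\infty$, while $\mathrm{Li}_2(\mathrm{e}^{\check\epsilon v+2\pi \mathrm{i} t})$ grows only quadratically at $+\infty$ (via its inversion formula, since there $|\mathrm{e}^{\check\epsilon v+2\pi \mathrm{i} t}|\to\infty$) and decays exponentially at $-\infty$ because $\operatorname{Im}(t)>0$. Hence $I(\epsilon,t)$ is absolutely convergent, holomorphic in $t$ on the stated window, and may be differentiated under the integral sign. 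The role of the condition $\operatorname{Re}(t)<\operatorname{Re}(\check\epsilon+1)$ is to keep us inside the strip in which the contour deformations and residue sums below are legitimate.

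The central step is to compare $t$-derivatives. Using $\partial_t\,\mathrm{Li}_2(\mathrm{e}^{\check\epsilon v+2\pi \mathrm{i} t})=2\pi \mathrm{i}\,\mathrm{Li}_1(\mathrm{e}^{\check\epsilon v+2\pi \mathrm{i} t})$ and $\mathrm{Li}_1(z)=-\log(1-z)$ gives
\[
\partial_t I(\epsilon,t)=\mathrm{i}\int_{\mathbb{R}+\mathrm{i}0^+}\frac{\mathrm{d}v}{1-\mathrm{e}^v}\,\log\bigl(1-\mathrm{e}^{\check\epsilon v+2\pi \mathrm{i} t}\bigr).
\]
On the other hand, \eqref{tderWnp} yields $\partial_t W_{\rm np}(\epsilon,t)=2\pi\int_{\mathbb{R}+\mathrm{i}0^+}\frac{\mathrm{d}s}{s}\frac{\mathrm{e}^{st}}{(\mathrm{e}^s-1)(\mathrm{e}^{\check\epsilon s}-1)}=2\pi\log\mathcal{S}_2(t\mid\check\epsilon,1)$. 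Thus $\partial_t I=\partial_t W_{\rm np}$ is precisely the one-derivative-lower Woronowicz-type identity asserting that these two integral representations of $2\pi\log\mathcal{S}_2(t\mid\check\epsilon,1)$ coincide. I would prove this base identity by the same contour manipulation as in Proposition~\ref{WorformBorelsum}: split the $v$-contour at the origin, expand $\mathrm{Li}_1$ by its defining series where it converges and by its inversion elsewhere, integrate by parts, and collect the residues at $s\in 2\pi \mathrm{i}(\mathbb{Z}\setminus\{0\})$ together with the pole at $s=0$; this reproduces the $\tfrac{\mathrm{d}s}{s}$ representation, equivalently the integral representation~\eqref{S2int} of $\log\mathcal{S}_2$.

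With $\partial_t(I-W_{\rm np})=0$ on a connected domain, it remains to fix the $t$-independent constant $C(\epsilon)=I-W_{\rm np}$, and I expect \emph{this} to be the main obstacle rather than the derivative comparison. I would determine $C(\epsilon)$ by matching the two sides as $\operatorname{Im}(t)\to+\infty$: both admit an expansion consisting of a polynomial in $t$ (arising from the order-four pole at $s=0$ of the $W_{\rm np}$-integrand, respectively from the $\log^2$ term of the $\mathrm{Li}_2$ inversion in $I$) plus corrections that are exponentially small in $Q=\mathrm{e}^{2\pi \mathrm{i} t}$; equating the polynomial parts, and in particular their constant terms, forces $C(\epsilon)=0$. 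The delicate points are the uniform control needed to compare these asymptotics and the justification of the contour shifts in the prescribed parameter window. Alternatively, one can avoid the integration constant entirely by transforming the $W_{\rm np}$-integral directly into $I$ through a single (longer) residue-and-series computation parallel to Proposition~\ref{WorformBorelsum}. Either way, combining $W_{\rm np}=I$ with Proposition~\ref{WorformBorelsum} gives $W_{\mathbb{R}_{>0}}(\epsilon,t)=W_{\rm np}(\epsilon,t)$ on the common domain.
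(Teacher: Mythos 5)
Your reduction has the right outer structure---indeed the paper's own concluding step is the same transitivity through Proposition~\ref{WorformBorelsum}---but there is a genuine gap at what you yourself call the central step. The identity $\partial_t I(\epsilon,t)=2\pi\log\mathcal{S}_2(t\mid\check\epsilon,1)$ is a Woronowicz-type formula that carries essentially the entire analytic difficulty of the proposition, and the method you sketch for it fails precisely in the parameter range at stake. For real $\epsilon>0$ the $v$-contour cannot be closed: with $\check\epsilon$ real, the integrand $\frac{1}{1-{\rm e}^v}\log\big(1-{\rm e}^{\check\epsilon v+2\pi\mathrm{i}t}\big)$ does not decay as $\operatorname{Im}(v)\to+\infty$ at bounded $\operatorname{Re}(v)$ (its modulus stays of order $|Q|$ there), and the would-be residue series over the poles $v=2\pi\mathrm{i}m$,
\begin{equation*}
\sum_{m\geq 1}\log\big(1-{\rm e}^{2\pi\mathrm{i}m\check\epsilon}\,Q\big),
\end{equation*}
has terms of constant modulus and diverges; for $\operatorname{Im}(\check\epsilon)>0$ this sum would converge and reproduce the product formula \eqref{S2prod}, but that formula is exactly what breaks down at real $\epsilon$, which is why only integral representations of $\mathcal{S}_2$ survive there. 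Likewise, the manipulations of Proposition~\ref{WorformBorelsum} that you invoke (integration by parts, series splitting, inversion) transform $I$ into the \emph{Borel-sum} form, not into the contour representation \eqref{S2int}; they provide no bridge to $\log\mathcal{S}_2$. In the paper that bridge is the whole proof: following \cite{garoufalidis2020resurgence}, one pairs $f_\delta(x)={\rm e}^{-\delta x}\mathrm{Li}_2\big({\rm e}^{\check\epsilon x+2\pi\mathrm{i}t}\big)$ with $g_\delta(x)={\rm e}^{\delta x}\big(1-{\rm e}^{x+\mathrm{i}\delta}\big)^{-1}$, computes both Fourier transforms in closed form by residues (there Jordan's lemma \emph{does} apply, because the kernel ${\rm e}^{\mathrm{i}y\zeta}$ supplies the decay), and unitarity of the Fourier transform turns the pairing directly into the defining ${\rm d}s/s^2$ integral of $W_{\rm np}$---with no integration constant left to fix. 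The same Parseval argument is what the paper uses for the analogous $\log\mathcal{S}_2$ identity needed for $S_{\mathbb{R}_{>0}}$, so your key lemma is true, but as written it is asserted rather than proved.

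Two smaller points. Your constant-fixing step is described incorrectly but reaches the right conclusion: since the contour $\mathbb{R}+\mathrm{i}0^+$ passes \emph{above} $s=0$, neither side has any polynomial-in-$t$ part; by the residue form \eqref{FnpFNS} for $W_{\rm np}$, and a dominated-convergence estimate for $I$, both sides simply tend to $0$ as $\operatorname{Im}(t)\to+\infty$, which does force $C(\epsilon)=0$. And if you want to keep the differentiate-and-match architecture, the honest way to supply the derivative identity is the same Fourier--Parseval pairing one polylogarithm order down---but at that point you have reproduced the paper's argument, and the detour through $\partial_t$ and the constant buys nothing.
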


\begin{proof}
We follow again the method of \cite{garoufalidis2020resurgence}, based on the unitarity of the
Fourier transform:
\[
\langle f,g\rangle=\langle Ff,Fg\rangle,\qquad \langle f,g\rangle=\int_{\mathbb{R}} \mathrm{d}x\, f(x)\overline{g(x)}, \qquad (F\psi)(x)=\int_{\mathbb{R}} {\rm d}y\, {\rm e}^{2\pi\mathrm{i} xy}\psi(y).
\]
We define for sufficiently small $\delta>0$,
\begin{equation*}
f_\delta(x):={\rm e}^{-\delta x}{\rm Li}_2\big({\rm e}^{\check\epsilon x
+2\pi \mathrm{i} t}\big),\qquad
g_\delta(x):={\rm e}^{+\delta x}\frac{1}{1-{\rm e}^{x+\mathrm{i}\delta}}.
\end{equation*}
We then easily see that
\begin{equation*}
\lim_{\delta \to 0^+}-\frac{1}{2\pi}\langle f_\delta,g_\delta\rangle=-\frac{1}{2\pi}\int_{\mathbb{R}+\mathrm{i} 0^+}{\rm d}v\, \frac{1}{1-{\rm e}^v}\mathrm{Li}_2\big({\rm e}^{\check{\epsilon} v +2\pi \mathrm{i} t}\big).
\end{equation*}
We now compute the Fourier transform of $f_\delta$ and $g_\delta$. Setting $\zeta=2\pi x+\mathrm{i}\delta$, we find that
\begin{equation*}
 Ff_\delta(x)=\int_{\mathbb{R}}\mathrm{d}y \, {\rm e}^{\mathrm{i} y\zeta}\mathrm{Li}_2({\rm e}^{\check\epsilon y+2\pi \mathrm{i} t})=\frac{\mathrm{i}\check\epsilon}{\zeta}\int_{\mathbb{R}}{\rm d}y\, {\rm e}^{\mathrm{i} y\zeta}\mathrm{Li}_1\big({\rm e}^{\check{\epsilon} y + 2\pi \mathrm{i} t}\big)=\left(\frac{\check \epsilon}{\zeta}\right)^2\int_{\mathbb{R}}{\rm d}y\, \frac{{\rm e}^{\mathrm{i} y \zeta}}{1-{\rm e}^{-\check{\epsilon} y -2\pi \mathrm{i} t}},
\end{equation*}
where we have integrated by parts, and used that the boundary terms vanish. The last integral has simple poles at $y=2\pi \mathrm{i}(k-t)/\check{\epsilon}$, and under our assumptions for the parameters $t$ and $\epsilon$, it is easy to check that the poles on the upper half-plane correspond to $k>0$, while those in the lower half-plane correspond to $k\leq 0$. If $\operatorname{Re}(x)>0$, by an application of Jordan's lemma and the residue theorem, we can compute $Ff_\delta(x)$ by summing up the residues in the upper half-plane, obtaining
\begin{align*}
 \left(\frac{\check \epsilon}{\zeta}\right)^2\int_{\mathbb{R}}{\rm d}y \frac{{\rm e}^{\mathrm{i} y \zeta}}{1-{\rm e}^{-\check{\epsilon} y -2\pi \mathrm{i} t}}&=2\pi \mathrm{i} \left(\frac{\check \epsilon}{\zeta}\right)^2\sum_{k=1}^{\infty}\frac{{\rm e}^{\mathrm{i} y\zeta}}{\check{\epsilon}}\Big|_{y=2\pi \mathrm{i} (k-t)/\check{\epsilon}}=\frac{2\pi \mathrm{i}\check\epsilon}{\zeta^2}{\rm e}^{2\pi t\zeta /\check{\epsilon}}\sum_{k=1}^{\infty}{\rm e}^{-2\pi k\zeta/\check{\epsilon} }\\
 &=\frac{\pi \mathrm{i} \check{\epsilon}}{\zeta^2}{\rm e}^{\pi \zeta(2t-1)/\check{\epsilon}}\left(2{\rm e}^{-\pi \zeta/\check{\epsilon}}\sum_{k=0}^{\infty}{\rm e}^{-2\pi \zeta k/\check{\epsilon}}\right)=\frac{\pi \mathrm{i} \check{\epsilon}}{\zeta^2}\frac{{\rm e}^{\pi \zeta(2t-1)/\check{\epsilon}}}{\sinh(\pi \zeta/\check{\epsilon})},
 \end{align*}
where in the last equality we have used the Dirichlet series representation of $1/\sinh(z)$. Similarly, if $\operatorname{Re}(x)<0$, we can compute $Ff_{\epsilon}$ summing up the residues in the lower half-plane, obtaining
\begin{align*}
 \left(\frac{\check \epsilon}{\zeta}\right)^2\int_{\mathbb{R}}{\rm d}y \frac{{\rm e}^{\mathrm{i} y \zeta}}{1-{\rm e}^{-\check{\epsilon} y -2\pi \mathrm{i} t}}&=-2\pi \mathrm{i} \left(\frac{\check \epsilon}{\zeta}\right)^2\sum_{k=0}^{-\infty}\frac{{\rm e}^{\mathrm{i} y\zeta}}{\check{\epsilon}}\Big|_{y=2\pi \mathrm{i} (k-t)/\check{\epsilon}}=-\frac{2\pi \mathrm{i} \check\epsilon}{\zeta^2}{\rm e}^{2\pi t\zeta /\check{\epsilon}}\sum_{k=0}^{\infty}{\rm e}^{2\pi k\zeta/\check{\epsilon} }\\ &=\frac{\pi \mathrm{i} \check\epsilon}{\zeta^2}{\rm e}^{\pi \zeta(2t-1)/\check{\epsilon}}\left(-2{\rm e}^{\pi \zeta/\check{\epsilon}}\sum_{k=0}^{\infty}{\rm e}^{2\pi \zeta k/\check{\epsilon}}\right)=\frac{\pi \mathrm{i} \check\epsilon}{\zeta^2}\frac{{\rm e}^{\pi \zeta(2t-1)/\check{\epsilon}}}{\sinh(\pi \zeta/\check{\epsilon})},
\end{align*}
so that $Ff_{\delta}(x)$ exists for $x\neq 0$ and
\begin{equation*}
Ff_\delta(x)=\frac{\pi \mathrm{i} \check\epsilon}{\zeta^2}\frac{{\rm e}^{\pi \zeta(2t-1)/\check{\epsilon}}}{\sinh(\pi \zeta/\check{\epsilon})}.
\end{equation*}

Similarly, we can compute $Fg_\delta(x)$, obtaining that for $x\neq 0$,
\begin{equation*}
 \overline{Fg_\delta(x)}=\pi \mathrm{i} \frac{{\rm e}^{(\delta-\pi )\zeta}}{\sinh(\pi \zeta)}.
\end{equation*}

We then have that
\begin{equation*}
\begin{split}
 \lim_{\delta\to 0^+}-\frac{1}{2\pi}\langle f_\delta,g_\delta\rangle&=\lim_{\delta\to 0^+}-\frac{1}{2\pi}\langle Ff_\delta,Fg_\delta\rangle\\
 &=-\frac{1}{2\pi}\int_{\mathbb{R}+\mathrm{i} 0^+}{\rm d}x \left(\frac{\pi \mathrm{i} \check\epsilon}{(2\pi x)^2}\frac{{\rm e}^{\pi (2\pi x)(2t-1)/\check{\epsilon}}}{\sinh(2\pi^2x /\check{\epsilon})}\right)\left(\frac{\pi {\rm i}{\rm e}^{-2\pi^2x}}{\sinh(2\pi^2x)}\right)\\
 &=-\frac{1}{2\pi}\int_{\mathbb{R}+{\rm i}0^+}{\rm d}x \left(\frac{2\pi \mathrm{i} \check\epsilon}{(2\pi x)^2}\frac{{\rm e}^{4\pi^2xt/\check\epsilon}}{{\rm e}^{4\pi^2x/\check\epsilon}-1}\right)\left(\frac{2\pi \mathrm{i}}{{\rm e}^{4\pi^2x}-1}\right)\\
 &=-\frac{1}{2\pi}\int_{\check{\epsilon}^{-1}\cdot(\mathbb{R}+\mathrm{i} 0^+)}\frac{{\rm d}s}{(2\pi)^2}\frac{(2\pi \mathrm{i} \check\epsilon)^2}{(s\check\epsilon/2\pi)^2}\frac{{\rm e}^{st}}{({\rm e}^s-1)({\rm e}^{\check\epsilon s}-1)}, \qquad s=(2\pi)^2x/\check\epsilon\\
 &=2\pi \int_{\mathbb{R}+\mathrm{i} 0^+}\frac{{\rm d}s}{s^2} \frac{{\rm e}^{st}}{({\rm e}^s-1)({\rm e}^{\check\epsilon s}-1)},
\end{split}
\end{equation*}
where we used the fact that the range of the parameter $\epsilon$ allows us to deform the contour back to $\mathbb{R}+\mathrm{i}0^{+}$. The result then follows. In particular, from Proposition \ref{WorformBorelsum} it follows that
\begin{equation*}
 W_{\mathbb{R}_{>0}}(\epsilon,t)=W_{{\rm np}}(\epsilon,t)
\end{equation*}
on their common domains of definition.
\end{proof}

\subsection[Proof of the limits to $l_{\infty}$]{Proof of the limits to $\boldsymbol{l_{\infty}}$}\label{linftylimit}

Here we prove Proposition \ref{rblimit1}, dealing with the sided limits to $l_{\infty}$.
\begin{proof}
By Proposition \ref{Stokesjumps}, we find that
\begin{equation*}
 W_{\rho_k}-W_{\rho_{k+1}}=\mathrm{i}\check\epsilon \mathrm{Li}_2\big({\rm e}^{2\pi \mathrm{i}(t+k)/\check \epsilon}\big).
\end{equation*}
Denoting $w={\rm e}^{2\pi \mathrm it/\check{\epsilon}}$ and $\widetilde{q}={\rm e}^{2\pi \mathrm{i}/\check{\epsilon}}$, we find
\begin{gather*}
 W_{\rho_0}(\epsilon,t)-\lim_{k\to \infty}W_{\rho_k}(\epsilon,t) =\sum_{k=0}^{\infty}W_{\rho_k}(\epsilon,t)-W_{\rho_{k+1}}(\epsilon,t)
 =\mathrm{i}\check\epsilon\sum_{k=0}^{\infty}\mathrm{Li}_2\big(w\widetilde{q}^k\big).
 \end{gather*}
We now use the following identity:
\begin{equation}
\sum_{k=0}^\infty \mathrm{Li}_2\big(w\widetilde{q}^k\big)=\sum_{l=1}^\infty\frac{1}{l^2}\frac{w^l}{1-\widetilde{q}^l}.
\label{id3}
\end{equation}
 In order to verify (\ref{id3}), we use that $|\widetilde{q}|<1$ and $|w|<1$ to see that both sides converge. One can then act on both sides with $w\frac{{\rm d}}{{\rm d}w}$.
The left side of the resulting equation is easily seen to be equal to
\[
-\sum_{k=0}^\infty\log\big(1-w\widetilde{q}^k\big)=\sum_{l=1}^\infty\frac{1}{l}\frac{w^l}{1-\widetilde{q}^l}.
\]
It follows that (\ref{id3}) holds up to addition of a term which is
constant with respect to $w$. In order to fix this freedom, it
suffices to note that (\ref{id3}) holds for $w=0$.

Using the previous identities, we obtain
\begin{equation*}
 W_{\rho_0}(\epsilon,t)-\lim_{k\to \infty}W_{\rho_k}(\epsilon,t)={\rm i}\check\epsilon\sum_{l=1}^\infty\frac{w^l}{l^2(1-\widetilde{q}^l)}.
\end{equation*}
Now notice that under our assumptions on $t$ and $\epsilon$, we have $W_{\rho_0}=W_{{\rm np}}$ by Proposition \ref{woronowiczlemma}. We now show that $W_{{\rm np}}$ admits the following representation as sum over residues:
\begin{equation*}
 W_{{\rm np}}(\epsilon,t)=\mathrm{i}\check\epsilon\sum_{l=1}^\infty\frac{w^l}{l^2(1-\widetilde{q}^l)}
 +\frac{1}{\mathrm{i}} \sum_{k=1}^{\infty} \frac{1}{k^2} \frac{{\rm e}^{2\pi \mathrm{i} t k}}{({\rm e}^{\mathrm{i} k \epsilon}-1)}.
\end{equation*}
In order to see this, let us recall that by definition we have
\begin{align*}
W_{\rm np}(\epsilon,t) =2\pi \int_{\mathbb{R}+\mathrm i0^+} \frac{{\rm d}s}{s^2}\frac{{\rm e}^{s t}}{({\rm e}^s-1)({\rm e}^{\check{\epsilon}s}-1)}.
\end{align*}
The integrand has two series of poles, one
at $u=2\pi\frac{\mathrm i}{\check\epsilon}l$, $l\in \mathbb{Z}$ and the other at
$u=2\pi \mathrm i k$, $k\in \mathbb{Z}$. We can compute the previous integral by closing the contour in the upper
half-plane. The contributions from the poles at $u=2\pi\frac{\mathrm i}{\check\epsilon}l$ are calculated as
\[
\begin{aligned}
\frac{(2\pi)^2 \mathrm i}{\check{\epsilon}}\frac{{\rm e}^{st}}{s^2({\rm e}^s-1)}\Big|_{s=2\pi \mathrm{i} l/{\check\epsilon}} =\frac{(2\pi)^2 \mathrm i}{\check{\epsilon}}\frac{{\rm e}^{2\pi {\rm i}l t/\check\epsilon}}{(2\pi \mathrm{i} l/\check\epsilon)^2({\rm e}^{2\pi \mathrm{i} l/\check\epsilon}-1)}
=\mathrm{i}\check{\epsilon}\frac{w^l}{l^2(1-\widetilde{q}^l)}
\end{aligned}
\]
while the contributions of the poles at $u=2\pi \mathrm i k$ gives
\begin{equation*}
 (2\pi)^2\mathrm{i}\frac{{\rm e}^{st}}{s^2({\rm e}^{\check\epsilon s}-1)}\Big|_{s=2\pi \mathrm{i} k} =(2\pi)^2 \mathrm i\frac{{\rm e}^{2\pi \mathrm{i} k t}}{(2\pi \mathrm{i} k)^2({\rm e}^{2\pi \mathrm{i} k\check\epsilon}-1)}
=\frac{1}{\mathrm{i}}\frac{{\rm e}^{2\pi \mathrm{i} k t}}{k^2({\rm e}^{ \mathrm{i} k\epsilon}-1)}.
\end{equation*}

In particular, we conclude that
\begin{align*}
 \lim_{k\to \infty}W_{\rho_k}(\epsilon,t)&=\lim_{k\to \infty}(W_{\rho_k}(\epsilon,t)-W_{\rho_0}(\epsilon,t))+W_{\rho_0}(\epsilon,t)\\
 &=-\mathrm{i}\check\epsilon\sum_{l=1}^\infty\frac{w^l}{l^2(1-\widetilde{q}^l)} +\mathrm{i}\check\epsilon\sum_{l=1}^\infty\frac{w^l}{l^2(1-\widetilde{q}^l)}
 +\frac{1}{\mathrm{i}} \sum_{k=1}^{\infty} \frac{1}{k^2} \frac{{\rm e}^{2\pi \mathrm{i} t k}}{({\rm e}^{\mathrm{i} k \epsilon}-1)}\\
 &=\frac{1}{\mathrm{i}} \sum_{k=1}^{\infty} \frac{1}{k^2} \frac{{\rm e}^{2\pi \mathrm{i} t k}}{({\rm e}^{\mathrm{i} k \epsilon}-1)}
 =W(\epsilon,t).
 \end{align*}

Now we prove the other limit. Using the jumps along the Stokes rays $-l_{k}$ for $k<0$, we find that
\begin{align*}
 W_{-\rho_0}-\lim_{k\to -\infty}W_{-\rho_k}&=\sum_{k=-1}^{-\infty}W_{-\rho_{k+1}}-W_{-\rho_{k}}
 =- \mathrm{i}\check{\epsilon}\sum_{k=-1}^{-\infty} \mathrm{Li}_2\big({\rm e}^{-2\pi \mathrm i(t+k)/\check \epsilon}\big)\\
 &=-\mathrm{i}\check{\epsilon}\sum_{k=0}^{\infty} \mathrm{Li}_2\big(w^{-1}\widetilde{q}^k\big)+\mathrm{i}\check{\epsilon}\mathrm{Li}_2\big(w^{-1}\big).
\end{align*}
Using the constraints on $t$ and $\epsilon$, we find that $|w^{-1}|<1$ and $|\widetilde{q}|<1$, so that we can expand in series as in Proposition \ref{rblimit1} and write
\begin{equation*}
 W_{-\rho_0}-\lim_{k\to -\infty}W_{-\rho_k}=-\mathrm{i}\check\epsilon\sum_{l=1}^\infty\frac{w^{-l}}{l^2(1-\widetilde{q}^l)}+\mathrm{i}\check{\epsilon}\mathrm{Li}_2\big(w^{-1}\big).
\end{equation*}

On the other hand, using the relation $W_{-\rho}(\epsilon,t)=-W_{\rho}(-\epsilon,t)+\mathrm{i}\mathrm{Li}_2(Q)$, we find that
\begin{equation*}
 W_{-\rho_0}(\epsilon,t)=-W_{\rho_0}(-\epsilon,t)+\mathrm{i}\mathrm{Li}_2(Q)=-W_{\rm np}(-\epsilon,t)+\mathrm{i}\mathrm{Li}_2(Q).
\end{equation*}
 Using the same argument as before with the integral representation of $W_{\rm np}$, we find that
\begin{align*}
 W_{-\rho_0}(\epsilon,t)&=\mathrm{i}\check\epsilon\sum_{l=1}^\infty\frac{w^{-l}}{l^2(1-\widetilde{q}^{-l})}
 -\frac{1}{\mathrm{i}} \sum_{k=1}^{\infty} \frac{1}{k^2} \frac{{\rm e}^{2\pi \mathrm{i} t k}}{({\rm e}^{-\mathrm{i} k \epsilon}-1)}+\mathrm{i}\mathrm{Li}_2(Q)\\
 &=\mathrm{i}\check\epsilon\sum_{l=1}^\infty\frac{w^{-l}}{l^2(1-\widetilde{q}^{-l})}-
 W(-\epsilon,t)+\mathrm{i}\mathrm{Li}_2(Q)
 =\mathrm{i}\check\epsilon\sum_{l=1}^\infty\frac{w^{-l}}{l^2(1-\widetilde{q}^{-l})}+W(\epsilon,t).
 \end{align*}

Joining our results together, we conclude that
\begin{equation*}
 \begin{split}
 \lim_{k\to -\infty}W_{-\rho_k}(\epsilon,t)&=\mathrm{i}\check\epsilon\sum_{l=1}^{\infty}\bigg(\bigg(\frac{w^{-l}}{l^2(1-\widetilde{q}^{-l})}\bigg) + \bigg(\frac{w^{-l}}{l^2(1-\widetilde{q}^{l})}\bigg)\bigg)+W(\epsilon,t) -\mathrm{i}\check{\epsilon}\mathrm{Li}_2\big(w^{-1}\big).
 \end{split}
\end{equation*}
Finally, notice that
\begin{equation*}
 \begin{split}
 \sum_{l=1}^{\infty}\bigg(\bigg(\frac{w^{-l}}{l^2(1-\widetilde{q}^{-l})}\bigg) + \bigg(\frac{w^{-l}}{l^2(1-\widetilde{q}^{l})}\bigg)\bigg)&=\sum_{l=1}^{\infty}\frac{w^{-l}}{l^2}=\mathrm{Li}_2\big(w^{-1}\big),
 \end{split}
\end{equation*}
where in the last equality we used that $\big|w^{-1}\big|<1$. Hence, we conclude that
\begin{equation*}
 \lim_{k\to -\infty}W_{-\rho_k}(\epsilon,t)=W(\epsilon,t).\tag*{\qed}
\end{equation*}\renewcommand{\qed}{}
\end{proof}

\section{Special functions}\label{specialfunctapp}
\subsection{Multiple gamma and multiple sine functions}
Here we introduce the building blocks for the solutions of the refined difference equation. We will follow the exposition and notation of \cite{Narukawa}.

Suppose $\omega_1,\dots,\omega_r \in \mathbb{C}$ all lie on the same side of some straight line through the origin, then the multiple zeta function is defined by the series
\begin{equation*}
\zeta_r(s,z\mid\underline{\omega}) = \sum_{n_1,\dots,n_r=0}^{\infty} \frac{1}{(n_1\omega_1+\dots+ n_r \omega_r +z)^s},
\end{equation*}
for $z\in \mathbb{C}$ and $\operatorname{Re}(s) >r$, where the exponential is rendered one-valued. This series is holomorphic in the domain $\{ \operatorname{Re} (s)>r\}$ and is analytically continued to $s\in \mathbb{C}$.

Since it is holomorphic at $s=0$ we can define the multiple gamma function by
\begin{equation*}
\Gamma_r(z | \underline{\omega}) =\exp \left( \frac{\partial}{\partial s} \zeta_r(s,z \mid \underline{\omega})|_{s=0} \right),
\end{equation*}
where
$ |\underline{\omega}| = \omega_1 + \dots + \omega_r, $
and the multiple sine function is given by
\begin{equation*}
\sin_r(z\mid\underline{\omega}) = \Gamma_r(z\mid\underline{\omega})^{-1} \Gamma_r(|\underline{\omega}| - z \mid \underline{\omega})^{(-1)^r}.
\end{equation*}
This satisfies the functional equations
\begin{gather}
\sin_r(z+\omega_j \mid \underline{\omega}) =\sin_{r-1} \big(z \mid \underline{\omega}^{-}(j)\big)^{-1} \sin_r (z\mid\underline{\omega}),\nonumber \\
\sin_r(z\mid \underline{\omega}) \sin_r(|\underline{\omega}|-z \mid \underline{\omega})^{(-1)^r}=1,\label{eq:sinefunc}
\end{gather}
where \begin{equation*} \underline{\omega}^{-}(j)= \big(\omega_1, \dots,\overset{\vee}{\omega_j},\dots,\omega_r\big),\end{equation*}
and $\overset{\vee}{\omega_j}$ means omitting $\omega_j$.

For $z\in \mathbb{C}$, $\underline{\omega}=(\omega_1,\dots,\omega_r)$, $\omega_j \in \mathbb{C}\setminus\{0\}$ we define the multiple Bernoulli polynomials $B_{r,n}(z,\underline{\omega})$ in terms of the generating function
\begin{equation}\label{eq:Bernoullipolgen}
\mathcal{B}_r(z,x\mid \omega_1, \dots,\omega_r) = \frac{x^r {\rm e}^{z x}}{\prod_{j=1}^r ({\rm e}^{\omega_j x}-1)} = \sum_{n=0}^{\infty} B_{r,n}(z|\underline{\omega}) \frac{x^n}{n!}.
\end{equation}

\begin{Proposition}\label{prop:Bernoullidiff}
The generating function of the Bernoulli polynomials $B_{r,n}(z \mid \omega_1, \dots,\omega_r)$ obeys
\begin{equation*}
 \mathcal{B}_r(z+\omega_j,x\mid \omega_1, \dots,\omega_r)- \mathcal{B}_r(z,x\mid \omega_1, \dots,\omega_r) =x \cdot \mathcal{B}_{r-1}(z,x\mid \omega_1, \dots,\overset{\vee}{\omega_j},\dots,\omega_r)
\end{equation*}
where $\overset{\vee}{\omega_j}$ means omitting $\omega_j$.
\end{Proposition}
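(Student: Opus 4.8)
The plan is to work directly at the level of the generating function in~\eqref{eq:Bernoullipolgen}, since the claimed recursion for the coefficients $B_{r,n}$ follows immediately by comparing powers of $x$ once the generating-function identity is established. So the entire task reduces to proving the functional identity
\begin{equation*}
\mathcal{B}_r(z+\omega_j,x\mid \omega_1,\dots,\omega_r)-\mathcal{B}_r(z,x\mid \omega_1,\dots,\omega_r)=x\cdot\mathcal{B}_{r-1}\big(z,x\mid \omega_1,\dots,\overset{\vee}{\omega_j},\dots,\omega_r\big).
\end{equation*}

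First I would substitute $z\mapsto z+\omega_j$ into the defining formula $\mathcal{B}_r(z,x\mid\underline{\omega})=x^r e^{zx}/\prod_{k=1}^r(e^{\omega_k x}-1)$, which simply multiplies the numerator by $e^{\omega_j x}$. Subtracting $\mathcal{B}_r(z,x\mid\underline{\omega})$ then produces a common numerator factor $x^r e^{zx}(e^{\omega_j x}-1)$ over the unchanged denominator $\prod_{k=1}^r(e^{\omega_k x}-1)$. The key step is the observation that the factor $(e^{\omega_j x}-1)$ in the numerator cancels exactly the $j$-th factor of the denominator product, leaving $x^r e^{zx}/\prod_{k\neq j}(e^{\omega_k x}-1)$. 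Pulling out one power of $x$ identifies the remainder as $x\cdot\mathcal{B}_{r-1}\big(z,x\mid\omega_1,\dots,\overset{\vee}{\omega_j},\dots,\omega_r\big)$, which is precisely the right-hand side.

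Finally I would expand both sides as power series in $x$ using~\eqref{eq:Bernoullipolgen}: the left-hand side contributes $\sum_n\big(B_{r,n}(z+\omega_j\mid\underline{\omega})-B_{r,n}(z\mid\underline{\omega})\big)x^n/n!$, while the right-hand side, after the shift of summation index induced by the prefactor $x$, yields the matching coefficients of $B_{r-1,n-1}$, giving the stated recursion termwise. The only point requiring a word of care is that the cancellation of the $j$-th denominator factor is legitimate as an identity of meromorphic functions (equivalently formal Laurent/Taylor data) because $\omega_j\neq 0$ keeps $e^{\omega_j x}-1$ a genuine nonzero factor; beyond this there is no real obstacle, as the computation is purely algebraic and self-contained.
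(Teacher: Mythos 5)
Your proof is correct and follows essentially the same route as the paper: substitute $z\mapsto z+\omega_j$ in the definition~\eqref{eq:Bernoullipolgen}, factor the numerator as $x^r{\rm e}^{zx}({\rm e}^{\omega_j x}-1)$, cancel the $j$-th denominator factor, and pull out one power of $x$ to recognize $x\cdot\mathcal{B}_{r-1}$. The additional coefficient-comparison remarks are fine but not needed, since the proposition is stated at the level of the generating functions themselves.
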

\begin{proof}
This follows immediately by using the definition \eqref{eq:Bernoullipolgen}:
\begin{align*}
 \mathcal{B}_r(z+\omega_j,x\mid \omega_1, \dots,\omega_r)- \mathcal{B}_r(z,x\mid \omega_1, \dots,\omega_r) &= \frac{x\cdot.x^{r-1} {\rm e}^{zx} ({\rm e}^{\omega_jx}-1)}{ \prod_{i=1}^r ({\rm e}^{\omega_i x}-1)} \\
 &=x \cdot \mathcal{B}_{r-1}\big(z,x\mid \omega_1, \dots,\overset{\vee}{\omega_j},\dots,\omega_r\big).\tag*{\qed}
\end{align*}\renewcommand{\qed}{}
\end{proof}

This proposition implies in particular that
\begin{equation}\label{eq:Bernoullipolrec}
 B_{3,3}(z+\omega_j\mid\underline{\omega})-B_{3,3}(z|\underline{\omega})= 3\cdot B_{2,2}(z\mid\underline{\omega}^{-}(j)),
\end{equation}
where $\underline{\omega}=(\omega_1,\omega_2,\omega_3)$ and where $ \underline{\omega}^{-}(j)= \big(\omega_1, \dots,\overset{\vee}{\omega_j},\dots,\omega_r\big)$.

Next, we discuss a few special functions that play an important role in this paper.

The multiple sine functions are defined using the Barnes multiple Gamma functions \cite{Barnes}. For a variable $z\in \mathbb{C}$ and parameters $\omega_1,\ldots,\omega_r \in \mathbb{C}^{*}$ these are defined by
\begin{equation*}
 \sin_r(z \mid \omega_1,\dots,\omega_r):= \Gamma_{r}(z \mid \omega_1,\dots,\omega_r)^{-1} \cdot \Gamma_{r}\left(\sum_{i=1}^r \omega_i - z \mid \omega_1,\dots,\omega_r\right)^{(-1)^r}.
\end{equation*}
For further definitions see, e.g., \cite{BridgelandCon,Ruijsenaars1} and references therein.

Another special function is (related to) Fadeev's quantum dilogarithm
\begin{equation*}
 \mathcal{S}_2(z \mid \omega_1,\omega_2) := \exp\left(-\frac{\pi \mathrm{i}}{2} \cdot B_{2,2}(z\mid \omega_1,\omega_2)\right) \cdot \sin_2(z \mid \omega_1,\omega_2),
\end{equation*}
which we have introduced in terms of the double sine function as in \cite{BridgelandCon}.

The final special function is defined by\footnote{In \cite{BridgelandCon}, $\mathcal{S}_2$ is called $F$ and $\mathcal{S}_3$ is called $G$, we changed the label and introduced the subscripts to avoid confusion with the free energies and to make the relation to the double and triple sine functions clear.}
\begin{equation*}
 \mathcal{S}_3(z | \omega_1,\omega_2) := \exp\left(\frac{\pi \mathrm{i}}{6} \cdot B_{3,3}(z+\omega_1 \mid \omega_1,\omega_1,\omega_2)\right) \cdot \sin_3(z+\omega_1 \mid \omega_1,\omega_2,\omega_3).
\end{equation*}

\subsection{Quantum dilogarithm and its properties}\label{appendix:qdilog}
In this section we summarize a few properties of the function $\mathcal{S}_2(z\mid \omega_1,\omega_2)$, that is a single-valued meromorphic function of variables $z\in\mathbb{C}$ and $\omega_1,\omega_2 \in \mathbb{C}^*$, following~\cite[Section~4]{BridgelandCon} and~\cite{Narukawa}. Say,
\begin{alignat*}{3}
& x_1= \exp (2\pi \mathrm{i} z/\omega_1), \qquad && x_2= \exp (2\pi \mathrm{i} z/\omega_2), &\\
& q_1= \exp (2\pi \mathrm{i} \omega_2/\omega_1), \qquad && q_2= \exp (2\pi \mathrm{i} \omega_1/\omega_2).&
\end{alignat*}
Under the assumption $\omega_1/\omega_2\notin \mathbb{R}_{<0}$, the function $\mathcal{S}_2$ has the following properties:
\begin{enumerate}\itemsep=0pt
 \item It is regular and non-vanishing except at the points
 \begin{equation*}
 z=a \omega_1 + b\omega_2,\qquad a,b \in \mathbb{Z},
 \end{equation*}
 which are zeroes if $a,b\le 0$, poles if $a,b>0$, and otherwise neither.
 \item It is symmetric in the arguments $\omega_1$, $\omega_2$:
 \begin{equation*} \mathcal{S}_2(z\mid \omega_1,\omega_2)=\mathcal{S}_2(z\mid \omega_2,\omega_1),
 \end{equation*}
 and is invariant under simultaneous rescaling of all three arguments.
 \item It satisfies the two difference relations:
 \begin{equation}\label{qddiffeq} \frac{\mathcal{S}_2(z+\omega_1\mid \omega_1,\omega_2)}{\mathcal{S}_2(z\mid \omega_1,\omega_2)}= \frac{1}{1-x_2}, \qquad \frac{\mathcal{S}_2(z+\omega_2\mid \omega_1,\omega_2)}{\mathcal{S}_2(z\mid \omega_1,\omega_2)}= \frac{1}{1-x_1}.
 \end{equation}
 \item It has the product expansion
\begin{equation}\label{S2prod} \mathcal{S}_2(z\mid \omega_1,\omega_2)= \prod_{k\ge 1} \big(1-x_1 q_1^{-k}\big)^{-1} \cdot \prod_{k\ge 0} \big(1-x_2 q_2^{k}\big),
 \end{equation}
valid when $\operatorname{Im}(\omega_1/\omega_2) >0.$

\item When $\operatorname{Re}(\omega_i)>0$ and $0<\operatorname{Re}(z)< \operatorname{Re}(\omega_1+\omega_2)$ it has the integral representation:
\begin{equation}\label{S2int} \mathcal{S}_2(z|\omega_1,\omega_2)= \exp \left( \int_{\mathbb{R}+\mathrm{i} 0^{+}} \frac{{\rm e}^{zs}}{({\rm e}^{\omega_1s}-1)({\rm e}^{\omega_2 s}-1)} \frac{{\rm d}s}{s}\right),
\end{equation}
where the contour $\mathbb{R}+\mathrm{i} 0^{+}$ follows the real axis from $-\infty$ to $+\infty$ avoiding the origin by a~small detour in the upper half-plane.

\item Fix $z\in\mathbb{C}$ and $\omega_2\in \mathbb{C}^*$ with $0<\operatorname{Re}(z)<\operatorname{Re}(\omega_2)$ and $\operatorname{Im}(z/\omega_2)>0$. Then it has the asymptotic expansion:
\begin{equation}\label{S2asymp} \log \mathcal{S}_2(z\mid \omega_1,\omega_2) \sim \sum_{k\ge0} \frac{B_k \cdot \omega_1^{k-1}}{k!} \cdot \left( \frac{2\pi \mathrm{i}}{\omega_2}\right)^{k-1} \cdot {\rm Li}_{2-k}\big({\rm e}^{2\pi \mathrm{i} z/\omega_2}\big),
\end{equation}
valid as $\omega_1\rightarrow 0$ in any closed subsector $\Sigma$ of the half-plane $\operatorname{Re}(\omega_1)>0$.
\end{enumerate}

\subsubsection{Relation to Faddeev's quantum dilogarithm}

Faddeev's quantum dilogarithm $\Phi_{b}(z)$, which is defined by
\begin{equation*}
 \Phi_b(z)= \exp \left( \int_{\mathbb{R}+ \mathrm{i} 0^{+}} \frac{{\rm e}^{-2 \mathrm{i} z s}}{4 \sinh(sb) \sinh(sb^{-1})} \frac{{\rm d}s}{s}\right).
\end{equation*}
The function $\mathcal{S}_2$ is related to Faddeev's quantum dilogarithm
as
\begin{equation*}
\Phi_b(z)= \mathcal{S}_2\bigl(-\mathrm{i} z + \big(b+b^{-1}\big)/2\mid b,b^{-1}\bigr).
\end{equation*}

\subsubsection{Special values}

Using the integral representation of the quantum dilogarithm we can find the expression for the quantum dilogarithm for $\omega_2=\omega_1$. We find
\begin{align}
\log\left(\mathcal{S}_2(z+\omega_1\mid \omega_1,\omega_1)\right) &= \int_{\mathbb{R}+\mathrm{i} 0^{+}} \frac{{\rm e}^{(z+\omega_1)s}}{({\rm e}^{\omega_1s}-1)({\rm e}^{\omega_1 s}-1)} \frac{{\rm d}s}{s}\nonumber \\
&= - \frac{\partial}{\partial \omega_1} \left(\frac{\omega_1}{2\pi \mathrm{i}} \cdot {\rm Li}_2\big({\rm e}^{2\pi \mathrm{i} z/\omega_1}\big) \right).\label{eq:qdilogrepeat}
\end{align}
The proof of this equation can be found for instance in the proof of Proposition~4.4 in \cite{BridgelandCon}.

\subsection{Product forms}\label{app:productform}
See \cite[Proposition~5]{Narukawa}. Let $r\ge 2$ and $\operatorname{Im} (\omega_j/\omega_k) \ne 0$, then the multiple sine function $\sin_r(z|\underline{\omega})$ has the following infinite product representation:
\begin{align}
 \sin_r(z\mid \underline{\omega}) &= \exp \left\{ (-1)^r \frac{\pi \mathrm{i}}{r} B_{rr}(z\mid \underline{\omega})\right\} \prod_{k=1}^r (x_k;\underline{q_k})_{\infty}^{(r-2)} \nonumber\\
 &= \exp \left\{ (-1)^{r-1} \frac{\pi \mathrm{i}}{r} B_{rr}(z\mid \underline{\omega})\right\} \prod_{k=1}^r \big(x_k^{-1};\underline{q_k}^{-1}\big)_{\infty}^{(r-2)},\label{eq:multiplesineproduct}
 \end{align}
where we have set
\begin{gather*}
x_k={\rm e}^{2\pi \mathrm{i} z/\omega_k}, \qquad
q_{jk}={\rm e}^{2\pi \mathrm{i} \omega_j/\omega_k},\\ \underline{q_k}=\big(q_{1k},\dots,\overset{\vee}{q_{kk}},\dots,q_{rk}\big), \qquad \underline{q_{k}}^{-1}= \big(q_{1k}^{-1},\dots,\overset{\vee}{q_{kk}^{-1}},\dots,q_{rk}^{-1}\big),
\end{gather*}
and where
\begin{equation*}
 (x;\underline{q})_{\infty}^{(r)}= \prod_{j_0,\dots,j_r=0}^{\infty} \big(1-x q_0^{j_0} \cdots q_r^{j_r}\big),
\end{equation*}
with $\underline{q}= (q_0,\dots,q_r)$.

See \cite[Proposition~1]{Narukawa}. It is also useful to know that
\begin{equation}\label{eq:productinvpow}
 (x;\underline{q})_{\infty}^{(r)} = \frac{1}{\big(q_j^{-1}x;\underline{q}[j]\big)_{\infty}^{(r)}},
\end{equation}
where
$\underline{q}[j]= \big(q_0,\dots,q_j^{-1}, \dots,q_r\big)$.

\subsection*{Acknowledgements}

We would like to thank the organizers of the Western Hemisphere Colloquium on Geometry and Physics for bringing together our community at least virtually in times when personal interactions remained sparse. The second author, L.H.~presented her ideas on the relations of the Borel analysis of~\cite{Hollands:recipe} to that of~\cite{ASTT21} at this colloquium in November 2021, triggering our joint work. We would like to thank Alba Grassi, Qianyu Hao and Andy Neitzke for discussions and for informing us about their related work~\cite{GHN}. We would like to furthermore thank Sergei Alexandrov, Tom Bridgeland, Boris Pioline, Arpan Saha and J\"org Teschner for discussions and correspondence. Finally, we would like to thank the anonymous referees for helpful comments and suggestions. The work of I.T.~is funded by the Deutsche Forschungsgemeinschaft (DFG, German Research Foundation) under Germany's Excellence Strategy EXC 2121 Quantum Universe 390833306. The work of L.H.~is supported by a Royal Society Dorothy Hodgkin Fellowship. The work of M.A.~is supported through the DFG Emmy Noether grant AL 1407/2-1.

\addcontentsline{toc}{section}{References}
\LastPageEnding

\end{document}